
\documentclass[reqno, 11pt]{amsart}

\addtolength{\textwidth}{3cm}
\addtolength{\textheight}{3cm}

\addtolength{\hoffset}{-1.5cm}
\addtolength{\voffset}{-1.5cm}


\usepackage{amsfonts,amssymb,amsbsy,amsmath,amsthm,enumerate}
\usepackage{txfonts}
\usepackage{eucal}
\usepackage{dsfont}
\usepackage{mathrsfs}
\usepackage{nicefrac}
\def\nicefrac#1#2{\frac{#1}{#2}}

\usepackage{stackrel}

\usepackage[small,nohug,
]{diagrams}
\diagramstyle[labelstyle=\scriptstyle]
\newarrow{Corresponds}<--->

\usepackage[dvips]{graphicx}
\usepackage{color}


\newtheorem{theorem}{Theorem}[section]
\newtheorem{proposition}[theorem]{Proposition}
\newtheorem{lemma}[theorem]{Lemma}
\newtheorem{corollary}[theorem]{Corollary}
\newtheorem{assumption}[theorem]{Assumption}
\newtheorem{definition}[theorem]{Definition}
\newtheorem{remark}[theorem]{Remark}

\numberwithin{equation}{section}

\numberwithin{figure}{section}
\numberwithin{table}{section}

\newcommand\beq{\begin{equation}}
\newcommand{\bea}{\begin{eqnarray}}
\newcommand{\eea}{\end{eqnarray}}
\newcommand{\beas}{\begin{eqnarray*}}
\newcommand{\eeas}{\end{eqnarray*}}
\newcommand{\beql}{\begin{equation} \label}
\newcommand{\eeq}{\end{equation}}


\newcommand{\R}{\mathbb R}
\newcommand{\N}{\mathbb N}
\newcommand{\C}{\mathbb C}                           

\newcommand{\Z}{\mathbb Z}
\newcommand{\T}{\mathbb T}

\newcommand{\empt}{{\text{\rm  \O}}}

\newcommand{\s}[1]{\CMcal{#1}}
\newcommand{\f}[1]{\mathcal{#1}}                  
\newcommand{\bb}[1]{\mathscr{#1}}
\newcommand{\rr}[1]{\mathfrak{#1}}
\newcommand{\n}[1]{\mathds {#1}}



\newcommand{\expo}[1]{{\rm e}^{#1}}                 

\newcommand{\ii}{\,{\rm i}\,}
\newcommand{\ncint}{\mathrel{{\ooalign{$\int$\cr\kern+.07em\raise.15ex\hbox{$\pmb{\scriptstyle-}$}\cr}}}}           \newcommand{\ncpartial}{\mathrel{{\ooalign{$\partial$\cr\kern+.29em\raise.79ex\hbox{$\pmb{\scriptstyle-}$}\cr}}}}

\newcommand{\virg}[1]{\lq\lq#1\rq\rq}                
\newcommand{\ie}{{\sl i.\,e.\,}}
\newcommand{\eg}{{\sl e.\,g.\,}}
\newcommand{\cf}{{\sl cf.\,}}


\begin{document}

\title[The FKMM-invariant in low dimension]{
The FKMM-invariant in low dimension}


\author[G. De~Nittis]{Giuseppe De Nittis}
\address[De~Nittis]{Facultad de Matem\'aticas \& Instituto de F\'{\i}sica,
Pontificia Universidad Cat\'olica,
Santiago, Chile}
\email{gidenittis@mat.puc.cl}

\author[K. Gomi]{Kiyonori Gomi}
\address[Gomi]{Department of Mathematical Sciences, Shinshu University,  Nagano, Japan}
\email{kgomi@math.shinshu-u.ac.jp}

\thanks{{\bf MSC2010}
Primary: 57R22; Secondary:  53A55, 55N25, 53C80}

\thanks{{\bf Keywords.}
\virg{Quaternionic} vector bundles,
FKMM-invariant,
Characteristic classes,
Topological quantum systems.}


\begin{abstract}
\vspace{-4mm}
In this paper we investigate the problem of the cohomological classification of \virg{Quaternionic} vector bundles in low-dimension ($d\leqslant 3$). We show that there exists a characteristic classes $\kappa$, called the FKMM-invariant, which takes value in the relative equivariant Borel cohomology and  completely classifies \virg{Quaternionic} vector bundles in low-dimension. The main subject of the paper concerns a discussion about  the surjectivity of  $\kappa$. 
\end{abstract}


\maketitle

\vspace{-5mm}
\tableofcontents

\section{Introduction}\label{sect:intro}

At a topological level, \emph{\virg{Real}} and \emph{\virg{Quaternionic}} vector bundles are complex vector bundles defined over spaces with \emph{involution} and endowed with a  homeomorphism of the total space which cover the involution and acts anti-linearly between conjugate fibers. Let us be a little bit more precise: 
 An {involution} $\tau$ on a topological space $X$ is a homeomorphism $\tau:X\to X$ of period 2, \ie  $\tau^2={\rm Id}_X$. We will refer to the pair 
 $(X,\tau)$ as an  \emph{involutive space}. The \emph{fixed-point} set of the {involutive space}  $(X,\tau)$
 is  by definition
 $$
X^{\tau}\,:=\, \{x\in X\ |\ \tau(x)=x\}\,.
$$ 
We are interested in (topological) complex vector bundles $\bb{E}\to X$ endowed with an extra  homeomorphism $\Theta:\bb{E}\to\bb{E}$ such that $\Theta:\bb{E}|_x\to \bb{E}|_{\tau(x)}$ acts anti-linearly between the fibers over conjugated points $x$ and $\tau(x)$. The pair $(\bb{E},\Theta)$ defines a \virg{Real} vector bundle (also called $\rr{R}$-bundle) over the involutive space $(X,\tau)$ if the square $\Theta^2:=\Theta\circ \Theta$ acts as the identity in each fiber of $\bb{E}$, or equivalently  if $\Theta^2=+1$ (with a small abuse of notation). In the opposite case when $\Theta^2=-1$ one says that $(\bb{E},\Theta)$ is a \virg{Quaternionic} vector bundle (or $\rr{Q}$-bundle) over  $(X,\tau)$. {\virg{Real} vector bundles were first defined by M.~F. Atiyah in \cite{atiyah-66} while the notion of \virg{Quaternionic} vector bundle has been introduced by  J.~L. Dupont in \cite{dupont-69} (under the name of symplectic vector bundle).
{\virg{Real} and {\virg{Quaternionic}} vector bundles provide \emph{new categories} of topological objects which are significantly different from the categories of complex or real vector bundles. For these reasons the problem of the classification of {\virg{Real} or {\virg{Quaternionic}} vector bundles over a given involutive space requires the use of appropriate tools which can differ, even significantly, from the tools usually used to classify  vector bundles in the complex or real category.
This paper mostly concerns with the problem of the classification of $\rr{Q}$-bundles over \emph{low-dimensional} $\Z_2$-CW-complexes.
Henceforth, we will assume that: 
\begin{assumption}[$\Z_2$-CW-complex]\label{ass:top}
$(X,\tau)$ is an involutive space such that $X$ is a (locally) compact and path-connected Hausdorff space which admits the structure of a $\Z_2$-CW-complex compatible with the involution $\tau$. The \emph{dimension} $d$ of $X$ is, by definition, the maximal dimension of its cells 
and we say that $X$ is \emph{low-dimensional} if $0\leqslant d\leqslant 3$.
\end{assumption}
\noindent
For the sake of completeness let us recall that an involutive space $(X,\tau)$ has the structure of a $\Z_2$-CW-complex if it admits a skeleton decomposition given by gluing cells of different dimensions which carry a $\Z_2$-action. 
For a precise definition of the notion of 
$\Z_2$-CW-complex, the reader can refer to  \cite{matumoto-71,allday-puppe-93} (see also \cite[Section 4.5]{denittis-gomi-14}).  

\medskip

It is well known that equivalence classes of rank $m$ complex  (resp. real) vector bundles over a topological space $X$ can be classified by means of homotopy classes of maps from $X$ into the Grassmannian
$G_m(\C^\infty)$ (resp. $G_m(\R^\infty)$). 
This can be summarized by
\begin{equation}\label{eq:intro_claaC&R}
{\rm Vec}_{\n{F}}^{m}\big(X\big)\;{\simeq}\;\big[X,G_m(\n{F}^\infty)\big]\;,\qquad\quad\ \n{F}=\R,\C\;
\end{equation}
where on the left-hand side one has the set of equivalence classes of $\n{F}$-vector bundles and on the right-hand side one has the set of homotopy classes of maps from $X$ into the related Grassmannian.
Unfortunately, the computation of homotopy classes of maps between  topological spaces is an extremely hard problem. For this reason people started to face the problem of the classification of vector bundles by using the \emph{characteristic classes} \cite{milnor-stasheff-74}. Due to the fact that a rank $m$ complex vector bundle is associated, up to isomorphisms, with  a map $\varphi:X\to G_m(\C^\infty)$ one can pull-back  via $\varphi$ the cohomology ring of the Grassmannian
into that of $X$, \ie $\varphi^*\:H^\bullet(G_m(\C^\infty),\Z)\to H^\bullet(X,\Z)$.
This procedure selects a set of cohomology classes $c_j(\bb{E})\in H^{2j}(X,\Z)$
called the \emph{Chern classes} of $\bb{E}$. A similar procedure for real vector bundles leads to the definition of the \emph{Stiefel-Whitney classes} $w_j(\bb{E})\in H^{j}(X,\Z_2)$. The relevant role of the characteristic classes in the classification of  complex and real vector bundles is a  well-established fact. For instance  F. P. Peterson proved in 1959 that the Chern classes completely classify complex vector bundles over CW-complexes in the stable rank regime
\cite{peterson-59}. In particular, for low-dimensional CW-complexes the result by F. P. Peterson says that the \emph{first} Chern class $c_1$ suffices to classify complex vector bundles independently of the rank of the fibers:
\begin{equation}\label{eq:intro_tqs3}
c_1\;:\;{\rm Vec}_{\C}^{m}(X)\;\stackrel{\simeq}{\longrightarrow}\;H^2\big(X,\Z\big)\;,\qquad\quad\forall\ m\in\N \quad\text{if} \quad {\rm dim}(X)\leqslant 3\;.
\end{equation}

\medskip

Also $\rr{R}$ and $\rr{Q}$-bundles over $(X,\tau)$ can be classified by means of 
homotopy classes of maps in the same spirit of  \eqref{eq:intro_claaC&R}.
For $\rr{R}$-bundles the homotopy classification theorem is due to 
A.~L. Edelson  \cite{edelson-71} (see also \cite[Theorem 4.13]{denittis-gomi-14}) and says that ${\rm Vec}_{\rr{R}}^{m}(X,\tau)$ is classified, up to $\Z_2$-homotopy equivalences, by \emph{$\Z_2$-equivariant} maps between $(X,\tau)$ and the Grassmannian $G_m(\C^\infty)$ endowed with the involution induced by the complex conjugation. 
In much the same way
for the \virg{Quaternionic} case one has that ${\rm Vec}_{\rr{Q}}^{2m}(X,\tau)$ is classified by {$\Z_2$-equivariant} maps from  $(X,\tau)$ into the Grassmannian $G_{2m}(\C^\infty)$ endowed with the \emph{quaternionic involution} (see \cite[Theorem 2.4]{denittis-gomi-14-gen} and related details). However, as in the real or complex case, the use of the homotopy classification theorem to compute ${\rm Vec}_{\rr{R}}^{m}(X,\tau)$ or ${\rm Vec}_{\rr{Q}}^{2m}(X,\tau)$  is not practicable due to insurmountable complications in the calculation of the equivariant homotopy classes. For these reasons one is naturally pushed to develop a theory of  characteristic classes  for $\rr{R}$ and $\rr{Q}$-bundles. In the case of the \virg{Real} category the associated theory of characteristic classes has been introduced by B. Kahn in 1959 \cite{kahn-59}. He defined the notion of \emph{\virg{Real} Chern classes} $c_j^{\rr{R}}(\bb{E},\Theta)\in H^{2j}(X,\Z(1))$ of the  $\rr{R}$-bundle $(\bb{E},\Theta)$ over $(X,\tau)$
 just by mimicking the standard procedure used for the construction of the usual Chern classes. The main difference with the complex case is the election of the cohomology theory that in the \virg{Real} case turns out to be the  \emph{equivariant Borel cohomology} of the involutive space $(X,\tau)$
with local coefficient system $\Z(1)$  (see Appendix \ref{subsec:borel_cohom} and references therein for more details). The \virg{Real} Chern classes provide a powerful tool for the classification of $\rr{R}$-bundles over $\Z_2$-CW-complexes.
First of all the \emph{Kahn's isomorphism} says that the \emph{\virg{Real} Picard group} ${\rm Pic}_{\rr{R}}(X,\tau):={\rm Vec}_{\rr{R}}^1(X,\tau)$ is classified exactly by $H^2_{\Z_2}\big(X,\Z(1)\big)$ 
 \cite[Proposition 1]{kahn-59}
(see also  \cite[Corollary A.5]{gomi-13}). Moreover, if   one combines the Kahn's isomorphism with the splitting  due to the stable condition  \cite[Theorem 4.25]{denittis-gomi-14}
one obtains that
\begin{equation}\label{eq:intro_tqs4}
c_1^{\rr{R}}\;:\;{\rm Vec}_{\rr{R}}^{m}(X,\tau)\;\stackrel{\simeq}{\longrightarrow}\;H^2_{\Z_2}\big(X,\Z(1)\big)\;,\qquad\quad\forall\ m\in\N \quad\text{if} \quad {\rm dim}(X)\leqslant 3\;,
\end{equation}
namely the first \virg{Real} Chern class $c_1^{\rr{R}}$ turns out to be a complete invariant for the classification of $\rr{R}$-bundles in low-dimension.
It is interesting to  note the close similarity between the equations \eqref{eq:intro_tqs3} and \eqref{eq:intro_tqs4}.

\medskip

\emph{And what about the theory of characteristic classes in the \virg{Quaternionic} category?} At the best of our knowledge, a completely satisfactory theory of characteristic classes for $\rr{Q}$-bundles is not yet available in the literature. Of course already exist attempts to  define \virg{Quaternionic} Chern classes, see \eg \cite{dos-santos-lima-filho-04,lawson-lima-filho-michelsohn-05}.  
However, these topological objects don't seem to be flexible enough to re-produce in an \virg{easy way}  classification results of the type \eqref{eq:intro_tqs3} or \eqref{eq:intro_tqs4}.
A different approach was introduced first by M. Furuta,  Y. Kametani, H. Matsue, and N. Minami in \cite{furuta-kametani-matsue-minami-00} and then improved in \cite{denittis-gomi-14-gen} and
generalized in \cite{denittis-gomi-16}. The core of the construction is to properly define a map 
\begin{equation}\label{eq:intro_tqs5}
\kappa\;:\;{\rm Vec}_{\rr{Q}}^{2m}\big(X,\tau\big)\;\stackrel{}{\longrightarrow}\;H^2_{\Z_2}\big(X|X^\tau,\Z(1)\big)\;,
\end{equation}
called the \emph{FKMM-invariant}, which associates to each  $\rr{Q}$-bundle $(\bb{E},\Theta)$ over $(X,\tau)$ a cohomology class $\kappa(\bb{E},\Theta)$ in the \emph{relative} equivariant cohomology group $H^2_{\Z_2}(X|X^\tau,\Z(1))$ (\cf Appendix \ref{subsec:borel_cohom} for more details). The definition of the map $\kappa$ is based on two important observations: (1) each cohomology class in $H^2_{\Z_2}(X|X^\tau,\Z(1))$ can be uniquely associated with a pair $(\bb{L},s)$ given by an $\rr{R}$-line bundle along with a trivializing section $s:X^\tau\to\bb{L}$; (2) each even-rank $\rr{Q}$-bundle canonically defines a pair $(\bb{L},s)$ through the \emph{determinant functor}. The details of the construction of $\kappa$ are summarized in Section \ref{subsec:fkmm-invariant}.

\medskip

In \cite{denittis-gomi-16} we proved that $\kappa$ is a characteristic class in the sense that it can be obtained as the pull-back of a universal class. Moreover, when the base space has dimension $d\leqslant 3$ the map $\kappa$ turns out to be \emph{injective}. These two properties suggest a parallel between the {FKMM-invariant} \eqref{eq:intro_tqs5} and the first Chern class as in \eqref{eq:intro_tqs3} or  the first \virg{Real} Chern class as in \eqref{eq:intro_tqs4}. However, the injectivity of $\kappa$, although is an important property, is not sufficient to reduce the classification of ${\rm Vec}_{\rr{Q}}^{2m}(X,\tau)$  to the mere computation of a  cohomology group at least until one can say something about the \emph{surjectivity} of $\kappa$. The analysis of the surjectivity of the {FKMM-invariant}   in low-dimension is the main achievement of this paper.

\medskip

Before stating our main results let us point out a quite obvious fact: The FKMM-invariant $\kappa$, like $c_1$ in \eqref{eq:intro_tqs3} or $c_1^{\rr{R}}$ \eqref{eq:intro_tqs4}, generally fails to be surjective in dimension $d\geqslant 4$ where other invariants (\eg the second Chern class) start to enter in the problem of the classification (see \eg \cite[Theorem 1.4 \& Theorem 5.2]{denittis-gomi-14-gen}).
For this reason a classification theory only based on $\kappa$ can be \emph{effective} only in low-dimension $d\leqslant3$.

\medskip

Let us start with the analysis of the case of an involutive space $(X,\tau)$ with a \emph{free involution}, namely $X^\tau=\text{\rm \O}$. This is the only situation which allows the existence of odd-rank $\rr{Q}$-bundles. The classification of odd-rank $\rr{Q}$-bundles in low-dimension is quite simple. First of all one notices that the set of $\rr{Q}$-line bundles ${\rm Pic}_{\rr{Q}}(X,\tau):={\rm Vec}_{\rr{Q}}^1(X,\tau)$ is not a group as in the \virg{Real} case but only a \emph{torsor} under the action of ${\rm Pic}_{\rr{R}}(X,\tau)$. This fact leads to the following dichotomy: or ${\rm Pic}_{\rr{Q}}(X,\tau)=\text{\rm \O}$ or 
${\rm Pic}_{\rr{Q}}(X,\tau)\simeq {\rm Pic}_{\rr{R}}(X,\tau)$ (\cf Theorem \ref{teo:torsor_pica}). When one combines this result with the
stable condition for odd-rank $\rr{Q}$-bundles in Proposition \ref{theo:stab_ran_Q_odd} one obtains that the topological classification of the odd-rank $\rr{Q}$-bundles in the regime of low-dimension is completely specified by ${\rm Pic}_{\rr{Q}}(X,\tau)$:
\begin{theorem}[Classification of odd-rank $\rr{Q}$-bundles in low-dimension]
\label{theo:class_low_odd_R}
Let $(X,\tau)$ be a low-dimensional involutive space in the sense of Assumption \ref{ass:top}. Assume that $X^\tau=\text{\emph{\O}}$. Then
$$
{\rm Vec}^{2m-1}_{\rr{Q}}\big(X, \tau\big)\;\simeq\; 
\left\{
\begin{aligned}
&\text{\emph{\O}}&\ \ \text{if}\ \ &{\rm Pic}_{\rr{Q}}(X,\tau)=\text{\emph{\O}}\\
&H^{2}_{\Z_2}\big(X,\Z(1)\big)&\ \ \text{if}\ \ &{\rm Pic}_{\rr{Q}}(X,\tau)\neq\text{\emph{\O}}\\
\end{aligned}
\right.
\qquad\quad \forall\ m\in\N
$$
and in the second case the bijection is induced (not canonically) by the first \virg{Real} Chern class. 
\end{theorem}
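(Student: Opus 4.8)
The strategy is to reduce the classification of odd-rank $\rr{Q}$-bundles over a low-dimensional involutive space with free involution to that of $\rr{Q}$-line bundles, and then to transport the torsor structure of ${\rm Pic}_{\rr{Q}}(X,\tau)$ over ${\rm Pic}_{\rr{R}}(X,\tau)$ through Kahn's isomorphism. Concretely, the proof will proceed along the chain of bijections
\[
{\rm Vec}^{2m-1}_{\rr{Q}}\big(X,\tau\big)\;\stackrel{\simeq}{\longrightarrow}\;{\rm Pic}_{\rr{Q}}\big(X,\tau\big)\;\stackrel{\simeq}{\longrightarrow}\;{\rm Pic}_{\rr{R}}\big(X,\tau\big)\;\stackrel{\simeq}{\longrightarrow}\;H^2_{\Z_2}\big(X,\Z(1)\big)\;,
\]
where the first arrow comes from the stable-rank result (Proposition \ref{theo:stab_ran_Q_odd}), the second from the torsor dichotomy (Theorem \ref{teo:torsor_pica}) after a choice of base point, and the third is the first \virg{Real} Chern class, \ie Kahn's isomorphism, which is \eqref{eq:intro_tqs4} in rank one.

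First I would invoke Proposition \ref{theo:stab_ran_Q_odd} to identify ${\rm Vec}^{2m-1}_{\rr{Q}}(X,\tau)$ with ${\rm Vec}^{1}_{\rr{Q}}(X,\tau)={\rm Pic}_{\rr{Q}}(X,\tau)$ for every $m\in\N$: the relevant reduction adds (or removes) trivial even-rank \virg{Real} summands $\underline{\C}^{2k}$, so that both the \virg{Quaternionic} character and the parity of the rank are preserved, and the whole point of the hypothesis $d\leqslant 3$ is that the stabilization map is a \emph{bijection} rather than a mere surjection, so that a rank-$(2m-1)$ $\rr{Q}$-bundle $(\bb{E},\Theta)$ is faithfully recorded by its rank-one reduction, which one may take to be the determinant $\det(\bb{E},\Theta)$ --- a $\rr{Q}$-line bundle, since $(\det\Theta)^2=(-1)^{2m-1}=-1$. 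The elementary identity $\det(\bb{E}\oplus\underline{\C}^{2k})\cong\det\bb{E}$ makes this determinant descend to a well-defined inverse of stabilization. In particular, if ${\rm Pic}_{\rr{Q}}(X,\tau)=\emptyset$ then ${\rm Vec}^{2m-1}_{\rr{Q}}(X,\tau)=\emptyset$ for all $m$, which is the first alternative; conversely a single odd-rank $\rr{Q}$-bundle yields a $\rr{Q}$-line bundle via $\det$, so nonemptiness is uniform across all odd ranks.

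In the remaining case ${\rm Pic}_{\rr{Q}}(X,\tau)\neq\emptyset$, I would fix once and for all a reference $\rr{Q}$-line bundle $(\bb{L}_0,\Theta_0)$. By Theorem \ref{teo:torsor_pica} the assignment $\bb{M}\mapsto\bb{M}\otimes(\bb{L}_0,\Theta_0)$ is a bijection ${\rm Pic}_{\rr{R}}(X,\tau)\stackrel{\simeq}{\longrightarrow}{\rm Pic}_{\rr{Q}}(X,\tau)$, whose inverse sends a $\rr{Q}$-line bundle $(\bb{L},\Theta)$ to $(\bb{L},\Theta)\otimes(\bb{L}_0,\Theta_0)^{-1}$; the latter is genuinely an $\rr{R}$-line bundle because the tensor product of the two induced anti-linear structures squares to $(-1)\cdot(-1)=+1$. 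Pre-composing with the rank-one reduction and post-composing with $c_1^{\rr{R}}$ from \eqref{eq:intro_tqs4} then produces the asserted bijection, explicitly
\[
{\rm Vec}^{2m-1}_{\rr{Q}}\big(X,\tau\big)\;\ni\;(\bb{E},\Theta)\;\longmapsto\;c_1^{\rr{R}}\big(\det(\bb{E},\Theta)\otimes(\bb{L}_0,\Theta_0)^{-1}\big)\;\in\; H^2_{\Z_2}\big(X,\Z(1)\big)\;,
\]
which is indeed induced by the first \virg{Real} Chern class, while its dependence on the arbitrary choice of $(\bb{L}_0,\Theta_0)$ is precisely the non-canonicity recorded in the statement.

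The only genuine obstacle --- and the one place where the low-dimensional hypothesis is actually consumed --- sits inside Proposition \ref{theo:stab_ran_Q_odd}: one needs the stabilization by trivial \virg{Real} bundles to be \emph{injective} on odd-rank $\rr{Q}$-bundles over a low-dimensional $\Z_2$-CW-complex, so that the rank-one reduction loses no information. Once that is granted, the rest is a formal concatenation of three already-established bijections, and I would expect the write-up to consist mainly of (a) recording precisely the form of the stable reduction being used, with a pointer to its proof, and (b) the routine verification that the determinant and tensor-product operations are well defined on isomorphism classes and exhibit $c_1^{\rr{R}}$ as the inducing map; the borderline rank $2m-1=1$ is trivial, since there the reduction is the identity.
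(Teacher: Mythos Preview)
Your proposal is correct and follows essentially the same route as the paper: the argument there is precisely the combination of the stable-rank reduction (Proposition \ref{theo:stab_ran_Q_odd}) with the torsor dichotomy (Theorem \ref{teo:torsor_pica}) and Kahn's isomorphism, exactly as you lay it out. Your explicit realization of the reduction map via $\det$ and the formula $(\bb{E},\Theta)\mapsto c_1^{\rr{R}}\big(\det(\bb{E},\Theta)\otimes(\bb{L}_0,\Theta_0)^{-1}\big)$ is a nice concrete addition that the paper does not spell out; one small terminological slip is that the summands removed in the stable reduction are trivial even-rank \virg{Quaternionic} bundles (the product $\rr{Q}$-bundle of Definition \ref{defi:quat_product_bund}), not \virg{Real} ones, though this does not affect your argument since you invoke Proposition \ref{theo:stab_ran_Q_odd} as a black box.
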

\noindent 
This result deserves two important observations: First of all in absence of fixed-points the FKMM-invariant can be always reduced to the first \virg{Real} Chern class mainly due to the natural isomorphism $H^2_{\Z_2}(X|\text{{\O}},\Z(1) )\simeq H^2_{\Z_2}\big(X,\Z(1)\big)$ (\cf Proposition \ref{corol:II}); 
Second of all the \virg{ambiguity} in the normalization of the FKMM-invariant just 
depends on the election of a particular (\emph{reference}) $\rr{Q}$-line bundle $\bb{L}_{\rm ref}$ which generates
the whole ${\rm Pic}_{\rr{Q}}(X,\tau)$ under the action of ${\rm Pic}_{\rr{R}}(X,\tau)$ (\cf \cite[Definition 3.3]{denittis-gomi-16}).

\medskip

The topological classification for the even-rank case
rests on \cite[Theorem 4.7]{denittis-gomi-16} which proves the injectivity of the FKMM-invariant (see Proposition \ref{prop:injectivity_in_low_dimension}) along with an analysis of the surjectivity of $\kappa$. In summary one has:
\begin{theorem}[Classification of even-rank $\rr{Q}$-bundles in low dimension]\label{theo:A_inject_fixed}
Let $(X,\tau)$ be a low-dimensional involutive space in the sense of Assumption \ref{ass:top}. Then:
\begin{itemize}
\item[(1)] If $X^\tau\neq\text{\rm \O}$ and $d=0,1,2$ the FKMM-invariant provides a bijection
$$
\kappa\;:\;{\rm Vec}_{\rr{Q}}^{2m}\big(X,\tau\big)\;\stackrel{\simeq}{\longrightarrow}\; H^{2}_{\Z_2}\big(X|X^\tau,\Z(1)\big)\;,\qquad\qquad \forall\ m\in\N\;
$$
and for $d=3$ the map 
$$
\kappa\;:\;{\rm Vec}_{\rr{Q}}^{2m}\big(X,\tau\big)\;{\hookrightarrow}\; H^{2}_{\Z_2}\big(X|X^\tau,\Z(1)\big)\;,\qquad\qquad \forall\ m\in\N\;.
$$
is in general only an injection;
\vspace{1.0mm}
\item[(2)] If $X^\tau=\text{\rm \O}$ the FKMM-invariant (identified as in Proposition \ref{corol:II}) provides a bijection
$$
\kappa\;:\;{\rm Vec}_{\rr{Q}}^{2m}\big(X,\tau\big)\;\stackrel{\simeq}{\longrightarrow}\; H^{2}_{\Z_2}\big(X,\Z(1)\big)\;,\qquad\qquad \forall\ m\in\N\;.
$$
 \end{itemize}
\end{theorem}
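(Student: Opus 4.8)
The plan is to separate, in each item, the injectivity of $\kappa$ from its surjectivity and to observe that the former is already available. Injectivity of $\kappa$ in all the situations considered is precisely Proposition \ref{prop:injectivity_in_low_dimension}, so the whole of Theorem \ref{theo:A_inject_fixed} reduces to proving that $\kappa$ is \emph{onto} in the stated ranges of $d$. As a preliminary reduction I would pass to rank two: the even-rank analogue of Proposition \ref{theo:stab_ran_Q_odd} (itself modelled on \cite[Theorem 4.25]{denittis-gomi-14}) provides, for ${\rm dim}(X)\leqslant 3$, bijections ${\rm Vec}^{2m}_{\rr{Q}}(X,\tau)\cong{\rm Vec}^{2}_{\rr{Q}}(X,\tau)$ realised by adding or removing trivial $\rr{Q}$-summands; since $\kappa$ is additive under $\oplus$ and vanishes on trivial bundles, these bijections intertwine the FKMM-invariants, so it is enough to prove surjectivity for $m=1$.

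For item (2) (the case $X^\tau=\text{\O}$) I would first use Proposition \ref{corol:II} to rewrite the target as $H^2_{\Z_2}(X,\Z(1))={\rm Pic}_{\rr{R}}(X,\tau)$; under this identification $\kappa(\bb{E},\Theta)=c_1^{\rr{R}}\big(\det(\bb{E},\Theta)\big)$, where $\det(\bb{E},\Theta)$ is endowed with the induced anti-linear structure $\Lambda^{2m}\Theta$, which squares to $(-1)^{2m}=+1$ and hence is a \virg{Real} structure. One then exhibits enough rank-two $\rr{Q}$-bundles. On one hand, for any complex line bundle $\bb{L}'\to X$ the sum $\bb{W}_{\bb{L}'}:=\bb{L}'\oplus\overline{\tau^*\bb{L}'}$ with $\Theta(v,w)=(\bar w,-\bar v)$ is a rank-two $\rr{Q}$-bundle (a direct check gives $\Theta^2=-1$) with $\kappa(\bb{W}_{\bb{L}'})=c_1^{\rr{R}}\big(\bb{L}'\otimes\overline{\tau^*\bb{L}'}\big)$; on the other hand, tensoring a rank-two $\rr{Q}$-bundle by an $\rr{R}$-line bundle $\bb{L}_0$ shifts $\kappa$ by $2\,c_1^{\rr{R}}(\bb{L}_0)$, and, when ${\rm Pic}_{\rr{Q}}(X,\tau)\neq\text{\O}$, forming $(\bb{P}\otimes\bb{L}_0)\oplus\bb{P}$ for a fixed $\rr{Q}$-line bundle $\bb{P}$ and $\bb{L}_0\in{\rm Pic}_{\rr{R}}(X,\tau)$ already produces, via the Kahn isomorphism \eqref{eq:intro_tqs4}, every class in $H^2_{\Z_2}(X,\Z(1))$. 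Combining this with the torsor dichotomy for ${\rm Pic}_{\rr{Q}}$ (Theorem \ref{teo:torsor_pica}), the only surviving point — when ${\rm Pic}_{\rr{Q}}(X,\tau)=\text{\O}$, that the \virg{norm} classes $c_1^{\rr{R}}(\bb{L}'\otimes\overline{\tau^*\bb{L}'})$ generate $H^2_{\Z_2}(X,\Z(1))$ modulo $2\,H^2_{\Z_2}(X,\Z(1))$ — is reduced to the exact sequences relating ${\rm Pic}(X)$, ${\rm Pic}_{\rr{R}}(X,\tau)$ and the free quotient $X/\tau$, together with the odd-rank classification Theorem \ref{theo:class_low_odd_R}. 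Because there is no fixed-point locus, no degree-three obstruction enters and the argument covers all $d\leqslant 3$.

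For item (1) with $X^\tau\neq\text{\O}$ I would argue by the universal picture of \cite{denittis-gomi-16}: $\kappa$ is the pull-back, along the $\Z_2$-equivariant classifying map of $(\bb{E},\Theta)$, of a universal class on the classifying $\Z_2$-space $G^{\rr{Q}}$ of rank-two $\rr{Q}$-bundles, and this universal class is in turn induced by a $\Z_2$-map $\delta\colon G^{\rr{Q}}\to\bb{M}$, where $\bb{M}$ represents the functor $H^2_{\Z_2}(\,\cdot\,|\,(\,\cdot\,)^\tau,\Z(1))$ — equivalently, by the first of the two observations on which the construction of $\kappa$ rests, $\bb{M}$ classifies pairs $(\bb{L},s)$ made of an $\rr{R}$-line bundle and a trivialising section over the fixed-point set. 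Surjectivity of $\kappa$ on $\Z_2$-CW-complexes of dimension $\leqslant d$ is then equivalent to the $\Z_2$-equivariant lifting property for $\delta$ through dimension $d$, which is governed by the equivariant homotopy of the homotopy fibre $\bb{F}$ of $\delta$: over the free orbits $\bb{F}$ is modelled on the quaternionic Grassmannian $\mathbb{HP}^\infty$ and is $3$-connected, while along $X^\tau$ — where one must fit a compatible quaternionic structure with prescribed Pfaffian section onto the determinant datum — the local system $\Z(1)$ enters in low degree, and the upshot (the technical heart of the argument) is that $\bb{F}$ is $\Z_2$-equivariantly $2$-connected but not $3$-connected. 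Hence the first obstruction to lifting lies in $H^3_{\Z_2}(X|X^\tau,\Z(1))$: for $d=0,1,2$ this group does not receive it and $\kappa$ is bijective, whereas for $d=3$ the obstruction is in general non-zero — this is exactly where the second Chern class and its relatives begin to act, \cf \cite[Theorem 1.4]{denittis-gomi-14-gen} — and only the already-established injectivity survives.

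I expect the genuine obstacle to be the connectivity computation behind item (1) — equivalently, the direct statement that over a $\Z_2$-CW-complex of dimension $\leqslant 2$ every pair $(\bb{L},s)$ is the determinant datum of some rank-two $\rr{Q}$-bundle — carried out while keeping careful track of the twisted coefficient system $\Z(1)$, so that the obstruction is correctly placed in the \emph{relative} group $H^3_{\Z_2}(X|X^\tau,\Z(1))$ and not one degree lower. The parallel (diophantine) point in item (2) when ${\rm Pic}_{\rr{Q}}(X,\tau)=\text{\O}$ is of the same nature. By contrast, the failure of surjectivity at $d=3$ in the fixed-point case is not a limitation of the method but a real feature of the classification, and it is precisely what makes items (1) and (2) asymmetric.
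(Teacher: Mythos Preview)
Your reduction to the injectivity statement (Proposition \ref{prop:injectivity_in_low_dimension}) plus a surjectivity analysis at rank two is correct and is also how the paper organises the argument. However, both halves of your surjectivity argument contain genuine gaps, and in each case the missing step is precisely the one the paper works hardest to prove.

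For item (2), your argument in the case ${\rm Pic}_{\rr{Q}}(X,\tau)\neq\text{\O}$ is fine: with a $\rr{Q}$-line bundle $\bb{P}$ available, $\kappa\big((\bb{P}\otimes\bb{L}_0)\oplus\bb{P}\big)=c_1^{\rr{R}}(\bb{P}^{\otimes 2})+c_1^{\rr{R}}(\bb{L}_0)$ ranges over all of $H^2_{\Z_2}(X,\Z(1))$ as $\bb{L}_0$ runs through ${\rm Pic}_{\rr{R}}(X,\tau)$. But in the case ${\rm Pic}_{\rr{Q}}(X,\tau)=\text{\O}$ you assert that the classes $c_1^{\rr{R}}(\bb{L}'\otimes\overline{\tau^*\bb{L}'})$ together with $2\,H^2_{\Z_2}(X,\Z(1))$ exhaust $H^2_{\Z_2}(X,\Z(1))$, and you do not prove this. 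The appeal to ``exact sequences relating ${\rm Pic}(X)$, ${\rm Pic}_{\rr{R}}(X,\tau)$ and $X/\tau$'' and to Theorem \ref{theo:class_low_odd_R} is not a proof: the image of the norm map $\bb{L}'\mapsto\bb{L}'\otimes\overline{\tau^*\bb{L}'}$ is not easy to identify in general, and Theorem \ref{theo:class_low_odd_R} gives information only about the case ${\rm Pic}_{\rr{Q}}\neq\text{\O}$. This is exactly the content the paper supplies by different means (Proposition \ref{prop:surg_2_di-2step} for $d=2$ and Proposition \ref{prop_free_surg_d=3} for $d=3$), using a skeleton decomposition $X=U_1\cup U_2$ with $U_1\simeq X_{d-1}$, the bijectivity of $\kappa$ on the lower-dimensional pieces, and a gluing argument along $U_1\cap U_2$.

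For item (1) your obstruction-theoretic strategy is reasonable in outline, but you do not carry out the computation that would make it a proof: you state that the homotopy fibre $\bb{F}$ of $\delta$ is $\Z_2$-equivariantly $2$-connected and identify the first obstruction with a class in $H^3_{\Z_2}(X|X^\tau,\Z(1))$, yet you give no argument for either claim. (Identifying the coefficient system of the first obstruction as $\Z(1)$ relative to $X^\tau$ is itself non-trivial.) You acknowledge this yourself as ``the technical heart of the argument''. The paper avoids this abstract machinery entirely: for $d=0,1$ it shows directly that $H^2_{\Z_2}(X|X^\tau,\Z(1))=0$ (Proposition \ref{prop:zero_rel_cohom_low}); for $d=2$ it introduces a \emph{relative} FKMM-invariant (Section \ref{subsec:relative_FKMM_inv}), collapses the $1$-skeleton to reduce to a wedge of involutive $2$-spheres $X/X_1$, and proves surjectivity there by an explicit clutching computation (Proposition \ref{prop:surg_2_di-1step}, Corollary \ref{corol:hrder_rememb}, Proposition \ref{prop:surg_2_di-2step}). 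The failure of surjectivity at $d=3$ with $X^\tau\neq\text{\O}$ is then exhibited by the lens-space example of Section \ref{app:quaternionic_lens}, not deduced from an obstruction class.

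In short: your high-level plan is sound, but the two steps you flag as ``the genuine obstacle'' are not addressed, and the paper's proof proceeds along a concrete, computational route rather than the classifying-space/obstruction route you outline.
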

\noindent
The proof of (1) and (2) in the cases $d=0,1$ follows by  combining the injectivity of
$\kappa$ with the computations
\begin{equation}\label{eq:rel_equi_cohom_d=0,1}
H^{2}_{\Z_2}\big(X|X^\tau,\Z(1)\big)\;=\;0 \qquad 
\mbox{if $X$ has dimension $d=0,1$ and $X^\tau\neq \text{\O}$,}
\end{equation}
or
\begin{equation}\label{eq:equi_cohom_d=0,1}
H^{2}_{\Z_2}\big(X,\Z(1)\big)\;=\;0 \qquad 
\mbox{if $X$ has dimension $d=0,1$ and $X^\tau= \text{\O}$,}
\end{equation}
both proved in Proposition \ref{prop:zero_rel_cohom_low}. In both cases $d=0,1$ the isomorphism established by the FKMM-invariant turns out to be \emph{trivial} in view of the fact
$$
{\rm Vec}^{2m}_{\rr{Q}}\big(X, \tau\big)\;=\; 0\;,\qquad\quad\forall\ m\in\N \quad\text{if} \quad {\rm dim}(X)\leqslant 1\;.
$$ 
Then in dimension $d=0,1$ the only possible even-rank $\rr{Q}$-bundle is, up to isomorphisms, the trivial one (\cf  Proposition \ref{theo:stab_ran_Q_even}).  The proof in dimension $d=2$ for the cases (1) and (2)
is contained in  Proposition \ref{prop:surg_2_di-2step}. 
The proof of item (2) is completed by Proposition \ref{prop_free_surg_d=3} which proves the surjectivity of $\kappa$ in the free case for $d=3$.
Finally, the absence of surjectivity in $d=3$ in the case of a non-free involution  is showed by a concrete example discussed in Section \ref{app:quaternionic_lens} (\cf Corollary
\ref{prop:fkmm_lens}).

\medskip

Theorem \ref{theo:class_low_odd_R} and Theorem \ref{theo:A_inject_fixed} 
allow to compare the map
 \eqref{eq:intro_tqs5} with \eqref{eq:intro_tqs3} and \eqref{eq:intro_tqs4}. It turns out that the FKMM-invariant $\kappa$ shares all the virtues of the first Chern class or of the first \virg{Real} Chern class at least up to dimension $d=2$. This fact gives us the motivation to claim that: 

\medskip

\emph{\virg{The FKMM-invariant is the first (or fundamental) characteristic class for the category of \virg{Quaternionic} vector bundles; The one which classifies \virg{Quaternionic} vector bundles in low-dimension.} }

\medskip

The last sentence can be considered as the main (meta-)result of this paper. Of course, as already mentioned, starting from $d=4$ the invariant $\kappa$ cannot be able to produce a complete classification by itself and \emph{higher} characteristic classes are needful (see \eg \cite[Theorem 5.2]{denittis-gomi-14-gen}).

\medskip

Let us spend few words about the case $d=3$. Theorem \ref{theo:A_inject_fixed} (1) says that in this case $\kappa$ is generally only injective but not surjective. In any case the property  of being injective says that the invariant $\kappa$ is strong  enough to distinguish between non-equivalent $\rr{Q}$-bundles.
However, with the failure of the surjectivity
one loses    the possibility of classifying all the possible inequivalent realizations of $\rr{Q}$-bundles over $(X,\tau)$ just by means of the computation of $H^{2}_{\Z_2}(X|X^\tau,\Z(1))$ which is usually  an \emph{algorithmic} problem. Anyway, does not seem \virg{trivial} to find examples of 3-dimensional involutive spaces on which $\kappa$ fails to be surjective. For instance we already know that:
\begin{itemize}
\item Let $(X,\tau)$ be a $\Z_2$-CW-complex of dimension $d=3$ such that $X^\tau=\text{\O}$. Then $\kappa$ is bijective as a consequence of Theorem \ref{theo:A_inject_fixed} (2);
\vspace{1mm}

\item Let  $(X,\tau)$ be an \emph{FKMM-space} 
of dimension $d=3$ (\cf Definition \ref{defi:FKMM-space}). Then $\kappa$ is bijective as proved in Proposition \ref{propos_surjFKMM_spac}.
\vspace{1mm}

\item Let $\n{S}^{p,q}\subset\R^{p+q}$ the $p+q-1$ dimensional  sphere of radius 1 endowed with the involution
$$
(x_1,\ldots,x_p,x_{p+1},\ldots,x_{p+q})\;\longmapsto\; (x_1,\ldots,x_p,-x_{p+1},\ldots,-x_{p+q})\;.
$$
Then $\kappa$ is bijective in all the cases $\n{S}^{4-n,n}$ with $n=0,1,2,3,4$, see \cite[Section 4.4]{denittis-gomi-16}.

\vspace{1mm}

\item Consider the involutive tori
\begin{equation}\label{eq:not_inv_tor}
\n{T}^{a,b,c}\;:=\;\left(\n{S}^{2,0}\right)^{\times a}\;\times\;\left(\n{S}^{1,1}\right)^{\times b}\;\times\;\left(\n{S}^{0,2}\right)^{\times c}\;.
\end{equation}
Then $\kappa$ is bijective in all the cases $\n{T}^{a,b,c}$ with $a+b+c=3$, see \cite[Section 4.5]{denittis-gomi-16}.
\end{itemize}
Finally, the bijectivity on $3$-dimensional FKMM-spaces can be generalized to:
\begin{theorem}\label{theo-int-4}
Let $(X, \tau)$ be a $3$-dimensional compact manifold without boundary equipped with a smooth involution such that the fixed point set $X^\tau \neq \emptyset$ consists of a finite number of points. Then the FKMM-invariant is bijective.
\end{theorem}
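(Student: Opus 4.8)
The plan is to reduce the statement to the already-established bijectivity of $\kappa$ on $3$-dimensional FKMM-spaces (Proposition \ref{propos_surjFKMM_spac}) by exhibiting $(X,\tau)$ — or a suitable modification of it — as an FKMM-space. Recall that an FKMM-space requires, roughly, that the fixed-point set $X^\tau$ be ``nicely embedded'' and that the relevant equivariant cohomology behave well; the injectivity of $\kappa$ in dimension $d\leqslant 3$ is already available from Proposition \ref{prop:injectivity_in_low_dimension}, so the entire content is the \emph{surjectivity}. Since $X$ is a closed $3$-manifold with a smooth involution whose fixed set is a finite set $X^\tau=\{x_1,\dots,x_N\}$, I would first analyze the local model of the involution near each fixed point.

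First I would invoke the slice theorem (Bochner linearization) for smooth compact transformation groups: near each $x_i$ the involution $\tau$ is conjugate to a linear orthogonal involution on $\R^3$ whose only fixed point is the origin, hence $\tau$ acts as $-\mathrm{Id}$ on $\R^3$ (an orthogonal involution on $\R^3$ with $0$-dimensional fixed set must be $-\mathrm{Id}$, since any nontrivial $+1$-eigenspace would enlarge the fixed set). Therefore a small invariant ball around $x_i$ is equivariantly diffeomorphic to the cone on $(\n{S}^{0,3}, -\mathrm{Id})$, i.e. to $\R^3$ with the antipodal action. Excising small invariant open balls $B_i$ around the $x_i$ yields a free involutive space $(X_0,\tau)$ with $X_0 := X \setminus \bigsqcup_i B_i$, a compact $3$-manifold with boundary $\partial X_0 \simeq \bigsqcup_i \n{S}^2$ carrying the free antipodal involution. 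The idea is then to use a Mayer--Vietoris / excision argument in equivariant Borel cohomology relating $(X\,|\,X^\tau)$, the free part $X_0$, and the disjoint cones, together with the already-known surjectivity results: bijectivity of $\kappa$ in the free case for $d=3$ (Proposition \ref{prop_free_surg_d=3}), and the explicit computation that $\kappa$ is bijective on $\n{S}^{4-n,n}$ (here we need the behavior on $\n{S}^{1,3}$-type spheres and cones, and on $\n{S}^{2}$ with the free antipodal action, which is the free sphere $\n{S}^{0,3}$).

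The technical core is a gluing construction for $\rr{Q}$-bundles. Given a class $\alpha \in H^{2}_{\Z_2}(X\,|\,X^\tau,\Z(1))$, I would restrict it to $X_0$ (using $H^2_{\Z_2}(X|X^\tau,\Z(1)) \to H^2_{\Z_2}(X_0,\Z(1))$, noting $X_0^\tau=\empt$ so the relative group becomes absolute as in Proposition \ref{corol:II}) and realize the restricted class by a $\rr{Q}$-bundle $\bb{E}_0$ on $X_0$ via Theorem \ref{theo:A_inject_fixed}(2) applied to the free space $X_0$. Separately, over each invariant cone $\overline{B_i}$ — which is equivariantly contractible to the point $x_i$, so every $\rr{Q}$-bundle on it is trivializable — one fixes the trivial even-rank $\rr{Q}$-bundle. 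One must then check the two bundles agree over the overlap $\partial X_0 \simeq \bigsqcup \n{S}^2$ with free antipodal action \emph{as $\rr{Q}$-bundles}: concretely, the restriction $\bb{E}_0|_{\partial X_0}$ must be $\rr{Q}$-isomorphic to the trivial one. This is controlled by $\mathrm{Vec}_{\rr{Q}}^{2m}(\n{S}^{0,3})$; since $\n{S}^{0,3}$ is a $2$-dimensional free involutive space, Theorem \ref{theo:A_inject_fixed}(2) gives $\mathrm{Vec}_{\rr{Q}}^{2m}(\n{S}^{0,3}) \simeq H^2_{\Z_2}(\n{S}^{0,3},\Z(1))$, and one computes this group and the restriction map from the class $\alpha$ (which, being in the image of a relative class, restricts trivially onto a neighborhood of $X^\tau$) to see the obstruction vanishes, possibly after adjusting $\bb{E}_0$ by an $\rr{R}$-line bundle supported near the boundary. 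Having matched the bundles on the overlap, the equivariant clutching construction produces a global $\rr{Q}$-bundle $\bb{E}$ on $X$, and a diagram chase in the Mayer--Vietoris sequence for relative equivariant cohomology shows $\kappa(\bb{E}) = \alpha$.

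The main obstacle I anticipate is precisely this overlap-matching step: ensuring that the freely-chosen $\rr{Q}$-bundle on the free part $X_0$ can be \emph{normalized} so that its restriction to the spherical boundary collars is genuinely $\rr{Q}$-trivial (not merely complex-trivial), and that this normalization is compatible with the prescribed relative class near the fixed points. This requires a careful bookkeeping of the $\mathrm{Pic}_{\rr{R}}$-torsor structure on $\rr{Q}$-bundles (Theorem \ref{teo:torsor_pica}) over the collar, together with the fact that $H^1$- and $H^2$-type equivariant cohomology of $\n{S}^{0,3}$ is small enough to absorb the discrepancy. A secondary subtlety is verifying that the long exact (Mayer--Vietoris) sequence in relative Borel cohomology splits the way one needs — i.e. that the connecting maps land where the clutching data predicts — which should follow from the dimension bound $d=3$ and the vanishing results of Proposition \ref{prop:zero_rel_cohom_low} applied to the lower-dimensional pieces $\partial X_0$ and $X^\tau$.
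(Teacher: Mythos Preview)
Your overall strategy---slice theorem, excise small antipodal balls around the fixed points, invoke the free $d=3$ case on the complement $X_0$, then glue---is exactly the skeleton of the paper's argument (see Lemma~\ref{lem:surj_fkmm_finite_fixed_point}). But there is a genuine gap in the step where you claim that ``a diagram chase shows $\kappa(\bb{E})=\alpha$''. Your construction uses only the restriction of $\alpha$ to $H^2_{\Z_2}(X_0,\Z(1))$, and this restriction map $H^2_{\Z_2}(X|X^\tau,\Z(1))\to H^2_{\Z_2}(X_0,\Z(1))$ is \emph{not} injective in general. Already for $X=\n{S}^{1,3}$ one has $H^2_{\Z_2}(X|X^\tau,\Z(1))\simeq\Z_2$ while $X_0\simeq\n{S}^{0,3}$ has $H^2_{\Z_2}(\n{S}^{0,3},\Z(1))\simeq\Z$, and the map $\Z_2\to\Z$ is forced to be zero. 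So restricting to the free part throws away precisely the ``section'' part of the relative class, and gluing in the \emph{trivial} $\rr{Q}$-bundle over each disk with an unspecified clutching isomorphism does not let you recover it. Your proposed fix (twisting $\bb{E}_0$ by an $\rr{R}$-line bundle near the boundary) modifies the determinant rather than the canonical section, so it does not address the right degree of freedom.

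What the paper does to close this gap is to separate the two pieces of the pair $(\bb{L},s)$ representing $\alpha$. First (Lemma~\ref{lem:surj_fkmm_finite_fixed_point}) it runs essentially your gluing argument to produce $\bb{E}'$ with $\det(\bb{E}')\simeq\bb{L}$, without trying to control the section. Then the discrepancy $\phi:=s/s_{\bb{E}'}:X^\tau\to\{\pm1\}$ is handled by a second bundle $\bb{E}''$ with \emph{trivial} determinant but prescribed canonical section $\phi$, built exactly as in Proposition~\ref{propos_surjFKMM_spac} by clutching with a degree-one map $\xi:\n{S}^{0,3}\to\hat{\n{U}}(2)$ satisfying $\det\xi=-1$ at the relevant fixed point. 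The direct sum $\bb{E}'\oplus\bb{E}''$ then has the correct FKMM-invariant, and stable-range (Proposition~\ref{theo:stab_ran_Q_even}) brings the rank back down to $2$. The missing idea in your proposal is this explicit control of the canonical section via the clutching map, rather than via a Mayer--Vietoris chase.
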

\noindent 
The Proof of this result is postponed to Section \ref{app:surj_fkmm_free_finit}.

\medskip
To find an involutive space of dimension $d=3$ on which the FKMM-invariant fails to be surjective is quite laborious. In Section \ref{app:quaternionic_lens}
 we present an explicit example. For $q > 0$, we consider the $3$-dimensional \emph{lens space}
$L_{2q}:=\n{S}^3/\Z_{2q}$ endowed with the involution $\tau$ induced by the complex conjugation on $\n{S}^3\subset\C^2$. The fixed point set is then the disjoint union of two circles. On this space one can explicitly construct all the $\rr{Q}$-bundles by means of the equivariant version of the \emph{clutching construction} for vector bundles. This provides ${\rm Vec}_{\rr{Q}}^{2m}(L_{2q},\tau)\simeq\Z_{2q}$ (Proposition \ref{prop:B_Z_4}). On the other hand an explicit computation shows that $H^{2}_{\Z_2}(L_{2q}|L_{2q}^\tau,\Z(1)\big)\simeq\Z_{4q}$ (Proposition \ref{prop:B_Z_8}) and the immediate consequence is that $\kappa$ cannot be surjective (Corollary \ref{prop:fkmm_lens}).

\medskip

This paper is organized as follows: In {\bf Section \ref{sect:motiv_liter}} we provide some physical motivation related to the problem of the classification of (low-dimensional) $\rr{Q}$-bundles and we discuss the related literature.
In {\bf Section \ref{sect:Q-VB_topological}}, for the benefit of the reader, we 
recall all the most important notions related to the theory of $\rr{Q}$-bundles. In particular we describe
the construction
of the FKMM-invariant in the  \emph{generalized version} introduced in  \cite{denittis-gomi-16}. {\bf Section \ref{subsec:tech_results}} contains all the technical results necessary for the proof of the surjectivity of the FKMM-invariant in all the situations considered in  Theorem \ref{theo:class_low_odd_R} and Theorem \ref{theo:A_inject_fixed}. 
In  {\bf Section \ref{app:quaternionic_lens}} we describe the classification of  $\rr{Q}$-bundles over the lens space $L_{2q}$ providing a family of examples in dimension $d=3$ where the FKMM-invariant fails to be surjective.
Finally, {\bf Appendix \ref{subsec:borel_cohom}} provides a \virg{crash introduction} to the theory of the equivariant Borel cohomology.

\medskip
\noindent
{\bf Acknowledgements.} 
GD's research is supported
 by
the  grant \emph{Iniciaci\'{o}n en Investigaci\'{o}n 2015} - $\text{N}^{\text{o}}$ 11150143 funded  by FONDECYT.	 KG's research is supported by 
the JSPS KAKENHI Grant Number 15K04871.
\medskip

\section{Physical motivation and related literature}\label{sect:motiv_liter}

Before entering in the core of the mathematical results proved in this paper let us briefly mention the relevance of the problem of the classification of {\virg{Real} and \virg{Quaternionic} vector bundles for a certain class of  mathematical physical applications. 

\medskip

In its simplest incarnation, a \emph{Topological Quantum System} (TQS) is a continuous matrix-valued map 
\begin{equation}\label{eq:intro_tqs0}
X\;\ni\;x\;\longmapsto\; H(x)\;\in\; \rm{Mat}_N(\C)
\end{equation}
defined on a \virg{nice} topological space $X$.
 Although a precise definition of TQS requires some more ingredients (see \eg \cite{denittis-gomi-14,denittis-gomi-14-gen,denittis-gomi-15,denittis-gomi-15-bis}), 
one can certainly
state that
the most relevant feature of these systems is the nature of the spectrum which is made by $N$ continuous \virg{energy} bands (taking into account possible degeneracies).
It is exactly this peculiar band structure which may encode information that are of topological nature. More precisely, let assume that it is possible to select $m<N$ bands that do not cross the other $N-m$ bands. Then, it is possible to construct a continuous projection-valued map  $X\ni x\mapsto P(x)\in \rm{Mat}_N(\C)$ such that $P(x)$ is the rank $m$ spectral projection of $H(x)$ associated to the spectral subspace selected by the $m$ energy bands at the point $x$.
Due to the classical Serre-Swan construction \cite{serre-55,swan-62} one can associate  to $x\mapsto P(x)$ a unique (up to isomorphisms) rank $m$ complex vector bundle $\bb{E}\to X$  sometimes called the \emph{spectral bundle} (see \cite[Section 2]{denittis-gomi-14} or \cite[Section 4]{denittis-lein-13} for the details of the construction). The considerable consequence of the duality between  gapped TQSs  and  spectral bundles  is that one can classify the possible topological phases of a TQS by means of the  elements of the set ${\rm Vec}_{\C}^{m}(X)$ of isomorphism classes of rank  $m$ complex vector bundles over $X$. As a result one is allowed to translate the problem of the enumeration of the possible topological phases of a TQS  into the classical problem in topology of the classification of 
${\rm Vec}_{\C}^{m}(X)$. The important result due to F. P. Peterson
\cite{peterson-59}  establishes that  this classification can be achieved in a \emph{computable way} by using the Chern classes which take values in the cohomology groups $H^{2j}(X,\Z)$.
In particular,  in  \emph{low dimension} 
the classification is completely specified by the first Chern class $c_1$ according to 
\eqref{eq:intro_tqs3}.

\medskip

TQSs of type \eqref{eq:intro_tqs0}
 are ubiquitous in mathematical physics (see e.g. the rich monographs \cite{bohm-mostafazadeh-koizumi-niu-zwanziger-03,chruscinski-jamiolkowski-04}). 
 They can be used to model
systems subjected to  \emph{cyclic adiabatic processes} in  classical and quantum mechanics \cite{pancharatnam-56,berry-84} or in the description of the \emph{magnetic monopole} \cite{dirac-31, yang-96}
and  the \emph{Aharonov-Bohm effect} \cite{aharonov-bohm-59} 
or in the molecular dynamics in the context of the  \emph{Born-Oppenheimer approximation} \cite{baer-06},
 just to mention few important examples. 
Probably the most popular example of a TQS comes from the Condensed Matter Physics
and concerns the dynamics of (independent) electrons in a crystalline periodic background. In this case the Bloch-Floquet formalism \cite{ashcroft-mermin-76,kuchment-93} allows to decompose the Schr\"odinger operator in a parametric family of operators like in \eqref{eq:intro_tqs0} labelled by the points of a torus $X=\n{T}^d$ ($d=1,2,3$),
usually known as the \emph{Brillouin zone}. In this particular case the classification of the topological phases  is completely specified by $H^2(\n{T}^d,\Z)$ due to \eqref{eq:intro_tqs3}, and the different topological phases are interpreted as the distinct \emph{quantized} values of the Hall conductance by means of the celebrated \emph{Kubo-Chern formula} \cite{thouless-kohmoto-nightingale-nijs-82,bellissard-elst-schulz-baldes-94}. The last result provides the theoretical explanation
of the \emph{Quantum Hall Effect} which is the prototypical example of topological insulating phases. Nowadays the study 
of topologically protected phases of \emph{topological insulators} is a \virg{hot topic} in Condensed Matter. Due to the vastness of the bibliography about topological insulators we refer to  two recent reviews  \cite{hasan-kane-10} and \cite{ando-fu-15} for an almost complete  overview on the subject.

\medskip

The problem of the classification of the topological phases becomes more interesting, and challenging, when the TQS is constrained by the presence of certain \emph{(pseudo-)symmetries} like the {time-reversal symmetry} (TRS). A system like \eqref{eq:intro_tqs0} is said to be time-reversal symmetric if there is an  \emph{involution} $\tau:X\to X$ on the base space and an \emph{anti}-unitary map $\Theta$ such that
\begin{equation}\label{eq:intro_tqs3bis}
\left\{
\begin{aligned}
\Theta\;H(x)\;\Theta^*\;&=\; H\big(\tau(x)\big)\;,&\quad\qquad \forall\ x\in\ X\;\\
\Theta^2\;&=\;\epsilon\;\n{1}_{N}& \epsilon=\pm1\;
\end{aligned}
\right.
\end{equation}
where $\n{1}_{N}$ denotes the $N\times N$ identity matrix.

\medskip

The case $\epsilon=+1$ corresponds to an \emph{even} (sometimes called \emph{bosonic}) TRS. In this case, the spectral 
 bundle $\bb{E}$ turns out to be equipped with the additional structure of an $\rr{R}$-bundle as showed in \cite[Section 2]{denittis-gomi-14}. Therefore, in the presence of an even TRS the  classification problem of the topological phases 
of a TQS is reduced to the study of the set  
${\rm Vec}_{\rr{R}}^{m}(X,\tau)$ and, at least in low-dimension,  the isomorphism 
\eqref{eq:intro_tqs4}
induced by  the first \virg{Real} Chern class $c_1^{\rr{R}}$  completely answers the question.

\medskip

The case $\epsilon=-1$ describes an \emph{odd} (sometimes called \emph{fermionic}) TRS. Also in this situation the spectral 
vector bundle $\bb{E}$ acquires an additional structure which converts $\bb{E}$ in a $\rr{Q}$-bundle.
Then, the topological phases of a TQS with an odd TRS are labelled  by the set ${\rm Vec}_{\rr{Q}}^{m}(X,\tau)$ and it becomes relevant for the study of these systems to have proper tools able to classify equivalence classes of $\rr{Q}$-bundles.

\medskip

The study of   systems with an odd TRS is more interesting, and for several reasons also harder, than the case of an even TRS. Historically, the fame of these fermionic systems begins with the seminal papers \cite{kane-mele-05,fu-kane-mele-95}  by 
L. Fu,  C. L. Kane and E. J. Mele. The central result of these works is the interpretation of a physical phenomenon called \emph{Quantum Spin Hall Effect} as the evidence  of a non-trivial topology for TQS constrained by an odd TRS. Specifically, the papers \cite{kane-mele-05,fu-kane-mele-95} are concerned about the study of systems like \eqref{eq:intro_tqs0} (with $N=4$ and $m=2$) where the base space is an involutive torus of type $\n{T}^{0,d,0}$ with $d=2,3$ (compare with \eqref{eq:not_inv_tor} for the notation).
 The distinctive aspect of this special \emph{involutive Brillouin zone}   is the existence of a \emph{fixed point set} formed by $2^d$ isolated points. The latter plays a crucial role in the classification scheme proposed in \cite{kane-mele-05,fu-kane-mele-95} where the different topological phases are distinguished by the signs that a particular function $\rr{d}_{\bb{E}}$  (essentially the inverse of a normalized pfaffian constructed from a particular frame of $\bb{E}$) takes on the $2^d$ fixed-points.
These $\Z_2$-numbers are usually known as \emph{Fu-Kane-Mele indices}. 

\medskip

In the last years the problem of the topological classification  of systems with an odd TRS has been discussed with several different approaches.  
As a matter of fact, many (if not almost all) of these approaches focus on the particular cases $\n{T}^{0,2,0}$ and $\n{T}^{0,3,0}$  with the aim of reproducing in different way the $\Z_2$-invariants described by the {Fu-Kane-Mele indices}.
From one hand there are classification schemes based on \emph{K-theory}  and \emph{$KK$-theory} \cite{kitaev-09,freed-moore-13,thiang-16,kellendonk-15,kubota-17,bourne-carey-rennie-16,prodan-schulz-baldes-16}
or \emph{equivariant homotopy} techniques \cite{kennedy-guggenheim-15,kennedy-zirnbauer-16} which are extremely general. In the opposite side there are constructive procedures based on the interpretation of the topological phases as \emph{obstructions} for the construction of trivial time-reversal symmetric frames (see \cite{porta-graf-13} for the case ${\T}^{0,2,0}$ and \cite{fiorenza-monaco-panati-14,monaco-cornean-teufel-16} for the generalization to the case ${\T}^{0,3,0}$)
or as \emph{spectral-flows} \cite{carey-phillips-schulz-baldes-16,denittis-schulz-baldes-13} or as \emph{index pairings} \cite{grossmann-schulz-baldes-15}.
All these approaches, in our opinion, present some limitations. The $K$-theory 
is unable to distinguish the \virg{spurious phases} (possibly) present outside of the \emph{stable rank} regime. The homotopy calculations are non-algorithmic and usually extremely hard. The use of the $KK$-theory 
up to now is restricted only to  the non-commutative version of the involutive Brillouin tori ${\T}^{0,2,0}$ and 
${\T}^{0,3,0}$. A similar consideration holds for the  recipes for the \virg{handmade} construction of trivial frames which are strongly dependent of the specific form of the involutive spaces ${\T}^{0,2,0}$ and 
${\T}^{0,3,0}$, and thus are  difficult to generalize  to other spaces and higher dimensions. Finally, none of these approaches clearly identifies the invariant which labels the different phases as a \emph{topological class},  as it happens in the cases of systems with broken TRS (\cf eq. \eqref{eq:intro_tqs3}) or with even TRS (\cf eq. \eqref{eq:intro_tqs4}).

\medskip

As already discussed,
TQS  of type \eqref{eq:intro_tqs0}
are ubiquitous in mathematical physics and there are no reason to focus the interest  only on the special examples coming from the physics of periodic electronic systems (a.k.a topological insulators).  Usually, one can think to the space $X$  as a \emph{configuration} space for {parameters} which describe an \emph{adiabatic} action of external fields on a system governed by the instantaneous
 Hamiltonian $H(x)$.
The phenomenology of these systems can be  enriched by the presence of certain symmetries like a TRS as in \eqref{eq:intro_tqs3bis}. Models of \emph{adiabatic} topological systems of this type have been recently investigated in 
\cite{carpentier-delplace-fruchart-gawedzki-15,carpentier-delplace-fruchart-gawedzki-tauber-15,gat-robbins-15}.
In particular, in  \cite{gat-robbins-15} the authors 
consider the adiabatically perturbed dynamics of a \emph{classic rigid rotor} and a \emph{classical  particle on a ring}. In the first case the classical phase space turns out to be $\n{S}^{0,3}$, namely a two-dimensional sphere endowed with the antipodal \emph{free} involution induced by the
 TRS. In the second case the phase space is a two-dimensional torus of type $\T^{1,1,0}$ which has a fixed-point set  of co-dimension one. None of these two cases can be treated with the technology developed for topological insulator due to the fact that  the
{Fu-Kane-Mele index} is just ill-defined when the fixed-point set is empty or of dimension higher than zero.
 The main result of 
\cite{gat-robbins-15} consists in  the classification of $\rr{Q}$-bundles  
over the involutive spaces $\n{S}^{0,3}$ and $\T^{1,1,0}$ and it is based on the analysis of the obstruction for
the \virg{handmade} construction of a trivial frame. As  a pay-off they obtained a classification which is not based on $\Z_2$-invariants, a fact that seems obvious from a geometric point of view but which seems to be \virg{revolutionary} if compared to the literature on TR-symmetric topological insulators  which is entirely focused on $\Z_2$-type invariants.

\medskip

The 
technique 
used in \cite{gat-robbins-15}
is hard (and tricky) to extend to higher dimensions or to  involutive spaces
different from $\n{S}^{0,3}$ and $\T^{1,1,0}$. Conversely, the classification provided by the map \eqref{eq:intro_tqs5} turns out to be 
extremely effective and versatile. In fact the invariant $\kappa$ is
\emph{intrinsic},
\emph{universal} and
 \emph{algorithmically computable}! As an example the formula \eqref{eq:intro_tqs5} allowed us to classify $\rr{Q}$-bundles over   a big class of involutive spheres and tori up to dimension three extending, in this way, the results in  \cite{gat-robbins-15}, see \cite[Table 1.1, Table 1.2 \& Table 1.3]{denittis-gomi-16}.

\medskip

 In conclusion, in our opinion, the identification of the FKMM-invariant $\kappa$ as the first (or fundamental) characteristic class for the category of \virg{Quaternionic} vector bundles can be of big utility in many questions of classification of topological phases arising from concrete physical problems.

\section{\virg{Quaternionic} vector bundles from a topological perspective}\label{sect:Q-VB_topological}
This section is devoted to the readers which are not familiar with the theory of \virg{Quaternionic} vector bundles. We provide here the main definitions,  discuss
the possibility of $\rr{Q}$-line bundles,  describe
 the stable rank splitting of $\rr{Q}$-bundles and finally  introduce the FKMM-invariant.

\subsection{Basic facts about \virg{Quaternionic} vector bundles}
\label{sect:basic_def}
In this section we recall some basic facts about the topological category of \virg{Quaternionic} vector bundles and we refer to \cite{dupont-69,denittis-gomi-14-gen,denittis-gomi-16} for a more systematic presentation. 

\medskip

\begin{definition}[{\virg{Quaternionic}} vector bundles]\label{defi:Q_VB}
A {\virg{Quaternionic}} vector bundle, or $\rr{Q}$-bundle, over  $(X,\tau)$
is a complex vector bundle $\pi:\bb{E}\to X$ endowed with a (topological) homeomorphism $\Theta:\bb{E}\to \bb{E}$
 such that:
\begin{itemize}

\item[$(Q_1)$] The projection $\pi$ is \emph{equivariant} in the sense that $\pi\circ \Theta=\tau\circ \pi$;
\vspace{1mm}
\item[$(Q_2)$] $\Theta$ is \emph{anti-linear} on each fiber, \ie $\Theta(\lambda p)=\overline{\lambda}\ \Theta(p)$ for all $\lambda\in\C$ and $p\in\bb{E}$ where $\overline{\lambda}$ is the complex conjugate of $\lambda$;
\vspace{1mm}
\item[$(Q_3)$] $\Theta^2$ acts fiberwise as the multiplication by $-1$, namely $\Theta^2|_{\bb{E}_x}=-\n{1}_{\bb{E}_x}$.
\end{itemize}
\end{definition}

\noindent
Let us recall that it
is always possible to endow
$\bb{E}$ with
an essentially unique \emph{equivariant} Hermitian metric $\rr{m}$ with respect to which $\Theta$ is an \emph{anti-unitary} map between conjugate fibers
\cite[Proposition 2.5]{denittis-gomi-14-gen}. We recall that equivariant means that 
$$
\rr{m}\big(\Theta(p_1),\Theta(p_2)\big)\,=\,\rr{m}\big(p_2,p_1\big)\;,\qquad\quad\forall\ (p_1,p_2)\in\bb{E}\times_\pi\bb{E}
$$
where $\bb{E}\times_\pi\bb{E}
:=\{(p_1,p_2)\in\bb{E}\times\bb{E}\ |\ \pi(p_1)=\pi(p_2)\}$.
 
  \medskip
 
A vector bundle \emph{morphism} $f$ between two vector bundles  $\pi:\bb{E}\to X$ and $\pi':\bb{E}'\to X$ 
over the same base space
is a  continuous map $f:\bb{E}\to \bb{E}'$ which is \emph{fiber preserving} in the sense that  $\pi=\pi'\circ f$
 and that restricts to a \emph{linear} map on each fiber $\left.f\right|_x:\bb{E}_x\to \bb{E}'_x$. Complex vector bundles over  $X$ together with vector bundle morphisms define a category and the symbol ${\rm Vec}^m_\C(X)$
 is used to denote the set of equivalence classes of isomorphic vector bundles of rank $m$.
    Also 
 $\rr{Q}$-bundles define a category with respect to  \emph{$\rr{Q}$-morphisms}. A $\rr{Q}$-morphism $f$ between two  $\rr{Q}$-bundles
 $(\bb{E},\Theta)$ and $(\bb{E}',\Theta')$ over the same involutive space $(X,\tau)$ 
  is a vector bundle morphism  commuting with the involutions, \ie $f\circ\Theta\;=\;\Theta'\circ f$. The set of equivalence classes of isomorphic $\rr{Q}$-bundles of  rank $m$ over $(X,\tau)$ 
  will be denoted by ${\rm Vec}_{\rr{Q}}^m(X,\tau)$.

\begin{remark}[{\virg{Real}} vector bundles]\label{rk_real}{\upshape
By changing condition $(Q_3)$
 in Definition \ref{defi:Q_VB}   with 
\begin{itemize}
\item[$(R)$] \emph{$\Theta^2$ acts fiberwise as the multiplication by $1$, namely $\Theta^2|_{\bb{E}_x}=\n{1}_{\bb{E}_x}$}\;
\end{itemize}
one ends in the category of \emph{\virg{Real}} (or $\rr{R}$)  \emph{vector bundles}. 
Isomorphism classes of rank $m$ $\rr{R}$-bundles  over the involutive space $(X,\tau)$ 
are denoted by ${\rm Vec}_{\rr{R}}^m(X,\tau)$. 
For more details we refer to \cite{atiyah-66,denittis-gomi-14}.
}\hfill $\blacktriangleleft$
\end{remark}

\medskip
\noindent
In the case of a trivial involutive space $(X,{\rm Id}_X)$ one has isomorphisms
\begin{equation}\label{eq:iso_trivial_involution}
{\rm Vec}_{\rr{Q}}^{2m}\big(X,{\rm Id}_X\big)\,\simeq\,{\rm Vec}_{\n{H}}^{m}\big(X\big)\;,\qquad\quad {\rm Vec}_{\rr{R}}^{m}\big(X,{\rm Id}_X\big)\,\simeq\,{\rm Vec}_{\n{R}}^{m}\big(X,{\rm Id}_X\big)\;,\qquad \ \ m\in\N
\end{equation}
where ${\rm Vec}_{\n{F}}^{m}\big(X\big)$ is the set of equivalence classes of vector bundles over
$X$ with typical fiber $\n{F}^m$ and $\n{H}$ denotes the skew field of quaternions. The proof of these isomorphisms are easy, and interested reader can find them in \cite[Proposition 2.2]{denittis-gomi-14-gen} and in \cite[Proposition 4.5]{denittis-gomi-14} for instance.
These two results justify the names \virg{Quaternionic} and \virg{Real} for the related categories.

\medskip

 Let $x\in X^\tau$ and $\bb{E}_x\simeq\C^m$ be the related fiber. In this case the restriction $\Theta|_{\bb{E}_x}\equiv J$ defines an  \emph{anti}-linear map $J : \bb{E}_x \to \bb{E}_x$ such that $J^2 = -\n{1}_{\bb{E}_x}$. Said differently the fiber $\bb{E}_x$ over each fixed point $x\in X^\tau$ is endowed with a \emph{quaternionic} structure (\cf \cite[Remark 2.1]{denittis-gomi-14-gen}).
This fact has an important  consequence: 
\begin{proposition}[{\cite[Proposition 2.1]{denittis-gomi-14-gen}}]
If $X^\tau\neq \empt$ then every \virg{Quaternionic} vector bundle over $(X,\tau)$   has necessarily even rank.
\end{proposition}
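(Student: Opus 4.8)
The plan is to reduce the statement to a single fixed point and then to invoke a purely linear-algebraic fact about quaternionic structures on complex vector spaces. Since $X$ is path-connected by Assumption \ref{ass:top}, the complex rank of $\bb{E}\to X$ is a well-defined integer $m$ which does not depend on the point, so it suffices to show that $\dim_\C\bb{E}_x$ is even for one conveniently chosen $x$. The hypothesis $X^\tau\neq\empt$ is exactly what allows us to pick such an $x\in X^\tau$.

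Next I would restrict $\Theta$ to the fiber over this fixed point. By property $(Q_1)$ and $\tau(x)=x$, the homeomorphism $\Theta$ maps $\bb{E}_x$ into itself, so we may set $J:=\Theta|_{\bb{E}_x}\colon\bb{E}_x\to\bb{E}_x$. Property $(Q_2)$ says that $J$ is anti-linear, $J(\lambda v)=\overline{\lambda}\,J(v)$, and property $(Q_3)$ says that $J^2=-\n{1}_{\bb{E}_x}$. Thus $J$ endows the finite-dimensional complex vector space $\bb{E}_x\simeq\C^m$ with a \emph{quaternionic structure}, precisely as recalled just before the statement.

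It then remains to prove the elementary lemma: a finite-dimensional complex vector space $V$ carrying an anti-linear map $J$ with $J^2=-\n{1}_V$ has even complex dimension. I would argue by induction on $\dim_\C V$. For any $v\neq0$ the vectors $v$ and $Jv$ are $\C$-linearly independent, for if $Jv=\lambda v$ then applying $J$ again and using anti-linearity gives $-v=J^2v=\overline{\lambda}\,Jv=|\lambda|^2v$, which is impossible. The plane $W:=\mathrm{span}_\C\{v,Jv\}$ is $J$-invariant, so $J$ descends to an anti-linear map $J_{V/W}$ on the quotient with $(J_{V/W})^2=-\n{1}_{V/W}$; by the inductive hypothesis $\dim_\C(V/W)$ is even, hence $\dim_\C V=\dim_\C(V/W)+2$ is even. (Equivalently, one may observe that $V$ becomes a left module over the skew field $\n{H}$ with $\ii$ acting as scalar multiplication and the quaternion unit $j$ acting as $J$, and that every module over $\n{H}$ is free, say of rank $k$, whence $\dim_\C V=2k$.) Applying this with $V=\bb{E}_x$ yields that $m$ is even, which is the claim. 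I do not expect a real obstacle in this argument: the only points requiring a word of care are that the rank is constant (path-connectedness of $X$) and that the lemma is applied fiberwise to a finite-dimensional space; the lemma itself is a two-line induction.
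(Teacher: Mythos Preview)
Your argument is correct and follows exactly the approach the paper sketches in the paragraph preceding the proposition: restrict $\Theta$ to the fiber over a fixed point to obtain a quaternionic structure $J$ on $\bb{E}_x$, and then invoke the elementary linear-algebra fact that a complex vector space with an anti-linear $J$ satisfying $J^2=-\n{1}$ has even dimension. The paper itself gives no further details (it just cites \cite[Proposition 2.1]{denittis-gomi-14-gen}), so your write-up is in fact more complete than what appears here.
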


\medskip

\noindent
The set
 ${\rm Vec}_{\rr{Q}}^{2m}(X,\tau)$ is non-empty since it contains at least the 
 \emph{trivial} element in the \virg{Quaternionic} category.
\begin{definition}[\virg{Quaternionic} product  bundle]\label{defi:quat_product_bund}
The rank $2m$ \virg{Quaternionic} product bundle over the involutive space $(X,\tau)$ is the complex vector bundle
$$
X\,\times\,\C^{2m}\,\longrightarrow\, X
$$
endowed with the \emph{product} $\rr{Q}$-structure
$$
\Theta_0 (x,{\rm v})\,=\,(\tau(x),Q\;\overline{{\rm v}})\;,\qquad\quad (x,{\rm v})\,\in\,X\,\times\,\C^{2m}
$$ 
where the matrix $Q$ is given by
\beql{eq:Q-mat}
Q
\;:=\;\left(
\begin{array}{rr}
0 & -1    \\
1 &  0 
\end{array}
\right)\;\otimes\;\n{1}_m\;=\;
\left(
\begin{array}{rr|rr|rr}
0 & -1 &        &        &   &    \\
1 &  0 &        &        &   &    \\
\hline
  &    & \ddots &        &   &    \\
  &    &        & \ddots &   &    \\
\hline
  &    &        &        & 0 & -1 \\
  &    &        &        & 1 &  0
\end{array}
\right)\;.
\eeq
\end{definition}

\medskip

\noindent
A \virg{Quaternionic} vector bundle is called \emph{$\rr{Q}$-trivial}  if and only if  it is isomorphic  to the \virg{Quaternionic} product  bundle in the category of $\rr{Q}$-bundles. Let us point out that when $X^\tau= \empt$ the sets ${\rm Vec}_{\rr{Q}}^{2m+1}(X,\tau)$ can be non-empty but in general there is no obvious candidate for the 
trivial \virg{Quaternionic} vector bundle in the odd rank case (see \eg \cite[Example 3.8]{denittis-gomi-16}).

\medskip

A \emph{section} of a complex vector bundle $\pi:\bb{E}\to X$ is a continuous map $s:X\to\bb{E}$ such that $\pi\circ s ={\rm Id}_X$. The set of sections $\Gamma(\bb{E})$ has the structure of a left $C(X)$-module with multiplication given by the pointwise product $(fs)(x):=f(x)s(x)$ for any $f\in C(X)$ and $s\in\Gamma(\bb{E})$ and for all $x\in X$. 
If $(\bb{E}, \Theta)$ is a $\rr{Q}$-bundle over $(X,\tau)$ then $\Gamma(\bb{E})$ is endowed with a natural anti-linear \emph{anti-involution} $\tau_\Theta:\Gamma(\bb{E})\to\Gamma(\bb{E})$ given by
$$
\tau_\Theta(s)\,:=\,\Theta\,\circ\, s\,\circ\, \tau\,.
$$
The compatibility with the $C(X)$-module structure is given by
$$
\tau_\Theta(fs)\,=\,\tau_*(f)\,\tau_\Theta(s)\,,\qquad\quad f\in C(X)\,,\ \ s\in\Gamma(\bb{E})
$$
where the anti-linear \emph{involution} $\tau_*:C(X)\to C(X)$ is defined by $\tau_*(f)(x):=\overline{f(\tau(x))}$. The triviality of a \virg{Quaternionic} vector bundle can be characterized in terms of global $\rr{Q}$-frames of sections \cite[Definition 2.1 \& Theorem 2.1]{denittis-gomi-14-gen}.

\medskip

$\rr{Q}$-bundles are  locally trivial  in the category of vector bundles over involutive spaces (that is $\rr{Q}$-locally trivial) \cite[Proposition 2.4]{denittis-gomi-14-gen} and fulfill the homotopy property with respect to equivariant homotopy deformation \cite[Theorem 2.3]{denittis-gomi-14-gen}. Let us just recall that given two involutive spaces
$(X_1,\tau_1)$ and $(X_2,\tau_2)$ one says that a continuous map $\phi:X_1\to X_2$ is \emph{equivariant} if and only if $\phi\circ \tau_1=\tau_2\circ\phi$. An equivariant homotopy between equivariant maps $\phi_0$ and $\phi_1$ is a continuous map $F:[0, 1]\times X_1\to X_2$ such that $\phi_t(\cdot):=F(t,\cdot)$  is equivariant for all $t\in[0,1]$. The set of the equivalence classes of equivariant maps between $(X_1,\tau_1)$ and $(X_2,\tau_2)$ with respect to the relation given by the equivariant homotopy is denoted by $[X_1, X_2]_{\Z_2}$. The equivariant homotopy property is the basis of the \emph{homotopy classification} for \virg{Quaternionic} vector bundles  \cite[Theorem 2.4]{denittis-gomi-14-gen}, namely
\begin{equation}\label{eq:homotopy_classification}
{\rm Vec}_{\rr{Q}}^{2m}\big(X,\tau\big)\,\simeq\,\big[X, \hat{G}_{2m}(\C^{\infty})\big]_{\Z_2}
\end{equation}
where $\hat{G}_{2m}(\C^{\infty}):=({G}_{2m}(\C^{\infty}),\rho)$ is the Grassmann manifold of $2m$-planes inside $\C^{\infty}$ endowed with a suitable quaternionic involution  $\rho$ (see \cite[Section 2.4]{denittis-gomi-14-gen} for more details).
The \virg{direct} classification of ${\rm Vec}_{\rr{Q}}^{2m}(X,\tau)$ through the calculation of $[X, \hat{G}_{2m}(\C^{\infty})]_{\Z_2}$ is an extraordinarily difficult task. For this reason one needs to develop other tools for classifying $\rr{Q}$-bundles. An effective way (at least in low dimension) is through the use of the  FKMM-invariant described in Section \ref{subsec:fkmm-invariant}.

\subsection{\virg{Real} and \virg{Quaternionic} line bundles}
\label{sect:real_quat_line_bund}
Let us introduce 
$$
{\rm Pic}_{\rr{R}}\big(X,\tau\big)\,\equiv\,{\rm Vec}_{\rr{R}}^1\big(X,\tau\big)\;,\qquad\quad {\rm Pic}_{\rr{Q}}\big(X,\tau\big)\,\equiv\,{\rm Vec}_{\rr{Q}}^1\big(X,\tau\big)
$$ 
which are the sets of isomorphism classes of \virg{Real} and \virg{Quaternionic} line bundles
on $(X,\tau)$, respectively. 
As a matter of fact $\rr{Q}$-bundles of odd rank can be defined only over involutive base spaces $(X,\tau)$  with a \emph{free} involution (meaning that $X^\tau=\text{\rm \O}$) \cite[Proposition 2.1]{denittis-gomi-14-gen}. On the other hand there is no restriction for the definition of $\rr{R}$-line bundles
and the set ${\rm Pic}_{\rr{R}}(X,\tau)$ gives rise to an abelian group under the tensor
product which  is known as the \emph{\virg{Real} Picard group}. This group is classified by the  \emph{Kahn's isomorphism} \cite{kahn-59} (see also  \cite[Corollary A.5]{gomi-13})
\begin{equation}\label{eq:kahn_iso}
c^{\rr{R}}_1\;:\;{\rm Pic}_{\rr{R}}(X,\tau)\;\stackrel{\simeq}{\longrightarrow}\;H^2_{\Z_2}\big(X,\Z(1)\big)
\end{equation}
where in the right-hand side there is the Borel equivariant cohomology group of $X$ with local coefficients $\Z(1)$ (we refer to Appendix \ref{subsec:borel_cohom} for more details) and the characteristic class $c^{\rr{R}}_1$ which realizes the isomorphism is called the first \emph{\virg{Real} Chern class}.

\medskip

On the other hand  ${\rm Pic}_{\rr{Q}}(X,\tau)$, when definable, does not possess a  group structure under the tensor
product.
 However, under the essential assumption that ${\rm Pic}_{\rr{Q}}(X,\tau)\neq\text{\rm \O}$,
 one can use the tensor product to define a left group-action of
${\rm Pic}_{\rr{R}}(X,\tau)$  on the set ${\rm Pic}_{\rr{Q}}(X,\tau)$:
$$
\begin{aligned}
 &{\rm Pic}_{\rr{R}}(X,\tau)\;\times\; {\rm Pic}_{\rr{Q}}(X,\tau)&\;\longrightarrow\;&\ \ \ \ \ \ \ {\rm Pic}_{\rr{Q}}(X,\tau)&\\
 &\ \ \ \ \ \  \ \  \ ([\bb{L}_{\rr{R}}],[\bb{L}_{\rr{Q}}])&\;\longmapsto\;&\ \ \ \ \  [\bb{L}_{\rr{R}}\;\otimes\;\bb{L}_{\rr{Q}}]\;.&
\end{aligned}
$$
It turns out that this action defines a \emph{torsor}. According to \cite[Theorem 3.1 \& Corollary 3.2]{denittis-gomi-16} one has that:
\begin{theorem}[\virg{Quaternionic} Picard torsor]\label{teo:torsor_pica}
Let $(X,\tau)$ be an involutive space with free involution, \ie  $X^\tau=\text{\rm \O}$.
Assume that ${\rm Pic}_{\rr{Q}}(X,\tau)\neq\text{\rm \O}$. Then, ${\rm Pic}_{\rr{Q}}(X,\tau)$ is a torsor under the group-action induced by ${\rm Pic}_{\rr{R}}(X,\tau)$
and one has a bijection of sets
$$
{\rm Pic}_{\rr{Q}}(X,\tau)\;\simeq\;{\rm Pic}_{\rr{R}}(X,\tau)\;\simeq\;H^2_{\Z_2}\big(X,\Z(1)\big)
$$
\end{theorem}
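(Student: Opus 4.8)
\textbf{Proof proposal for Theorem \ref{teo:torsor_pica}.}

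The plan is to establish the torsor property first and then deduce the chain of bijections. For the torsor statement, I need to show that the left action of ${\rm Pic}_{\rr{R}}(X,\tau)$ on the nonempty set ${\rm Pic}_{\rr{Q}}(X,\tau)$ is free and transitive. Transitivity is the crucial point: given two $\rr{Q}$-line bundles $(\bb{L}_{\rr{Q}},\Theta)$ and $(\bb{L}'_{\rr{Q}},\Theta')$, I would consider the complex line bundle $\bb{L}_{\rr{R}}:=\bb{L}'_{\rr{Q}}\otimes\bb{L}_{\rr{Q}}^{-1}$, where $\bb{L}_{\rr{Q}}^{-1}$ is the dual of $\bb{L}_{\rr{Q}}$ (equivalently its conjugate, using the equivariant Hermitian metric). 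The involution on a tensor product of a $\rr{Q}$-structure with the ``inverse'' $\rr{Q}$-structure satisfies $\Theta^2=(-1)\cdot(-1)=+1$, so $\bb{L}_{\rr{R}}$ naturally carries an $\rr{R}$-structure; and by construction $\bb{L}_{\rr{R}}\otimes\bb{L}_{\rr{Q}}\cong\bb{L}'_{\rr{Q}}$ as $\rr{Q}$-line bundles, which is transitivity. Here I would use that for line bundles $\bb{L}\otimes\bb{L}^{-1}$ is canonically trivial as a complex line bundle, so the rank-$1$ assumption is what makes this argument work (it fails for higher rank). Freeness amounts to showing that if $\bb{L}_{\rr{R}}\otimes\bb{L}_{\rr{Q}}\cong\bb{L}_{\rr{Q}}$ then $\bb{L}_{\rr{R}}$ is $\rr{R}$-trivial, which again follows by tensoring the isomorphism with $\bb{L}_{\rr{Q}}^{-1}$ and using cancellation of line bundles.

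Once the torsor property is in place, the first bijection ${\rm Pic}_{\rr{Q}}(X,\tau)\simeq{\rm Pic}_{\rr{R}}(X,\tau)$ is automatic: choosing any base point $[\bb{L}_{\rm ref}]\in{\rm Pic}_{\rr{Q}}(X,\tau)$ (which exists since the set is nonempty by hypothesis), the orbit map $[\bb{L}_{\rr{R}}]\mapsto[\bb{L}_{\rr{R}}\otimes\bb{L}_{\rm ref}]$ is a bijection of sets. Note this bijection is \emph{not} canonical; it depends on the choice of reference $\rr{Q}$-line bundle, consistent with the remarks in the introduction about the normalization of the FKMM-invariant. The second bijection ${\rm Pic}_{\rr{R}}(X,\tau)\simeq H^2_{\Z_2}(X,\Z(1))$ is precisely the Kahn isomorphism \eqref{eq:kahn_iso}, which I would invoke directly; here no freeness of the involution is needed, but the hypothesis $X^\tau=\text{\O}$ is what guarantees that odd-rank (in particular rank-$1$) $\rr{Q}$-bundles can exist at all.

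I expect the main obstacle to be making the transitivity argument rigorous at the level of $\rr{Q}$-structures rather than just complex line bundles — specifically, verifying that the induced involution on $\bb{L}'_{\rr{Q}}\otimes\bb{L}_{\rr{Q}}^{-1}$ genuinely squares to $+1$ fiberwise and is continuous and anti-linear, and that the canonical trivialization $\bb{L}_{\rr{Q}}^{-1}\otimes\bb{L}_{\rr{Q}}\cong X\times\C$ intertwines the tensor-product involution with the product $\rr{R}$-structure. This requires a careful bookkeeping of how the dual/conjugate $\rr{Q}$-structure is defined (using the essentially unique equivariant Hermitian metric recalled after Definition \ref{defi:Q_VB}) and how anti-linear involutions behave under tensor products. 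Everything else is formal once this structural lemma is established; I would state that lemma separately and then assemble the three displayed bijections in one line.
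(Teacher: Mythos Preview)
Your proposal is correct and is the natural direct argument. Note, however, that the paper does not actually prove this theorem: it is stated as Theorem~\ref{teo:torsor_pica} with an explicit attribution to \cite[Theorem 3.1 \& Corollary 3.2]{denittis-gomi-16}, so there is no in-paper proof to compare against. Your sketch is presumably close to what appears in that reference, since the ``ratio'' construction $\bb{L}'_{\rr{Q}}\otimes\bb{L}_{\rr{Q}}^{-1}$ is the only reasonable candidate for exhibiting transitivity.

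The structural checks you flag as the main obstacle all go through cleanly. With the dual $\rr{Q}$-structure $\Theta^{\vee}$ on $\bb{L}_{\rr{Q}}^{-1}$ defined by $(\Theta^{\vee}\phi)(w):=\overline{\phi(\Theta^{-1}w)}$, one computes $(\Theta^{\vee})^2=-1$ (so the dual of a $\rr{Q}$-line bundle is again $\rr{Q}$), hence $(\Theta'\otimes\Theta^{\vee})^2=(-1)\cdot(-1)=+1$ on $\bb{L}'_{\rr{Q}}\otimes\bb{L}_{\rr{Q}}^{-1}$, giving the $\rr{R}$-structure. The evaluation isomorphism $\bb{L}_{\rr{Q}}^{-1}\otimes\bb{L}_{\rr{Q}}\to X\times\C$ intertwines $\Theta^{\vee}\otimes\Theta$ with complex conjugation because $(\Theta^{\vee}\phi)(\Theta v)=\overline{\phi(\Theta^{-1}\Theta v)}=\overline{\phi(v)}$. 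So your anticipated lemma is a two-line verification, and the rest is, as you say, formal.
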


\medskip

In \cite[Section 3.2]{denittis-gomi-16} the interested reader can find various  examples of non-trivial   \virg{Quaternionic} Picard torsors defined over spheres with free antipodal involutions.

\subsection{Stable range}
\label{sect:stable_range}
The {stable rank condition} 
expresses the pretty general fact that the non trivial topology of a vector bundle
can be concentrated in a sub-vector bundle of \virg{minimal rank}. This minimal value  depends on the dimensionality of the base space and on the category of vector bundles under consideration. 
For complex (as well as real or quaternionic) vector bundles the  stable rank condition is a well-known result (see \eg \cite[Chapter 9, Theorem 1.2]{husemoller-94}), based on an obstruction-type argument \cite[Chapter 2, Theorem 7.1]{husemoller-94}.  The key argument can be generalized to vector bundles over spaces with involution 
by means of the notion of  $\Z_2$-CW-complex \cite{matumoto-71,allday-puppe-93} (see also \cite[Section 4.5]{denittis-gomi-14}). This allows to determinate   the 
stable rank condition in the case of the \virg{Real}  and the \virg{Quaternionic} categories. 
In the \virg{Real} case the {stable rank condition is described by the following result.
 \begin{proposition}[Stable condition for $\rr{R}$-bundles]\label{rk:stable_rank_Real}
Let $(X,\tau)$ be an involutive space such that $X$ has a finite $\Z_2$-CW-complex decomposition of dimension $d$. Assume that: (a)  $X^\tau=\text{\rm \O};$ or (b) $X^\tau$ is a $\Z_2$-CW-complex of dimension zero. Then,
\beql{eq:stab_rank_R_low_d}
{\rm Vec}^{m}_{\rr{R}}\big(X, \tau\big)\;\simeq\; {\rm Vec}^{\sigma}_{\rr{R}}(X, \tau)\qquad\quad \forall\ m \;\geqslant\; \frac{d+1}{2}\;
\eeq
where $\sigma:=[\frac{d}{2}]$ (here $[x]$ denotes the integer part of $x\in\R$). In particular, in low dimension one obtains:
\begin{align}
{\rm Vec}^{m}_{\rr{R}}\big(X, \tau\big)&\;=\; 0& \text{if}& \ \ d=0,1\ \ \ \ \ \ \forall\ m\in\N \label{eq:stab_rank_R_low_d=1}\\
{\rm Vec}^{m}_{\rr{R}}\big(X, \tau\big)&\;\simeq\; {\rm Pic}_{\rr{R}}\big(X,\tau\big)& \text{if}& \ \ 2\leqslant d\leqslant 3\ \ \ \forall\ m\in\N\;.\label{eq:stab_rank_R_low_d>1}
\end{align}
\end{proposition}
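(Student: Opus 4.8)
The plan is to transcribe, into the $\Z_2$-equivariant setting of \virg{Real} bundles, the classical obstruction-theoretic proof of the stable range theorem for complex vector bundles (\cite[Chapter~2, Theorem~7.1]{husemoller-94}, \cite[Chapter~9, Theorem~1.2]{husemoller-94}), and then read off the low-dimensional cases. First I would equip a rank-$r$ $\rr{R}$-bundle $(\bb{E},\Theta)$ with an equivariant Hermitian metric $\rr{m}$, so that $\Theta$ is anti-unitary between conjugate fibres; this is possible by the \virg{Real} analogue of \cite[Proposition~2.5]{denittis-gomi-14-gen}, whose proof goes through unchanged. The elementary point is that a nowhere-vanishing section $s\in\Gamma(\bb{E})$ with $\tau_\Theta(s)=s$ spans an $\rr{R}$-subbundle of $\bb{E}$ which, through $s$, is $\rr{R}$-isomorphic to the trivial $\rr{R}$-line bundle $(X\times\C,\Theta_0)$, and whose $\rr{m}$-orthogonal complement is again $\Theta$-invariant (here one uses the equivariance of $\rr{m}$); such an $s$ therefore produces an $\rr{R}$-splitting $\bb{E}\simeq(X\times\C,\Theta_0)\oplus\bb{E}'$ with $\bb{E}'$ of rank $r-1$. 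This reduces \eqref{eq:stab_rank_R_low_d} to two statements: \emph{existence}, that every $\rr{R}$-bundle of rank $r\geqslant\sigma+1$ carries a $\tau_\Theta$-invariant nowhere-vanishing section (so that one may peel off trivial $\rr{R}$-line summands down to rank $\sigma$), and \emph{cancellation}, that the complements so obtained are independent of the choices up to $\rr{R}$-isomorphism.

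The existence statement I would handle by equivariant obstruction theory along a finite $\Z_2$-CW decomposition of $X$, constructing the section one skeleton at a time. Over a \emph{free} $k$-cell $\Z_2\times D^k$ a $\tau_\Theta$-invariant section is determined by its restriction to $D^k\times\{1\}$, and, after a local trivialisation of $\bb{E}$, the obstruction to extending it over $D^k$ rel $\partial D^k$ lies in $\pi_{k-1}(\C^r\setminus\{0\})\cong\pi_{k-1}(\n{S}^{2r-1})$, which vanishes once $k\leqslant 2r-1$; since $\dim X=d$ this holds under the hypothesis $r\geqslant\frac{d+1}{2}$. Over a \emph{fixed} $k$-cell the section is forced to take values in the $\Theta$-fixed, i.e.\ \emph{real}, part $\R^r\setminus\{0\}\simeq\n{S}^{r-1}$ of the fibre, so the relevant group becomes $\pi_{k-1}(\n{S}^{r-1})$; under hypothesis (a) there are no fixed cells, and under hypothesis (b) every fixed cell is a point, so that $\n{S}^{r-1}\neq\empt$ for $r\geqslant 1$ disposes of them. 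I expect cancellation to be the main obstacle: the naive obstruction argument — homotoping two trivial sub-line-bundles through nowhere-vanishing sections — requires the rank to be slightly above $\sigma+1$, the slack coming from the fact that the \virg{real} directions over $X^\tau$ have only half the dimension of the generic fibre, which is also exactly why hypothesis (b) must demand a $0$-dimensional fixed set. In the low-dimensional regime needed below, I would sidestep this by appealing to the first \virg{Real} Chern class.

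Finally I would read off the low-dimensional cases. For $d=0,1$ I would argue directly: every complex vector bundle over $X$ is trivial, and an $\rr{R}$-structure on it can be normalised over the (at most $0$-dimensional) fixed set and then carried over the remaining free $1$-cells because $GL_m(\C)$ is path-connected; hence ${\rm Vec}^m_{\rr{R}}(X,\tau)=0$, which is \eqref{eq:stab_rank_R_low_d=1}. For $2\leqslant d\leqslant 3$ the analysis above shows that any $\rr{R}$-bundle of rank $m\geqslant 2$ splits as $m-1$ copies of the trivial $\rr{R}$-line bundle together with a single $\rr{R}$-line bundle $\bb{L}$, so the stabilisation map ${\rm Pic}_{\rr{R}}(X,\tau)\to{\rm Vec}^m_{\rr{R}}(X,\tau)$ is surjective; and if two such decompositions have line-bundle parts $\bb{L}_1,\bb{L}_2$ and isomorphic total bundles, then applying $c_1^{\rr{R}}$ — which is additive and vanishes on trivial summands — gives $c_1^{\rr{R}}(\bb{L}_1)=c_1^{\rr{R}}(\bb{L}_2)$, whence $\bb{L}_1\simeq\bb{L}_2$ by Kahn's isomorphism \eqref{eq:kahn_iso}. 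This gives the bijection \eqref{eq:stab_rank_R_low_d>1} and completes the proof.
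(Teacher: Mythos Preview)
The paper does not prove this proposition in the text; it simply refers to \cite[Theorem~4.25]{denittis-gomi-14} and \cite[Remark~4.3]{denittis-gomi-16}. Your outline --- equivariant obstruction theory over the $\Z_2$-CW skeleta for the existence of an invariant nowhere-vanishing section, together with $c_1^{\rr{R}}$ and Kahn's isomorphism for cancellation in low dimension --- is precisely the standard argument those references carry out, and your explanation of why hypothesis~(b) is needed (over fixed cells the section is forced into the ``real'' sphere $\n{S}^{r-1}$, which has only half the connectivity of $\n{S}^{2r-1}$) is exactly the point.

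The one loose end you flag is genuine: for the general statement \eqref{eq:stab_rank_R_low_d} at odd $d$ and rank $\sigma+1$, the top free $d$-cells produce an obstruction in $\pi_d(\n{S}^{2\sigma+1})=\pi_d(\n{S}^{d})\neq 0$, so the naive ``homotope two sections'' argument does not close the cancellation step by itself. Your workaround via the first \virg{Real} Chern class handles $d\leqslant 3$ completely, and since only the low-dimensional consequences \eqref{eq:stab_rank_R_low_d=1}--\eqref{eq:stab_rank_R_low_d>1} are used in the rest of the paper, your proof is adequate for everything downstream.
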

\medskip

\noindent
For the proof of this result the reader can refer to \cite[Theorem 4.25]{denittis-gomi-14} and \cite[Remark 4.3]{denittis-gomi-16}. Let us mention that the situation turns out to be quite different when $X^\tau$ is a $\Z_2$-CW-complex of dimension higher than zero
as pointed out in \cite[Remark 4.24]{denittis-gomi-14}.

\medskip

In the case of $\rr{Q}$-bundles it is easier to consider separately the even-rank case and the odd-rank case which necessarily requires  $X^\tau=\text{\rm \O}$.
\begin{proposition}[Stable condition for $\rr{Q}$-bundles: even-rank]
\label{theo:stab_ran_Q_even}
Let $(X,\tau)$ be an involutive space such that $X$ has a finite $\Z_2$-CW-complex decomposition of dimension $d$. Then,
\beql{eq:stab_rank_Q_low_d}
{\rm Vec}^{2m}_{\rr{Q}}\big(X, \tau\big)\;\simeq\; {\rm Vec}^{2\sigma}_{\rr{Q}}\big(X, \tau\big)\qquad\quad \forall\ m \;\geqslant\; \frac{d+3}{4}\;
\eeq
where $\sigma:=[\frac{d+2}{4}]$. In particular, in low dimension one obtains:
\begin{align}
{\rm Vec}^{2m}_{\rr{Q}}\big(X, \tau\big)&\;=\; 0& \text{if}& \ \ d=0,1\ \ \ \ \ \ \forall\ m\in\N \label{eq:stab_rank_Q_low_d=1}\\
{\rm Vec}^{2m}_{\rr{Q}}\big(X, \tau\big)&\;\simeq\; {\rm Vec}^{2}_{\rr{Q}}\big(X, \tau\big)& \text{if}& \ \ 2\leqslant d\leqslant 5\ \ \ \forall\ m\in\N\;.\label{eq:stab_rank_Q_low_d>1}
\end{align}
\end{proposition}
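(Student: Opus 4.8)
The plan is to prove the stabilization isomorphism \eqref{eq:stab_rank_Q_low_d}; the three displayed identities then follow by elementary arithmetic. The isomorphism is realized by the \emph{stabilization map}
$$
{\rm Vec}^{2m}_{\rr{Q}}\big(X,\tau\big)\;\longrightarrow\; {\rm Vec}^{2m+2}_{\rr{Q}}\big(X,\tau\big)\;,\qquad (\bb{E},\Theta)\;\longmapsto\; (\bb{E},\Theta)\oplus(\bb{E}_0^2,\Theta_0)\;,
$$
which adds the rank-$2$ \virg{Quaternionic} product bundle $(\bb{E}_0^2,\Theta_0)$ of Definition~\ref{defi:quat_product_bund}. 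I would first reduce \eqref{eq:stab_rank_Q_low_d} to the assertion that this map is a \emph{bijection} whenever the smaller rank satisfies $m\geqslant\tfrac{d+3}{4}$; iterating it downwards then collapses ${\rm Vec}^{2m}_{\rr{Q}}(X,\tau)$ onto ${\rm Vec}^{2\sigma}_{\rr{Q}}(X,\tau)$, with $\sigma=[\tfrac{d+2}{4}]$ the smallest even rank at which the reduction stops. The special cases are then numerical: for $d=0,1$ one has $\sigma=0$, so every even-rank $\rr{Q}$-bundle is $\rr{Q}$-trivial (here only surjectivity of the stabilization map is used), while for $2\leqslant d\leqslant 5$ one has $\sigma=1$, giving ${\rm Vec}^{2m}_{\rr{Q}}(X,\tau)\simeq{\rm Vec}^{2}_{\rr{Q}}(X,\tau)$.

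\noindent\textbf{Surjectivity (splitting off a trivial rank-$2$ summand).}
Fix the equivariant Hermitian metric of \cite[Proposition~2.5]{denittis-gomi-14-gen}. It then suffices to construct, for any rank-$2m$ $\rr{Q}$-bundle $(\bb{E},\Theta)$ with $m\geqslant\tfrac{d+3}{4}$, a $\rr{Q}$-monomorphism $\iota\colon(\bb{E}_0^2,\Theta_0)\hookrightarrow(\bb{E},\Theta)$: since $\iota(\bb{E}_0^2)$ is $\Theta$-invariant, so is its orthogonal complement, which inherits a rank-$(2m-2)$ \virg{Quaternionic} structure, whence $(\bb{E},\Theta)\cong(\bb{E}_0^2,\Theta_0)\oplus(\bb{E}',\Theta')$. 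Unwinding Definitions~\ref{defi:Q_VB} and~\ref{defi:quat_product_bund}, such an $\iota$ is precisely a $\Theta$-equivariant section of the bundle $x\mapsto{\rm Mono}(\C^2,\bb{E}_x)$ of complex-linear injections, with the $\Z_2$-action induced by $\Theta_0$ on the source and $\Theta$ on the target. Over a fixed point $x\in X^\tau$ the relevant fibre of the equivariant section problem is the fixed-point set of this action, i.e. the space of injective $\n{H}$-linear maps $\n{H}\to\bb{E}_x\cong\n{H}^m$, which retracts onto the quaternionic unit sphere $S(\n{H}^m)\cong\n{S}^{4m-1}$ and is $(4m-2)$-connected; over a free orbit $\{x,\tau(x)\}$ the fibre is ${\rm Mono}(\C^2,\C^{2m})$, homotopy equivalent to the complex Stiefel manifold $V_2(\C^{2m})$ of $2$-frames, which is $(4m-4)$-connected. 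Cellular equivariant obstruction theory on the $\Z_2$-CW-complex $X$ then produces the section: the obstruction across a $k$-cell lies in an equivariant cohomology group with coefficients in the $(k-1)$-st homotopy group of the pertinent fibre, the free cells provide the binding constraint, and all obstructions vanish once $d\leqslant 4m-3$, i.e. $m\geqslant\tfrac{d+3}{4}$.

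\noindent\textbf{Injectivity (cancellation).}
If $(\bb{E}_1,\Theta_1)\oplus(\bb{E}_0^2,\Theta_0)\cong(\bb{E}_2,\Theta_2)\oplus(\bb{E}_0^2,\Theta_0)$ with $m\geqslant\tfrac{d+3}{4}$, then the two embeddings of $(\bb{E}_0^2,\Theta_0)$ into the common rank-$2m$ bundle can be joined by a path of $\rr{Q}$-monomorphisms provided the associated difference obstructions vanish; these live in equivariant cohomology with coefficients in the homotopy of the same two Stiefel fibres, now for the section problem over $X\times[0,1]$ relative to $X\times\{0,1\}$, and are again governed by the connectivities above. Taking $\Theta$-invariant orthogonal complements of two $\rr{Q}$-homotopic embeddings yields $(\bb{E}_1,\Theta_1)\cong(\bb{E}_2,\Theta_2)$. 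This is the $\Z_2$-equivariant refinement of the classical stabilization theorem \cite[Chapter~9, Theorem~1.2]{husemoller-94}, run over $\Z_2$-CW-complexes as in \cite[Section~4.5]{denittis-gomi-14}; equivalently, via the homotopy classification \eqref{eq:homotopy_classification}, the whole statement amounts to the high connectivity---on underlying spaces and on $\Z_2$-fixed-point sets alike---of the equivariant inclusion $\hat{G}_{2m}(\C^\infty)\hookrightarrow\hat{G}_{2m+2}(\C^\infty)$ obtained by adding a fixed quaternionic plane.

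\noindent\textbf{Where the difficulty lies.}
The hard part is the bookkeeping in the \emph{equivariant} obstruction theory: the cellular induction must be carried out simultaneously over the free cells---where the governing fibre is the less connected complex Stiefel manifold $V_2(\C^{2m})$---and over the fixed cells---where it is the more connected quaternionic sphere $\n{S}^{4m-1}$---and one must check that the fixed cells never dictate the range while tracking the connectivity counts sharply enough to recover the precise thresholds $\tfrac{d+3}{4}$ and $[\tfrac{d+2}{4}]$. It is exactly the free-cell fibre that produces the \virg{quaternionic} jump of $4$ in the rank, as opposed to the jump of $2$ underlying Proposition~\ref{rk:stable_rank_Real} for $\rr{R}$-bundles. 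The reduction of the $d=0,1$ case is then anchored by the trivial observation that ${\rm Vec}^{0}_{\rr{Q}}(X,\tau)$ consists of the zero bundle alone.
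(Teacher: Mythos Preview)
Your approach is correct and is precisely the one the paper has in mind: the paper does not prove this proposition but refers to \cite[Theorem~2.5]{denittis-gomi-14-gen} and \cite[Theorem~4.2]{denittis-gomi-16}, after explicitly noting (just before Proposition~\ref{rk:stable_rank_Real}) that the stable-rank condition is ``based on an obstruction-type argument'' generalized to $\Z_2$-CW-complexes. Your sketch---stabilization map, equivariant obstruction theory cell-by-cell, with the fibre $V_2(\C^{2m})$ over free cells and the quaternionic sphere $\n{S}^{4m-1}$ over fixed cells---is exactly that argument, and your observation that the free cells are the binding constraint is the reason the threshold is $\tfrac{d+3}{4}$ rather than the sharper quaternionic bound one would get for a trivial involution.

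One small point of bookkeeping to watch: in your surjectivity paragraph you split a trivial rank-$2$ summand from a rank-$2m$ bundle with $m\geqslant\tfrac{d+3}{4}$, which literally proves surjectivity of ${\rm Vec}^{2(m-1)}_{\rr{Q}}\to{\rm Vec}^{2m}_{\rr{Q}}$ rather than of ${\rm Vec}^{2m}_{\rr{Q}}\to{\rm Vec}^{2(m+1)}_{\rr{Q}}$ as announced in your first paragraph. This is harmless---iterating the splitting down to rank $2\sigma$ is exactly what you want, and the arithmetic $\sigma=\lceil\tfrac{d+3}{4}\rceil-1=[\tfrac{d+2}{4}]$ checks out---but the two conventions should be aligned. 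Similarly, for the injectivity step the relevant fibre is $V_2(\C^{2(m+1)})$ (the embeddings live in the larger bundle), which is $4m$-connected; you should track this index shift when matching the uniqueness range against the existence range.
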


\medskip

\noindent
The proof of this result has been given first in \cite[Theorem 2.5]{denittis-gomi-14-gen} under certain conditions for the fixed-point set $X^\tau$ and then generalized in \cite[Theorem 4.2]{denittis-gomi-16}.

\medskip

The odd-rank case is slightly different  and strongly  depends on the existence
 of a $\rr{Q}$-line bundle. Consider an involutive space  $(X,\tau)$ such that $X^\tau=\text{\rm \O}$. Then one can show that
$$
{\rm Vec}^{2m+1}_{\rr{Q}}\big(X, \tau\big)\;\neq\;\text{\rm \O}\ \ \ \ \ \Leftrightarrow \ \ \ \ \  {\rm Pic}_{\rr{Q}}\big(X,\tau\big)\;\neq\;\text{\rm \O}\;.
$$ 
 and in the interesting case ${\rm Pic}_{\rr{Q}}(X,\tau)\neq\text{\rm \O}$ there are 
  bijections of sets
$$
{\rm Vec}^{m}_{\rr{Q}}\big(X, \tau\big)\;\simeq\;{\rm Vec}^{m}_{\rr{R}}\big(X, \tau\big)\;\qquad\quad \forall\ m\in\N\;.
$$
These last two facts are proved in  \cite[Lemma 4.4]{denittis-gomi-16} and are key arguments for the determination of the stable condition in the odd-rank case \cite[Theorem 4.5]{denittis-gomi-16}.
\begin{proposition}[Stable condition for $\rr{Q}$-bundles: odd-rank]
\label{theo:stab_ran_Q_odd}
Let $(X,\tau)$ be an involutive space such that $X$ has a finite $\Z_2$-CW-complex decomposition of dimension $d$,  $X^\tau=\text{\emph{\O}}$ and  ${\rm Pic}_{\rr{Q}}(X,\tau)\neq\text{\emph{\O}}$. Then 
$$
{\rm Vec}^{2m+1}_{\rr{Q}}\big(X, \tau\big)
\;\simeq\;  {\rm Vec}^{\sigma}_{\rr{Q}}\big(X, \tau\big)\qquad\quad \forall\ m \;\geqslant\; \frac{d-1}{4}
$$
where $\sigma:=[\frac{d}{2}]$.
In particular, in low dimension one obtains:
\begin{align}
{\rm Vec}^{2m-1}_{\rr{Q}}\big(X, \tau\big)&\;=\; 0& \text{if}& \ \ d=0,1\ \ \ \ \ \ \forall\ m\in\N \label{eq:stab_rank_Q_low_d>1!!xxx}\\
{\rm Vec}^{2m-1}_{\rr{Q}}\big(X, \tau\big)&\;=\; {\rm Pic}_{\rr{Q}}\big(X,\tau\big)& \text{if}& \ \ d=2,3\ \ \ \ \ \ \forall\ m\in\N \label{eq:stab_rank_Q_low_d=1_odd_case}\\
{\rm Vec}^{2m-1}_{\rr{Q}}\big(X, \tau\big)&\;\simeq\; {\rm Vec}^{2}_{\rr{Q}}\big(X, \tau\big)& \text{if}& \ \ d=4,5\ \ \ \ \ \ \forall\ m\in\N\;.\label{eq:stab_rank_Q_low_d>1!!}
\end{align}
\end{proposition}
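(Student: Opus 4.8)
The plan is to reduce the odd-rank \virg{Quaternionic} statement to the \virg{Real} stable range already recorded in Proposition \ref{rk:stable_rank_Real}, exploiting the fact that in the free case odd-rank $\rr{Q}$-bundles are in bijection with $\rr{R}$-bundles of the same rank. Since by hypothesis $X^\tau=\empt$ and ${\rm Pic}_{\rr{Q}}(X,\tau)\neq\empt$, I would first fix a reference \virg{Quaternionic} line bundle $\bb{L}_{\rm ref}$ (which exists precisely because ${\rm Pic}_{\rr{Q}}(X,\tau)\neq\empt$) and invoke \cite[Lemma 4.4]{denittis-gomi-16}, recalled just before the statement: for every rank $k\in\N$ tensoring by $\bb{L}_{\rm ref}$ induces a bijection of sets ${\rm Vec}^{k}_{\rr{R}}(X,\tau)\simeq{\rm Vec}^{k}_{\rr{Q}}(X,\tau)$.

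The second ingredient is Proposition \ref{rk:stable_rank_Real} in its case (a), which applies here because $X^\tau=\empt$: for every $k\geqslant\frac{d+1}{2}$ there is a bijection ${\rm Vec}^{k}_{\rr{R}}(X,\tau)\simeq{\rm Vec}^{\sigma}_{\rr{R}}(X,\tau)$ with $\sigma=[\frac{d}{2}]$. The decisive bookkeeping remark is that this minimal \virg{Real} stable rank $\sigma=[\frac{d}{2}]$ is exactly the $\sigma$ appearing in the target of the claimed odd-rank isomorphism, and that for $k=2m+1$ the \virg{Real} threshold $k\geqslant\frac{d+1}{2}$ is equivalent, after clearing denominators ($4m+2\geqslant d+1$), to the hypothesis $m\geqslant\frac{d-1}{4}$.

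Combining the two, for any $m\geqslant\frac{d-1}{4}$ I would chain the bijections
\[
{\rm Vec}^{2m+1}_{\rr{Q}}(X,\tau)\;\simeq\;{\rm Vec}^{2m+1}_{\rr{R}}(X,\tau)\;\simeq\;{\rm Vec}^{\sigma}_{\rr{R}}(X,\tau)\;\simeq\;{\rm Vec}^{\sigma}_{\rr{Q}}(X,\tau)\;,
\]
the outer two being the rank-$(2m+1)$ and rank-$\sigma$ instances of \cite[Lemma 4.4]{denittis-gomi-16} and the middle one the \virg{Real} stable range; a composite of bijections is again a bijection, yielding the asserted isomorphism. The low-dimensional consequences then follow by substituting $\sigma=[\frac{d}{2}]$: for $d=0,1$ one has $\sigma=0$ and ${\rm Vec}^{k}_{\rr{R}}(X,\tau)=0$ for all ranks $k$ by \eqref{eq:stab_rank_R_low_d=1}, hence ${\rm Vec}^{2m-1}_{\rr{Q}}(X,\tau)=0$; for $d=2,3$ one has $\sigma=1$, so the stable value is ${\rm Vec}^{1}_{\rr{Q}}(X,\tau)={\rm Pic}_{\rr{Q}}(X,\tau)$ by definition; and for $d=4,5$ one has $\sigma=2$, giving ${\rm Vec}^{2m-1}_{\rr{Q}}(X,\tau)\simeq{\rm Vec}^{2}_{\rr{Q}}(X,\tau)$ within the stable range.

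I do not expect a substantive obstacle, since all the analytic and geometric content is carried by the two cited inputs. The only genuine care is arithmetic: one must verify that the reindexing between the general form (rank $2m+1$, threshold $m\geqslant\frac{d-1}{4}$) and the \virg{in particular} form (rank $2m-1$) keeps the rank above the \virg{Real} threshold $\frac{d+1}{2}$ in each listed dimension, so that the chain of bijections is legitimately available — this is immediate for $d=0,1,2,3$ and holds for $d=4,5$ once the rank is at least $3$.
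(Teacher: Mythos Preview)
Your proposal is correct and follows precisely the approach the paper indicates: the paper does not spell out a proof here but explicitly says (in the paragraph preceding the statement) that the two facts from \cite[Lemma 4.4]{denittis-gomi-16} are the key arguments, and then cites \cite[Theorem 4.5]{denittis-gomi-16} for the result itself; your chain of bijections ${\rm Vec}^{2m+1}_{\rr{Q}}\simeq{\rm Vec}^{2m+1}_{\rr{R}}\simeq{\rm Vec}^{\sigma}_{\rr{R}}\simeq{\rm Vec}^{\sigma}_{\rr{Q}}$ via the \virg{Real} stable range (Proposition \ref{rk:stable_rank_Real}, case (a)) is exactly this strategy made explicit. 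Your flag in the final paragraph about the $m=1$ case for $d=4,5$ is well spotted: the general stable-range inequality only yields odd ranks $\geqslant 3$ there, so the claim ${\rm Pic}_{\rr{Q}}\simeq{\rm Vec}^2_{\rr{Q}}$ requires a separate check (or is a slight overreach in the displayed \virg{in particular}); this is not a defect in your argument but in the bookkeeping of \eqref{eq:stab_rank_Q_low_d>1!!} as stated.
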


\medskip

\noindent
We point out that the $0$ in the \eqref{eq:stab_rank_Q_low_d>1!!xxx} refers to the existence of a unique element which could also be different from the (trivial) product  
$\rr{Q}$-bundle. This is in contrast with the even-rank case where
  the condition ${\rm Vec}^{2m}_{\rr{Q}}(X, \tau)\neq\text{\rm \O}$ is 	always guaranteed by the existence of the trivial  element described in  Definition \ref{defi:quat_product_bund}.

\subsection{The determinant functor}
\label{subsec:det_construct}
Let $\bb{V}$ be a complex vector space of dimension $n$. The {determinant} of  $\bb{V}$ is by definition ${\rm det}(\bb{V}):=\bigwedge^n\bb{V}$ where the symbol  $\bigwedge^n$ denotes the top exterior power of $\bb{V}$ (\ie the skew-symmetrized $n$-th tensor power of $\bb{V}$). This is a complex vector space of dimension one.
If $\bb{W}$ is a second vector space of the same dimension $n$
and $T:\bb{V}\to \bb{W}$ is a linear map then there is a naturally associated map ${\rm det}(T):{\rm det}(\bb{V})\to{\rm det}(\bb{W})$ which in the special case $\bb{V}= \bb{W}$
 coincides with the multiplication by the determinant of the endomorphism $T$.
 This determinant construction
is a functor from the category of vector spaces  to itself
and by a standard argument \cite[Chapter 5, Section 6]{husemoller-94} induces a functor on the category of complex vector bundles over an arbitrary space $X$. More precisely, for each rank $n$ complex vector bundle $\bb{E}\to X$, the associated \emph{determinant line bundle} ${\rm det}(\bb{E})\to X$ is the rank 1 complex vector bundle with fibers
\beql{eq:fib_descr}
 {\rm det}(\bb{E})_x\;=\; {\rm det}(\bb{E}_x)\qquad\quad x\in X \;.
\eeq
To each local trivializing frame of sections $\{s_1,\ldots,s_n\}$ of $\bb{E}$ over an open set $\f{U}\subset X$ one can associate the  section $s_1\wedge\ldots\wedge s_n$ which provides a trivialization of ${\rm det}(\bb{E})$
over the same $\f{U}$. For each map $\varphi:X\to Y$ one has the isomorphism ${\rm det}(\varphi^*(\bb{E}))\simeq \varphi^*({\rm det}(\bb{E}))$ which is a special case of the compatibility between pullback and tensor product.
Finally, if $\bb{E}=\bb{E}_1\oplus\bb{E}_2$ in the sense of 
Whitney, then ${\rm det}(\bb{E})={\rm det}(\bb{E}_1)\otimes {\rm det}(\bb{E}_2)$.

\medskip

Let $(\bb{E},\Theta)$ be a rank $2m$ $\rr{Q}$-bundle over
$(X,\tau)$. The associated determinant line bundle ${\rm det}(\bb{E})$ inherits an involutive structure given by the map ${\rm det}(\Theta)$ which acts \emph{anti}-linearly between the fibers ${\rm det}(\bb{E})_x$ and ${\rm det}(\bb{E})_{\tau(x)}$ according to ${\rm det}(\Theta)(p_1\wedge\ldots\wedge p_n)= \Theta(p_1)\wedge\ldots\wedge \Theta(p_n)$. Clearly ${\rm det}(\Theta)^2$ is a fiber preserving map which coincides with the multiplication by $(-1)^{2m}=1$. Therefore (\cf Remark \ref{rk_real}) one concludes that:
\begin{lemma}\label{lemma:R_Q_det_bun}
Let $(\bb{E},\Theta)$ be a rank $2m$ $\rr{Q}$-bundle over
 $(X,\tau)$. The associated determinant line bundle  ${\rm det}(\bb{E})$ endowed with  the involutive structure ${\rm det}(\Theta)$ is
 a \virg{Real} line bundle over $(X,\tau)$.
\end{lemma}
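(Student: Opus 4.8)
The plan is to verify directly that $({\rm det}(\bb{E}),{\rm det}(\Theta))$ satisfies the three defining conditions of a \virg{Real} line bundle, namely the analogues of $(Q_1)$ and $(Q_2)$ in Definition \ref{defi:Q_VB} together with condition $(R)$ of Remark \ref{rk_real}. The starting point is to record that the fiberwise restriction $\Theta|_x:\bb{E}_x\to\bb{E}_{\tau(x)}$ is an anti-linear isomorphism of $2m$-dimensional complex vector spaces, and that its top exterior power $\bigwedge^{2m}(\Theta|_x):{\rm det}(\bb{E}_x)\to{\rm det}(\bb{E}_{\tau(x)})$, $p_1\wedge\cdots\wedge p_{2m}\mapsto\Theta(p_1)\wedge\cdots\wedge\Theta(p_{2m})$, is again anti-linear. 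This last point is the only one needing a word of care, since the functoriality statements of \cite[Chapter 5]{husemoller-94} are phrased for \emph{linear} bundle maps; it follows, e.g., by regarding an anti-linear map $T\colon V\to W$ as a linear map out of the conjugate space $\overline{V}$ and using $\bigwedge^{2m}\overline{V}=\overline{\bigwedge^{2m}V}$, or simply by observing that on the one-dimensional space ${\rm det}(\bb{E}_x)$ rescaling the generator amounts to rescaling a single factor of the wedge. Assembling the maps $\bigwedge^{2m}(\Theta|_x)$ over all $x$ yields ${\rm det}(\Theta)$.

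Next I would check that ${\rm det}(\Theta)$ is a well-defined homeomorphism of the total space of ${\rm det}(\bb{E})$, which is a purely local matter. Over an open set $\f{U}$ trivializing $\bb{E}$ by a frame $\{s_1,\ldots,s_{2m}\}$, the line bundle ${\rm det}(\bb{E})$ is trivialized by $s_1\wedge\cdots\wedge s_{2m}$, and in these coordinates ${\rm det}(\Theta)$ is described by the determinant of the matrix of $\Theta$ taken relative to the frames over $\f{U}$ and over $\tau(\f{U})$; this matrix has continuous entries because $\Theta$ is continuous, hence so does its determinant. Applying the same reasoning to $\Theta^{-1}$ shows that ${\rm det}(\Theta)^{-1}={\rm det}(\Theta^{-1})$ is continuous, so ${\rm det}(\Theta)$ is indeed a homeomorphism. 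Condition $(Q_1)$ for ${\rm det}(\bb{E})$ is immediate since ${\rm det}(\Theta)$ carries the fiber over $x$ to the fiber over $\tau(x)$, and $(Q_2)$ was established in the previous paragraph.

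Finally, it remains to verify condition $(R)$, namely that ${\rm det}(\Theta)^2$ acts fiberwise as the identity. By functoriality of the determinant construction one has, fiber by fiber, ${\rm det}(\Theta)\circ{\rm det}(\Theta)={\rm det}(\Theta\circ\Theta)={\rm det}(\Theta^2)$ (here the two applications of ${\rm det}(\Theta)$ are taken over $x$ and over $\tau(x)$ respectively, and $\tau^2={\rm Id}_X$). Restricting to the fiber over an arbitrary $x\in X$ and invoking $(Q_3)$, i.e. $\Theta^2|_{\bb{E}_x}=-\n{1}_{\bb{E}_x}$, we obtain ${\rm det}(\Theta^2)|_{{\rm det}(\bb{E})_x}={\rm det}(-\n{1}_{\bb{E}_x})=(-1)^{2m}\,\n{1}_{{\rm det}(\bb{E})_x}=\n{1}_{{\rm det}(\bb{E})_x}$, which is exactly $(R)$. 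By Remark \ref{rk_real} this shows that $({\rm det}(\bb{E}),{\rm det}(\Theta))$ is a \virg{Real} line bundle over $(X,\tau)$. No serious obstacle arises anywhere: the entire argument is bookkeeping, and the one spot meriting attention — the compatibility of the determinant functor with anti-linear maps — is disposed of as above.
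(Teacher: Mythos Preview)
Your proof is correct and follows exactly the same approach as the paper: the paper's argument (given in the paragraph immediately preceding the lemma statement rather than in a separate proof environment) is simply the observation that ${\rm det}(\Theta)$ is anti-linear fiberwise and that ${\rm det}(\Theta)^2$ acts as multiplication by $(-1)^{2m}=1$, whence condition $(R)$ holds. Your write-up is considerably more detailed---in particular you spell out the continuity/homeomorphism check and the compatibility of $\bigwedge^{2m}$ with anti-linear maps---but the mathematical content is identical.
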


\medskip

Assume that the rank $2m$ $\rr{Q}$-bundle $(\bb{E},\Theta)$  over
 $(X,\tau)$ has  an equivariant Hermitian metric $\rr{m}$. The latter  fixes  a unique Hermitian metric $\rr{m}_{\rm det}$ on ${\rm det}(\bb{E})$ which is equivariant with respect to the induced $\rr{R}$-structure
${\rm det}(\Theta)$. More explicitly, if $(p_i,q_i)\in \bb{E}\times_\pi\bb{E}$, $i=1,\ldots,2m$ then,
$$
\rr{m}_{\rm det}\big(p_1\wedge\ldots\wedge p_{2m},q_1\wedge\ldots\wedge q_{2m}\big)\;:=\;\prod_{i=1}^{2m}\rr{m}(p_i,q_i)\;.
$$
The $\rr{R}$-line bundle $({\rm det}(\bb{E}),{\rm det}(\Theta))$ endowed with   $\rr{m}_{\rm det}$  is $\rr{R}$-trivial if and only if there exists an isometric $\rr{R}$-isomorphism with $X\times\C$, or equivalently, if and only if
 there exists a global $\rr{R}$-section $s:X\to {\rm det}(\bb{E})$ of unit length (\cf \cite[Theorem 4.8]{denittis-gomi-14}). Let us recall that an $\rr{R}$-section meets the condition ${\rm det}(\Theta)\circ s\circ\tau=s$.
 Let
 $$
 \n{S}\big({\rm det}(\bb{E})\big)\;:=\;\left\{p\in{\rm det}(\bb{E})\ |\  \rr{m}_{\rm det}(p,p)=1\right\}
 $$
 be the  \emph{circle $\rr{R}$-bundle} underlying to  $({\rm det}(\bb{E}),{\rm det}(\Theta))$.
  Then the $\rr{R}$-triviality of ${\rm det}(\bb{E})$ can be rephrased as the existence of a global $\rr{R}$-section $\n{S}({\rm det}(\bb{E}))\to X$.
One has the following important result.
\begin{proposition}[{\cite[Lemma 3.3]{denittis-gomi-14-gen}}]
\label{lemma:R_Q_det_bun2}
Let $(\bb{E},\Theta)$ be a $\rr{Q}$-bundle over a space $X$ with trivial involution $\tau={\rm Id}_X$. Then, the associated  determinant line bundle ${\rm det}(\bb{E})$
endowed with the \virg{Real} structure ${\rm det}(\Theta)$
 is $\rr{R}$-trivial and admits a unique  {canonical} trivialization $h_{\rm can}:{\rm det}(\bb{E})\to X\times\C$ compatible with the \virg{Real} structure:
 $$
 \big(h_{\rm can}\circ {\rm det}(\Theta)\big)(p)\,=\,\overline{h_{\rm can}(p)}\,\qquad\quad \forall\ p\in{\rm det}(\Theta)\;.
 $$ 
 This trivialization fixes a unique
  \emph{canonical
 $\rr{R}$-section} $s_{\rm can}:X\to\n{S}({\rm det}(\bb{E}))$
 defined by
 $$
 s_{\rm can}(x)\;:=\;h_{\rm can}^{-1}(x,1)\,\qquad\quad \forall\ x\in X\;.
 $$
 \end{proposition}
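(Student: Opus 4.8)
\medskip
\noindent\emph{Proof idea.} Since $\tau={\rm Id}_X$ every point of $X$ is fixed, so the plan is to argue fiberwise. Each fiber $\bb{E}_x\cong\C^{2m}$ carries the anti-linear operator $J_x:=\Theta|_{\bb{E}_x}$ with $J_x^2=-\n{1}_{\bb{E}_x}$, which makes it a quaternionic vector space, and $({\rm det}(\bb{E})_x,{\rm det}(J_x))$ is a complex line equipped with an anti-linear involution whose real form $L_x:=\{v\in{\rm det}(\bb{E})_x\ |\ {\rm det}(J_x)v=v\}$ is a real line. I would single out a \emph{canonical} unit vector $\Omega_x\in L_x$ depending continuously on $x$; then $s_{\rm can}(x):=\Omega_x$ is the desired $\rr{R}$-section and $h_{\rm can}$ is the inverse of the isomorphism $X\times\C\to{\rm det}(\bb{E})$, $(x,\lambda)\mapsto\lambda\,s_{\rm can}(x)$.

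To build $\Omega_x$, I call a unitary $m$-frame $\{f_1,\ldots,f_m\}$ of $(\bb{E}_x,\rr{m})$ a \emph{quaternionic frame} when $(f_1,J_xf_1,\ldots,f_m,J_xf_m)$ is an orthonormal $\C$-basis of $\bb{E}_x$. Such frames exist by quaternionic Gram--Schmidt, using the fact (easily checked from the equivariance of $\rr{m}$) that the $\rr{m}$-orthogonal complement of a $J_x$-invariant subspace is again $J_x$-invariant. I would then put
$$
\Omega_x\;:=\;f_1\wedge J_xf_1\wedge\ldots\wedge f_m\wedge J_xf_m\;\in\;{\rm det}(\bb{E})_x
$$
and verify by direct computation that $\rr{m}_{\rm det}(\Omega_x,\Omega_x)=\prod_{i=1}^{m}\rr{m}(f_i,f_i)\,\rr{m}(J_xf_i,J_xf_i)=1$ and that ${\rm det}(J_x)\,\Omega_x=(-1)^m(-1)^m\,\Omega_x=\Omega_x$, the first $(-1)^m$ coming from $J_x^2=-\n{1}$ in the $m$ even slots and the second from reordering the $m$ pairs; hence $\Omega_x$ is a unit vector of $L_x$. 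The decisive point is that $\Omega_x$ is \emph{independent of the quaternionic frame}: two quaternionic frames differ by a unitary $\C$-linear automorphism $T$ of $\bb{E}_x$ commuting with $J_x$ (an element of ${\rm Sp}(m)$ in its $2m$-dimensional complex representation), so $\Omega_x$ is multiplied by $\det_{\C}(T)$; but since $T$ commutes with the anti-linear $J_x$ its complex eigenvalues occur in conjugate pairs, while its real eigenvalues ($=\pm1$, with quaternionic eigenspaces) have even multiplicity, whence $\det_{\C}(T)=1$; equivalently, ${\rm Sp}(m)\subset{\rm SU}(2m)$. Moreover $J_x^2=-\n{1}$ removes the residual orientation ambiguity, since replacing $f_i$ by $J_xf_i$ gives back the same $\Omega_x$. (Conceptually $\Omega_x$ is a Pfaffian: it is dual through $\rr{m}_{\rm det}$ to $\tfrac{1}{m!}\,\omega_x^{\wedge m}$, with $\omega_x(u,v):=\rr{m}(u,J_xv)$ the $\C$-symplectic form fixed by $J_x$ and $\rr{m}$.)

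I would then globalize: by the $\rr{Q}$-local triviality of $\bb{E}$ one can choose continuous quaternionic frames of sections near any point, so $x\mapsto\Omega_x$ defines a continuous section $s_{\rm can}:X\to\n{S}\big({\rm det}(\bb{E})\big)$, and by the fiberwise identity ${\rm det}(\Theta)\circ s_{\rm can}\circ\tau=s_{\rm can}$, \ie $s_{\rm can}$ is an $\rr{R}$-section. Its existence already proves that $({\rm det}(\bb{E}),{\rm det}(\Theta))$ is $\rr{R}$-trivial (the criterion recalled just before the statement, \cf \cite[Theorem 4.8]{denittis-gomi-14}), and $h_{\rm can}$ is the inverse of the isometric $\rr{R}$-isomorphism $(x,\lambda)\mapsto\lambda\,s_{\rm can}(x)$, for which $(h_{\rm can}\circ{\rm det}(\Theta))(p)=\overline{h_{\rm can}(p)}$ follows at once from $s_{\rm can}$ being an $\rr{R}$-section together with $\tau={\rm Id}_X$. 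For uniqueness I would note that any trivialization deserving the name \emph{canonical} must restrict on each fiber to the canonical vector $\Omega_x$, hence equal $h_{\rm can}$; equivalently, an isometric $\rr{R}$-trivialization corresponds to a unit $\rr{R}$-section $s'=\varphi\,s_{\rm can}$ with $\varphi:X\to{\rm U}(1)$ continuous, the two $\rr{R}$-section conditions force $\varphi=\bar\varphi\in\{\pm1\}$, and the fiberwise construction pins down $\varphi\equiv+1$.

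The step I expect to be the main obstacle is the frame-independence of $\Omega_x$ in the second paragraph: it relies on the elementary but essential facts that a $\C$-endomorphism commuting with $J_x$ has complex determinant $1$ (equivalently, ${\rm Sp}(m)\subset{\rm SU}(2m)$) and that $J_x^2=-\n{1}$ kills the orientation ambiguity. The remaining ingredients (quaternionic Gram--Schmidt, the two wedge computations, local continuity of quaternionic frames of sections, and the sign argument for uniqueness) are routine.
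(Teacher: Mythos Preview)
The paper does not include a proof of this proposition; it is quoted from \cite[Lemma~3.3]{denittis-gomi-14-gen}, so there is no in-paper argument to compare against. Your approach is correct and is exactly the standard Pfaffian construction one expects: on each fiber the quaternionic structure $J_x$ together with the equivariant Hermitian metric singles out the $\C$-symplectic form $\omega_x(u,v)=\rr{m}(u,J_xv)$, and your $\Omega_x=f_1\wedge J_xf_1\wedge\cdots\wedge f_m\wedge J_xf_m$ is precisely its normalized top power; the frame-independence then boils down to ${\rm Sp}(m)\subset{\rm SU}(2m)$, which you identify correctly. The two sign checks (${\rm det}(J_x)\Omega_x=\Omega_x$ and invariance under $f_i\mapsto J_xf_i$) are right, continuity via local $\rr{Q}$-frames is the appropriate globalization, and your remark that an arbitrary isometric $\rr{R}$-trivialization differs from $h_{\rm can}$ by a locally constant $\{\pm1\}$-valued function, with the construction itself pinning down $+1$, is the correct reading of the word ``canonical'' in the statement.
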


\medskip

\noindent
If $X^\tau\neq \empt$   the restricted vector bundle $\bb{E}|_{X^\tau}\to{X^\tau}$ can be seen as a $\rr{Q}$-bundle over a space with trivial involution. Preposition \ref{lemma:R_Q_det_bun2} assures that the restricted line bundle ${\rm det}(\bb{E}|_{X^\tau})$
is $\rr{R}$-trivial with respect to the restricted \virg{Real} structure ${\rm det}(\Theta|_{X^\tau})$ and admits a distinguished $\rr{R}$-section
\begin{equation}\label{eq:triv_2}
s_{\bb{E}}\;:\;X^\tau\;\to\;\n{S}\big({\rm det}(\bb{E})|_{X^\tau}\big)\;
\end{equation}
which  will be called  the \emph{canonical section} over $X^\tau$ associated to $(\bb{E},\Theta)$.

\subsection{The FKMM-invariant and related properties}
\label{subsec:fkmm-invariant}
In this section we recall the construction and the main properties of the \emph{FKMM}-invariant. For more details on this argument we refer to \cite{denittis-gomi-14-gen,denittis-gomi-16} and references therein.

\medskip

Let $(X,\tau)$ be an involutive space and $Y\subseteq X$ a closed $\tau$-invariant subspace $\tau(Y)=Y$ ({it is not required that $Y\subseteq X^\tau$}). Consider pairs $(\bb{L}, s)$ consisting of: (a) a \virg{Real} line bundle $\bb{L}\to X$ with a given \virg{Real} structure $\Theta$ and a Hermitian metric $\rr{m}$; (b) a \virg{Real} section $s:Y\to \n{S}(\bb{L}|_{Y})$ of the circle bundle associated to the restriction $\bb{L}|_{Y}\to Y$ (or equivalently a trivialization $h:\bb{L}|_{Y}\to Y\times\C$). 
Two pairs $(\bb{L}_1, s_1)$ and $(\bb{L}_2, s_2)$ builded over the same involutive base space
$(X,\tau)$ and the same invariant subspace $Y$ are said \emph{isomorphic} if there is an $\rr{R}$-isomorphism of line bundles $f:\bb{L}_1\to \bb{L}_2$ (preserving the Hermitian structure) such that $f\circ s_1=s_2$.

\begin{definition}[Relative \virg{Real} Picard group]
For an involutive space $(X, \tau)$ and a closed $\tau$-invariant subspace $Y \subseteq X$, we define ${\rm Vec}_{\rr{R}}^1(X|Y,\tau)$ to be the abelian group of the isomorphism classes of pairs $(\bb{L}, s)$, with  group structure  given by the tensor product
$$
(\bb{L}_1, s_1)\; \otimes\; (\bb{L}_2, s_2)\; \simeq\; (\bb{L}_1\otimes \bb{L}_1, s_1  \otimes s_2).
$$
\end{definition}

\medskip

\noindent
The relative \virg{Real} Picard group can be described in terms of the relative equivariant cohomology of the pair $Y\subseteq X$. A short introduction about the equivariant cohomology is provided in Appendix \ref{subsec:borel_cohom}.
\begin{proposition}[{\cite[Proposition 2.7]{denittis-gomi-16}}]
\label{prop:nat_iso}
There is a natural isomorphism of abelian groups
$$
\tilde{\kappa}\;:\;{\rm Vec}_{\rr{R}}^1(X|Y,\tau)\;\stackrel{\simeq}{\longrightarrow}\;H^2_{\Z_2}\big(X|Y,\Z(1)\big)\;.
$$
\end{proposition}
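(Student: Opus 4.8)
The plan is to exhibit the isomorphism $\tilde\kappa$ by reducing the relative \virg{Real} Picard group to a genuine Borel cohomology computation, using the exponential sequence of sheaves adapted to the involutive setting. First I would set up the relevant short exact sequence of $\Z_2$-equivariant sheaves on $X$ (with the involution acting on $\C$ by complex conjugation), namely
\beq
0\;\longrightarrow\;\Z(1)\;\longrightarrow\;\f{C}^{\rr{R}}_X\;\stackrel{\exp(2\pi\ii\,\cdot)}{\longrightarrow}\;\f{C}^{\rr{R},\times}_X\;\longrightarrow\;0\;,
\eeq
where $\f{C}^{\rr{R}}_X$ is the sheaf of continuous $\C$-valued functions with the conjugation action and $\f{C}^{\rr{R},\times}_X$ its multiplicative version; the $\Z_2$-equivariant sheaf cohomology of these is the Borel equivariant cohomology, and $\f{C}^{\rr{R}}_X$ is acyclic for it by a partition-of-unity/soft-sheaf argument (this is exactly the ingredient used for Kahn's isomorphism \eqref{eq:kahn_iso}, so I would invoke \cite{kahn-59} or \cite[Corollary A.5]{gomi-13}). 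The upshot is a natural identification $H^1_{\Z_2}(X,\f{C}^{\rr{R},\times}_X)\simeq H^2_{\Z_2}(X,\Z(1))$, and the left-hand side is precisely ${\rm Pic}_{\rr{R}}(X,\tau)$ by the standard \v{C}ech description of $\rr{R}$-line bundles (an $\rr{R}$-line bundle is a class of $\Z_2$-equivariant transition cocycles valued in $\f{C}^{\rr{R},\times}_X$).

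Next I would run the \emph{relative} version of this argument. Writing $j:Y\hookrightarrow X$ for the inclusion and using the relative equivariant sheaf cohomology $H^\bullet_{\Z_2}(X|Y,-)$ (cone of restriction), the same short exact sequence of sheaves gives a long exact sequence relating $H^\bullet_{\Z_2}(X|Y,\Z(1))$, $H^\bullet_{\Z_2}(X|Y,\f{C}^{\rr{R}})$ and $H^\bullet_{\Z_2}(X|Y,\f{C}^{\rr{R},\times})$; acyclicity of $\f{C}^{\rr{R}}$ (both on $X$ and on $Y$, hence on the pair) again collapses this to
\beq
H^1_{\Z_2}\big(X|Y,\f{C}^{\rr{R},\times}\big)\;\simeq\;H^2_{\Z_2}\big(X|Y,\Z(1)\big)\;.
\eeq
It then remains to identify $H^1_{\Z_2}(X|Y,\f{C}^{\rr{R},\times})$ with the group ${\rm Vec}^1_{\rr{R}}(X|Y,\tau)$ of pairs $(\bb{L},s)$. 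Concretely, a relative \v{C}ech $1$-cocycle for the pair is a pair consisting of an equivariant line-bundle cocycle $\{g_{\alpha\beta}\}$ on $X$ together with equivariant $0$-cochain data $\{h_\alpha\}$ on $Y$ trivializing its restriction, i.e.\ $g_{\alpha\beta}=h_\alpha h_\beta^{-1}$ on $Y$; this is exactly the \v{C}ech incarnation of \virg{line bundle $\bb{L}$ on $X$ plus trivializing $\rr{R}$-section $s$ on $Y$}, and relative coboundaries correspond to $\rr{R}$-isomorphisms respecting the sections. One checks the tensor-product group structure matches on both sides. The map $\tilde\kappa$ is then the composite of these identifications, and its naturality in $(X,Y,\tau)$ is inherited from naturality of the connecting homomorphism.

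The main obstacle I anticipate is not the cohomological bookkeeping but the careful handling of the \emph{relative} equivariant cohomology in the topological (non-smooth, merely $\Z_2$-CW) category: one must make sure the relative sheaf cohomology defined via the mapping cone of $\f{F}(X)\to \f{F}(Y)$ agrees with the relative Borel cohomology $H^\bullet_{\Z_2}(X|Y,\Z(1))$ used elsewhere in the paper (the Borel construction $(EG\times X)/G$ relative to $(EG\times Y)/G$), and that the five-lemma/long-exact-sequence comparison between the absolute and relative versions is compatible with $\tilde\kappa$ and with Kahn's $c_1^{\rr{R}}$. A clean way to finesse this is to prove the proposition directly from the absolute Kahn isomorphism by a five-lemma argument: map the long exact sequence of the pair $(X,Y)$ in ${\rm Vec}^1_{\rr{R}}$-theory (relating ${\rm Vec}^1_{\rr{R}}(X|Y)$, ${\rm Vec}^1_{\rr{R}}(X)$, and the group of $\rr{R}$-trivializations on $Y$ modulo those extending over $X$) to the long exact cohomology sequence of the pair, using \eqref{eq:kahn_iso} on the absolute terms and an elementary identification $\{$$\rr{R}$-sections of $\n{S}(\bb{L}|_Y)$$\}\simeq H^0_{\Z_2}(Y,\f{C}^{\rr{R},\times})\simeq H^1_{\Z_2}(Y,\Z(1))$ on the boundary terms; then commutativity of the ladder plus the five lemma yields the isomorphism $\tilde\kappa$ on the relative term. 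I would present whichever of the two routes — direct relative sheaf cohomology, or five-lemma from the absolute case — keeps the naturality statement most transparent, and refer to \cite[Proposition 2.7]{denittis-gomi-16} and Appendix \ref{subsec:borel_cohom} for the equivariant-cohomology prerequisites.
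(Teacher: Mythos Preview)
The paper does not give its own proof of this proposition: it is stated as a citation of \cite[Proposition 2.7]{denittis-gomi-16} and used as a black box. So there is nothing in the present paper to compare your argument against.

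That said, your proposal is a correct outline and in fact sketches the standard route one would expect in the cited reference. Both strategies you describe are viable: the first (relative exponential sequence in equivariant sheaf cohomology) is the direct generalization of the argument behind Kahn's isomorphism in \cite{kahn-59,gomi-13}, and the second (ladder of long exact sequences plus the five lemma, using the absolute Kahn isomorphism on the non-relative terms) is a clean way to avoid redoing the soft-sheaf acyclicity checks in the relative setting. Your identification of the potential sticking point---matching the relative Borel cohomology of Appendix \ref{subsec:borel_cohom} with the relative equivariant sheaf/\v{C}ech cohomology---is accurate but routine once one works on the homotopy quotient pair $(X_{\sim\tau},Y_{\sim\tau})$. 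Either route would constitute an acceptable proof; for the purposes of this paper, citing \cite[Proposition 2.7]{denittis-gomi-16} as done is sufficient.
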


\medskip
By combining Proposition \ref{lemma:R_Q_det_bun2}
and Proposition \ref{prop:nat_iso}
one  can define an \emph{intrinsic} invariant for the category of \virg{Quaternionic}  vector bundles.
\begin{definition}[Generalized FKMM-invariant]\label{def:gen_FKMM_inv}
Let $(\bb{E},\Theta)$ be an even-rank \virg{Quaternionic} vector bundle over the involutive space $(X,\tau)$
and consider the pair $({\rm det}(\bb{E}), s_{\bb{E}})$ where ${\rm det}(\bb{E})$ is the determinant line bundle
associated to $\bb{E}$ endowed with the \virg{Real} structure  ${\rm det}(\Theta)$ and $s_{\bb{E}}$ the canonical section  \eqref{eq:triv_2}. The  \emph{FKMM-invariant} of  $(\bb{E},\Theta)$ is the cohomology class $\kappa(\bb{E},\Theta)\in H^2_{\Z_2}\big(X|X^\tau,\Z(1)\big)$ defined by
$$
\kappa(\bb{E},\Theta)\;:=\;\tilde{\kappa}\big([({\rm det}(\bb{E}), s_{\bb{E}})]\big)
$$
where  $[({\rm det}(\bb{E}), s_{\bb{E}})]\in {\rm Vec}_{\rr{R}}^1(X|X^\tau,\tau)$ is the 
 isomorphisms class of the pair $({\rm det}(\bb{E}), s_{\bb{E}})$ and $\tilde{\kappa}$ is the group isomorphism described in Proposition \ref{prop:nat_iso}.
\end{definition}

\medskip

\noindent The FKMM-invariant
$$
{\kappa}\;:\;{\rm Vec}_{\rr{Q}}^{2m}(X,\tau)\;{\longrightarrow}\;H^2_{\Z_2}\big(X|X^\tau,\Z(1)\big)\;
$$
defined above meets the following properties:
\begin{itemize}
\item[($\alpha$)] Isomorphic $\rr{Q}$-bundles define the same FKMM-invariant;
\vspace{1mm}
\item[($\beta$)] The FKMM-invariant is \emph{natural} under the pullback induced by equivariant maps;
\vspace{1mm}
\item[($\gamma$)] If $(\bb{E},\Theta)$ is $\rr{Q}$-trivial then $\kappa(\bb{E},\Theta)=0$; 
\vspace{1mm}
\item[($\delta$)] The FKMM-invariant is \emph{additive} with respect to the Whitney sum and the abelian structure of $H^2_{\Z_2}(X|X^\tau,\Z(1))$. More precisely
$$
\kappa(\bb{E}_1\oplus \bb{E}_2,\Theta_1\oplus\Theta_2)\;=\;\kappa(\bb{E}_1,\Theta_1) + \kappa( \bb{E}_2,\Theta_2)
$$
for each pair of $\rr{Q}$-bundles $(\bb{E}_1,\Theta_1)$ and $(\bb{E}_2,\Theta_2)$ over the same involutive space $(X,\tau)$;
\vspace{1mm}
\item[($\varepsilon$)] The FKMM-invariant is the image under the pullback induced by the classifying map of a \emph{universal} FKMM-invariant.
\end{itemize}
Properties ($\alpha$)-($\delta$) follow immediately from the definition. Property ($\varepsilon$) requires some work. First of all one has to define the {universal} FKMM-invariant over the \emph{universal $\rr{Q}$-bundle} which provides the homotopy classification \eqref{eq:homotopy_classification}. Then one has to prove that the FKMM-invariant, as introduced in Definition \ref{def:gen_FKMM_inv} coincides with the pullback induced by  $\varphi\in [X, \hat{G}_{2m}(\C^{\infty})]_{\Z_2}$
of the universal invariant. In this work we will never use  property ($\varepsilon$)
and the  interested reader is referred to \cite[Section 2.6]{denittis-gomi-16}.

\medskip

In the free-involution case $X^\tau=\empt$ one has that the relative cohomology group $H^2_{\Z_2}(X|X^\tau,\Z(1))$ reduces to  $H^2_{\Z_2}(X,\Z(1))$. Moreover, as a consequence of the {Kahn's isomorphism} \eqref{eq:kahn_iso}
one has the following result \cite[Corollary 2.12.]{denittis-gomi-16}:
\begin{proposition}\label{corol:II}
Let $(\bb{E},\Theta)$ be an even rank \virg{Quaternionic} vector bundle over the involutive space $(X,\tau)$. If $X^\tau=\empt$, then the FKMM-invariant $\kappa(\bb{E},\Theta)$ agrees with the first {\virg{Real} Chern class}
$c^{\rr{R}}_1({\rm det}(\bb{E}))$ of the associated determinant line bundle.
\end{proposition}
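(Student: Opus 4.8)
The plan is to reduce the statement to the remark that an empty fixed-point set makes all the \emph{relative} data entering the definition of $\kappa$ degenerate to \emph{absolute} data, after which the FKMM-invariant is forced to coincide with $c^{\rr{R}}_1({\rm det}(\bb{E}))$ more or less by construction.

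The first step is to make two identifications explicit. On the cohomological side, since $X^\tau = \empt$ one has $H^\bullet_{\Z_2}(\empt,\Z(1)) = 0$, so the long exact sequence of the pair $(X,X^\tau)$ in Borel cohomology collapses to the identification $H^2_{\Z_2}(X|X^\tau,\Z(1)) = H^2_{\Z_2}(X,\Z(1))$ already recorded just before Proposition~\ref{corol:II}. On the line-bundle side, a pair $(\bb{L},s)$ representing a class in ${\rm Vec}^1_{\rr{R}}(X|X^\tau,\tau)$ consists of a \virg{Real} line bundle $\bb{L}\to X$ together with a \virg{Real} section $s$ over $X^\tau=\empt$; a section over the empty set carries no information, and an isomorphism of such pairs is simply an $\rr{R}$-isomorphism of line bundles, so the forgetful assignment $(\bb{L},s)\mapsto \bb{L}$ yields a group isomorphism ${\rm Vec}^1_{\rr{R}}(X|\empt,\tau)\;\stackrel{\simeq}{\longrightarrow}\;{\rm Pic}_{\rr{R}}(X,\tau)$. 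In particular the canonical section $s_{\bb{E}}$ of \eqref{eq:triv_2} is vacuous here, so the class $[({\rm det}(\bb{E}), s_{\bb{E}})]\in {\rm Vec}^1_{\rr{R}}(X|\empt,\tau)$ that enters Definition~\ref{def:gen_FKMM_inv} corresponds to $[{\rm det}(\bb{E})]\in {\rm Pic}_{\rr{R}}(X,\tau)$ under this isomorphism.

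The second step is to check that, under the two identifications just made, the natural isomorphism $\tilde{\kappa}$ of Proposition~\ref{prop:nat_iso} becomes the Kahn isomorphism $c^{\rr{R}}_1$ of \eqref{eq:kahn_iso}. This amounts to unwinding the construction of $\tilde{\kappa}$ carried out in \cite{denittis-gomi-16}: the relative cocycle that $\tilde{\kappa}$ attaches to $(\bb{L},s)$ is assembled from a classifying cocycle for the \virg{Real} line bundle $\bb{L}$ (the absolute part) together with a trivializing cochain on the invariant subspace $Y$ induced by $s$ (the relative part), and it is set up so that for $Y=\empt$ only the first ingredient survives; that ingredient is exactly the cocycle representing $c^{\rr{R}}_1(\bb{L})$ in Kahn's description. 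Equivalently, Proposition~\ref{prop:nat_iso} specialized to $Y=\empt$ \emph{is} the Kahn isomorphism. This is the only point that is not pure bookkeeping, and the thing to be careful about is that the two constructions agree exactly rather than up to a sign or an automorphism of $H^2_{\Z_2}(X,\Z(1))$; I would settle this by comparing the explicit cocycle formulas, or by invoking the naturality of $\tilde{\kappa}$ in the pair together with the fact that it was built precisely to extend $c^{\rr{R}}_1$.

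Putting the pieces together, Definition~\ref{def:gen_FKMM_inv} gives $\kappa(\bb{E},\Theta) = \tilde{\kappa}\big([({\rm det}(\bb{E}), s_{\bb{E}})]\big)$; by the first step the argument corresponds to $[{\rm det}(\bb{E})]\in{\rm Pic}_{\rr{R}}(X,\tau)$, and by the second step $\tilde{\kappa}$ sends it to $c^{\rr{R}}_1({\rm det}(\bb{E}))$ in $H^2_{\Z_2}(X,\Z(1)) = H^2_{\Z_2}(X|X^\tau,\Z(1))$. Hence $\kappa(\bb{E},\Theta) = c^{\rr{R}}_1({\rm det}(\bb{E}))$, which is the assertion. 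I expect the only real obstacle to be the compatibility of $\tilde{\kappa}$ with Kahn's class established in the second step; the rest is just the observation that a section over the empty set is no data at all.
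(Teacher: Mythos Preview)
Your proposal is correct and follows exactly the line of reasoning the paper itself sketches in the paragraph immediately preceding the proposition: the reduction $H^2_{\Z_2}(X|\empt,\Z(1)) = H^2_{\Z_2}(X,\Z(1))$ together with the observation that $\tilde{\kappa}$ specialized to $Y=\empt$ is the Kahn isomorphism \eqref{eq:kahn_iso}. The paper does not spell out a proof here but simply cites \cite[Corollary 2.12]{denittis-gomi-16}; your write-up is a faithful unpacking of that argument, and the one point you flag as needing care (that $\tilde{\kappa}|_{Y=\empt}$ coincides with $c^{\rr{R}}_1$ on the nose) is indeed how Proposition~\ref{prop:nat_iso} is set up in \cite{denittis-gomi-16}.
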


\medskip

A second interesting case concerns involutive spaces with only a finite number of fixed points.
The next definition encloses a large
class of interesting   involutive spaces.
\begin{definition}[FKMM-space {\cite{denittis-gomi-14-gen}}]\label{defi:FKMM-space}
Let $(X,\tau)$ be an involutive space which meets Assumption \ref{ass:top}. We say that  $(X,\tau)$ is an \emph{FKMM-space} if:
\begin{itemize}
\item[(a)] The fixed point set $X^\tau$ is not empty and consists of a finite number of points;
\vspace{1mm}
\item[(b)] $H^2_{\Z_2}\big(X,\Z(1)\big)=0$.
\end{itemize}
\end{definition}
\noindent
In the case that $(X,\tau)$ is an FKMM-space the following isomorphism 
\begin{equation}\label{eq:iso_for_fkmm}
H^2_{\Z_2}\big(X|X^\tau,\Z(1)\big)\;\simeq\;\big[X^\tau,\tilde{\n{U}}(1)\big]_{\Z_2}/\big[X,\tilde{\n{U}}(1)\big]_{\Z_2}\;\simeq\;{\rm Map}\big(X^\tau,\{\pm 1\}\big)/\big[X,\tilde{\n{U}}(1)\big]_{\Z_2}
\end{equation}
holds true \cite[Lemma 3.1]{denittis-gomi-14-gen}. 
Here $\tilde{\n{U}}(1)$ denotes the unitary group ${\n{U}}(1)$ endowed with the involution induced by the complex conjugation and
$[X,\tilde{\n{U}}(1)]_{\Z_2}$ is the set of classes of $\Z_2$-homotopy equivalent equivariant maps between  $(X,\tau)$ and the space $\tilde{\n{U}}(1)$. The action of $[X,\n{U}(1)]_{\Z_2}$ on 
$[X^\tau,\n{U}(1)]_{\Z_2}$ is given by the pointwise multiplication followed by the restriction to $X^\tau$. The second isomorphism is justified by 
$$
\big[X^\tau,\tilde{\n{U}}(1)\big]_{\Z_2}\;=\;\big[X^\tau,\pm1\big]\;=\;{\rm Map}\big(X^\tau,\{\pm 1\}\big)\;\simeq\;\{\pm 1\}^{|X^\tau|}
$$
where ${\rm Map}\big(X^\tau,\{\pm 1\}\big)$ is the set of the maps from $X^\tau$ to $\{\pm 1\}$ and $|X^\tau|\in\N$ is the cardinality of the finite set $X^\tau$. 
By combining Definition \ref{def:gen_FKMM_inv} with the isomorphism \eqref{eq:iso_for_fkmm} one concludes that:
\begin{proposition}\label{prop:fkmm-inv_fkmm-space}
Let $(\bb{E},\Theta)$ be a  \virg{Quaternionic} vector bundle over the FKMM-space $(X,\tau)$.  Then, the FKMM-invariant $\kappa(\bb{E},\Theta)$ can be represented by
$$
[s_{\bb{E}}]\;\in\;{\rm Map}\big(X^\tau,\{\pm 1\}\big)/\big[X,\tilde{\n{U}}(1)\big]_{\Z_2}
$$
where $s_{\bb{E}}$ is the canonical section \eqref{eq:triv_2}.
\end{proposition}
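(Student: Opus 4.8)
The plan is to trace the definition of the FKMM-invariant through the concrete identifications already established in the excerpt and to show that the cohomology class $\kappa(\bb{E},\Theta)$ is precisely the class represented by the canonical section $s_{\bb{E}}$ under the isomorphism \eqref{eq:iso_for_fkmm}. First I would recall that, by Definition \ref{def:gen_FKMM_inv}, $\kappa(\bb{E},\Theta)=\tilde{\kappa}\big([({\rm det}(\bb{E}),s_{\bb{E}})]\big)$, so the whole statement reduces to computing the image of the relative \virg{Real} Picard class $[({\rm det}(\bb{E}),s_{\bb{E}})]\in{\rm Vec}^1_{\rr{R}}(X|X^\tau,\tau)$ under the composite of $\tilde{\kappa}$ (Proposition \ref{prop:nat_iso}) with the isomorphism \eqref{eq:iso_for_fkmm} coming from \cite[Lemma 3.1]{denittis-gomi-14-gen}.

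The key step is to observe that on an FKMM-space the first hypothesis $H^2_{\Z_2}(X,\Z(1))=0$ forces, via the Kahn isomorphism \eqref{eq:kahn_iso}, the \virg{Real} line bundle ${\rm det}(\bb{E})$ to be $\rr{R}$-trivial; hence an isomorphism class of a pair $(\bb{L},s)$ in ${\rm Vec}^1_{\rr{R}}(X|X^\tau,\tau)$ is completely determined by the \virg{Real} section $s$ over $X^\tau$, modulo the change of global trivialization of $\bb{L}$. Concretely, after fixing one global $\rr{R}$-trivialization $h:{\rm det}(\bb{E})\to X\times\C$, the section $s_{\bb{E}}:X^\tau\to\n{S}({\rm det}(\bb{E})|_{X^\tau})$ is pushed to an equivariant map $X^\tau\to\tilde{\n{U}}(1)$, i.e. an element of $[X^\tau,\tilde{\n{U}}(1)]_{\Z_2}={\rm Map}(X^\tau,\{\pm1\})$ since $X^\tau$ carries the trivial involution; and two choices of $h$ differ by an element of $[X,\tilde{\n{U}}(1)]_{\Z_2}$ acting by pointwise multiplication followed by restriction to $X^\tau$. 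This is exactly the description of $H^2_{\Z_2}(X|X^\tau,\Z(1))$ provided by \eqref{eq:iso_for_fkmm}, and the naturality statement in Proposition \ref{prop:nat_iso} guarantees that $\tilde{\kappa}$ identifies $[({\rm det}(\bb{E}),s_{\bb{E}})]$ with the class of $s_{\bb{E}}$ viewed in this quotient. I would spell this out by unwinding the Mayer--Vietoris / long exact sequence of the pair $(X,X^\tau)$ underlying \cite[Lemma 3.1]{denittis-gomi-14-gen}: the vanishing of $H^2_{\Z_2}(X,\Z(1))$ collapses the sequence so that $H^2_{\Z_2}(X|X^\tau,\Z(1))$ is the cokernel of the restriction map $H^1_{\Z_2}(X,\Z(1))\to H^1_{\Z_2}(X^\tau,\Z(1))$, and these first cohomology groups are precisely $[X,\tilde{\n{U}}(1)]_{\Z_2}$ and $[X^\tau,\tilde{\n{U}}(1)]_{\Z_2}$.

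The main obstacle, as I see it, is purely bookkeeping: one must check that the connecting/comparison maps in the long exact sequence are compatible, on the nose, with (i) the determinant-section construction of Section \ref{subsec:det_construct}, (ii) the isomorphism $\tilde{\kappa}$ of Proposition \ref{prop:nat_iso}, and (iii) the homotopy-theoretic model $\tilde{\n{U}}(1)\simeq K(\Z(1),1)$ in the equivariant Borel setting. In other words, the content is that all the identifications used are the \emph{same} identification, compatibly chosen; once this compatibility is recorded, the proposition is immediate. I would therefore organize the proof as: (1) invoke $\rr{R}$-triviality of ${\rm det}(\bb{E})$ from Definition \ref{defi:FKMM-space}(b) and \eqref{eq:kahn_iso}; (2) describe $[({\rm det}(\bb{E}),s_{\bb{E}})]$ by the data of $s_{\bb{E}}$ relative to a chosen trivialization; (3) match this with \eqref{eq:iso_for_fkmm} using the exact-sequence description, citing \cite[Lemma 3.1]{denittis-gomi-14-gen}; (4) conclude via naturality of $\tilde{\kappa}$ that $\kappa(\bb{E},\Theta)=[s_{\bb{E}}]$.
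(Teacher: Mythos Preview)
Your proposal is correct and follows exactly the route the paper takes: the paper does not give a formal proof but simply states that the proposition follows by combining Definition~\ref{def:gen_FKMM_inv} with the isomorphism \eqref{eq:iso_for_fkmm}, and your steps (1)--(4) are precisely the unwinding of that combination. The only thing you add beyond the paper is the explicit bookkeeping about compatibility of $\tilde{\kappa}$ with the exact-sequence description, which the paper leaves implicit.
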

\noindent
In other words over an FKMM-space the FKMM-invariant $\kappa(\bb{E},\Theta)$
 is given by the canonical section $s_{\bb{E}}$ modulo the action (multiplication and restriction) of an equivariant map $s:X\to\tilde{\n{U}}(1)$ which can be seen as a global trivialization of the trivial line bundle 
${\rm det}(\bb{E})$. The result in Proposition \ref{prop:fkmm-inv_fkmm-space} agrees with the \emph{old} definition of FKMM-invariant given in \cite[Definition 3.2]{denittis-gomi-14-gen}. Proposition \ref{prop:fkmm-inv_fkmm-space} says that the 
FKMM-invariant $\kappa(\bb{E},\Theta)$ of a $\rr{Q}$-bundle $(\bb{E},\Theta)$ over an FKMM-space can be described by any representative $\phi\in \text{Map}(X^\tau,\{\pm 1\})$ in the class of $[s_{\bb{E}}]$. This description allows to compare the FKMM-invariant $\kappa$ with the 
\emph{Fu-Kane-Mele indices} in the case of the involutive spheres $\n{S}^{1,d}$
and the involutive tori $\n{T}^{0,d,0}$, $d=1,2,3,4$, as discussed in \cite{denittis-gomi-14-gen}.

\medskip

Finally, we describe the injectivity of the FKMM invariant in low dimension, which is the source of the \virg{half} of the Theorem \ref{theo:A_inject_fixed}.

\begin{proposition}{\cite[Theorem 4.7]{denittis-gomi-16}}
\label{prop:injectivity_in_low_dimension}
Let $(X,\tau)$ be a low-dimensional involutive space in the sense of Assumption \ref{ass:top}. Then the FKMM-invariant 
$$
{\kappa}\;:\;{\rm Vec}_{\rr{Q}}^{2m}(X,\tau)\;{\longrightarrow}\;H^2_{\Z_2}\big(X|X^\tau,\Z(1)\big)\;
$$
is injective for all $m\in\N$.
\end{proposition}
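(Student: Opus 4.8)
The plan is to reduce to the rank-two case and then to build the desired isomorphism by equivariant obstruction theory, exploiting that $Sp(1)=SU(2)$ is $2$-connected. \emph{Reduction to rank two.} If $d=0,1$ then ${\rm Vec}^{2m}_{\rr{Q}}(X,\tau)=0$ by Proposition~\ref{theo:stab_ran_Q_even} and the assertion is empty. For $2\leqslant d\leqslant 3$ the same proposition shows that stabilization by the rank-$(2m-2)$ \virg{Quaternionic} product bundle is a bijection ${\rm Vec}^{2m}_{\rr{Q}}(X,\tau)\simeq{\rm Vec}^{2}_{\rr{Q}}(X,\tau)$; since $\kappa$ vanishes on the product bundle (property ($\gamma$)) and is additive under Whitney sums (property ($\delta$)), this bijection intertwines the FKMM-invariant in ranks $2$ and $2m$. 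Hence it suffices to prove injectivity of $\kappa$ on ${\rm Vec}^{2}_{\rr{Q}}(X,\tau)$, and from now on $(\bb{E}_1,\Theta_1)$ and $(\bb{E}_2,\Theta_2)$ denote rank-two $\rr{Q}$-bundles with $\kappa(\bb{E}_1,\Theta_1)=\kappa(\bb{E}_2,\Theta_2)$.

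\emph{Isomorphisms as equivariant sections.} I would form the fiber bundle ${\rm Iso}(\bb{E}_1,\bb{E}_2)\to X$ whose fiber at $x$ is the space of $\C$-linear isomorphisms $(\bb{E}_1)_x\to(\bb{E}_2)_x$ (fiberwise homotopy equivalent to $U(2)$), equipped with the involution $\psi\mapsto\Theta_2\circ\psi\circ\Theta_1^{-1}$ covering $\tau$; this is an involution because $\Theta_i^2=-\n{1}$. A global equivariant section of ${\rm Iso}(\bb{E}_1,\bb{E}_2)$ is exactly a $\rr{Q}$-isomorphism $(\bb{E}_1,\Theta_1)\simeq(\bb{E}_2,\Theta_2)$, and over $X^\tau$ the fixed points of the fiberwise involution are the $\C$-linear maps intertwining the quaternionic structures $\Theta_i|_{X^\tau}$, so the relevant fixed fiber over $X^\tau$ is homotopy equivalent to $GL_1(\n{H})\simeq Sp(1)$. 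Likewise, an isomorphism of pairs between $({\rm det}(\bb{E}_1),s_{\bb{E}_1})$ and $({\rm det}(\bb{E}_2),s_{\bb{E}_2})$ in the sense of the relative \virg{Real} Picard group is the same as an equivariant section of ${\rm Iso}({\rm det}(\bb{E}_1),{\rm det}(\bb{E}_2))$ whose restriction to $X^\tau$ sends $s_{\bb{E}_1}$ to $s_{\bb{E}_2}$.

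\emph{The lifting step.} By Definition~\ref{def:gen_FKMM_inv} and the isomorphism $\widetilde{\kappa}$ of Proposition~\ref{prop:nat_iso}, the hypothesis $\kappa(\bb{E}_1,\Theta_1)=\kappa(\bb{E}_2,\Theta_2)$ is equivalent to the statement that $({\rm det}(\bb{E}_1),s_{\bb{E}_1})$ and $({\rm det}(\bb{E}_2),s_{\bb{E}_2})$ are isomorphic in ${\rm Vec}_{\rr{R}}^1(X|X^\tau,\tau)$; fix such an isomorphism $f$, viewed as an equivariant section of ${\rm Iso}({\rm det}(\bb{E}_1),{\rm det}(\bb{E}_2))$. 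The complex determinant defines an equivariant map of fiber bundles over $X$,
$$
{\rm det}\;:\;{\rm Iso}(\bb{E}_1,\bb{E}_2)\;\longrightarrow\;{\rm Iso}\big({\rm det}(\bb{E}_1),{\rm det}(\bb{E}_2)\big)\;,
$$
with fiber $\{\psi\ |\ {\rm det}(\psi)=1\}$ homotopy equivalent to $SU(2)$; over $X^\tau$ the corresponding fixed fiber contains an isometric quaternionic-linear isomorphism (which exists because any two rank-one $\n{H}$-spaces are isometrically $\n{H}$-isomorphic), this isomorphism has complex determinant $1$ and carries $s_{\bb{E}_1}$ to $s_{\bb{E}_2}$ under the canonical determinant trivializations of Proposition~\ref{lemma:R_Q_det_bun2}, so the fixed fiber is a non-empty $Sp(1)$-torsor and $f$ admits a lift along ${\rm det}$ over $X^\tau$. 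Now lift $f$ cell by cell over the $\Z_2$-CW-structure of $X$: the obstruction to extending the lift across a cell of dimension $q$ lies in $\pi_{q-1}$ of the relevant fiber, namely $\pi_{q-1}(SU(2))$ on a free cell and $\pi_{q-1}(Sp(1))$ on a fixed cell of $X^\tau$, and these groups vanish for $1\leqslant q\leqslant 3$ since $SU(2)$ and $Sp(1)$ are $2$-connected (the fibers also being non-empty, nothing obstructs getting started on $0$-cells). Hence no obstruction occurs when $d\leqslant 3$, $f$ lifts to an equivariant section $\widetilde{f}$ of ${\rm Iso}(\bb{E}_1,\bb{E}_2)$, that is, to a $\rr{Q}$-isomorphism $(\bb{E}_1,\Theta_1)\simeq(\bb{E}_2,\Theta_2)$; combined with property ($\alpha$) this proves that $\kappa$ is injective.

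\emph{Where the difficulty lies.} The essential work is to make the equivariant obstruction/lifting theory on a $\Z_2$-CW-complex rigorous: one must treat separately the free cells (on which the fiber of ${\rm det}$ is $SU(2)$) and the fixed cells in $X^\tau$ (on which the relevant fiber is an $Sp(1)$-torsor), check that this torsor is non-empty and $2$-connected, and verify that all obstruction groups indeed land in degree $\geqslant 4$, so that the hypothesis $d\leqslant 3$ kills them. A more routine ingredient is the compatibility, in the reduction step, between the stable-range isomorphism of Proposition~\ref{theo:stab_ran_Q_even} and the FKMM-invariant, together with the compatibility over $X^\tau$ of an isometric quaternionic-linear isomorphism with the canonical sections $s_{\bb{E}_i}$, which is what makes $f|_{X^\tau}$ liftable.
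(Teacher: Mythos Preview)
The present paper does not contain a proof of this proposition; it is quoted verbatim from \cite[Theorem~4.7]{denittis-gomi-16} and used as a black box. There is therefore no ``paper's own proof'' here to compare against. That said, your proposal is a correct and complete argument, and it is in fact the natural one (and almost certainly the one used in the cited reference): reduce to rank $2$ via the stable range, then lift an isomorphism of the pairs $(\det(\bb{E}_i),s_{\bb{E}_i})$ along the equivariant fibration ${\rm det}:{\rm Iso}(\bb{E}_1,\bb{E}_2)\to{\rm Iso}(\det(\bb{E}_1),\det(\bb{E}_2))$ by $\Z_2$-equivariant obstruction theory, using that the fiber is $SU(2)$ on free cells and $Sp(1)$ on fixed cells, both $2$-connected.

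Two small points worth tightening. First, the claim that over $X^\tau$ the fixed fiber of ${\rm det}$ above $f$ is a non-empty $Sp(1)$-torsor uses precisely that $f$ carries $s_{\bb{E}_1}$ to $s_{\bb{E}_2}$: under the canonical trivializations of Proposition~\ref{lemma:R_Q_det_bun2} this means $f|_{X^\tau}$ becomes the constant $+1$, and an isometric $\n{H}$-linear isomorphism of one-dimensional $\n{H}$-spaces has complex determinant $+1$ and respects those canonical trivializations; hence the fiber is exactly $Sp(1)$ and not merely homotopy equivalent to it. You allude to this in the last paragraph, but it is the one place where the equality $\kappa(\bb{E}_1)=\kappa(\bb{E}_2)$ (as opposed to merely $\det(\bb{E}_1)\simeq\det(\bb{E}_2)$) is genuinely used, so make it explicit. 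Second, the equivariance of the map ${\rm det}$ between the two ${\rm Iso}$-bundles deserves a one-line check: the top exterior power is functorial for anti-linear maps as well, so ${\rm det}(\Theta_2\,\psi\,\Theta_1^{-1})={\rm det}(\Theta_2)\circ{\rm det}(\psi)\circ{\rm det}(\Theta_1)^{-1}$, which is exactly the \virg{Real} involution on ${\rm Iso}(\det(\bb{E}_1),\det(\bb{E}_2))$.
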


\section{The surjectivity of the FKMM-invariant in low-dimension}
\label{subsec:tech_results}
This technical section is devoted to the proof of various results that, all together, provide the justification of Theorem \ref{theo:class_low_odd_R}.
Many of the arguments that will be presented here are based on the fact that
the involutive space $(X,\tau)$ has the structure of a $\Z_2$-CW-complex, namely it admits a skeleton decomposition given by gluing cells of different dimensions which carry  $\Z_2$-actions. For a detailed presentation of the  theory of 
$\n{G}$-equivariant CW-complex we refer to  \cite{matumoto-71,allday-puppe-93}.
Here we recall some basic notions and we introduce the relevant notation (\cf \cite[Section 4.5]{denittis-gomi-14}) that will be used repeatedly  in the remainder of the section. 

\medskip

Let $(X,\tau)$ be a $\Z_2$-CW complex $X$ of dimension $d$. This means that $X$ consists of free and fixed $\Z_2$-cells of dimension at most $d$ glued together through equivariant attaching maps. A \emph{fixed} $\Z_2$-cell of dimension $k$ is given by the space $\boldsymbol{e}^k:=\{\ast\}\times\n{D}^k\simeq \n{D}^k$ where $\n{D}^k:=\{t\in\R^k\;|\; \|t\|\leqslant 1 \}$ denotes the closed unit ball (or disk) in $\R^k$
and the $\Z_2$-action on $\boldsymbol{e}^k$ is trivial on the two factors $\{\ast\}$ and $\n{D}^k$. A \emph{free} $k$-dimensional $\Z_2$-cell is described by 
$\tilde{\boldsymbol{e}}^k:=\Z_2\times\n{D}^k$ where the group $\Z_2$ is identified with the space $\{\pm 1\}$ endowed with the free flipping involution. The overall $\Z_2$-action on $\tilde{\boldsymbol{e}}^k$ is obtained by combining the free action on $\Z_2$ with the trivial action on $\n{D}^k$, namely it is given by
$(\pm1,t)\leftrightarrow (\mp1,t)$ for all $t\in\n{D}^k$
 As usual, let $X_k\subset X$, $k=0,1,\ldots,d$, be the $k$-skeleton of $X$.
In particular $X_{k-1}$ is the union of all the  $\Z_2$-cells of dimension less than $k$ and $X_{k}$ is
obtained by gluing a certain number of $k$-dimensional 
cells ${\boldsymbol{e}}^k_\lambda$ and $\tilde{\boldsymbol{e}}^k_\lambda$ to $X_{k-1}$:
\begin{equation}\label{eq:skeleton_caz}
X_k\;:=\;X_{k-1}\;\cup_{\phi_k}\;\left[\left(\bigsqcup_{\lambda\in\Lambda^k_{\text{fixed}}}\boldsymbol{e}^k_\lambda\right)\;\sqcup\;\left(\bigsqcup_{\lambda\in\Lambda^k_{\text{free}}}\tilde{\boldsymbol{e}}^k_\lambda\right)\right]\;.
\end{equation}
In the last expression $\Lambda_{\text{fixed}}^k$ and  $\Lambda_{\text{free}}^k$ are index sets (possibly infinite  in the case that $X$ is not a finite complex) and $\phi_k$ is   the equivariant gluing map which attaches the $k$ dimensional cells to the $(k-1)$-skeleton. The involutive space $(X,\tau)$ has a free $\Z_2$-action if and only if 
$\Lambda_{\text{fixed}}^k=\empt$ for all $k\leqslant d$.
{Finally, let us point out that for the very definition of CW-complex one always has that $X_0\neq\empt$.}

\subsection{Surjectivity  in $d=0,1$}
\label{app:surj_fkmm_d=2}
The aim of  this section is to  provide the proof for equation 
\eqref{eq:rel_equi_cohom_d=0,1}.
\begin{proposition}\label{prop:zero_rel_cohom_low}
Let $(X,\tau)$ be an involutive space which verifies Assumption \ref{ass:top}. Let its dimension be $d=0$ or $d=1$. Then
$$
H^{2}_{\Z_2}\big(X|X^\tau,\Z(1)\big)\;=\;0\;.
$$
\end{proposition}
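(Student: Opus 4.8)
The plan is to show that the relative Borel cohomology of the pair $(X,X^\tau)$ only depends on the \emph{free} cells of $X$, hence is computed by a cochain complex concentrated in degrees $0,\dots,d$, which for $d\leqslant1$ has nothing in degree $2$. The starting point is the $\Z_2$-CW structure on $(X,\tau)$ granted by Assumption \ref{ass:top}, together with the elementary observation that $X^\tau$ coincides with the subcomplex generated by the fixed cells $\boldsymbol{e}^k=\{\ast\}\times\n{D}^k$ of $X$. Indeed, the equivariant attaching map of a fixed cell has a source with trivial action, so it factors through the fixed subcomplex of the lower skeleton, whereas the interior of a free cell $\tilde{\boldsymbol{e}}^k=\Z_2\times\n{D}^k$ is acted on freely and contains no fixed point. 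Consequently the cells of $X$ not contained in $X^\tau$ are exactly the free cells $\tilde{\boldsymbol{e}}^k_\lambda$ with $\lambda\in\Lambda^k_{\text{free}}$, and these exist only for $k\leqslant d\leqslant1$.

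Next I would feed this into the cellular description of Borel equivariant cohomology for $\Z_2$-CW pairs (see \cite[Section 4.5]{denittis-gomi-14} and \cite{matumoto-71,allday-puppe-93}). Writing $(-)_{h\Z_2}:=(-)\times_{\Z_2}E\Z_2$ for the Borel construction, the Borel construction of a free $k$-cell, $(\Z_2\times\n{D}^k)_{h\Z_2}=\n{D}^k\times E\Z_2$, is contractible and the coefficient system $\Z(1)$ restricts to it trivially; hence, for the skeletal filtration $F_k:=X^\tau\cup X_k$ of the pair, $H^n_{\Z_2}\big(F_k\,|\,F_{k-1},\Z(1)\big)$ is a free abelian group concentrated in degree $n=k$, with one generator per free $k$-cell. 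The resulting spectral sequence therefore degenerates and exhibits $H^\bullet_{\Z_2}(X|X^\tau,\Z(1))$ as the cohomology of a cochain complex $C^\bullet$ with $C^k\cong\bigoplus_{\lambda\in\Lambda^k_{\text{free}}}\Z$. Since $d\leqslant1$ forces $C^k=0$ for $k\geqslant2$, we get $H^2_{\Z_2}(X|X^\tau,\Z(1))=0$, as claimed. (Equivalently, and more geometrically: collapsing $X^\tau$ makes $X/X^\tau$ a pointed $\Z_2$-CW complex whose action is free away from the basepoint, so $(X/X^\tau)_{h\Z_2}$ is homotopy equivalent to the orbit space $(X/X^\tau)/\Z_2$ --- a pointed CW complex of dimension $\leqslant1$, which carries no reduced degree-$2$ cohomology with any local coefficients. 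The case $X^\tau=\emptyset$, i.e.\ \eqref{eq:equi_cohom_d=0,1}, is the special case where $X/X^\tau=X_+$.)

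The only step that needs genuine care --- although it is standard material in equivariant CW theory --- is the first one: identifying $X^\tau$ with the fixed-cell subcomplex and deducing that the relative cells are exactly the free cells, so that the relative Borel cohomology is computed by a cochain complex supported in degrees $0,\dots,d$. Once this reduction is in place, the conclusion is forced purely by the dimension hypothesis $d\leqslant1$, with no further computation.
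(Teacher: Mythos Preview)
Your argument is correct and takes a genuinely different route from the paper. The paper proceeds by cases: for the free case it uses the exact sequence \cite[Proposition 2.3]{gomi-13} together with the vanishing of ordinary and quotient cohomology above the dimension of $X$; for the non-free case it analyses the long exact sequence of the pair $(X,X^\tau)$ and proves, via a Mayer--Vietoris decomposition $X=U\cup V$ with $U$ a tubular neighborhood of $X^\tau$, that the restriction maps $r$ in degrees $1$ and $2$ are respectively surjective and injective. Your approach instead identifies $X^\tau$ with the fixed-cell subcomplex and reads the relative Borel cohomology directly from the free cells via the skeletal filtration; the resulting cellular cochain complex is concentrated in degrees $\leqslant d$, forcing $H^2$ to vanish. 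This is cleaner, avoids the case split, and gives for free the stronger conclusion $H^{k}_{\Z_2}(X|X^\tau,\Z(1))=0$ for all $k>d$. The paper's method, on the other hand, stays closer to the tools already developed in the text (Mayer--Vietoris, the exact sequence \eqref{eq:Long_seq}, Lemma \ref{appA_lemma}) and does not require setting up the equivariant cellular machinery. One small caveat: in your parenthetical alternative, the assertion that $(X/X^\tau)_{h\Z_2}$ is homotopy equivalent to $(X/X^\tau)/\Z_2$ is not literally true (over the basepoint the homotopy quotient carries a copy of $B\Z_2$); what is true, and sufficient, is that the natural map induces an isomorphism on \emph{reduced} cohomology. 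Since your primary filtration argument is independent of this remark, the proof stands.
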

\proof \emph{(Free case)} Let us start with the case $X^\tau=\empt$. An inspection of the exact sequence \cite[Proposition 2.3]{gomi-13}
\begin{equation}\label{eq:exact_app_C1}
\begin{aligned}
\ldots\;&\longrightarrow\;H^k_{\Z_2}\big(X,\Z\big)\;\stackrel{}{\longrightarrow}\;H^k\big(X,\Z\big)\;\stackrel{}{\longrightarrow}\;H^k_{\Z_2}\big(X,\Z(1)\big)\;\stackrel{}{\longrightarrow}\;H^{k+1}_{\Z_2}\big(X,\Z\big)\;\longrightarrow\;\ldots\;\\
\end{aligned}
\end{equation}
along with the vanishing of $H^k(X,\Z)=0$ and $H^k_{\Z_2}(X,\Z)=H^k(X/\Z_2,\Z)=0$ for all $k\geqslant 2$, assures that $H^k_{\Z_2}(X,\Z(1))=0$ for all $k\geqslant 2$. In this case the claim follows by observing that $H^{2}_{\Z_2}(X|X^\tau,\Z(1))=H^{2}_{\Z_2}(X,\Z(1))$ due to the assumption $X^\tau=\empt$.\\

\emph{(Non-free case)} Let assume now $X^\tau\neq\empt$ and consider first the case $d=0$. The full space splits as $X=X^\tau\sqcup X_{\text{free}}$
where $X^\tau$ is the disjoint union of $\tau$-invariant points $\{\ast\}$ and $X_{\text{free}}$ is a collection of free involution spaces $\Z_2$. Since $X^\tau\cap X_{\text{free}}=\empt$  the Mayer-Vietoris exact sequence  implies 
$$
H^k_{\Z_2}\big(X,\Z(1)\big)\;\simeq\;H^k_{\Z_2}\big(X^\tau,\Z(1)\big)\;\oplus\;H^k_{\Z_2}\big(X_{\text{free}},\Z(1)\big)\;\simeq\;H^k_{\Z_2}\big(X^\tau,\Z(1)\big)\;,\qquad\quad k=1,2
$$
where last isomorphism is justified by the exact sequence \eqref{eq:exact_app_C1}
along with
the fact that
$X_{\text{free}}$ is a $0$-dimensional space with a free involution. Recalling that  $H^k_{\Z_2}(\{\ast\},\Z(1))$  is trivial when $k$ is even (see \cite[Section 5.1]{denittis-gomi-14}) it follows by the additivity axiom in equivariant cohomology that $H^2_{\Z_2}(X,\Z(1))=0$.
In this particular case the exact sequence
\eqref{eq:Long_seq} reads
$$
H^1_{\Z_2}\big(X^\tau,\Z(1)\big)\;\stackrel{r}{\longrightarrow}\;H^1_{\Z_2}\big(X^\tau,\Z(1)\big)\;\stackrel{\delta_1}{\longrightarrow}\;H^2_{\Z_2}\big(X|X^\tau,\Z(1)\big)\;\stackrel{\delta_2}{\longrightarrow}\;0\;,
$$
with $r$, the restriction from $X$ to $X^\tau$, being evidently an isomorphism. Hence  $H^2_{\Z_2}(X|X^\tau,\Z(1))=0$.
For the case $d=1$ let us start with the exact sequence
\eqref{eq:Long_seq}
$$
H^1_{\Z_2}\big(X,\Z(1)\big)\;\stackrel{r}{\longrightarrow}\;H^1_{\Z_2}\big(X^\tau,\Z(1)\big)\;\stackrel{\delta_1}{\longrightarrow}\;H^2_{\Z_2}\big(X|X^\tau,\Z(1)\big)\;\stackrel{\delta_2}{\longrightarrow}\;H^2_{\Z_2}\big(X,\Z(1)\big)\;\stackrel{r}{\longrightarrow}\;H^2_{\Z_2}\big(X^\tau,\Z(1)\big)\;.
$$
 In order to prove the claim it is enough to show that: (i) the map $r:H^1_{\Z_2}(X,\Z(1))\to H^1_{\Z_2}(X^\tau,\Z(1))$ is surjective; (ii) the map $r:H^2_{\Z_2}(X,\Z(1))\to H^2_{\Z_2}(X^\tau,\Z(1))$ is injective.
To do that let $U$ be a closed neighborhood of $X^\tau$ such that: (a) $U$ is $\Z_2$-equivariantly homotopy equivalent to $X^\tau$; (b) Let $V:=\overline{X\setminus U}$. Then $U\cap V$ is $\Z_2$-equivariantly homotopy equivalent to the disjoint union of several copies of the free involutive space $\Z_2$.  The reason why  such a  closed neighborhood $U$ exists relies on the following construction: The fixed point set $X^\tau$ consists
of all the fixed $\Z_2$-cells of dimension $0$ and $1$. The $0$-dimensional free $\Z_2$-cells in $X$ are disjoint to $X^\tau$, and $X$ is obtained by gluing the remaining $1$-dimensional 
free $\Z_2$-cells to $X^\tau$ or to the  $0$-dimensional free $\Z_2$-cells. The typical $1$-dimensional  free $\Z_2$-cell has the form $\tilde{\boldsymbol{e}}^1:=\Z_2\times[0,1]$ endowed with  the free $\Z_2$-action implemented by $(\pm1,t)\leftrightarrow (\mp1,t)$ for all $t\in[0,1]$. The $\Z_2$ boundary of $\tilde{\boldsymbol{e}}^1$ is $\partial\tilde{\boldsymbol{e}}^1:=\Z_2\times\{0,1\}$. If $\partial\tilde{\boldsymbol{e}}^1\cap X^\tau=\empt$, then $\tilde{\boldsymbol{e}}^1$ is attached to a $0$-dimensional free $\Z_2$-cell and forms part of a free involutive space which does not intersect $X^\tau$. In this case we can take $U$ such that $U\cap \tilde{\boldsymbol{e}}^1=\empt$. If  $\partial\tilde{\boldsymbol{e}}^1$ intersects $X^\tau$ in a single point we can assume,
without loss of generality, that the attachment
 is along the $\Z_2$-left boundary $\{\pm 1\}\times\{0\}\in\tilde{\boldsymbol{e}}^1$ and we can take $\{\pm 1\}\times[0,\nicefrac{1}{3}]$ as the intersection between  $\tilde{\boldsymbol{e}}^1$ and $U$. Finally, if $\partial\tilde{\boldsymbol{e}}^1\subset X^\tau$, namely if the $\Z_2$ boundary of $\tilde{\boldsymbol{e}}^1$ is attached to $X^\tau$, we can take $\{\pm 1\}\times([0,\nicefrac{1}{3}]\cup[\nicefrac{2}{3},1])$ as the intersection between  $\tilde{\boldsymbol{e}}^1$ and $U$. The proof of the property (i) follows by inspecting the Mayer-Vietoris exact sequence
$$
\ldots\stackrel{}{\longrightarrow}\;H^1_{\Z_2}\big(X,\Z(1)\big)\;\stackrel{(\imath_U^*,\imath_V^*)}{\longrightarrow}\;H^1_{\Z_2}\big(U,\Z(1)\big)\oplus H^1_{\Z_2}\big(V,\Z(1)\big)\;\stackrel{}{\longrightarrow}\;H^1_{\Z_2}\big(U\cap V,\Z(1)\big)\;\stackrel{}{\longrightarrow}\ldots\;
$$
where $\imath_U^*$ and $\imath_V^*$ are the maps induced by the inclusion  $\imath_U:U\to X$ and $\imath_V:V\to X$. 
Lemma \ref{appA_lemma} assures that 
$H^j_{\Z_2}(\Z_2,\Z(1))=0$ for all $j\geqslant 1$ and by additivity it holds that 
 $H^1_{\Z_2}(U\cap V,\Z(1))=0$. Moreover, by construction, one has 
 $H^j_{\Z_2}(U,\Z(1))\simeq H^j_{\Z_2}(X^\tau,\Z(1))$ and under this identification the map $\imath_U^*$  can be replaced by the map
 $r:H^1_{\Z_2}(X,\Z(1))\to H^1_{\Z_2}(X^\tau,\Z(1))$. Putting these facts together, one obtains from the Mayer-Vietoris exact sequence that the map
 $$
 H^1_{\Z_2}\big(X,\Z(1)\big)\;\stackrel{(r,\imath_V^*)}{\longrightarrow}\;H^1_{\Z_2}\big(X^\tau,\Z(1)\big)\oplus H^1_{\Z_2}\big(V,\Z(1)\big)
 $$
is surjective, and in particular (i) is proved. In order to prove (ii) we need a second part of the  Mayer-Vietoris exact sequence
$$
0\;\stackrel{}{\longrightarrow}H^2_{\Z_2}\big(X,\Z(1)\big)\;\stackrel{(\imath_U^*,\imath_V^*)}{\longrightarrow}\;H^2_{\Z_2}\big(U,\Z(1)\big)\oplus H^2_{\Z_2}\big(V,\Z(1)\big)\;\stackrel{}{\longrightarrow}\;0
$$
where we already plugged in $H^1_{\Z_2}(U\cap V,\Z(1))=H^2_{\Z_2}(U\cap V,\Z(1))=0$.
Since $V$ is a free involutive space of dimension at most 1 we already know  that $H^2_{\Z_2}(V,\Z(1))=0$. Moreover, in view of the isomorphism $H^j_{\Z_2}(U,\Z(1))\simeq H^j_{\Z_2}(X^\tau,\Z(1))$ we finally get
$$
0\;\stackrel{}{\longrightarrow}H^2_{\Z_2}\big(X,\Z(1)\big)\;\stackrel{(r,\imath_V^*)}{\longrightarrow}\;H^2_{\Z_2}\big(X^\tau,\Z(1)\big)\oplus 0\;\stackrel{}{\longrightarrow}\;0
$$
which means that $r:H^2_{\Z_2}(X,\Z(1))\to H^2_{\Z_2}(X^\tau,\Z(1))$ is an isomorphism,  therefore injective as required by (ii).
 \qed

\medskip

In view of Proposition \ref{prop:injectivity_in_low_dimension}, we immediately get the following corollary to Proposition \ref{prop:zero_rel_cohom_low}  which proves Theorem \ref{theo:A_inject_fixed} for $d = 0, 1$.

\begin{corollary} \label{cor:bijectivity_for_d_1_and_d_0}
Let $(X,\tau)$ by an involutive space which verifies Assumption \ref{ass:top}. Let its dimension be $d= 0, 1$. 
Then the FKMM-invariant
$$
\kappa\;:\; {\rm Vec}_{\rr{Q}}^{2m}\big(X,\tau\big)\;\stackrel{\simeq}{\longrightarrow}\; H^2_{\Z_2}\big(X| X^\tau,\Z(1)\big)   \;,\qquad\qquad \forall\ m\in\N \;
$$
induces a (trivial) bijection.
\end{corollary}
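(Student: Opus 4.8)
The plan is to deduce the corollary directly from Proposition~\ref{prop:injectivity_in_low_dimension} together with the cohomological vanishing just established. First I would recall that, by Proposition~\ref{prop:injectivity_in_low_dimension}, the map
$$
\kappa\;:\;{\rm Vec}_{\rr{Q}}^{2m}\big(X,\tau\big)\;\longrightarrow\;H^2_{\Z_2}\big(X|X^\tau,\Z(1)\big)
$$
is injective for every $m\in\N$. Then I would observe that its target vanishes by Proposition~\ref{prop:zero_rel_cohom_low}: in the non-free case the statement is literally $H^2_{\Z_2}(X|X^\tau,\Z(1))=0$, whereas in the free case $X^\tau=\empt$ one first uses the identification $H^2_{\Z_2}(X|X^\tau,\Z(1))=H^2_{\Z_2}(X,\Z(1))$ (mentioned in Section~\ref{subsec:fkmm-invariant}) and then the same proposition. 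An injective map into the trivial group forces its source to contain at most one element.

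Next I would argue that the source is in fact a singleton, hence that $\kappa$ is the (forced) bijection between two one-element sets. The set ${\rm Vec}_{\rr{Q}}^{2m}(X,\tau)$ always contains the class of the \virg{Quaternionic} product bundle of Definition~\ref{defi:quat_product_bund}, so it is non-empty; combined with the previous paragraph this gives ${\rm Vec}_{\rr{Q}}^{2m}(X,\tau)=0$, i.e. every even-rank $\rr{Q}$-bundle over a $0$- or $1$-dimensional base is $\rr{Q}$-trivial. By property~($\gamma$) of the FKMM-invariant this unique class is sent to $0\in H^2_{\Z_2}(X|X^\tau,\Z(1))$, so $\kappa$ is a bijection, and the result is consistent with the stable-range statement of Proposition~\ref{theo:stab_ran_Q_even}.

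There is essentially no genuine obstacle in this argument: the real content has already been extracted, on the one hand in Proposition~\ref{prop:zero_rel_cohom_low} (the cohomological computation, whose only delicate point is the Mayer--Vietoris and excision bookkeeping around the fixed-point set $X^\tau$ in dimension $d=1$), and on the other hand in Proposition~\ref{prop:injectivity_in_low_dimension}. The single point requiring a line of care is to make sure that both the free and the non-free cases are covered by the notation $H^2_{\Z_2}(X|X^\tau,\Z(1))$ used in the statement, which is handled by the identification recalled above.
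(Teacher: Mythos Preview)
Your proposal is correct and matches the paper's approach exactly: the paper simply remarks that the corollary follows immediately from combining the injectivity of $\kappa$ (Proposition~\ref{prop:injectivity_in_low_dimension}) with the vanishing $H^2_{\Z_2}(X|X^\tau,\Z(1))=0$ of Proposition~\ref{prop:zero_rel_cohom_low}. Your additional remarks (non-emptiness via the product $\rr{Q}$-bundle, consistency with the stable-range result, and the free/non-free identification) are all accurate and just make explicit what the paper leaves implicit.
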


\subsection{The relative FKMM-invariant}
\label{subsec:relative_FKMM_inv}
In this section we want to construct a \virg{relative} version of the FKMM-invariant that will be useful to derive some crucial results in Section \ref{app:surj_fkmm_d=2_true}. Let us start with a definition.
\begin{definition}
Let $(X,\tau)$ be an involutive space and $Y\subseteq X$ a closed and $\tau$-invariant
subspace $\tau(Y)=Y$ ({it is not required that $Y\subset X^\tau$}). We denote by ${\rm Vec}_{\rr{Q}}^{2m}(X|Y,\tau\big)$ the set of isomorphism classes of pairs $(\bb{E},h)$ given by:
\begin{itemize}
\item[(a)] $(\bb{E},\Theta)$ is a \virg{Quaternionic} vector bundle of rank $2m$ over $(X,\tau)$ such that the restriction $\bb{E}|_Y$ is trivial in the category of  \virg{Quaternionic} vector bundles;
\vspace{1mm}
\item[(b)] $h:\bb{E}|_Y\to Y\times\C^{2m}$ is a \virg{Quaternionic} trivialization of $\bb{E}|_Y$.
\end{itemize}
Two pairs $(\bb{E},h)$ and $(\bb{E}',h')$ are called isomorphic if there exists 
a \virg{Quaternionic} isomorphism $f:\bb{E}\to\bb{E}'$ such that $h'\circ f|_Y=h$.
\end{definition}
\medskip

\noindent
The existence of a $\rr{Q}$-trivialization $h$ for $\bb{E}|_Y$ is equivalent to the existence of a $\rr{Q}$-frame, namely a set of nowhere vanishing section $(s_1,s_2, \ldots, s_{2m-1}, s_{2m})\subset\Gamma(\bb{E}|_Y)$ with the property  that $s_{2j}=\tau_\Theta(s_{2j-1})$
(\cf \cite[Theorem 2.1]{denittis-gomi-14-gen}). Since any $\rr{Q}$-bundle (on a paracompact space) admits an essentially unique invariant Hermitian metric \cite[Proposition 2.1]{denittis-gomi-14-gen} we can assume, without loss of generality, that the   $\rr{Q}$-frame $(s_1,s_2,\cdots,s_{2m-1},s_{2m})$ which represents the trivialization $h$ is orthonormal. In the case $Y=\empt$ one has the obvious identification 
${\rm Vec}_{\rr{Q}}^{2m}(X|\empt,\tau)\equiv{\rm Vec}_{\rr{Q}}^{2m}(X,\tau)$. There is also the natural map ${\rm Vec}_{\rr{Q}}^{2m}(X|Y,\tau)\to {\rm Vec}_{\rr{Q}}^{2m}(X,\tau\big)$ which forgets the trivialization over $Y$. 

\medskip

Given the involutive space $X$ and the closed $\tau$-invariant subspace $Y\subseteq X$ let $X/Y$ be the \virg{new} involutive space obtained by collapsing $Y$ to a single point $\{\ast\}$ and 
$$
\pi\;:\; X\;\longrightarrow\;X/Y
$$ 
the equivariant \emph{projection} which maps $\pi:Y\to\{\ast\}\in(X/Y)^\tau\subset X/Y$.
\begin{lemma}\label{lemma:collapsing_relativ}
The pullback under $\pi$ induces the natural bijection
$$
\pi^*\;:\;{\rm Vec}_{\rr{Q}}^{2m}(X/Y|\{\ast\},\tau)\;\stackrel{\simeq}{\longrightarrow}{\rm Vec}_{\rr{Q}}^{2m}(X|Y,\tau)\;.
$$
\end{lemma}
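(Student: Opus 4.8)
The plan is to construct mutually inverse maps between the two sets. The map $\pi^*$ is already given: if $(\bb{F},g)$ represents a class in ${\rm Vec}_{\rr{Q}}^{2m}(X/Y|\{\ast\},\tau)$, then $\pi^*\bb{F}$ is a $\rr{Q}$-bundle over $(X,\tau)$ whose restriction to $Y$ is the pullback of the fibre $\bb{F}_{\{\ast\}}$ along the constant map $Y\to\{\ast\}$; since $g$ trivializes $\bb{F}$ near $\{\ast\}$, pulling back $g$ gives a $\rr{Q}$-trivialization $\pi^*g$ of $(\pi^*\bb{F})|_Y$, and one sets $\pi^*(\bb{F},g):=(\pi^*\bb{F},\pi^*g)$. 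One first checks this is well defined on isomorphism classes — a routine consequence of property $(\alpha)$ of the FKMM-type constructions and naturality of pullbacks under equivariant maps.

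For the inverse, start from a pair $(\bb{E},h)$ with $h:\bb{E}|_Y\to Y\times\C^{2m}$ a $\rr{Q}$-trivialization. Because $h$ identifies $\bb{E}|_Y$ with the constant bundle, one can \emph{glue}: define $\bb{F}:=\bb{E}/\!\sim$ where $p\sim q$ whenever $\pi(p),\pi(q)$ lie over $\{\ast\}$ and $h(p)=h(q)$ after projecting onto the $\C^{2m}$ factor. Equivalently, $\bb{F}$ is the pushout of $\bb{E}\leftarrow \bb{E}|_Y \xrightarrow{\,\mathrm{pr}_2\circ h\,} \C^{2m}$ over the pushout $X/Y$ of $X\leftarrow Y\to\{\ast\}$. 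The trivialization $h$ ensures the quotient topology makes $\bb{F}\to X/Y$ locally trivial, and since $h$ is $\rr{Q}$-equivariant (it intertwines $\Theta$ with the product structure $\Theta_0$), the involution $\Theta$ descends to a $\rr{Q}$-structure on $\bb{F}$; the collapsed fibre $\bb{F}_{\{\ast\}}=\C^{2m}$ comes with the tautological trivialization $g$. This yields a class $[\bb{F},g]\in{\rm Vec}_{\rr{Q}}^{2m}(X/Y|\{\ast\},\tau)$, and one must check independence of the chosen metric and of the representative $(\bb{E},h)$ up to isomorphism — a $\rr{Q}$-isomorphism $f:\bb{E}\to\bb{E}'$ with $h'\circ f|_Y=h$ descends to a $\rr{Q}$-isomorphism of the quotients compatible with the trivializations.

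It then remains to verify that the two assignments are mutually inverse. In one direction, $\pi^*\bb{F}$ glued back along its canonical trivialization recovers $\bb{E}$ because collapsing and then pulling back, then re-gluing, returns the original clutching data; here one uses that $\pi$ restricts to a homeomorphism $X\setminus Y\xrightarrow{\simeq}(X/Y)\setminus\{\ast\}$ and that $\bb{E}$ is determined by its restriction to $X\setminus Y$ together with the gluing prescribed by $h$ over (a neighborhood of) $Y$. In the other direction, $\pi^*$ applied to a glued bundle $\bb{F}$ and then collapsed again gives back $(\bb{F},g)$ on the nose. Both checks are \emph{formal} once the pushout description is in place, so the real content is showing that the glued object $\bb{F}$ is genuinely a locally trivial $\rr{Q}$-bundle over the (possibly singular, but still $\Z_2$-CW) space $X/Y$.

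\textbf{Main obstacle.} The delicate point is the \emph{local triviality of $\bb{F}$ at the collapsed point} $\{\ast\}$, and more precisely that the quotient carries a $\Z_2$-CW structure making Definition~\ref{defi:Q_VB} applicable. Away from $\{\ast\}$ nothing happens; near $\{\ast\}$ one must produce an equivariant trivializing neighborhood from the trivialization $h$ together with a collar/tube neighborhood of $Y$ in $X$ — this is where the $\Z_2$-CW-complex hypothesis (Assumption~\ref{ass:top}) enters, guaranteeing that $Y\subseteq X$ has an equivariant mapping-cylinder neighborhood on which $h$ can be extended continuously and equivariantly. Once such a neighborhood is fixed, the $\rr{Q}$-structure $\Theta_0$ on the product model patches with $\Theta$ on $\bb{E}$ via $h$, and all verifications become bookkeeping. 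I would carry out the collar construction first, then define $\bb{F}$, then dispatch the well-definedness and inverse-map checks in that order.
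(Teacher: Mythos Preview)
Your proposal is correct and follows essentially the same approach as the paper: construct the inverse of $\pi^*$ by collapsing $\bb{E}$ along the trivialization $h$ to obtain a $\rr{Q}$-bundle $\bb{E}/h$ over $X/Y$ with an induced trivialization at $\{\ast\}$. The paper's proof is terser, simply citing the classical complex-bundle version of this construction (Atiyah, \emph{$K$-theory}, Lemma~1.4.7) and remarking that it extends verbatim to the \virg{Quaternionic} category, whereas you spell out the pushout description and the collar argument for local triviality at $\{\ast\}$.
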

\proof[{Proof} (sketch of)]
In order to prove the claim it is enough to show that the pullback map $\pi^*$ admits an inverse map.
For a complex vector bundle $\bb{E}\to X$ with a trivialization $h:\bb{E}|_Y\to Y\times\C^{2m}$ there is a classical construction which associate a complex vector bundle
$\bb{E}/h$ over $X/Y$ and the isomorphism class of $\bb{E}/h$ only depends on the homotopy class of $h$ \cite[Lemma 1.4.7]{atiyah-67}. This construction can be naturally generalized to the \virg{Quaternionic} category in such a way that $\bb{E}/h$ inherits a $\rr{Q}$-structure $\Theta/h$ from $(\bb{E},\Theta)$. Moreover, under this construction the $\rr{Q}$-trivialization $h$ over $Y$ is mapped in a $\rr{Q}$-trivialization $\tilde{h}:\bb{E}/h|_{\{\ast\}}\to\C^{2m}$ over the fixed point $\{\ast\}$. The map $(\bb{E},h)\mapsto (\bb{E}/h,\tilde{h})$ provides the inverse of the pullback map $\pi^*$.
\qed

\medskip

Given an element $(\bb{E},h)\in {\rm Vec}_{\rr{Q}}^{2m}(X|Y,\tau)$ one can apply the determinant functor described in Section \ref{subsec:det_construct} in order to obtain a element $({\rm det}(\bb{E}),{\rm det}(\Theta))\in{\rm Pic}_{\rr{R}}(X,\tau)$ of the 
\virg{Real} Picard group. Moreover ${\rm det}(h):{\rm det}(\bb{E})|_Y\to Y\times\C$ provides an $\rr{R}$-trivialization over $Y$ which can be used to define the section $s_h:Y\to {\rm det}(\bb{E})|_Y$ through the  prescription $s_h(x):=h^{-1}(x,1)$.
It turns out that $s_h=s_1\wedge s_2 \wedge \cdots \wedge s_{2m-1} \wedge s_{2m}$ if $(s_1,s_2, \ldots, s_{2m-1}, s_{2m})$ is any orthonormal $\rr{Q}$-frame associated with the trivialization $h$. 
Let $s_{\bb{E}}:X^\tau\to{\rm det}(\bb{E})|_{X^\tau}$ be the canonical section of $\bb{E}$
described in Section \ref{subsec:det_construct}. By construction, on the intersection $Y\cap X^\tau$ one has that $s_h|_{Y\cap X^\tau}=s_{\bb{E}}|_{Y\cap X^\tau}$ by the unicity of the canonical section. This implies that the $\rr{R}$-section $s_h\cup s_{\bb{E}}$ is well defined on the union $Y\cup X^\tau$. After all these premises we are in a position to give the following:
\begin{definition}[Relative FKMM-invariant]
Consider the chains of maps
$$
{\rm Vec}_{\rr{Q}}^{2m}\big(X|Y,\tau\big)\;\stackrel{{\zeta}}{\longrightarrow}\;{\rm Vec}_{\rr{R}}^1(X|Y\cup X^\tau,\tau)\;\stackrel{\tilde{\kappa}}{\longrightarrow}\;H^2_{\Z_2}\big(X|Y\cup X^\tau,\Z(1)\big)\;
$$
where $\zeta$ is defined by $\zeta: (\bb{E},h)\mapsto ({\rm det}(\bb{E}),s_h\cup s_{\bb{E}})$ and the isomorphism $\tilde{\kappa}$ is described in Proposition
\ref{prop:nat_iso}. The \emph{relative FKMM-invariant} of an element $(\bb{E},h)$ in ${\rm Vec}_{\rr{Q}}^{2m}(X|Y,\tau)$
is the cohomology class $\kappa_{\rm rel}[(\bb{E},h)]$ in $H^2_{\Z_2}(X|Y\cup X^\tau,\Z(1))$ obtained as
$$
\kappa_{\rm rel}[(\bb{E},h)]\;:=\;(\tilde{\kappa}\circ\zeta)[(\bb{E},h)]\;.
$$
\end{definition}

\medskip

\noindent
The map $f$ which forgets the role of the space $Y$ gives rise to the commutative diagram
\beql{eq:forg_Y1}
\begin{diagram}
  {\rm Vec}_{\rr{Q}}^{2}\big(X|Y,\tau\big)        &   \rTo^f                &   {\rm Vec}_{\rr{Q}}^{2}\big(X,\tau\big)\\
       \dTo^{\kappa_{\rm rel}}     &   &   \dTo_{\kappa} \\
H^2_{\Z_2}\big(X|Y\cup X^\tau,\Z(1)\big)&       \rTo           & H^2_{\Z_2}\big(X| X^\tau,\Z(1)\big)     
\end{diagram}\;.
\eeq
If $Y\neq\empt$, by naturality and using Lemma \ref{lemma:collapsing_relativ}, one obtains the commutative diagram
\beql{eq:forg_Y2}
\begin{diagram}
  {\rm Vec}_{\rr{Q}}^2(X/Y|\{\ast\},\tau)       &   \rTo^{\pi^*}_{\simeq}                &{\rm Vec}_{\rr{Q}}^{2}\big(X|Y,\tau\big)        &   \rTo^{f}                &   {\rm Vec}_{\rr{Q}}^{2}\big(X,\tau\big)\\
  \dTo^{\kappa^\pi_{\rm rel}}    &&  \dTo^{\kappa_{\rm rel}}     &   &   \dTo_{\kappa} \\
H^2_{\Z_2}\big(X/Y|\{\ast\}\cup (X/Y)^\tau,\Z(1)\big)&       \rTo^{\pi^*}_{\simeq}             &H^2_{\Z_2}\big(X|Y\cup X^\tau,\Z(1)\big)&       \rTo            & H^2_{\Z_2}\big(X| X^\tau,\Z(1)\big)     
\end{diagram}\;.
\eeq
Also the homomorphism $\pi^*$ among the cohomology groups in the bottom row  turns out to be bijective as well. This fact can be proven, for example, along the same lines of Lemma \ref{lemma:collapsing_relativ} using the geometric representation for  the relative cohomology groups described in
Proposition
\ref{prop:nat_iso}.

\subsection{Surjectivity  in $d=2$}
\label{app:surj_fkmm_d=2_true}
In this section we will prove that the FKMM-invariant is surjective in dimension $d=2$. 
We will prove this fact first for a very special type of involutive space and then we will extend the proof to general spaces with a $\Z_2$-CW complex structure of dimension $d=2$.

\medskip

Let $X_*^N$ be a $\Z_2$-CW complex with the $0$-skeleton made by a single invariant point
$(X_*^N)_0=\{\ast\}$, the $1$-skeleton which agrees with the $0$-skeleton (absence of $1$-cells) and the $2$-skeleton (\ie the full space) obtained by gluing $N$ free $\Z_2$-cells $\tilde{\boldsymbol{e}}^2_\lambda$, $N\in\N\cup\{\infty\}$, to $\{\ast\}$.
The skeleton decomposition of $X_*^N$ is 
\begin{equation}\label{eq_decomp_special_X}
X_*^N\;:=\;\{\ast\}\;\cup_{\phi_2}\;\left(\bigsqcup_{\lambda=1}^N\tilde{\boldsymbol{e}}^2_\lambda\right)\;\simeq\;\bigvee_{\lambda=1}^N\tilde{\boldsymbol{e}}^2_\lambda
\;\simeq\;
\bigvee_{\lambda=1}^N\tilde{\n{V}}^2_\lambda
\end{equation}
where $\vee$ denotes the \emph{wedge sum} (or \emph{one-point union}) of a family of topological spaces. The last identification in \eqref{eq_decomp_special_X}  follows by observing that 
$\tilde{\n{V}}_\lambda^2:=\tilde{\boldsymbol{e}}^2_\lambda/\partial\tilde{\boldsymbol{e}}^2_\lambda\simeq\n{S}^2\vee\n{S}^2$ is the wedge sum of two spheres endowed with the $\Z_2$-action which flips the two spheres and fixes only the joining point $\{\ast\}$. Clearly $X_*^N$ has a fixed point set  made by the unique invariant point $(X_*^N)^\tau=(X_*^N)_0=\{\ast\}$.

\medskip

Let $\imath_\lambda:\tilde{\n{V}}_\lambda^2\to X_*^N$ be the inclusion map. 
By means of the additivity of the equivariant cohomology, one has the commutative diagram 
\begin{equation}\label{diag:surj_d=2}
\begin{diagram}
{\rm Vec}_{\rr{Q}}^{2m}\big(X_*^N,\tau\big)       &   
\rTo^{(\imath_\lambda^*)}_{\simeq}                &   
\prod_{\lambda=1}^N{\rm Vec}_{\rr{Q}}^{2m}\big(\tilde{\n{V}}_\lambda^2\big) \\
       &   &   \\
   \dTo_{\kappa}     &   & \dTo^{(\kappa_{\lambda})}_{\simeq  }     \\
        &   &   \\
H^2_{\Z_2}\big(X_*^N| \{\ast\},\Z(1)\big)     &       
\rTo^{(\imath_\lambda^*)}_{\simeq}             & 
\prod_{\lambda=1}^NH^2_{\Z_2}\big(\tilde{\n{V}}_\lambda^2|\{\ast\},\Z(1)\big)
\end{diagram}\;
\end{equation}
where we used the symbol  of the product $\prod$ instead the of the sum $\oplus$ for the equivariant cohomology since, in principle, $X_*^N$ is not a finite $CW$-complex meaning that $N$ can be infinite. The bijectivity of the horizontal arrow in the bottom of the diagram follows from the observation that 
\begin{equation}\label{eq:lemma_single_ast01}
H^2_{\Z_2}\big(\tilde{\n{V}}_\lambda^2|\{\ast\},\Z(1)\big)\;\equiv\; 
\tilde{H}^2_{\Z_2}\big(\tilde{\n{V}}_\lambda^2,\Z(1)\big)\;\simeq\; {H}^2_{\Z_2}\big(\tilde{\n{V}}_\lambda^2,\Z(1)\big)
\end{equation}
where $\tilde{H}^\bullet_{\Z_2}$ denotes the reduced equivariant cohomology (\cf Appendix \ref{subsec:borel_cohom}) and the 
second isomorphism  comes from the combination of the splitting \eqref{eq:app_red_cohom1} and the computation \eqref{eq:app_red_cohom2} which provides
$$
{H}^2_{\Z_2}\big(\tilde{\n{V}}_\lambda^2,\Z(1)\big)\;\simeq\;\tilde{H}^2_{\Z_2}\big(\tilde{\n{V}}_\lambda^2,\Z(1)\big)\;\oplus\;{H}^2_{\Z_2}\big(\{\ast\},\Z(1)\big)\;\simeq\;\tilde{H}^2_{\Z_2}\big(\tilde{\n{V}}_\lambda^2,\Z(1)\big)\;.
$$
 Since the reduced equivariant cohomology of a 
wedge sum of spaces equals the product of the reduced equivariant cohomology of each
space (just an application of the Mayer-Vietoris exact sequence) one obtains
$$
\prod_{\lambda=1}^NH^2_{\Z_2}\big(\tilde{\n{V}}_\lambda^2|\{\ast\},\Z(1)\big)\;\equiv\;\prod_{\lambda=1}^N\tilde{H}^2_{\Z_2}\big(\tilde{\n{V}}_\lambda^2,\Z(1)\big)\;\simeq\;
\tilde{H}^2_{\Z_2}\left(\bigvee_{\lambda=1}^N\tilde{\n{V}}_\lambda^2,\Z(1)\right)\;\equiv\;H^2_{\Z_2}\big(X_*^N|\{\ast\},\Z(1)\big)\;,
$$
namely the bijectivity of the horizontal arrow in the bottom of the diagram \eqref{diag:surj_d=2}. The bijectivity of the other two arrows is proved in the following lemma.
\begin{lemma}
With reference to the commutative diagram \eqref{diag:surj_d=2} it holds that:
\begin{itemize}
\item[(1)] The map 
$$
(\imath_\lambda^*)\;:\;  {\rm Vec}_{\rr{Q}}^{2}\big(X_*^N,\tau\big)\;\longrightarrow\;\prod_{\lambda=1}^N{\rm Vec}_{\rr{Q}}^{2}\big(\tilde{\n{V}}_\lambda^2\big)        
$$
is bijective;
\vspace{1.0mm}
\item[(2)] The FKMM-invariant
$$
\kappa\;:\;{\rm Vec}_{\rr{Q}}^{2}\big(\tilde{\n{V}}^2\big)
    \;\longrightarrow\;  H^2_{\Z_2}\big(\tilde{\n{V}}^2|\{\ast\},\Z(1)\big)
$$
is bijective;
\end{itemize}
\end{lemma}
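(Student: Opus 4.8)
The plan is to treat the two items separately, exploiting in both the fact that the only fixed point of all the spaces at hand is the single point $\{\ast\}$, so that the group governing gluings there is the connected group $\mathrm{Sp}(1)$ and no ambiguity can arise. For item~(1) I would first record that, since $(X_*^N)^\tau=\{\ast\}$ and $(\tilde{\n{V}}^2_\lambda)^\tau=\{\ast\}$, the restriction of any rank-$2$ $\rr{Q}$-bundle to $\{\ast\}$ is $\rr{Q}$-trivial, while the space of unitary $\rr{Q}$-trivializations over a single fixed point is a torsor over the connected group $\mathrm{Sp}(1)$; hence the maps forgetting the trivialization, ${\rm Vec}_{\rr{Q}}^{2}(X_*^N|\{\ast\},\tau)\to{\rm Vec}_{\rr{Q}}^{2}(X_*^N,\tau)$ and ${\rm Vec}_{\rr{Q}}^{2}(\tilde{\n{V}}^2_\lambda|\{\ast\},\tau)\to{\rm Vec}_{\rr{Q}}^{2}(\tilde{\n{V}}^2_\lambda)$, are bijections (a pair $(\bb{E},h)$ depends only on the homotopy class of $h$, exactly as in Lemma~\ref{lemma:collapsing_relativ}). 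Next, $X_*^N=\bigvee_\lambda\tilde{\n{V}}^2_\lambda$ is a wedge of pointed involutive spaces joined precisely at the common fixed point, so a $\rr{Q}$-bundle on $X_*^N$ trivialized over $\{\ast\}$ is nothing but a family of $\rr{Q}$-bundles on the summands, each trivialized over $\{\ast\}$; therefore ${\rm Vec}_{\rr{Q}}^{2}(X_*^N|\{\ast\},\tau)\simeq\prod_\lambda{\rm Vec}_{\rr{Q}}^{2}(\tilde{\n{V}}^2_\lambda|\{\ast\},\tau)$. Chaining the three bijections, and checking compatibility with the restriction maps $\imath^*_\lambda$, proves~(1); this part is entirely formal.

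For item~(2), injectivity of $\kappa$ is immediate from Proposition~\ref{prop:injectivity_in_low_dimension} since $\dim\tilde{\n{V}}^2=2\leqslant 3$, so the content is surjectivity. Here I would pass to the free cell: $\tilde{\n{V}}^2=\tilde{\boldsymbol{e}}^2/\partial\tilde{\boldsymbol{e}}^2$ with $\tilde{\boldsymbol{e}}^2=\Z_2\times\n{D}^2$ and $\partial\tilde{\boldsymbol{e}}^2=\Z_2\times\n{S}^1$ carrying the free flip. Over a free $\Z_2$-space $\Z_2\times Z$, restriction to the slice $\{+1\}\times Z$ gives natural bijections ${\rm Vec}_{\rr{Q}}^{2}(\Z_2\times Z|\Z_2\times W,\tau)\simeq{\rm Vec}_{\C}^{2}(Z|W)$ and ${\rm Vec}_{\rr{R}}^{1}(\Z_2\times Z|\Z_2\times W,\tau)\simeq{\rm Vec}_{\C}^{1}(Z|W)$ (the other half of the bundle being the conjugate), compatibly with the determinant functor, with the canonical section, and with $H^\bullet_{\Z_2}(\Z_2\times Z|\Z_2\times W,\Z(1))\simeq H^\bullet(Z/W,\Z)$. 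Taking $Z=\n{D}^2$, $W=\n{S}^1$, and combining with Lemma~\ref{lemma:collapsing_relativ} (applied to $X/Y=\tilde{\n{V}}^2$) and the classical clutching isomorphism ${\rm Vec}_{\C}^{2}(\n{D}^2|\n{S}^1)\simeq{\rm Vec}_{\C}^{2}(\n{S}^2)$ of \cite[Lemma 1.4.7]{atiyah-67}, I obtain
$$
{\rm Vec}_{\rr{Q}}^{2}\big(\tilde{\n{V}}^2\big)\;\simeq\;{\rm Vec}_{\rr{Q}}^{2}\big(\tilde{\n{V}}^2|\{\ast\},\tau\big)\;\simeq\;{\rm Vec}_{\C}^{2}\big(\n{S}^2\big)\;,
$$
while, using \eqref{eq:lemma_single_ast01} and the same free-quotient identification on cohomology,
$$
H^2_{\Z_2}\big(\tilde{\n{V}}^2|\{\ast\},\Z(1)\big)\;\simeq\;H^2_{\Z_2}\big(\tilde{\boldsymbol{e}}^2,\partial\tilde{\boldsymbol{e}}^2;\Z(1)\big)\;\simeq\;H^2\big(\n{D}^2,\n{S}^1;\Z\big)\;\simeq\;H^2\big(\n{S}^2,\Z\big)\;.
$$

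It then remains to identify the map. Since $\kappa(\bb{E},\Theta)=\tilde{\kappa}([(\det(\bb{E}),s_{\bb{E}})])$ and, under the correspondences above, $\det(\bb{E})$ goes to $\det$ of the underlying complex bundle over $\n{S}^2$, the canonical section $s_{\bb{E}}$ goes to the (homotopically unique, since $\C^\times$ is connected) trivialization of that determinant line bundle over the basepoint of $\n{S}^2$, and $\tilde{\kappa}$ goes to the ordinary first Chern class (this being Kahn's isomorphism \eqref{eq:kahn_iso} restricted to the free part, compare Proposition~\ref{corol:II}), the FKMM-invariant becomes the ordinary first Chern class $c_1\colon{\rm Vec}_{\C}^{2}(\n{S}^2)\to H^2(\n{S}^2,\Z)$, which is a bijection; in particular a preimage of $n\in H^2(\n{S}^2,\Z)\simeq\Z$ is the $\rr{Q}$-bundle over $\tilde{\n{V}}^2$ manufactured, through the above chain, from the complex rank-$2$ bundle over $\n{S}^2$ with clutching function $\n{S}^1\ni z\mapsto\mathrm{diag}(z^n,1)\in\mathrm{U}(2)$. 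Together with injectivity this proves~(2), and — with the bottom horizontal bijection already established — completes the commutative diagram~\eqref{diag:surj_d=2}. The main obstacle is this last step: carrying out carefully the naturality bookkeeping that the determinant functor, the canonical section $s_{\bb{E}}$, and the isomorphism $\tilde{\kappa}$ of Proposition~\ref{prop:nat_iso} are all compatible with the free-quotient correspondence $\Z_2\times Z\mapsto Z$ and with the clutching of Lemma~\ref{lemma:collapsing_relativ}, so that $\kappa$ genuinely reduces to $c_1$ over $\n{S}^2$; everything preceding it is formal manipulation of wedges, clutching functions and exact sequences.
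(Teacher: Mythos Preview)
Your proposal is correct and takes a genuinely different route from the paper.

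For item~(1), the paper decomposes $X_*^N=U_1\cup U_2$ with $U_1\simeq\{\ast\}$, $U_2\simeq\bigsqcup_\lambda\Z_2$, $U_1\cap U_2\simeq\bigsqcup_\lambda(\Z_2\times\n{S}^1)$, and performs an explicit equivariant clutching computation: after checking that $[U_1,\hat{\n{U}}(2)]_{\Z_2}=[U_2,\hat{\n{U}}(2)]_{\Z_2}=0$, it obtains ${\rm Vec}_{\rr{Q}}^2(X_*^N,\tau)\simeq\prod_\lambda[\Z_2\times\n{S}^1,\hat{\n{U}}(2)]_{\Z_2}\simeq\prod_\lambda[\n{S}^1,\n{U}(1)]\simeq\Z^N$, and the same computation on each $\tilde{\n{V}}^2_\lambda$ gives $\Z$; the bijectivity of $(\imath_\lambda^*)$ is then read off. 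Your wedge/basepoint argument bypasses all of this. For item~(2), the paper again works by clutching: it computes ${\rm Pic}_{\rr{R}}(\tilde{\n{V}}^2)\simeq[\Z_2\times\n{S}^1,\tilde{\n{U}}(1)]_{\Z_2}\simeq\Z$ and then shows directly that the determinant map $[\Z_2\times\n{S}^1,\hat{\n{U}}(2)]_{\Z_2}\to[\Z_2\times\n{S}^1,\tilde{\n{U}}(1)]_{\Z_2}$ is an isomorphism, invoking the semidirect-product splitting of \cite[Lemma~4.19]{denittis-gomi-14}. Your reduction via the free cell $\tilde{\boldsymbol{e}}^2=\Z_2\times\n{D}^2$ to the classical $c_1\colon{\rm Vec}_{\C}^2(\n{S}^2)\to H^2(\n{S}^2,\Z)$ is more conceptual and explains transparently why both sides are $\Z$; it also avoids the external reference to \cite[Lemma~4.19]{denittis-gomi-14}. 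Conversely, the paper's hands-on clutching yields explicit clutching representatives in $[\n{S}^1,\n{U}(1)]$, which are reused later in the paper.

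One minor point: your justification that the forgetful map ${\rm Vec}_{\rr{Q}}^2(X|\{\ast\})\to{\rm Vec}_{\rr{Q}}^2(X)$ is injective (``a pair $(\bb{E},h)$ depends only on the homotopy class of $h$'') is correct, but Lemma~\ref{lemma:collapsing_relativ} is not quite the right citation --- that lemma is about the collapsing bijection $\pi^*$, not about homotopy invariance of the trivialization datum. What you actually need is that a homotopy of $\rr{Q}$-trivializations over $\{\ast\}$ (a path in $\mathrm{Sp}(1)$) extends, via the equivariant homotopy extension property for the $\Z_2$-CW pair $(X,\{\ast\})$ applied to sections of the gauge bundle $\mathrm{Aut}_{\rr{Q}}(\bb{E})$, to a global $\rr{Q}$-automorphism of $\bb{E}$ intertwining the two trivializations. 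This is routine but worth stating.
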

\proof
(1) By using the skeleton decomposition of $X_*^N$ in equation \eqref{eq_decomp_special_X} (first equality)
we can express $X_*^N$ as the union $X_*^N =U_1\cup U_2$ where $U_1$ and $U_2$
are  two
closed invariant subspaces which satisfy the   $\Z_2$-equivariant
homotopy equivalences 
\begin{equation}\label{eq:U_1_U_2_d=2_genesis}
U_1\simeq \{\ast\}\;,\qquad U_2\;\simeq\;\bigsqcup_{\lambda=1}^N\tilde{\boldsymbol{e}}^2_\lambda\;\simeq\;\bigsqcup_{\lambda=1}^N\Z_2\;,\qquad U_1\cap U_2\;\simeq\;\bigsqcup_{\lambda=1}^N\partial\tilde{\boldsymbol{e}}^2_\lambda\;\simeq\;\bigsqcup_{\lambda=1}^N\big(\Z_2\times \n{S}^{1}\big)\;.
\end{equation}
More precisely we can take $U_1$ as a \virg{small} closed invariant neighborhood of $X^\tau=\{\ast\}$ in $X$ and $U_2$ as the complement of the interior of $U_1$. 
 As a consequence of Lemma \ref{appA_lemma}, the additivity of the equivariant cohomology and $H^2(\n{S}^1,\Z)=0$ it holds   that $H^2_{\Z_2}(U_2,\Z(1))=H^2_{\Z_2}(U_1\cap U_2,\Z(1))=0$. Then, the absence of non-trivial $\rr{Q}$-bundles in dimension 0 and
 the injectivity of the FKMM-invariant in low-dimension (Proposition \ref{prop:injectivity_in_low_dimension}) imply that
 $$
 {\rm Vec}_{\rr{Q}}^{2}\big(U_1\big)\;=\;  {\rm Vec}_{\rr{Q}}^{2}\big(U_2\big)\;=\;  {\rm Vec}_{\rr{Q}}^{2}\big(U_1\cap U_2\big)\;=\; 0\;.
 $$
The \virg{Quaternionic} version of the clutching construction assures that any rank 2
$\rr{Q}$-bundle over $X^N_*$ can be constructed by gluing together the trivial $\rr{Q}$-bundles over $U_1$ and $U_2$ by means of a clutching function 
$g: U_1\cap U_2\to \n{U}(2)$ with the equivariant property $g(\tau(x))=\mu(g(x))$ where
$\mu:\n{U}(2)\to\n{U}(2)$ is the involution given by 
\begin{equation}\label{eq:mu-invol}
\mu(U)\;:=\;-Q\;\overline{U}\;Q\;,\qquad\quad U\in \n{U}(2)
\end{equation}
where $Q$ is the $2\times 2$ matrix \eqref{eq:Q-mat}.
Since $\Z_2$-equivariantly homotopy equivalent clutching functions generate  
isomorphic $\rr{Q}$-bundles, one obtains
\begin{equation}\label{eq:lemma_bor_000}
{\rm Vec}_{\rr{Q}}^{2}\big(X_*^N,\tau\big)\;\simeq\;
\big[U_1\cap U_2,\hat{\n{U}}(2)\big]_{\Z_2}
\end{equation}
where the first isomorphism is due to Proposition \ref{theo:stab_ran_Q_even} and $\hat{\n{U}}(2)$ denotes the space ${\n{U}}(2)$ endowed with the involution $\mu$.
Let us remark that the second isomorphism in \eqref{eq:lemma_bor_000} is correct  in view of the vanishings $[U_1,\hat{\n{U}}(2)]_{\Z_2}=0$ and $[U_2,\hat{\n{U}}(2)]_{\Z_2}=0$ which ensure
that the left and right  coset actions on $[U_1\cap U_2,\hat{\n{U}}(2)]_{\Z_2}$ are trivial (compare \eqref{eq:lemma_bor_000} with the general type of isomorphism described in \cite[Lemma 4.18]{denittis-gomi-14}). The first vanishing is indeed justified by
$$
\big[U_1,\hat{\n{U}}(2)\big]_{\Z_2}\;\simeq\;\big[\{\ast\},\hat{\n{U}}(2)\big]_{\Z_2}\;\simeq\;\big[\{\ast\},S\n{U}(2)\big]\;\simeq\; 0
$$ 
being $S\n{U}(2)$ the fixed-point set of the involutive space $\hat{\n{U}}(2)$. The second vanishing follows from $[U_2,\hat{\n{U}}(2)]_{\Z_2}\simeq\prod_\lambda[\Z_2,\hat{\n{U}}(2)]_{\Z_2}$ along with
$$
\big[\Z_2,\hat{\n{U}}(2)\big]_{\Z_2}\;\simeq\; \big[\{\ast\},\n{U}(2)]\;\simeq 0\;.
$$
One also has 
$$
\big[\Z_2\times \n{S}^{1},\hat{\n{U}}(2)\big]_{\Z_2}\;\simeq\;
\big[ \n{S}^{1},{\n{U}}(2)\big]\;\stackrel{\text{det}}{\simeq}\;
\big[ \n{S}^{1},{\n{U}}(1)\big] \;\stackrel{\text{deg}}{\simeq}\;\Z
$$
where the second isomorphism is induced by the determinant and is equivalent to the well-known fact $\pi_1({\n{U}}(2))\simeq\pi_1({\n{U}}(1))$. The third isomorphism is given by the degree and coincides with $\pi_1({\n{U}}(1))\simeq\Z$. The last computation
along with \eqref{eq:lemma_bor_000} provides
\begin{equation}\label{eq:lemma_bor_01}
{\rm Vec}_{\rr{Q}}^{2}\big(X_*^N,\tau\big)\;\simeq\;\prod_{\lambda=1}^N\big[\Z_2\times \n{S}^{1},\hat{\n{U}}(2)\big]_{\Z_2}\;\simeq\;\prod_{\lambda=1}^N\big[ \n{S}^{1},{\n{U}}(1)\big]\;\simeq\;\Z^N\;.
\end{equation}
On the other hand let $U_{1,\lambda}$ and $U_{2,\lambda}$ be two closed invariant subspaces such that $\tilde{\n{V}}_\lambda^2=U_{1,\lambda}\cup U_{2,\lambda}$ and
$$U_{1,\lambda}\simeq \{\ast\}\;,\qquad
U_{2,\lambda}\simeq \tilde{\boldsymbol{e}}^2_\lambda\;,\qquad
U_{1,\lambda}\cap U_{2,\lambda}\simeq \partial\tilde{\boldsymbol{e}}^2_\lambda
$$
in analogy with \eqref{eq:U_1_U_2_d=2_genesis}.
The equivariant clutching construction applied to $\tilde{\n{V}}_\lambda^2$
  immediately provides
\begin{equation}\label{eq:lemma_bor_02}
{\rm Vec}_{\rr{Q}}^{2}\big(\tilde{\n{V}}_\lambda^2\big) \;\simeq\;\big[\Z_2\times \n{S}^{1},\hat{\n{U}}(2)\big]_{\Z_2}\;\simeq\;\big[ \n{S}^{1},{\n{U}}(1)\big] \;{\simeq}\;\Z
\end{equation}
for each $\lambda=1,\ldots,N$. By combining the expressions 
\eqref{eq:lemma_bor_01} and \eqref{eq:lemma_bor_02} one deduces the bijectivity of the 
(collection of) maps $(\imath_\lambda^*)$.\\

\medskip

(2) Equation \eqref{eq:lemma_single_ast01} 
says that the FKMM-invariant of a $\rr{Q}$-bundle over $\tilde{\n{V}}_\lambda^2$ can be computed by means of the \virg{Real} Chern class of the associate determinant line bundle. In fact one has 
\begin{equation}\label{eq:lemma_bor_03X}
{\rm det}\;:\;{\rm Vec}_{\rr{Q}}^{2m}\big(\tilde{\n{V}}_\lambda^2\big)\;\longrightarrow\; {\rm Pic}_{\rr{R}}\big(\tilde{\n{V}}_\lambda^2\big)\;\simeq\;{H}^2_{\Z_2}\big(\tilde{\n{V}}_\lambda^2,\Z(1)\big)\;.
\end{equation}
The {Kahn's isomorphism} \eqref{eq:kahn_iso}, along with Lemma \ref{appA_lemma} and equation \eqref{eq:app_red_cohom1}, assures that 
$$
{\rm Pic}_{\rr{R}}\big(U_{1,\lambda}\big)\;=\;  {\rm Pic}_{\rr{R}}\big(U_{2,\lambda}\big)\;=\;  {\rm Pic}_{\rr{R}}\big(U_{1,\lambda}\cap U_{2,\lambda}\big)\;=\; 0\;.
 $$ 
This allows us to construct $\rr{R}$-line bundles over $\tilde{\n{V}}_\lambda^2$ by means of the clutching construction providing a classification for ${\rm Pic}_{\rr{R}}(\tilde{\n{V}}_\lambda^2)$ of the type described by  \cite[Lemma 4.18]{denittis-gomi-14}. More precisely, let $\tilde{\n{U}}(1)$ be the unitary group endowed with the involution given by the complex conjugation. One has that 
$$
\big[U_{1,\lambda},\tilde{\n{U}}(1)\big]_{\Z_2}\;\simeq\;\big[\{\ast\},\tilde{\n{U}}(1)\big]_{\Z_2}\;\simeq\;\big[\{\ast\},\{\pm 1\}\big]\;\simeq\; \Z_2
$$ 
and 
$$
\big[U_{2,\lambda},\tilde{\n{U}}(1)\big]_{\Z_2}\;\simeq\;\big[\Z_2,\tilde{\n{U}}(1)\big]_{\Z_2}\;\simeq\;\big[\{\ast\},{\n{U}}(1)\big]\;\simeq\; 0\;.
$$ 
Then, by using the same construction of  \cite[Lemma 4.18]{denittis-gomi-14}, one has that 
\begin{equation}\label{eq:lemma_bor_06}
{\rm Pic}_{\rr{R}}\big(\tilde{\n{V}}_\lambda^2\big)\;\simeq\;\big[U_{1,\lambda}\cap U_{2,\lambda},\tilde{\n{U}}(1)\big]_{\Z_2}/\big[U_{1,\lambda},\tilde{\n{U}}(1)\big]_{\Z_2}\;\simeq\;\big[\Z_2\times\n{S}^1,\tilde{\n{U}}(1)\big]_{\Z_2}/\Z_2
\end{equation}
where the $\Z_2$-action on a class $[\varphi]\in [\Z_2\times\n{S}^1,\tilde{\n{U}}(1)]_{\Z_2}$ is given by the sign-flipping $[\varphi]\mapsto [-\varphi]$. Evidently,
$$
\big[\Z_2\times \n{S}^{1},\tilde{\n{U}}(2)\big]_{\Z_2}\;\simeq\;
\big[ \n{S}^{1},{\n{U}}(1)\big]\;\stackrel{\text{deg.}}{\simeq}\;\Z
$$
where the last isomorphism is provided by the degree map. Since the degree of the map $\varphi:\n{S}^{1}\to{\n{U}}(1)$ agrees with that of the map $-\varphi$ it follows that the $\Z_2$-action in \eqref{eq:lemma_bor_06} is trivial and so
\begin{equation}\label{eq:lemma_bor_07}
{\rm Pic}_{\rr{R}}\big(\tilde{\n{V}}_\lambda^2\big)\;\simeq\;\big[\Z_2\times\n{S}^1,\tilde{\n{U}}(1)\big]_{\Z_2}\;\simeq\;
\big[ \n{S}^{1},{\n{U}}(1)\big]\;\stackrel{\text{deg.}}{\simeq}\;\Z\;.
\end{equation}
A comparison between \eqref{eq:lemma_bor_02} and \eqref{eq:lemma_bor_07} shows that the 
determinant mapping \eqref{eq:lemma_bor_03X} can be rewritten as
\begin{equation}\label{eq:lemma_bor_03XX}
{\rm det}\;:\;\big[\Z_2\times \n{S}^{1},\hat{\n{U}}(2)\big]_{\Z_2}\;\stackrel{\simeq}{\longrightarrow}\;\big[\Z_2\times\n{S}^1,\tilde{\n{U}}(1)\big]_{\Z_2}.
\end{equation}
The isomorphism in \eqref{eq:lemma_bor_03XX} is a consequence of \cite[Lemma 4.19]{denittis-gomi-14}
which states that 
$$
\big[\Z_2\times \n{S}^{1},\hat{\n{U}}(2)\big]_{\Z_2}\;\simeq\;\big[\Z_2\times \n{S}^{1},\hat{\n{U}}(2)\cap S\n{U}(2)\big]_{\Z_2}\;\rtimes\;\big[\Z_2\times\n{S}^1,\tilde{\n{U}}(1)\big]_{\Z_2}
$$
where $\rtimes$ denotes the semidirect product. The vanishing of $[\Z_2\times \n{S}^{1},\hat{\n{U}}(2)\cap S\n{U}(2)]_{\Z_2}=0$, due to the fact that $\hat{\n{U}}(2)\cap S\n{U}(2)$ is the fixed point set of the involutive space $\hat{\n{U}}(2)$ and $\Z_2\times \n{S}^{1}$ is a free involution space, completes the argument for the isomorphism \eqref{eq:lemma_bor_03XX}. In conclusion the mapping \eqref{eq:lemma_bor_03X} turns out to be an isomorphism and this fact can be reformulated by saying that the FKMM-invariant
$$
\kappa_\lambda\;:\;{\rm Vec}_{\rr{Q}}^{2}\big(\tilde{\n{V}}_\lambda^2\big)\;\longrightarrow\; {H}^2_{\Z_2}\big(\tilde{\n{V}}_\lambda^2|\{\ast\},\Z(1)\big)\;\simeq\;{H}^2_{\Z_2}\big(\tilde{\n{V}}_\lambda^2,\Z(1)\big)
$$
provides a bijection.\qed

\medskip

The commutativity of the diagram \ref{diag:surj_d=2} immediately implies:
\begin{proposition}\label{prop:surg_2_di-1step}
Let $X_*^N$ be the involutive space described by \eqref{eq_decomp_special_X}. Then the FKMM-invariant
$$
\kappa\;:\; {\rm Vec}_{\rr{Q}}^{2}\big(X_*^N,\tau\big)\;\stackrel{\simeq}{\longrightarrow}\; H^2_{\Z_2}\big(X_*^N| \{\ast\},\Z(1)\big)   
$$
induces a bijection.
\end{proposition}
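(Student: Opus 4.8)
The plan is to read the statement off directly from the commutative diagram \eqref{diag:surj_d=2}: in that square the two horizontal arrows and the right vertical arrow have already been shown to be bijections, hence the left vertical arrow, which is exactly the FKMM-invariant $\kappa$ of $X_*^N$, must be a bijection as well.

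First I would record that the square \eqref{diag:surj_d=2} genuinely commutes. Its vertical arrows are the FKMM-invariants of $X_*^N$ and of the wedge summands $\tilde{\n{V}}_\lambda^2$, its horizontal arrows are induced by the inclusions $\imath_\lambda\colon\tilde{\n{V}}_\lambda^2\to X_*^N$, and commutativity is precisely the naturality of the FKMM-invariant under the pullback by the equivariant maps $\imath_\lambda$ (property ($\beta$)), once one identifies the relative group $H^2_{\Z_2}(\tilde{\n{V}}_\lambda^2|\{\ast\},\Z(1))$ with the reduced group $\tilde{H}^2_{\Z_2}(\tilde{\n{V}}_\lambda^2,\Z(1))$ as in \eqref{eq:lemma_single_ast01}; this identification is compatible with the restriction maps because $\imath_\lambda$ sends the base point of $\tilde{\n{V}}_\lambda^2$ to the fixed point $\{\ast\}$ of $X_*^N$.

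Next I would invoke what has already been established: the bottom horizontal arrow $(\imath_\lambda^*)$ is a bijection by the additivity of equivariant cohomology together with the reduced-cohomology computation carried out just before the Lemma (namely $\tilde{H}^\bullet_{\Z_2}(\bigvee_\lambda\tilde{\n{V}}_\lambda^2,\Z(1))\simeq\prod_\lambda\tilde{H}^\bullet_{\Z_2}(\tilde{\n{V}}_\lambda^2,\Z(1))$); the top horizontal arrow $(\imath_\lambda^*)$ is a bijection by part (1) of the Lemma; and the right vertical arrow $(\kappa_\lambda)$ is a bijection by part (2) of the Lemma. A trivial diagram chase then finishes the argument: if $x\in{\rm Vec}_{\rr{Q}}^{2}(X_*^N,\tau)$ satisfies $\kappa(x)=0$, then by commutativity each $\kappa_\lambda(\imath_\lambda^*x)=0$, so $\imath_\lambda^*x=0$ for all $\lambda$, whence $x$ is trivial by injectivity of $(\imath_\lambda^*)$; surjectivity of $\kappa$ is the same chase run in reverse.

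There is essentially no hard step here: the content of the Proposition has been entirely absorbed into the preceding Lemma and the cohomology computations that surround it, and the only point requiring care is the bookkeeping of the three identifications (the isomorphism $H^2_{\Z_2}(\,\cdot\,|\{\ast\},\Z(1))\simeq\tilde{H}^2_{\Z_2}(\,\cdot\,,\Z(1))$, the wedge splitting, and the product over $\lambda$) so that the square one actually chases is literally \eqref{diag:surj_d=2}. If anything deserves a second look, it is the verification that the left vertical map in \eqref{diag:surj_d=2} is $\kappa$ itself and not merely some map with the same composite with $(\imath_\lambda^*)$; but this is immediate from naturality, since the family $\{\imath_\lambda\}$ jointly detects all $\rr{Q}$-bundles over $X_*^N$.
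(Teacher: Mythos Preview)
Your proposal is correct and matches the paper's own argument exactly: the paper simply states that the commutativity of diagram \eqref{diag:surj_d=2} immediately implies the proposition, and your diagram chase is precisely the unpacking of that one-line observation. The additional care you take in checking that the square commutes by naturality and that the left vertical arrow is $\kappa$ itself is appropriate but not something the paper spells out.
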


\medskip

The next steps are devoted to extending the validity of Proposition \ref{prop:surg_2_di-1step} to any
 $\Z_2$-CW complex 
 $(X,\tau)$ of dimension $d=2$. The quotient space $X/X_1$, obtained by collapsing the $1$-skeleton to a single invariant point $\{\ast\}$, still has the structure of a $\Z_2$-CW complex
given by the equivariant identification of the boundaries of  all the two-dimensional $\Z_2$-cells of $X$ with  $\{\ast\}$. More precisely one has that
\begin{equation}
\label{eq:surj_d=2_aca01}
X/X_1\;\simeq\;\left(\bigvee_{\lambda\in\Lambda^2_{\text{fixed}}}{\boldsymbol{e}}^2_\lambda\right)\vee\left(\bigvee_{\lambda\in\Lambda^2_{\text{free}}}\tilde{\boldsymbol{e}}^2_\lambda\right)\;\simeq\;\left(\bigvee_{\lambda\in\Lambda^2_{\text{fixed}}}\n{V}^2_\lambda\right)\vee\left(\bigvee_{\lambda\in\Lambda^2_{\text{free}}}\tilde{\n{V}}^2_\lambda\right)
\end{equation}
where the notation is the  same as used in \eqref{eq_decomp_special_X} and
  $\n{V}^2_\lambda:={\boldsymbol{e}}^2_\lambda/\partial{\boldsymbol{e}}^2_\lambda\simeq\n{S}^2$ is a two-dimensional sphere with trivial involution. The following result follows from Proposition \ref{prop:surg_2_di-1step}.
\begin{corollary}\label{corol:hrder_rememb}
Let $(X,\tau)$ be a $\Z_2$-CW complex of dimension $d=2$ and consider the quotient space $X/X_1$. Then the FKMM-invariant
$$
\kappa\;:\; {\rm Vec}_{\rr{Q}}^{2}\big(X/X_1,\tau\big)\;\stackrel{\simeq}{\longrightarrow}\; H^2_{\Z_2}\big(X/X_1| (X/X_1)^\tau,\Z(1)\big)   
$$
induces a bijection.
\end{corollary}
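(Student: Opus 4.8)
The plan is to deduce the statement from Proposition \ref{prop:surg_2_di-1step} by peeling off the fixed $2$-cells. Using \eqref{eq:surj_d=2_aca01}, write $X/X_1 \simeq A \vee B$ (one-point union at the invariant point $\{\ast\}$) with
$$
A\;:=\;\bigvee_{\lambda\in\Lambda^2_{\text{fixed}}}\n{V}^2_\lambda\;,\qquad\qquad
B\;:=\;\bigvee_{\lambda\in\Lambda^2_{\text{free}}}\tilde{\n{V}}^2_\lambda\;.
$$
Each $\n{V}^2_\lambda\simeq\n{S}^2$ carries the trivial involution, so $A$ is entirely fixed and $(X/X_1)^\tau = A$; on the other hand $B$ is precisely an involutive space of the type $X_*^N$ of \eqref{eq_decomp_special_X} (with $N = |\Lambda^2_{\text{free}}|$), for which $\kappa$ is already known to be bijective by Proposition \ref{prop:surg_2_di-1step}. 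Let $\imath\colon B\hookrightarrow X/X_1$ be the inclusion and $q\colon X/X_1\to(X/X_1)/A = B$ the collapse of $A$; both are equivariant, $q\circ\imath = {\rm id}_B$, the map $\imath$ carries $B^\tau=\{\ast\}$ into $(X/X_1)^\tau$ and $q$ carries $(X/X_1)^\tau = A$ onto $\{\ast\}$. The strategy is to prove that $\imath$ induces bijections at the top and the bottom of the FKMM square and then conclude by naturality.

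On the cohomology side, since $A\cap B=\{\ast\}$ one has $(X/X_1)/A\cong B$ as involutive spaces, and $(X/X_1,A)$ is a $\Z_2$-CW pair, so by the standard identification of the relative equivariant Borel cohomology of a $\Z_2$-CW pair with the reduced equivariant cohomology of its quotient,
$$
H^2_{\Z_2}\big(X/X_1\,|\,(X/X_1)^\tau,\Z(1)\big)\;=\;H^2_{\Z_2}\big(X/X_1\,|\,A,\Z(1)\big)\;\simeq\;\tilde{H}^2_{\Z_2}\big(B,\Z(1)\big)\;\simeq\;H^2_{\Z_2}\big(B\,|\,\{\ast\},\Z(1)\big)\;,
$$
the composite being $\imath^*$ (split by $q^*$). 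Equivalently, one reaches this from the additivity of equivariant cohomology over the wedge together with the long exact sequence of the triple $\{\ast\}\subset A\subset X/X_1$, using that the fixed cells contribute trivially because $H^2_{\Z_2}(\n{V}^2_\lambda\,|\,\n{V}^2_\lambda,\Z(1)) = 0$, just as in the computation preceding Proposition \ref{prop:surg_2_di-1step}.

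On the bundle side I claim $\imath^*\colon{\rm Vec}_{\rr{Q}}^2(X/X_1,\tau)\to{\rm Vec}_{\rr{Q}}^2(B,\tau)$ is a bijection with inverse $q^*$. For any $\rr{Q}$-bundle $(\bb{E},\Theta)$ over $X/X_1$, the restriction $\bb{E}|_A$ is, by \eqref{eq:iso_trivial_involution}, a quaternionic line bundle over the wedge of $2$-spheres $A$, hence $\rr{Q}$-trivial, since quaternionic line bundles are classified by maps into $B{\rm Sp}(1)$, which is $3$-connected, while $A$ is $2$-dimensional. Likewise $q^*(\bb{E}|_B)|_A \cong (q|_A)^*(\bb{E}|_{\{\ast\}})$ is $\rr{Q}$-trivial, and $\bb{E}$ and $q^*(\bb{E}|_B)$ agree over $B$ because $q\circ\imath = {\rm id}_B$; the only remaining datum to match is the gluing isomorphism over the single point $\{\ast\}$, whose ambiguity lies in ${\rm Aut}_{\rr{Q}}(\bb{E}|_{\{\ast\}})\cong{\rm Sp}(1)$ and can be absorbed by a \emph{constant} ${\rm Sp}(1)$-valued $\rr{Q}$-automorphism over $A$ (legitimate since the involution on $A$ is trivial). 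Thus $\bb{E}\cong q^*(\bb{E}|_B)$, so $q^*\circ\imath^* = {\rm id}$, and $\imath^*\circ q^* = {\rm id}$ is immediate.

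Finally, by the naturality of the FKMM-invariant under equivariant pullbacks (property $(\beta)$ of Section \ref{subsec:fkmm-invariant}) these restrictions fit into the commutative square
\beql{eq:wedge_square_d2}
\begin{diagram}
{\rm Vec}_{\rr{Q}}^2\big(X/X_1,\tau\big)  & \rTo^{\imath^*}_{\simeq} & {\rm Vec}_{\rr{Q}}^2\big(B,\tau\big)\\
\dTo^{\kappa}  &  &  \dTo^{\kappa}_{\simeq}\\
H^2_{\Z_2}\big(X/X_1\,|\,(X/X_1)^\tau,\Z(1)\big) & \rTo^{\imath^*}_{\simeq} & H^2_{\Z_2}\big(B\,|\,\{\ast\},\Z(1)\big)
\end{diagram}\;,
\eeq
whose right vertical arrow is bijective by Proposition \ref{prop:surg_2_di-1step} (applied to $B = X_*^N$) and whose horizontal arrows are the bijections established above; commutativity then forces $\kappa$ on the left to be bijective, which is the claim. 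The only non-formal point — and hence the main obstacle — is the bundle identification $\bb{E}\cong q^*(\bb{E}|_B)$: it rests on the triviality of quaternionic line bundles over a wedge of $2$-spheres and on the fact that regluing over the non-degenerate fixed basepoint $\{\ast\}$ is insensitive to the ${\rm Sp}(1)$ of gluing data, whereas the cohomological step is routine once ``relative $=$ reduced of the quotient'' is invoked.
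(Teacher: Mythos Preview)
Your proof is correct and follows essentially the same strategy as the paper's: both reduce to the ``free part'' $B=Z_X=\bigvee\tilde{\n{V}}^2_\lambda$ by showing that the inclusion $\imath$ induces bijections on both ${\rm Vec}_{\rr{Q}}^2$ and the relative cohomology, and then invoke Proposition~\ref{prop:surg_2_di-1step} in a commutative square. The only variation is technical: where the paper verifies the bundle bijection via a closed cover and clutching (using $[\n{S}^1,S\n{U}(2)]=0$), you use the collapse map $q$ together with the triviality of $\n{H}$-line bundles over the $2$-complex $A$ (using that $B{\rm Sp}(1)$ is $3$-connected), which is an equally valid and arguably cleaner route to the same conclusion.
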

\proof
The main idea of the proof is to reduce the case to that considered in Proposition \ref{prop:surg_2_di-1step}. Let us define the subspace $Z_X\subset X/X_1$ defined as
$$
Z_X\;:=\;\bigvee_{\lambda\in\Lambda^2_{\text{free}}}\tilde{\boldsymbol{e}}^2_\lambda\;\simeq\; \bigvee_{\lambda\in\Lambda^2_{\text{free}}}\tilde{\n{V}}^2_\lambda\;.
$$
Clearly $Z_X$ is exactly of the type considered in Proposition \ref{prop:surg_2_di-1step}. In particular $Z_X^\tau=\{\ast\}$ and
$$
\kappa_Z\;:\; {\rm Vec}_{\rr{Q}}^{2}\big(Z_X,\tau\big)\;\stackrel{\simeq}{\longrightarrow}\; H^2_{\Z_2}\big(Z_X| \{\ast\},\Z(1)\big)  $$
is a bijection. Let us introduce two closed invariant subspaces $U_1$ and $U_2$ such that $U_1\cup U_2= X/X_1$
and
there are the following $\Z_2$-equivariant
homotopy equivalences 
\begin{equation}\label{eq:U_1_U_2_d=2_genesis_for_Z_X}
U_1\simeq Z_X\;,\qquad U_2\;\simeq\;\bigsqcup_{\lambda\in\Lambda^2_{\text{fixed}}}{\boldsymbol{e}}^2_\lambda
\;\simeq\;\bigsqcup_{\lambda\in\Lambda^2_{\text{fixed}}}\{\ast\}\;,\qquad U_1\cap U_2\;\simeq\;\bigsqcup_{\lambda\in\Lambda^2_{\text{fixed}}} \n{S}^{1}\;,
\end{equation}
where the involution on $\n{S}^{1}$ is trivial.
Now, by observing that 
$$
{\rm Vec}_{\rr{Q}}^{2}\big(U_2\big)\;\simeq\;\prod_{\lambda\in\Lambda^2_{\text{fixed}}}{\rm Vec}_{\rr{Q}}^{2}\big(\{\ast\}\big)\;=\;0
$$
and 
$$
\big[U_1\cap U_2, \hat{\n{U}}(2)\big]_{\Z_2}\;\simeq\;\prod_{\lambda\in\Lambda^2_{\text{fixed}}}\big[\n{S}^1, S{\n{U}}(2)\big]\;=\;0\,,
$$
one concludes from the clutching construction with respect to the decomposition $U_1\cup U_2= X/X_1$ that the inclusion $\imath: Z_X\to X/X_1$ induces a bijection
$$
\imath^*\;:\;{\rm Vec}_{\rr{Q}}^{2}\big(X/X_1,\tau\big)\stackrel{\simeq}{\longrightarrow}{\rm Vec}_{\rr{Q}}^{2}\big(Z_X,\tau\big)\;.
$$
The Mayer-Vietoris exact sequence for the decomposition $U_1\cup U_2= X/X_1$, along with the vanishings $H^k_{\Z_2}(U_2|U_2^\tau,\Z(1))=0=H^k_{\Z_2}(U_1\cap U_2|(U_1\cap U_2)^\tau,\Z(1))$ due to the equalities $U_2=U_2^\tau$ and $U_1\cap U_2=(U_1\cap U_2)^\tau
$, implies the isomorphism
$$
0\;\stackrel{}{\longrightarrow}H^2_{\Z_2}\big(X/X_1|(X/X_1)^\tau,\Z(1)\big)\;\stackrel{\imath^*}{\longrightarrow}\;H^2_{\Z_2}\big(Z_X|Z_X^\tau,\Z(1)\big)\;\stackrel{}{\longrightarrow}\;0\;.
$$
In conclusion one has by naturality the commutative diagram
\begin{equation*}
\begin{diagram}
  {\rm Vec}_{\rr{Q}}^{2}(X/X_1,\tau)       &   \rTo^{\imath^*}_{\simeq}                &{\rm Vec}_{\rr{Q}}^{2}(Z_X,\tau)        \\
  \dTo^{\kappa}    &&  \dTo^{\kappa_{Z}}_{\simeq}     \\
H^2_{\Z_2}\big(X/X_1|(X/X_1)^\tau,\Z(1)\big)&     \rTo^{\imath^*}_{\simeq}           &H^2_{\Z_2}\big(Z_X| (Z_X)^\tau,\Z(1)\big)
\end{diagram}\;
\end{equation*}
which shows that $\kappa$ is bijective. \qed 
  
\medskip 
  
We are now in the position to prove the main result of this section, which is Theorem \ref{theo:A_inject_fixed} for $d = 2$.

\begin{proposition}\label{prop:surg_2_di-2step}
Let $(X,\tau)$ by an involutive space which verifies Assumption \ref{ass:top}. Let its dimension be $d=2$. 
Then the FKMM-invariant
$$
\kappa\;:\; {\rm Vec}_{\rr{Q}}^{2m}\big(X,\tau\big)\;\stackrel{\simeq}{\longrightarrow}\; H^2_{\Z_2}\big(X| X^\tau,\Z(1)\big)   \;,\qquad\qquad \forall\ m\in\N \;
$$
induces a bijection.
\end{proposition}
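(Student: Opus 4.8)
The plan is as follows. By the stable-range isomorphism \eqref{eq:stab_rank_Q_low_d>1} it suffices to treat the rank-two case $m=1$, and by Proposition \ref{prop:injectivity_in_low_dimension} the FKMM-invariant is already injective in low dimension, so the whole content of the statement is the \emph{surjectivity} of $\kappa\colon{\rm Vec}_{\rr{Q}}^{2}(X,\tau)\to H^2_{\Z_2}(X|X^\tau,\Z(1))$. The idea is to transport the bijectivity already proved for $X/X_1$ in Corollary \ref{corol:hrder_rememb} back to $X$ by means of the relative FKMM-invariant of Section \ref{subsec:relative_FKMM_inv} with $Y=X_1$ (the $1$-skeleton of $X$), using the commutative diagrams \eqref{eq:forg_Y1} and \eqref{eq:forg_Y2} for the equivariant projection $\pi\colon X\to X/X_1$; note $X_1$ is a closed $\tau$-invariant subspace with $X_1\neq\empt$.

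First I would show that $\kappa_{\rm rel}\colon{\rm Vec}_{\rr{Q}}^{2}(X|X_1,\tau)\to H^2_{\Z_2}(X|X_1\cup X^\tau,\Z(1))$ is surjective. Since $\dim X_1\leqslant 1$, equation \eqref{eq:stab_rank_Q_low_d=1} shows that the restriction to $X_1$ of any rank-two $\rr{Q}$-bundle over $X$ is $\rr{Q}$-trivial; choosing a $\rr{Q}$-trivialization over $X_1$ then exhibits the forgetful map $f\colon{\rm Vec}_{\rr{Q}}^{2}(X|X_1,\tau)\to{\rm Vec}_{\rr{Q}}^{2}(X,\tau)$ as surjective, and the same argument (now trivial, over the single point $\{\ast\}$) shows that the analogous $f\colon{\rm Vec}_{\rr{Q}}^{2}(X/X_1|\{\ast\},\tau)\to{\rm Vec}_{\rr{Q}}^{2}(X/X_1,\tau)$ is surjective. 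Because $\{\ast\}\subset (X/X_1)^\tau$, the target of $\kappa^\pi_{\rm rel}$ equals $H^2_{\Z_2}(X/X_1|(X/X_1)^\tau,\Z(1))$ and the lower arrow of \eqref{eq:forg_Y1} applied to $(X/X_1,\{\ast\})$ is the identity; combining that square with Corollary \ref{corol:hrder_rememb} gives $\kappa^\pi_{\rm rel}=\kappa\circ f$ with $\kappa$ (of $X/X_1$) bijective and $f$ surjective, hence $\kappa^\pi_{\rm rel}$ is surjective. Since both horizontal $\pi^*$'s in \eqref{eq:forg_Y2} are bijections (Lemma \ref{lemma:collapsing_relativ} and the remark following it), $\kappa_{\rm rel}$ is surjective as well.

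Next I would prove that the lower horizontal arrow of \eqref{eq:forg_Y1} for $(X,X_1)$, i.e.\ the forgetful map $H^2_{\Z_2}(X|X_1\cup X^\tau,\Z(1))\to H^2_{\Z_2}(X|X^\tau,\Z(1))$, is surjective. From the long exact sequence of the triple $X^\tau\subseteq X_1\cup X^\tau\subseteq X$ this reduces to the vanishing $H^2_{\Z_2}(X_1\cup X^\tau|X^\tau,\Z(1))=0$, equivalently (passing to the quotient, as in Proposition \ref{prop:nat_iso} and Appendix \ref{subsec:borel_cohom}) $\tilde H^2_{\Z_2}\big((X_1\cup X^\tau)/X^\tau,\Z(1)\big)=0$. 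Now $X_1\cup X^\tau$ is the $1$-skeleton together with the fixed $2$-cells, and collapsing $X^\tau$ sends exactly the free $0$- and $1$-cells of $X$, with their boundaries, to the basepoint; a cell-by-cell bookkeeping — the same one used in the neighbourhood construction in the proof of Proposition \ref{prop:zero_rel_cohom_low}, keeping track of whether a free $1$-cell is attached to $X^\tau$, to a free $0$-cell, or to both — shows that $(X_1\cup X^\tau)/X^\tau$ is $\Z_2$-homotopy equivalent to a wedge of \emph{free} $\Z_2$-CW complexes of dimension $\leqslant 1$. The Borel cohomology of a free $\Z_2$-complex $Z$ is computed by the cohomology of the quotient $Z/\Z_2$ with (twisted) integer coefficients, which vanishes in degrees $\geqslant 2$ because $Z/\Z_2$ is at most $1$-dimensional; hence the required reduced group vanishes and the forgetful map on relative cohomology is surjective.

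Finally, the commutativity of \eqref{eq:forg_Y1} gives $\kappa\circ f=(\text{lower arrow})\circ\kappa_{\rm rel}$, whose image is all of $H^2_{\Z_2}(X|X^\tau,\Z(1))$ by the two surjectivities just established; therefore $\mathrm{Ran}\,\kappa=H^2_{\Z_2}(X|X^\tau,\Z(1))$, and together with the injectivity from Proposition \ref{prop:injectivity_in_low_dimension} this yields bijectivity for $m=1$, hence for all $m\in\N$ by \eqref{eq:stab_rank_Q_low_d>1}. The step I expect to be the real obstacle is the cohomological vanishing in the third paragraph: one must identify the homotopy type of $(X_1\cup X^\tau)/X^\tau$ precisely enough to recognize it as a $1$-dimensional free $\Z_2$-complex, which is exactly the kind of careful analysis of the fixed/free cell structure of the $1$-skeleton that already appeared (for a different purpose) in the proof of Proposition \ref{prop:zero_rel_cohom_low}. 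Everything else is a formal diagram chase once Corollary \ref{corol:hrder_rememb} is available.
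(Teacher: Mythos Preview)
Your overall architecture is exactly that of the paper: reduce to $m=1$ via \eqref{eq:stab_rank_Q_low_d>1}, keep injectivity from Proposition \ref{prop:injectivity_in_low_dimension}, and deduce surjectivity of $\kappa$ by pushing the bijectivity on $X/X_1$ (Corollary \ref{corol:hrder_rememb}) through the relative FKMM diagrams \eqref{eq:forg_Y1}--\eqref{eq:forg_Y2} with $Y=X_1$. The paper assembles the same two ingredients: surjectivity of $\kappa^\pi_{\rm rel}$ from the square \eqref{eq:forg_KY3}, and surjectivity of the bottom arrow in \eqref{eq:forg_KY2}.

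The one place where the paper is cleaner than your write-up is precisely the step you flagged as the obstacle. You use the triple $X^\tau\subseteq X_1\cup X^\tau\subseteq X$ and then try to show $\tilde H^2_{\Z_2}\big((X_1\cup X^\tau)/X^\tau,\Z(1)\big)=0$ by asserting that this quotient is a wedge of \emph{free} $\Z_2$-complexes and invoking a dimension bound on the orbit space. That description is not literally correct: the basepoint $\{\ast\}$ is a fixed point, so the space is not free, and a ``wedge of free $\Z_2$-complexes'' cannot carry a fixed basepoint unless the action permutes the summands---your freeness argument in step 4 then does not apply as stated. The paper avoids this by using the \emph{triad} $(X;X_1,X^\tau)$, whose long exact sequence reads
\[
\ldots\to H^2_{\Z_2}\big(X|X_1\cup X^\tau,\Z(1)\big)\stackrel{\imath^*}{\longrightarrow} H^2_{\Z_2}\big(X|X^\tau,\Z(1)\big)\longrightarrow H^2_{\Z_2}\big(X_1|X_1^\tau,\Z(1)\big)\to\ldots
\]
with $X_1^\tau:=X_1\cap X^\tau$. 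The obstruction group is now $H^2_{\Z_2}(X_1|X_1^\tau,\Z(1))$, and this vanishes \emph{directly} by Proposition \ref{prop:zero_rel_cohom_low}, since $X_1$ is a $\Z_2$-CW complex of dimension $\leqslant 1$; no cell bookkeeping is needed. Your route can be repaired in the same spirit: observe that $(X_1\cup X^\tau)/X^\tau\simeq X_1/X_1^\tau$ as $\Z_2$-spaces (only the free cells of $X_1$ survive), so the quotient is a $1$-dimensional $\Z_2$-CW complex with fixed-point set $\{\ast\}$, and Proposition \ref{prop:zero_rel_cohom_low} gives $\tilde H^2_{\Z_2}=H^2_{\Z_2}(\,\cdot\,|\{\ast\},\Z(1))=0$ at once. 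With that correction your argument coincides with the paper's.
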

\proof
Because of Proposition \ref{prop:injectivity_in_low_dimension}, we only prove the surjectivity of $\kappa$. In addition, it is enough to consider the surjectivity in the case of $m = 1$, which leads to the surjectivity in the case of $m > 1$ in view of the isomorphism \eqref{eq:stab_rank_Q_low_d>1}.
Let us start with the  commutative diagram \eqref{eq:forg_Y2} specialized for $Y=X_1\supset X_0\neq\empt$;
\begin{equation}\label{eq:forg_KY2}
\begin{diagram}
  {\rm Vec}_{\rr{Q}}^2(X/X_1|\{\ast\},\tau)       &   \rTo^{\pi^*}_{\simeq}                &{\rm Vec}_{\rr{Q}}^{2}\big(X|X_1,\tau\big)        &   \rTo^f_{}                &   {\rm Vec}_{\rr{Q}}^{2}\big(X,\tau\big)\\
  \dTo^{\kappa^\pi_{\rm rel}}    &&  \dTo^{\kappa_{\rm rel}}     &   &   \dTo_{\kappa} \\
H^2_{\Z_2}\big(X/X_1|\{\ast\}\cup (X/X_1)^\tau,\Z(1)\big)&       \rTo^{\pi^*}_{\simeq}             &H^2_{\Z_2}\big(X|X_1\cup X^\tau,\Z(1)\big)&       \rTo^{\imath^*}_{\text{surj.}}             & H^2_{\Z_2}\big(X| X^\tau,\Z(1)\big)     
\end{diagram}\;
\end{equation}
where the surjectivity of $\iota^*$ is shown as follows.
Let $X_1^\tau:=X_1\cap X^\tau$ and consider the \emph{triad} $(X; X_1,X^\tau)$. The long exact sequence in the equivariant cohomology associated with this triad reads \cite[Theorem 2.2]{kono-tamaki-02}
$$
\ldots \;H^1_{\Z_2}\big(X_1|X_1^\tau,\Z(1)\big)\;\stackrel{}{\longrightarrow}\;H^2_{\Z_2}\big(X|X_1\cup X^\tau,\Z(1)\big)\;\stackrel{\imath^*}{\longrightarrow}\;H^2_{\Z_2}\big(X| X^\tau,\Z(1)\big)\;\stackrel{}{\longrightarrow}\;H^2_{\Z_2}\big(X_1|X_1^\tau,\Z(1)\big)\;\ldots
$$
and the homomorphism $\imath^*:H^2_{\Z_2}(X|X_1\cup X^\tau,\Z(1))\to H^2_{\Z_2}(X| X^\tau,\Z(1))$
is induced by the inclusion $\imath:X^\tau\to X_1\cup X^\tau$. Since $H^2_{\Z_2}(X_1|X_1^\tau,\Z(1))=0$ in view of Proposition \ref{prop:zero_rel_cohom_low} it follows that  the homomorphism $\imath^*$ is surjective. As a result, $\kappa$ is surjective \emph{if and only if} $\kappa_{\rm rel}^\pi$ is surjective. 
By observing that $\{\ast\}\in (X/X_1)^\tau\subset X/X_1$ one concludes that 
$\{\ast\}\cup (X/X_1)^\tau= (X/X_1)^\tau$ and one has the commutative diagram
\begin{equation}\label{eq:forg_KY3}
\begin{diagram}
  {\rm Vec}_{\rr{Q}}^2(X/X_1|\{\ast\},\tau)       &   \rTo^{f}_{\text{surj.}}                &{\rm Vec}_{\rr{Q}}^2(X/X_1,\tau)        \\
  \dTo^{\kappa^\pi_{\rm rel}}    &&  \dTo^{\kappa}_\simeq     \\
H^2_{\Z_2}\big(X/X_1|\{\ast\}\cup (X/X_1)^\tau,\Z(1)\big)&       =             &H^2_{\Z_2}\big(X/X_1| (X/X_1)^\tau,\Z(1)\big)
\end{diagram}\;
\end{equation}
where the surjectivity of $f$ is due to the locall triviality of $\rr{Q}$-bundles, and the bijectivity of the right-hand side vertical arrow $\kappa$ is due to Corollary \ref{corol:hrder_rememb}. Summarizing, the  commutative diagram \eqref{eq:forg_KY3} assures that $\kappa^\pi_{\rm rel}$ is a surjection and so one gets the surjectivity of the map $\kappa$
 in \eqref{eq:forg_KY2}.\qed

\subsection{Surjectivity in $d=3$: The case of a free involutions}
\label{app:surj_fkmm_free}
In this section we will  prove that in case of a free involutive space $(X,\tau)$ the FKMM-invariant, identified with the  first \virg{Real} Chern class of the associate determinant line bundle, is surjective in dimension $d=3$. 

\medskip

The skeleton decomposition \eqref{eq:skeleton_caz}, adapted for the case of a free involution, suggests that we can express $X$ as the union $X =U_1\cup U_2$ where $U_1$ and $U_2$
are  two
closed invariant subspaces which satisfy the   $\Z_2$-equivariant
homotopy equivalences 
\begin{equation}\label{eq:U_1_U_2_d=3}
U_1\simeq X_{d-1}\;,\qquad U_2\;\simeq\;\bigsqcup_{\lambda\in\Lambda}\tilde{\boldsymbol{e}}^d_\lambda\;\simeq\;\bigsqcup_{\lambda\in\Lambda}\Z_2\;,\qquad U_1\cap U_2\;\simeq\;\bigsqcup_{\lambda\in\Lambda}\partial\tilde{\boldsymbol{e}}^d_\lambda\;\simeq\;\bigsqcup_{\lambda\in\Lambda}\big(\Z_2\times \n{S}^{d-1}\big)\;.
\end{equation}
More precisely we can take $U_1$ as a \virg{small} closed invariant neighborhood of $X_{d-1}$ in $X$ and $U_2$ as the complement of the interior of $U_1$. Let us prove the following technical result.
\begin{lemma}\label{Lemma:eq:U_1_U_2_d=3}
Let $U_1$ and $U_2$ as in \eqref{eq:U_1_U_2_d=3} with $d\geqslant 3$. Then the inclusions $U_1\cap U_2\stackrel{\jmath_1}{\to} U_1\stackrel{\imath_1}{\to} X$ induce the exact sequence
\begin{equation}\label{eq:exact_app_B1}
\begin{aligned}
0\;&\longrightarrow\;H^2_{\Z_2}\big(X,\Z(1)\big)\;\stackrel{\imath_1^*}{\longrightarrow}\;H^2\big(U_1,\Z(1)\big)\;\stackrel{\jmath_1^*}{\longrightarrow}\;H^2_{\Z_2}\big(U_1\cap U_2,\Z(1)\big)\;.
\end{aligned}
\end{equation}
Moreover, for $d\geqslant 4$ it holds that 
\begin{equation}\label{eq:exact_app_B1_bis}
H^2_{\Z_2}\big(U_1\cap U_2,\Z(1)\big)\;=\;0
\end{equation}
and consequently one has the isomorphism
\begin{equation}\label{eq:exact_app_B1o1}
\begin{aligned}
H^2_{\Z_2}\big(X,\Z(1)\big)\;\stackrel{\imath_1^*}{\simeq}\;H^2\big(U_1,\Z(1)\big)\;.
\end{aligned}
\end{equation}

\end{lemma}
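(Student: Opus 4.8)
The plan is to extract every assertion of the lemma from the Mayer--Vietoris exact sequence in Borel equivariant cohomology with local coefficients $\Z(1)$, applied to the closed invariant cover $X=U_1\cup U_2$. Denoting by $\imath_2\colon U_2\to X$ and $\jmath_2\colon U_1\cap U_2\to U_2$ the remaining inclusions, its portion in degrees $1$ and $2$ is
\[
\ldots \to H^1_{\Z_2}(U_1\cap U_2,\Z(1)) \stackrel{\delta}{\to} H^2_{\Z_2}(X,\Z(1)) \to H^2_{\Z_2}(U_1,\Z(1))\oplus H^2_{\Z_2}(U_2,\Z(1)) \to H^2_{\Z_2}(U_1\cap U_2,\Z(1)) \to \ldots
\]
so everything reduces to computing the equivariant cohomology of $U_2$ and of $U_1\cap U_2$ in degrees $1$ and $2$.

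First, by \eqref{eq:U_1_U_2_d=3} the space $U_2$ is $\Z_2$-homotopy equivalent to a disjoint union of copies of the free orbit $\Z_2$, so Lemma \ref{appA_lemma} together with the additivity axiom gives $H^j_{\Z_2}(U_2,\Z(1))=0$ for all $j\geqslant 1$. Second, each piece $\Z_2\times\n{S}^{d-1}$ of $U_1\cap U_2$ carries the \emph{free} involution interchanging the two spheres; since the Borel construction of a free $\Z_2$-space is homotopy equivalent to its orbit space, it is homotopy equivalent to $(\Z_2\times\n{S}^{d-1})/\Z_2\simeq\n{S}^{d-1}$, and the twisted system $\Z(1)$ descends to the local system on $\n{S}^{d-1}$ classified by the homomorphism $\pi_1(\n{S}^{d-1})\to\Z_2$ attached to this double cover. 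As $d-1\geqslant 2$, this fundamental group is trivial, hence the descended system is the constant one and $H^j_{\Z_2}(\Z_2\times\n{S}^{d-1},\Z(1))\simeq H^j(\n{S}^{d-1},\Z)$ for every $j$. Using additivity once more, for $d\geqslant 3$ one obtains $H^1_{\Z_2}(U_1\cap U_2,\Z(1))\simeq\prod_{\lambda}H^1(\n{S}^{d-1},\Z)=0$, while $H^2_{\Z_2}(U_1\cap U_2,\Z(1))\simeq\prod_{\lambda}H^2(\n{S}^{d-1},\Z)$.

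Substituting these facts into the Mayer--Vietoris segment annihilates the term $H^1_{\Z_2}(U_1\cap U_2,\Z(1))$ on the left (so $\delta=0$) and the summand $H^2_{\Z_2}(U_2,\Z(1))$ in the middle; what survives is precisely the exact sequence \eqref{eq:exact_app_B1}, with $\imath_1^*$ injective and $\ker\jmath_1^*$ equal to the image of $\imath_1^*$. If in addition $d\geqslant 4$, then $d-1\geqslant 3$ forces $H^2(\n{S}^{d-1},\Z)=0$, hence $H^2_{\Z_2}(U_1\cap U_2,\Z(1))=0$, which is \eqref{eq:exact_app_B1_bis}; feeding this back into \eqref{eq:exact_app_B1} makes $\imath_1^*$ surjective as well, i.e. the isomorphism \eqref{eq:exact_app_B1o1}.

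The only step requiring genuine care is the identification $H^\bullet_{\Z_2}(\Z_2\times\n{S}^{d-1},\Z(1))\simeq H^\bullet(\n{S}^{d-1},\Z)$: the coefficient system $\Z(1)$ is truly twisted, and it is only because the orbit space $\n{S}^{d-1}$ is simply connected — which is exactly where the hypothesis $d\geqslant 3$ enters — that the twist disappears upon descent. Should one prefer to sidestep this bookkeeping, the same two conclusions follow by running the exact sequence \eqref{eq:exact_app_C1} on $\Z_2\times\n{S}^{d-1}$, combined with $H^\bullet_{\Z_2}(\Z_2\times\n{S}^{d-1},\Z)\simeq H^\bullet(\n{S}^{d-1},\Z)$ for the free action, the injectivity of the restriction $H^\bullet(\n{S}^{d-1},\Z)\to H^\bullet(\Z_2\times\n{S}^{d-1},\Z)$ (a diagonal embedding for the trivial double cover), and the vanishing of $H^1(\n{S}^{d-1},\Z)$ for $d\geqslant 3$ and of $H^2(\n{S}^{d-1},\Z)$ for $d\geqslant 4$.
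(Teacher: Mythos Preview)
Your proof is correct and follows essentially the same route as the paper: the Mayer--Vietoris sequence for the cover $X=U_1\cup U_2$, together with the vanishings $H^2_{\Z_2}(U_2,\Z(1))=0$ and $H^1_{\Z_2}(U_1\cap U_2,\Z(1))=0$ (for $d\geqslant 3$), yields the exact sequence \eqref{eq:exact_app_B1}, and the further vanishing of $H^2(\n{S}^{d-1},\Z)$ for $d\geqslant 4$ gives the isomorphism. The only cosmetic difference is in how you justify $H^j_{\Z_2}(\Z_2\times\n{S}^{d-1},\Z(1))\simeq H^j(\n{S}^{d-1},\Z)$: the paper invokes Lemma~\ref{appA_lemma} directly (which holds for \emph{any} second factor, with no simple-connectedness assumption), whereas your primary argument passes through the orbit space and uses $\pi_1(\n{S}^{d-1})=0$ to untwist the coefficients. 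Your argument is perfectly valid under the standing hypothesis $d\geqslant 3$, but note that it reuses that hypothesis in a place where it is not actually needed---your own ``alternative'' paragraph via the exact sequence \eqref{eq:exact_app_C1} is exactly the content of Lemma~\ref{appA_lemma} and is the cleaner justification.
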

\proof
Consider the Mayer-Vietoris exact sequence for the pair $\{U_1,U_2\}$
$$
H^1_{\Z_2}\big(U_1\cap U_2,\Z(1)\big)\;\stackrel{}{\longrightarrow}\;H^2_{\Z_2}\big(X,\Z(1)\big)\;\stackrel{(\imath_1^*,\imath_2^*)}{\longrightarrow}\;H^2_{\Z_2}\big(U_1,\Z(1)\big)\oplus H^2_{\Z_2}\big(U_2,\Z(1)\big)\;\stackrel{\jmath_1^*-\jmath_2^*}{\longrightarrow}\;H^2_{\Z_2}\big(U_1\cap U_2,\Z(1)\big)
$$
where $\imath_j^*$ and $\jmath_j^*$
are the maps induced by the inclusions $\imath_j:U_j\to X$
and   $\jmath_j:U_1\cap U_2\to U_j$, respectively. By using the additivity axiom in equivariant cohomology one obtains 
\begin{equation}\label{eq:exact_app_B2-x1}
H^2_{\Z_2}\big(U_2,\Z(1)\big)\;\simeq\;\bigoplus_{\lambda\in\Lambda}H^2_{\Z_2}\big(\Z_2,\Z(1)\big)\;\simeq\;0
\end{equation}
where the last equality is justified by Lemma \ref{appA_lemma}.
Similarly, again in view of  the additivity axiom,  one has that
\begin{equation}\label{eq:exact_app_B2}
H^k_{\Z_2}\big(U_1\cap U_2,\Z(1)\big)\;\simeq\;\bigoplus_{\lambda\in\Lambda}H^k_{\Z_2}\big(\Z_2\times\n{S}^{d-1},\Z(1)\big)\;\simeq\;\bigoplus_{\lambda\in\Lambda}H^k\big(\n{S}^{d-1},\Z\big)\;,\qquad\quad k=1,2\;.
\end{equation}
where the last isomorphism follows from Lemma \ref{appA_lemma}. Since $H^k(\n{S}^{d-1},\Z)=0$ whenever $0 < k < d-1$ one obtains that 
\begin{equation}\label{eq:exact_app_B5_bis}
\begin{aligned}
&H^1_{\Z_2}\big(U_1\cap U_2,\Z(1)\big)\;=\;0&\qquad& \forall\, d\geqslant 3\\
&H^2_{\Z_2}\big(U_1\cap U_2,\Z(1)\big)\;=\;0&\qquad& \forall\, d\geqslant 4\;.
\end{aligned}
\end{equation}
The result is proved by  plugging in \eqref{eq:exact_app_B2-x1} and \eqref{eq:exact_app_B5_bis} in the Mayer-Vietoris exact sequence.
\qed

\medskip

The exactness of the sequence \eqref{eq:exact_app_B1} enters crucially in the proof of the following result.
\begin{proposition}[Surjectivity for $d=3$]\label{prop_free_surg_d=3}
For any \emph{free} $\Z_2$-CW complex $(X,\tau)$ of dimension $d=3$ the FKMM-invariant $\kappa$, identified according to Proposition \ref{corol:II}, is \emph{surjective}.
\end{proposition}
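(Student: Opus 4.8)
The plan is to reduce the surjectivity of $\kappa$ in the free case $d=3$ to the surjectivity of the first \virg{Real} Chern class $c_1^{\rr R}$ onto $H^2_{\Z_2}(X,\Z(1))$, which by Kahn's isomorphism \eqref{eq:kahn_iso} amounts to constructing, for each prescribed cohomology class, a $\rr Q$-bundle whose determinant line bundle realizes it. By Proposition \ref{corol:II} we know $\kappa(\bb{E},\Theta)=c_1^{\rr R}(\mathrm{det}(\bb{E}))$ when $X^\tau=\empt$, so it suffices to show that every $\rr R$-line bundle over $(X,\tau)$ arises (up to isomorphism) as $\mathrm{det}(\bb{E})$ for some rank-$2$ $\rr Q$-bundle $(\bb{E},\Theta)$. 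Equivalently, using the stable-rank isomorphism \eqref{eq:stab_rank_Q_low_d>1}, it is enough to produce such $\bb{E}$ of rank $2$.

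First I would use the decomposition $X=U_1\cup U_2$ from \eqref{eq:U_1_U_2_d=3}, with $U_1\simeq X_2$ (the $2$-skeleton) and $U_2$ a disjoint union of free cells $\Z_2\times\n D^3$, so that $U_1\cap U_2\simeq\bigsqcup_\lambda(\Z_2\times\n S^2)$. By Lemma \ref{Lemma:eq:U_1_U_2_d=3} (the exact sequence \eqref{eq:exact_app_B1}) a class in $H^2_{\Z_2}(X,\Z(1))$ is detected by its restriction to $U_1$, i.e. by a class in $H^2_{\Z_2}(U_1,\Z(1))\simeq{\rm Pic}_{\rr R}(U_1,\tau)$. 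Over $U_1$, which is a free $\Z_2$-CW complex of dimension $2$, one can already produce a rank-$2$ $\rr Q$-bundle realizing any given determinant: indeed, by Proposition \ref{prop:surg_2_di-2step} applied to the $2$-dimensional space $U_1$ — or more directly by the clutching construction over the $1$-skeleton as in the proof of Proposition \ref{prop:surg_2_di-1step} — the FKMM-invariant, hence $c_1^{\rr R}\circ\mathrm{det}$, is surjective onto $H^2_{\Z_2}(U_1,\Z(1))$. The key point is then to \emph{extend} such a $\rr Q$-bundle from $U_1$ across the free top-cells $U_2$ without changing its class on $U_1$; this is possible provided the clutching data on $U_1\cap U_2$ can be trivialized, which is controlled by $[\Z_2\times\n S^2,\hat{\n U}(2)]_{\Z_2}\simeq[\n S^2,\n U(2)]\simeq\pi_2(\n U(2))=0$.

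So the main steps, in order, are: (1) invoke Proposition \ref{corol:II} to identify $\kappa$ with $c_1^{\rr R}\circ\mathrm{det}$ and Kahn's isomorphism to turn the problem into realizing an arbitrary $\rr R$-line bundle as a determinant; (2) use \eqref{eq:U_1_U_2_d=3} and Lemma \ref{Lemma:eq:U_1_U_2_d=3} to reduce realization over $X$ to realization over $U_1$, together with an extension argument over $U_2$; (3) realize the given class over the $2$-dimensional free space $U_1$ by the equivariant clutching construction, exactly as in the proof of Proposition \ref{prop:surg_2_di-1step}, obtaining a rank-$2$ $\rr Q$-bundle $\bb{E}_1$ over $U_1$ with $c_1^{\rr R}(\mathrm{det}(\bb{E}_1))$ equal to the prescribed restriction; (4) since $\bb{E}_1|_{U_1\cap U_2}$ is classified by a map into $\hat{\n U}(2)$ up to equivariant homotopy and $U_1\cap U_2\simeq\bigsqcup_\lambda(\Z_2\times\n S^2)$ with $[\Z_2\times\n S^2,\hat{\n U}(2)]_{\Z_2}\simeq[\n S^2,\n U(2)]=0$, the restriction is $\rr Q$-trivial, hence $\bb{E}_1$ extends over the trivial $\rr Q$-bundles on $U_2$ to a $\rr Q$-bundle $\bb{E}$ on $X$; (5) by naturality $c_1^{\rr R}(\mathrm{det}(\bb{E}))$ restricts on $U_1$ to the prescribed class, and by the exact sequence \eqref{eq:exact_app_B1} (injectivity of $\imath_1^*$) this forces $c_1^{\rr R}(\mathrm{det}(\bb{E}))$ to be the prescribed class in $H^2_{\Z_2}(X,\Z(1))$, proving surjectivity.

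The main obstacle I anticipate is step (4)–(5): one must be careful that the extension of the bundle across $U_2$ does not perturb the determinant class on all of $X$, not merely on $U_1$. This is precisely where the exactness of \eqref{eq:exact_app_B1} — specifically the injectivity of $\imath_1^*:H^2_{\Z_2}(X,\Z(1))\to H^2(U_1,\Z(1))$ established in Lemma \ref{Lemma:eq:U_1_U_2_d=3}, valid already at $d=3$ (note $H^2_{\Z_2}(U_1\cap U_2,\Z(1))$ need not vanish when $d=3$, but injectivity still holds) — does the essential work: a class on $X$ is determined by its restriction to the $2$-skeleton neighborhood $U_1$, so realizing it there and extending arbitrarily across free top-cells suffices. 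A secondary subtlety is ensuring the clutching functions used to build $\bb{E}_1$ over $U_1$ and the trivializations over $U_2$ patch equivariantly along $U_1\cap U_2$; this is handled by the vanishing $\pi_2(\n U(2))=0$ together with the fact that $\Z_2\times\n S^2$ carries a free involution, so equivariant maps into $\hat{\n U}(2)$ reduce to ordinary maps into $\n U(2)$, exactly as in the arguments already carried out for $d=2$.
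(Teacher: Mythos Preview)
Your overall strategy matches the paper's proof almost exactly: decompose $X=U_1\cup U_2$ via \eqref{eq:U_1_U_2_d=3}, realize the prescribed class over the $2$-dimensional piece $U_1$ using Proposition~\ref{prop:surg_2_di-2step}, glue with the trivial $\rr Q$-bundle on $U_2$, and conclude by the injectivity of $\imath_1^*$ from Lemma~\ref{Lemma:eq:U_1_U_2_d=3}. However, step~(4) contains a genuine gap.

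Your justification that $\bb{E}_1|_{U_1\cap U_2}$ is $\rr Q$-trivial is incorrect. The vanishing $[\Z_2\times\n S^2,\hat{\n U}(2)]_{\Z_2}\simeq[\n S^2,\n U(2)]\simeq\pi_2(\n U(2))=0$ is a statement about equivariant maps into $\hat{\n U}(2)$ (i.e.\ clutching functions or bundle automorphisms), not about isomorphism classes of $\rr Q$-bundles. In fact $\rr Q$-bundles on $\Z_2\times\n S^2$ are far from all being trivial: by Lemma~\ref{appA_lemma} and Proposition~\ref{prop:surg_2_di-2step} one has ${\rm Vec}^2_{\rr Q}(\Z_2\times\n S^2)\simeq H^2_{\Z_2}(\Z_2\times\n S^2,\Z(1))\simeq H^2(\n S^2,\Z)\simeq\Z$. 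So an arbitrary $\rr Q$-bundle on $U_1$ need not restrict trivially to $U_1\cap U_2$, and your extension argument as stated does not go through.

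What is missing is the use of the \emph{exactness} of \eqref{eq:exact_app_B1} at the middle term, not only the injectivity of $\imath_1^*$. The paper argues as follows: since $\bb{E}_1$ was chosen with $\kappa(\bb{E}_1)=\imath_1^*(a)$, naturality gives
\[
\kappa\big(\bb{E}_1|_{U_1\cap U_2}\big)\;=\;\jmath_1^*\big(\kappa(\bb{E}_1)\big)\;=\;\jmath_1^*\imath_1^*(a)\;=\;0
\]
by exactness. Now the bijectivity of $\kappa$ on the $2$-dimensional space $U_1\cap U_2$ (Proposition~\ref{prop:surg_2_di-2step}) forces $\bb{E}_1|_{U_1\cap U_2}$ to be $\rr Q$-trivial, and only then can one glue with the product bundle on $U_2$. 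With this repair to step~(4), your argument coincides with the paper's.
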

\proof
The exact sequence \eqref{eq:exact_app_B1} and the FKMM-invariant can be combined to the following commutative diagram: 
\beql{eq:diag2}
\begin{diagram}
   {\rm Vec}_{\rr{Q}}^{2m}\big( X\big)        &   \rTo^{\imath_1^*}                 &   {\rm Vec}_{\rr{Q}}^{2m}\big(U_1\big)  & \rTo^{\jmath_1^*} & {\rm Vec}_{\rr{Q}}^{2m}\big(U_1\cap U_2\big)\\
       \dTo^{\kappa}     &   &   \dTo^{\kappa}_{\simeq}           &                                                          &            \dTo^{\kappa}_{\simeq}       \\
H^2_{\Z_2}\big(X,\Z(1)\big)&       \rTo^{\imath_1^*}_{\text{inj.}}              & H^2_{\Z_2}\big(U_1,\Z(1)\big)  &                                                     \rTo^{\jmath_1^*}& H^2_{\Z_2}\big(U_1\cap U_2,\Z(1)\big)     \\
\end{diagram}
\eeq
where the $\imath_1^*$ and $\jmath_1^*$ are the maps induced by the inclusions $\imath_1:U_1\to X$ and $\jmath_1:U_1\cap U_2\to U_1$, respectively. The injectivity of the horizontal arrow $\imath_1^*$ is due to Lemma \ref{Lemma:eq:U_1_U_2_d=3}, and the bijectivity of the last two vertical arrows is due to Proposition \ref{prop:surg_2_di-2step}, since both $U_1$ and $U_1\cap U_2$ are at most two-dimensional. Now, let $a\in H^2_{\Z_2}(X,\Z(1))$ be a given class, $\imath_1^*(a)$ its image in $H^2_{\Z_2}(U_1,\Z(1))$ and $\bb{E}_1\to U_1$ a \virg{Quaternionic} vector bundle in the equivalence class determined by $\kappa(\bb{E}_1)=\imath_1^*(a)$. Due to the commutativity of the diagram one has
$$
\kappa\big(\bb{E}_1|_{U_1\cap U_2}\big)\;=\;\kappa\big(\jmath_1^*(\bb{E}_1)\big)\;=\;
\jmath_1^*\big(\kappa(\bb{E}_1)\big)\;=\;\jmath_1^*\big(\imath_1^*(a)\big)\;=\;0
$$
where the last equality is a  consequence of the exactness of \eqref{eq:exact_app_B1} that forces $\jmath_1^*\circ\imath_1^*=0$. Since $\kappa\big(\bb{E}_1|_{U_1\cap U_2}\big)=0$ one concludes that $\bb{E}_1|_{U_1\cap U_2}\to U_1\cap U_2$  must be the trivial 
$\rr{Q}$-bundle. Let $\bb{E}_2= U_2\times\C^{2m}$ be the trivial 
$\rr{Q}$-bundle over $U_2$. By gluing $\bb{E}_1$ and $\bb{E}_2$ along ${U_1\cap U_2}$ one obtains a $\rr{Q}$-bundle $\bb{E}:=\bb{E}_1\cup\bb{E}_2$ over $X$. By construction it holds that
$$
\imath_1^*\big(\kappa(\bb{E})\big)\;=\;\kappa\big(\imath_1^*(\bb{E})\big)\;=\;\kappa\big(\bb{E}|_{U_1}\big)\;=\;\kappa\big(\bb{E}_1\big)\;=\;\imath_1^*\big(a\big)
$$
and the injectivity of $\imath_1^*$ implies that  $\kappa(\bb{E})=a$. This proves that for each class $a\in H^2_{\Z_2}\big(X,\Z(1)\big)$ there is an element $[\bb{E}]\in{\rm Vec}_{\rr{Q}}^{2m}\big( X\big)$ such that $\kappa(\bb{E})=a$, namely the FKMM-invariant is surjective.
\qed

\subsection{Surjectivity in $d=3$: The case of a finite fixed-point set}
\label{app:surj_fkmm_free_finit}
{In this section we will provide the proof of Theorem \ref{theo-int-4}, namely we will show
 the surjectivity of the FKMM-invariant in the case of an involutive space $(X,\tau)$ which satisfies the following assumption:
\begin{assumption}\label{ass:top-2}
Let $(X,\tau)$ be an involutive space and assume that
\begin{itemize}
\item[(a)]
$X$ is a compact manifold without boundary of dimension $d=3$;
\vspace{1mm}
\item[(b)] The involution $\tau$ is smooth.
\vspace{1mm}
\item[(c)] The fixed-point set $X^\tau$ consists of a finite collection of points. 
\end{itemize}
\end{assumption}
\noindent
Let us observe that a space  $X$ which fulfills Assumption \ref{ass:top-2}
 is a \emph{closed} manifold and the pair $(X,\tau)$ automatically admits the structure of a $\Z_2$-CW-complex (see \eg \cite[Theorem 3.6]{may-96}).
Moreover, under the extra assumption $H^2_{\Z_2}(X,\Z(1))=0$ an involutive space $(X,\tau)$ of the type described in Assumption \ref{ass:top-2} is also
an \emph{FKMM-space} (\cf. Definition \ref{defi:FKMM-space}) of dimension $d=3$. 
Let us point out that for this type of spaces $H^2_{\Z_2}(X|X^\tau,\Z(1))$ has a 
geometric representation provided by equation \eqref{eq:iso_for_fkmm}. This is a  crucial ingredient in the following results.}

\begin{proposition} \label{propos_surjFKMM_spac}
Let $(X, \tau)$ be an involutive space which meets Assumption \ref{ass:top-2}. Assume in addition that $(X, \tau)$ is an
FKMM-space, namely $H^2_{\Z_2}(X,\Z(1))=0$. Then the FKMM-invariant
$$
\kappa\;:\; {\rm Vec}_{\rr{Q}}^{2m}\big(X,\tau\big)\;\stackrel{\simeq}{\longrightarrow}\; H^2_{\Z_2}\big(X|X^\tau,\Z(1)\big)\;\simeq\; 
{\rm Map}\big(X^\tau,\{\pm 1\}\big)/\big[X,\tilde{\n{U}}(1)\big]_{\Z_2}  \;,\qquad\qquad \forall\ m\in\N
$$
induces a bijection.
\end{proposition}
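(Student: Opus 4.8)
The plan is to prove surjectivity of $\kappa$ by exploiting the geometric description of the target group afforded by the FKMM-space structure, together with the $\Z_2$-CW-complex decomposition of $X$ adapted to the finite fixed-point set. By Proposition~\ref{prop:injectivity_in_low_dimension} the FKMM-invariant is already injective, and by the stable condition \eqref{eq:stab_rank_Q_low_d>1} it suffices to treat $m=1$. In view of the isomorphism \eqref{eq:iso_for_fkmm} and Proposition~\ref{prop:fkmm-inv_fkmm-space}, the task reduces to showing that every class $[\phi]\in{\rm Map}(X^\tau,\{\pm1\})/[X,\tilde{\n{U}}(1)]_{\Z_2}$ is realized as $[s_{\bb{E}}]$ for some rank-$2$ $\rr{Q}$-bundle $(\bb{E},\Theta)$. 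Since $X^\tau=\{x_1,\ldots,x_r\}$ is finite, it is enough to construct, for each fixed point $x_j$, a $\rr{Q}$-bundle whose canonical section takes the value $-1$ at $x_j$ and $+1$ at all other fixed points: the Whitney-sum additivity property $(\delta)$ of $\kappa$ then lets us combine these to hit an arbitrary sign pattern, hence all of ${\rm Map}(X^\tau,\{\pm1\})$ and a fortiori its quotient.

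First I would isolate a single fixed point $x_j$. Using Assumption~\ref{ass:top-2}(a)--(b), the smooth involution $\tau$ acts on a tubular neighborhood of $x_j$ linearly, so there is a small $\tau$-invariant closed disk $D_j\cong\n{D}^3$ around $x_j$ on which $\tau$ is (conjugate to) a linear involution fixing only the center; since $\dim X^\tau=0$ at $x_j$ this linear model is $t\mapsto -t$ on $\R^3$, i.e.\ $D_j$ with the antipodal-type involution whose quotient boundary data is governed by $\n{S}^{2,0}$-type behaviour on $\partial D_j$. I would then build a rank-$2$ $\rr{Q}$-bundle $\bb{E}_j$ over $X$ by the equivariant clutching construction: take the trivial $\rr{Q}$-product bundle over $X\setminus\mathrm{int}(D_j)$ and the trivial $\rr{Q}$-product bundle over $D_j$, glued along $\partial D_j$ by an equivariant clutching function $g_j:\partial D_j\to\hat{\n{U}}(2)$. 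Because $\partial D_j$ is a free $\Z_2$-space $\simeq\Z_2\times\n{S}^2$ (or $\n{S}^{2,0}$, depending on the model, which is the relevant two-dimensional boundary of a fixed cell), the relevant homotopy set of equivariant clutching functions is nontrivial, and one can choose $g_j$ so that the induced section $s_{\bb{E}_j}$ of the determinant line bundle, computed at the center $x_j$ via Proposition~\ref{lemma:R_Q_det_bun2}, equals $-1$; away from $D_j$ the bundle is $\rr{Q}$-trivial so $s_{\bb{E}_j}=+1$ at every other fixed point. Concretely, the determinant of $g_j$ realizes a generator of $[\n{S}^2,\n{U}(1)]$-type data, or one can simply cite the explicit rank-$2$ $\rr{Q}$-bundles over $\n{S}^{2,0}$ constructed in \cite{denittis-gomi-14-gen,denittis-gomi-16} and transplant them into $D_j$.

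The main obstacle is verifying that the local construction over $D_j$ genuinely flips the sign of the canonical section at $x_j$ while leaving the global bundle well-defined and rank-$2$ $\rr{Q}$; this is a clutching-function computation combined with tracking how ${\rm det}(\Theta)$ and the canonical trivialization of Proposition~\ref{lemma:R_Q_det_bun2} interact with $g_j$ restricted to the fixed center. Once the single-fixed-point realizations $(\bb{E}_j,\Theta_j)$ are in hand, I would finish as follows: given a target $\phi\in{\rm Map}(X^\tau,\{\pm1\})$, let $S=\{j:\phi(x_j)=-1\}$ and set $(\bb{E},\Theta):=\bigoplus_{j\in S}(\bb{E}_j,\Theta_j)\oplus(\text{trivial})$ of the appropriate rank, then pass to a rank-$2$ representative using the stable isomorphism \eqref{eq:stab_rank_Q_low_d>1}. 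By additivity $(\delta)$ of $\kappa$ and the fact that $\{\pm1\}$-valued functions multiply, $s_{\bb{E}}$ realizes exactly $\phi$ (up to the $[X,\tilde{\n{U}}(1)]_{\Z_2}$-ambiguity which is quotiented out anyway), so $\kappa(\bb{E},\Theta)=[\phi]$. Combined with the injectivity from Proposition~\ref{prop:injectivity_in_low_dimension}, this yields the asserted bijection. \qed
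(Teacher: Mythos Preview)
Your strategy is essentially the paper's own: isolate each fixed point in a small invariant $3$-disk, build a rank-$2$ $\rr{Q}$-bundle by equivariant clutching that flips the sign at exactly that point, and then combine via additivity and the stable-rank reduction. Two concrete points need repair, however.

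First, the boundary $\partial D_j$ is \emph{not} $\Z_2\times\n{S}^2$ (two disjoint spheres exchanged by $\tau$) nor $\n{S}^{2,0}$ (a circle with trivial involution in the paper's convention). Since the local model on $D_j$ is the antipodal involution $t\mapsto -t$ on the unit ball in $\R^3$, its boundary is the connected $2$-sphere with the antipodal free involution, i.e.\ $\n{S}^{0,3}$. The relevant clutching set is therefore $[\n{S}^{0,3},\hat{\n{U}}(2)]_{\Z_2}\simeq\Z$ (this is \cite[Lemma~5.2]{denittis-gomi-14-gen}), and your remark about ``$[\n{S}^2,\n{U}(1)]$-type data'' cannot work because that group vanishes.

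Second, the sign flip does not occur because the canonical section $s_{\bb{E}_j}$ takes the value $-1$ at $x_j$; on a bundle trivialized over each piece the canonical section is identically $+1$ on $X^\tau$. What produces the nontrivial FKMM-class is that the generator $\xi$ of $[\n{S}^{0,3},\hat{\n{U}}(2)]_{\Z_2}$ satisfies $\det(\xi)\equiv -1$, so a \emph{global} $\rr{R}$-section of $\det(\bb{E}_j)$ must jump by $-1$ across $\partial D_j$. Comparing the constant canonical section to such a global trivialization (as in Proposition~\ref{prop:fkmm-inv_fkmm-space}) yields the representative $\phi_j$ with $\phi_j(x_j)=-1$ and $\phi_j(x_k)=+1$ for $k\neq j$. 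Once you fix these two points, your argument matches the paper's proof.
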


\begin{proof}
Let $X^\tau = \{ x_1, \ldots, x_n \}$. In view of the geometric representation of $H^2_{\Z_2}(X|X^\tau,\Z(1))$ it is enough to show that for each
$i = 1, \ldots, n$ we can construct a  rank $2$ $\rr{Q}$-bundle $\bb{E}_i\to X$ such that its FKMM-invariant $\kappa(\bb{E}_i)$ 
 is represented by an equivariant map $\phi_i : X^\tau \to \{\pm1\}$ which verifies $\phi_i(x_i) = -1$ and $\phi_i(x_j) = 1$ for all $j \neq i$. 
 To construct such $\bb{E}_i$, we invoke the \emph{slice theorem}
 \cite[Chapter I, Section 3]{hsiang-75} which allows us 
   to choose a $\tau$-invariant 3-dimensional closed disk $\n{D}^3_i$ centered in $x_i$ which  can be equivariantly identified with the unit ball in $\R^3$ endowed with  the antipodal involution $x \mapsto -x$. Moreover, it is always possible to choose these discs in such a way that $\n{D}^3_i\cap\n{D}^3_j=\empt$ if $i\neq j$. Let $\bb{D}:=\bigcup_{i=1}^n\n{D}^3_i$,  $\partial \bb{D}:=\bigcup_{i=1}^n\partial \bb{D}_i\simeq\bigcup_{i=1}^n\n{S}^{0,3}$
 its boundary and $X':=\overline{X\setminus \bb{D}}$. We know from 
 \cite[Lemma 5.2]{denittis-gomi-14-gen} that $[\n{S}^{0,3}, \hat{\n{U}}(2)]_{\Z_2}\simeq \Z$ and  $\varphi:\n{S}^{0,3}\to \hat{\n{U}}(2)$ is an equivariant map in the class labelled by $k$ if and only if ${\rm deg}(\varphi)=k$ and ${\rm det}(\varphi)=(-1)^k$. Let $[\xi]\in [\n{S}^{0,3}, \hat{\n{U}}(2)]_{\Z_2}$ be the generator, namely let $\xi$ be an equivariant map such that  ${\rm deg}(\xi)=1$ and ${\rm det}(\xi)=-1$. Let  $\Psi_i:\partial \bb{D}\to \hat{\n{U}}(2)$ be the equivariant map defined as follows: 
$$
\Psi_i|_{\partial \bb{D}_j}\;:=\;
\left\{
\begin{aligned}
&\xi&\qquad&\text{if}\ \ j=i\\
&\n{1}_2&\qquad&\text{if}\ \ j\neq i
\end{aligned}
\right.\;.
$$
By means of the equivariant  clutching construction we can define the $\rr{Q}$-bundles
$$
\bb{E}_i\;:=\;\big(X'\times \C^2\big)\;\cup_{\Psi_i}\;\big(\bb{D}\times \C^2\big)\;,\qquad\quad i=1,\ldots,n\;.
$$
Since $\bb{E}_i|_{X^\tau}=\{x_1,\ldots,x_n\}\times\C^2$ is trivial it follows that the canonical section $s_{\bb{E}_i}$, defined according \eqref{eq:triv_2},
agrees with the constant map $s_{\bb{E}_i}:X^\tau\to\{+1\}$.
To prove that the FKMM-invariant of  $\bb{E}_i$ can be represented by the function $\phi_i$
we need to find an equivariant section  $t_i:X\to{\rm det}(\bb{E}_i)$
such that $t_i|_{X^\tau}=\phi_i$. In fact 
Proposition \ref{prop:fkmm-inv_fkmm-space} assures that the FKMM-invariant of $\bb{E}_i$ can be seen as the difference between the canonical section $s_{\bb{E}_i}$ and any other equivariant sections of ${\rm det}(\bb{E}_i)$. These equivariant sections  are in one-to-one correspondence with a pair of equivariant
maps $u_{X'}:X'\to \tilde{\n{U}}(1)$ and $u_{\bb{D}}:\bb{D}\to \tilde{\n{U}}(1)$
such that $u_{X'}(x):={\rm det}(\Psi_i)\;u_{\bb{D}}(x)$ for all $x\in X'\cap\bb{D}$. 
With the  specific choice  $u_{X'}\equiv 1$ and 
 $$
u_{\bb{D}}|_{ \bb{D}_j}\;:=\;
\left\{
\begin{aligned}
&-1&\qquad&\text{if}\ \ j=i\\
&+1&\qquad&\text{if}\ \ j\neq i
\end{aligned}
\right.\;
$$
one can check that  $t_i|_{X^\tau}=\phi_i$ and this concludes the proof.
\end{proof}

\medskip

The proof of Theorem \ref{theo-int-4} requires the next preliminary result.

\begin{lemma} \label{lem:surj_fkmm_finite_fixed_point}
Let $(X, \tau)$ be an involutive space which meets Assumption \ref{ass:top-2}. Then, for any $\rr{R}$-line bundle $\bb{L}$, there exists a  $\rr{Q}$-bundle $\bb{E}$ of rank 2 such that $\det (\bb{E}) \simeq \bb{L}$.
\end{lemma}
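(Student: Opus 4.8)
The plan is to build $\bb{E}$ by an equivariant clutching construction that separates $X^\tau$ from the free part of $X$. Using the slice theorem, exactly as in the proof of Proposition \ref{propos_surjFKMM_spac}, I would choose pairwise disjoint $\tau$-invariant closed $3$-disks $\n{D}^3_1,\dots,\n{D}^3_n$ around the points $x_1,\dots,x_n$ of $X^\tau$, each equivariantly identified with the unit ball of $\R^3$ under $x\mapsto -x$, and set $\bb{D}=\bigcup_i\n{D}^3_i$, $\partial\bb{D}=\bigcup_i\partial\n{D}^3_i\simeq\bigsqcup_i\n{S}^{0,3}$ and $X'=\overline{X\setminus\bb{D}}$. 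Then $X'$ is a compact $3$-manifold with boundary carrying a \emph{free} involution (it avoids $X^\tau$), hence a free $\Z_2$-CW complex of dimension $3$, while each $\n{D}^3_i$ is equivariantly contractible to $x_i$. The strategy is: realize $\bb{L}|_{X'}$ as the determinant of a rank $2$ $\rr{Q}$-bundle over $X'$; extend trivially over $\bb{D}$; and finally correct the gluing along $\partial\bb{D}$ so that the determinant becomes globally isomorphic to $\bb{L}$.

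First I would apply Proposition \ref{prop_free_surg_d=3} to $X'$: combined with Proposition \ref{corol:II} (which, over the free space $X'$, identifies $\kappa$ with $c_1^{\rr{R}}\circ\det$) and the Kahn isomorphism \eqref{eq:kahn_iso}, the surjectivity of $\kappa$ on $X'$ yields a rank $2$ $\rr{Q}$-bundle $\bb{E}'\to X'$ with $\det(\bb{E}')\simeq\bb{L}|_{X'}$. On each component $\n{S}^{0,3}\subset\partial\bb{D}$ one has $\det(\bb{E}'|_{\n{S}^{0,3}})\simeq\bb{L}|_{\n{S}^{0,3}}$, which is $\rr{R}$-trivial because $\bb{L}|_{\n{D}^3_i}$ is trivial (equivariant homotopy invariance and contractibility of $\n{D}^3_i$). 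Since $\n{S}^{0,3}$ is a $2$-dimensional free $\Z_2$-CW complex, the FKMM-invariant is injective on it by Proposition \ref{prop:injectivity_in_low_dimension}, and $\kappa(\bb{E}'|_{\n{S}^{0,3}})=c_1^{\rr{R}}(\det(\bb{E}'|_{\n{S}^{0,3}}))=0$; hence $\bb{E}'|_{\partial\bb{D}}$ is $\rr{Q}$-trivial. Choosing a $\rr{Q}$-trivialization and gluing $\bb{E}'$ to the trivial $\rr{Q}$-bundle $\bb{D}\times\C^2$ along $\partial\bb{D}$ produces a rank $2$ $\rr{Q}$-bundle $\bb{E}$ over $X$ with $\bb{E}|_{X'}\simeq\bb{E}'$ and $\bb{E}|_{\bb{D}}$ trivial.

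By construction $\det(\bb{E})$ agrees with $\bb{L}$ over $X'$ and both are trivial over $\bb{D}$, so the two $\rr{R}$-line bundles differ only through the equivariant homotopy class of the clutching function along $\partial\bb{D}$, i.e.\ by a class in $\prod_i[\n{S}^{0,3},\tilde{\n{U}}(1)]_{\Z_2}$. Showing that this discrepancy can be killed is the main obstacle. The remedy will be to reglue: replacing the $\rr{Q}$-clutching function $g$ of $\bb{E}$ along $\partial\bb{D}$ by $g\cdot u$, for an equivariant $u\colon\partial\bb{D}\to\hat{\n{U}}(2)$, leaves $\bb{E}|_{X'}$ and $\bb{E}|_{\bb{D}}$ unchanged but multiplies the determinant's clutching function by $\det(u)$. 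Thus it suffices that $\det\colon[\n{S}^{0,3},\hat{\n{U}}(2)]_{\Z_2}\to[\n{S}^{0,3},\tilde{\n{U}}(1)]_{\Z_2}$ be surjective. Now $[\n{S}^{0,3},\hat{\n{U}}(2)]_{\Z_2}\cong\Z$ with a generator $\xi$ satisfying $\det\circ\xi\simeq$ the constant map at $-1$, by \cite[Lemma 5.2]{denittis-gomi-14-gen}; and an elementary argument -- lifting an equivariant $f\colon S^2\to \n{U}(1)$ along $\R\to\n{U}(1)$ to $\tilde f$ and reading off $\tilde f(x)+\tilde f(-x)\bmod 2$ -- identifies $[\n{S}^{0,3},\tilde{\n{U}}(1)]_{\Z_2}\cong\Z/2$, with the constant $-1$ as its nontrivial element. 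Hence $\det$ is onto, and taking $u$ equal to $\xi$ on the components of $\partial\bb{D}$ where the discrepancy is nontrivial and to $\n{1}_2$ elsewhere, the reglued rank $2$ $\rr{Q}$-bundle $\bb{E}$ satisfies $\det(\bb{E})\simeq\bb{L}$, which proves the statement.
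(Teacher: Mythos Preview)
Your proof is correct and uses the same decomposition $X=X'\cup\bb{D}$ and the same essential input (Proposition~\ref{prop_free_surg_d=3} on the free piece $X'$) as the paper. The difference lies only in how the gluing is finished. The paper invokes Proposition~\ref{prop:surg_2_di-2step} on the $2$-dimensional overlap $X'\cap\bb{D}$ to get a $\rr{Q}$-isomorphism $\bb{E}_{X'}|_{X'\cap\bb{D}}\simeq\bb{E}_{\bb{D}}|_{X'\cap\bb{D}}$, glues, and then writes $\det(\bb{E})\simeq\det(\bb{E}_{X'})\cup\det(\bb{E}_{\bb{D}})\simeq\bb{L}|_{X'}\cup\bb{L}|_{\bb{D}}\simeq\bb{L}$ without further comment. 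You instead glue to the trivial bundle first and then explicitly correct the clutching discrepancy for the determinant by regluing with a suitable $u:\partial\bb{D}\to\hat{\n{U}}(2)$, using the surjectivity of $\det:[\n{S}^{0,3},\hat{\n{U}}(2)]_{\Z_2}\to[\n{S}^{0,3},\tilde{\n{U}}(1)]_{\Z_2}\cong\Z_2$.

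Your extra step is a legitimate way to make the paper's final isomorphism watertight; in fact it is arguably more transparent. An alternative justification, closer in spirit to the paper's one-line conclusion, is to note via Mayer--Vietoris that the restriction $H^2_{\Z_2}(X,\Z(1))\to H^2_{\Z_2}(X',\Z(1))$ is injective: since $H^2_{\Z_2}(\bb{D},\Z(1))=0$ and the map $H^1_{\Z_2}(\bb{D},\Z(1))\to H^1_{\Z_2}(\partial\bb{D},\Z(1))$ is onto (both are $\bigoplus_i\Z_2$, generated by the constant $-1$ maps on each component), the connecting homomorphism vanishes. Hence $\det(\bb{E})|_{X'}\simeq\bb{L}|_{X'}$ already forces $\det(\bb{E})\simeq\bb{L}$, so no regluing is actually needed. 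Either route completes the argument.
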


\begin{proof}
Let $\bb{D} \subset X$ and $X' = \overline{X \setminus \bb{D}}$ be as in the proof of Proposition \ref{propos_surjFKMM_spac}. Suppose that a \virg{Real} line bundle $\bb{L} \to X$ is given. On the one hand, we have $H^2_{\Z_2}(\bb{D}, \Z(1)) = 0$, because of the $\Z_2$-homotopy equivalence $\bb{D} \simeq X^\tau$. This implies that $\bb{L}|_{\bb{D}}$ is trivial. Hence, if $\bb{E}_{\bb{D}}\simeq \bb{D} \times \C^2$ is the product $\rr{Q}$-bundle one has that then $\det (\bb{E}_{\bb{D}}) \simeq \bb{L}|_{\bb{D}}$. On the other hand, $X'$ admits the structure of a $3$-dimensional $\Z_2$-CW complex with free involution. Therefore Proposition \ref{prop_free_surg_d=3} provides a $\rr{Q}$-bundle $\bb{E}_{X'} \to X'$ of rank $2$ such that $\det (\bb{E}_{X'}) \simeq \bb{L}|_{X'}$. Notice that 
$$
\det \big(\bb{E}_{X'}\big)|_{X' \cap \bb{D}} 
\;\simeq\;
\big(\bb{L}|_{X'}\big)|_{X' \cap \bb{D}} 
\;=\; 
\big(\bb{L}|_{\bb{D}}\big)|_{X' \cap \bb{D}}
\;\simeq\;
\det \big(\bb{E}_{\bb{D}}\big)|_{X' \cap \bb{D}}\;. 
$$
We then have an isomorphism $\bb{E}_{X'}|_{X' \cap \bb{D}} \simeq \bb{E}_{\bb{D}}|_{X' \cap \bb{D}}$ by Proposition \ref{prop:surg_2_di-2step}, since $X' \cap \bb{D}$ is a $2$-dimensional $\Z_2$-CW complex. Consequently, we can glue $\bb{E}_{X'}$ and $\bb{E}_{\bb{D}}$ together along $X' \cap \bb{D}$ to form a $\rr{Q}$-bundle $\bb{E} = \bb{E}_{X'} \cup \bb{E}_{\bb{D}}$ of rank $2$ such that 
$$
\det \big(\bb{E}\big)\;
\simeq\; \det \big(\bb{E}_{X'} \cup \bb{E}_{\bb{D}}\big)
\;\simeq\; \det \big(\bb{E}_{X'}\big)\; \cup\; \det \big(\bb{E}_{\bb{D}}\big)
\;\simeq\; \bb{L}|_{X'} \;\cup\; \bb{L}|_{\bb{D}}\;
\simeq\; \bb{L}.
$$
This completes the proof.
\end{proof}

\medskip

\begin{proof}[Proof of Theorem \ref{theo-int-4}]
Because of Proposition \ref{prop:injectivity_in_low_dimension}, it suffices to prove the surjectivity of 
$$
\kappa\;:\;{\rm Vec}_{\rr{Q}}^{2}\big(X,\tau\big)\;\stackrel{}{\longrightarrow}\;H^2_{\Z_2}\big(X|X^\tau,\Z(1)\big)\;.
$$
Let $(\bb{L}, s)$ be a pair representing a given element in $H^2_{\Z_2}\big(X|X^\tau,\Z(1)\big)$. Lemma \ref{lem:surj_fkmm_finite_fixed_point} ensures that there is a rank $2$ $\rr{Q}$-bundle $\bb{E}'$ such that $\det (\bb{E}') \simeq \bb{L}$. Let $\phi : X^\tau \to \Z_2$ be a map such that $s = s_{\bb{E}'} \cdot \phi$, where $s_{\bb{E'}}$ is the canonical section of $\bb{E}'$. 
Following the proof of
 Proposition \ref{propos_surjFKMM_spac} one concludes that it is possible to find another $\rr{Q}$-bundle $\bb{E}''$ of rank $2$ such that the $\rr{R}$-bundle $\det( \bb{E}'')$ is trivial and $s_{\bb{E}''}$ agrees with $\phi$ under the isomorphism $\det (\bb{E}'') \simeq X \times \C$ of $\rr{R}$-bundles. For the direct sum $\bb{E}' \oplus \bb{E}''$, one has
$$
\big(\det (\bb{E}' \oplus \bb{E}''), s_{\bb{E}' \oplus \bb{E}''}\big)\;
\simeq\; \big(\det (\bb{E}') \otimes \det (\bb{E}''), s_{\bb{E}'} \otimes s_{\bb{E}''}\big)
\;\simeq\; 
\big(\det (\bb{E}'), s_{\bb{E}'}\cdot \phi\big) 
\;\simeq\; (\bb{L}, s)\;.
$$
Proposition \ref{theo:stab_ran_Q_even} allows to find a rank $2$ $\rr{Q}$-bundle $\bb{E}$ such that the direct sum of $\bb{E}$ and the product $\rr{Q}$-bundle $X \times \C^2$ is isomorphic to $\bb{E}' \oplus \bb{E}''$. Since $\kappa(\bb{E}) = \kappa( \bb{E}' \oplus \bb{E}'')$ the proof is completed.
\end{proof}

\section{\virg{Quaternionic} vector bundles over the three-dimensional lens space}
\label{app:quaternionic_lens}

The aim of this section is to classify the \virg{Quaternionic} vector bundles over the three-dimensional \emph{lens space} endowed with its natural involution. 
The explicit computation of the {relative} equivariant cohomology of this involutive space which has a non-empty fixed-point set will provide an explicit example where the FKMM-invariant fails to be surjective.

\subsection{The three-dimensional lens space with its natural involution}
The three dimensional sphere can be parametrized as the unit sphere in $\C^2$,
\begin{equation}\label{eq:sphere_complex}
\n{S}^3\;\equiv\;\big\{(z_0,z_1)\in\C^2\ |\ |z_0|^2+|z_1|^2=1  \big\}\;\subset\;\C^2\;.
\end{equation}
This representation allows 
 $u\in\n{U}(1)$ to act on $\n{S}^3$ through the mapping $(z_0,z_1)\mapsto (uz_0,uz_1)$. 
This action of $\n{U}(1)$ on  $\n{S}^3$ is evidently \emph{free}.
 The inclusion of $\Z_p\subset \n{U}(1)$, given by the fact that $\Z_p$ can be identified with the set of the $p$-th roots of the unity, implies that one can  define a free action of every cyclic group $\Z_p$ on $\n{S}^3$. More precisely we can let
 $k\in\Z_p$ act on $\n{S}^3$ through the rotation 
 $$
 k\;:\;(z_0,z_1)\longmapsto \left(\expo{\ii2\pi\frac{k}{p}}z_0,\expo{\ii2\pi\frac{k}{p}}z_1\right)\;.
 $$ 
 The quotient space
 $$
 L_p\;:=\;\n{S}^3/\Z_p
 $$
is called the (three-dimensional) \emph{lens space} (see \cite[Example 18.5]{bott-tu-82} or \cite[Example 2.43]{hatcher-02} for more details) and sometime is denoted with the symbol $L(1;p)$. 
By combining the facts that $\n{S}^3$ is simply connected and the $\Z_p$-action on $\n{S}^3$
is free one concludes that $\n{S}^3$ is the {universal cover} of $L_p$. The last observation turns out to be relevant for the calculation of the homotopy groups:
$$
\pi_1\big(L_p\big)\;\simeq\;\Z_p\;,\qquad\quad \pi_j\big(L_p\big)\;\simeq\;\pi_j\big( \n{S}^3\big)\;,\ \ j\geqslant 2\;.
$$
The well-known fiber sequence
$$
\n{U}(1)\;\hookrightarrow\;\n{S}^3\;\to\; \n{S}^3/\n{U}(1)\;\simeq\;\C P^1
$$
says that $\n{S}^3$ can be seen as the total space of a principal $\n{U}(1)$-bundle over $\C P^1$
whose Chern number is 1. Similarly,  after observing that $\n{U}(1)/\Z_p\simeq \n{U}(1)$, one obtains the 
 fiber sequence
\begin{equation}\label{eq:lensPB}
\n{U}(1)\;\simeq\;\n{U}(1)/\Z_p\;\hookrightarrow\; L_p\;\to\; \n{S}^3/\n{U}(1)\;\simeq\;\C P^1
\end{equation}
which tells us that  $L_p$ is the total space of a principal $\n{U}(1)$-bundle over $\C P^1$ with typical fiber $\n{U}(1)/\Z_p$. The Chern number of this bundle can be computed to be $p$.
The above fiber sequence can be used for the computation of the cohomology of  $L_p$ which turns out to be \cite[Example 18.5]{bott-tu-82}
\begin{equation}\label{cohom_lens_spac}
H^k\big(L_p,\Z\big)\;\simeq\;
\left\{
\begin{aligned}
&\Z&&\ \ \ k=0,3\\
&\Z_p&&\ \ \ k=2\\
&0&&\ \ \ \text{otherwise}	\;.
\end{aligned}
\right.
\end{equation}

\medskip

The parametrization \eqref{eq:sphere_complex}
allows to equip 
$\n{S}^3\subset\C^2$
with the involution 
 induced by the complex conjugation 
$(z_0,z_1)\mapsto(\overline{z_0},\overline{z_1})$.
The computation
 $$
 \expo{\ii2\pi\frac{p-k}{p}}\;=\;\expo{-\ii2\pi\frac{k}{p}}\;=\;\overline{\expo{\ii2\pi\frac{k}{p}}}\;,\qquad\quad k\in\Z_p\;
 $$
shows that 
 $\n{Z}_p\subset\n{U}(1)$ 
 is preserved by the complex conjugation.
Therefore, the involution on $\n{S}^3$ descends to an involution $\tau$ on
 $L_p$.  
  The involutive space $(L_p,\tau)$ inherits the structure of a smooth (three-dimensional) manifold with a smooth involution, hence it admits a $\Z_2$-CW-complex structure \cite[Theorem 3.6]{may-96}. Let us point out that it is possible
to think of  $L_p\to \C P^1$ as a \virg{Real}
principal $\n{U}(1)$-bundle where the \virg{Real} structure on the total space is provided by $\tau$ and the
 involution $\tau'$ on the base space $\C P^1$ is still given by the complex conjugation $\tau':[z_0,z_1]\mapsto[\overline{z_0},\overline{z_1}]$.
 
 \medskip

 Let us focus now on the case $p=2q > 0$.
\begin{lemma}\label{lemma:B_fixed}
The fixed-point set of $L_{2q}$ under the involution $\tau$ has the form
$$
L_{2q}^\tau\;=\;S_0\;\sqcup\;S_1\;\simeq\;\n{S}^1\;\sqcup\;\n{S}^1
$$
where
$$
\begin{aligned}
S_0\;&:=\;\left.\left\{\big[\cos\theta,\sin\theta\big]\in L_{2q}\ \right|\ \theta\in\R\right\}\\
S_1\;&:=\;\left.\left\{\left[\expo{-\ii\frac{\pi}{2q}}\cos\theta,\expo{-\ii\frac{\pi}{2q}}\sin\theta\right]\in L_{2q}\ \right|\ \theta\in\R\right\}\\
\end{aligned}
$$
and $S_0\simeq S_1\simeq \n{S}^1$. 
\end{lemma}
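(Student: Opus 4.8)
The plan is to lift the problem to $\n{S}^3$ and solve the fixed-point equation explicitly. Recall that $L_{2q}=\n{S}^3/\Z_{2q}$, that $\tau$ is induced by the complex conjugation $c(z_0,z_1)=(\overline{z_0},\overline{z_1})$ on $\n{S}^3\subset\C^2$, and that the generator of $\Z_{2q}$ acts by multiplication by $\expo{2\pi\ii/(2q)}=\expo{\ii\pi/q}$. The first step is the translation: a class $[z_0,z_1]$ lies in $L_{2q}^\tau$ if and only if $c(z_0,z_1)$ belongs to the $\Z_{2q}$-orbit of $(z_0,z_1)$, \ie there is an integer $k$, well defined modulo $2q$, with
$$
\overline{z_j}\;=\;\expo{\ii\pi k/q}\,z_j\;,\qquad\quad j=0,1\;.
$$

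The key algebraic observation is that, for such a $k$, the vector $(w_0,w_1)$ with $w_j:=\expo{\ii\pi k/(2q)}z_j$ is real: indeed $\overline{w_j}=\expo{-\ii\pi k/(2q)}\overline{z_j}=\expo{-\ii\pi k/(2q)}\expo{\ii\pi k/q}z_j=\expo{\ii\pi k/(2q)}z_j=w_j$. Since $(w_0,w_1)$ is then a unit vector of $\R^2\subset\C^2$ we may write $(w_0,w_1)=(\cos\theta,\sin\theta)$, so every $\tau$-fixed class has the form $[\expo{-\ii\pi k/(2q)}\cos\theta,\expo{-\ii\pi k/(2q)}\sin\theta]$ for some $k\in\Z$ and $\theta\in\R$; conversely each such point is $\tau$-fixed, because applying $c$ multiplies it by $\expo{\ii\pi k/q}=\expo{2\pi\ii k/(2q)}$, which represents the element $k\in\Z_{2q}$. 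Now acting with the generator $\expo{\ii\pi/q}$ sends $\expo{-\ii\pi k/(2q)}(\cos\theta,\sin\theta)$ to $\expo{-\ii\pi(k-2)/(2q)}(\cos\theta,\sin\theta)$, so modulo $\Z_{2q}$ the class depends only on the parity of $k$. The representative $k=0$ yields $S_0$ and $k=1$ yields $S_1$, whence $L_{2q}^\tau=S_0\cup S_1$.

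It remains to show that $S_0$ and $S_1$ are circles and are disjoint. For $S_0$: I would check that $[\cos\theta,\sin\theta]=[\cos\theta',\sin\theta']$ holds if and only if $(\cos\theta,\sin\theta)=\expo{\ii\pi m/q}(\cos\theta',\sin\theta')$ for some integer $m$; the left-hand side being a nonzero real vector forces $\expo{\ii\pi m/q}\in\{\pm 1\}$, and since the value $-1$ is attained (by $m=q$) the equivalence reduces to $\theta\equiv\theta'\pmod{\pi}$. Thus $\theta\mapsto[\cos\theta,\sin\theta]$ descends to a continuous bijection $\R/\pi\Z\to S_0$ which, $\R/\pi\Z$ being compact and $L_{2q}$ Hausdorff, is a homeomorphism; hence $S_0\simeq\n{S}^1$. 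The same argument, after factoring out the fixed phase $\expo{-\ii\pi/(2q)}$, gives $S_1\simeq\n{S}^1$. Finally, a point of $S_0\cap S_1$ would give $(\cos\theta,\sin\theta)=\expo{\ii\pi(2m-1)/(2q)}(\cos\theta',\sin\theta')$ for some integer $m$, forcing $\expo{\ii\pi(2m-1)/(2q)}\in\{\pm 1\}$, \ie $2q\mid 2m-1$; this is impossible since $2m-1$ is odd, so $S_0\cap S_1=\empt$.

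All the computations involved are elementary; the only delicate point is the modular bookkeeping in passing from solutions on $\n{S}^3$ to classes on $L_{2q}$ — in particular verifying that the genuine invariant of a $\tau$-fixed class is $k\bmod 2$ rather than $k\bmod 2q$, and that the $\Z_{2q}$-action indeed implements the shift $k\mapsto k-2$ on the phase $\expo{-\ii\pi k/(2q)}$ while leaving $\theta$ untouched. I would therefore carry out these identifications explicitly and keep careful track of which residues are being used.
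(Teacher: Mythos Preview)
Your proposal is correct and follows essentially the same approach as the paper's proof: both lift the fixed-point condition to $\n{S}^3$, solve $\overline{z_j}=\expo{\ii\pi k/q}z_j$ by rotating to a real vector, reduce the index $k$ to its parity via the $\Z_{2q}$-action, and rule out $S_0\cap S_1$ by a parity obstruction on the phase. Your version is in fact slightly more thorough in verifying the homeomorphism $S_j\simeq\n{S}^1$ explicitly as $\R/\pi\Z$, whereas the paper simply asserts this identification.
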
 
 \proof

A point $[z_0, w_0] \in X$ is a fixed point under the involution if and only if $\overline{z_0} =  \zeta^k z_0$ and $\overline{z_1} = \zeta^k z_1$ for some $k = 0, 1, \ldots, 2q - 1$, where $\zeta = \expo{\ii\frac{\pi}{q}}$. For convinience, let us introduce the following subset for each $k$:
$$
S_k\;: =\; \left.\left\{ [z_0, z_1] \in L_{2q}\ \right|\ 
(z_0, z_1) \in \n{S}^3,\;\;  
\overline{z_0} = \zeta^kz_0,\;\; \overline{z_0} = \zeta^kz_1 \right\},
$$
so that $L_{2q}^\tau$ is the union of $S_0, S_1, \ldots, S_{p-1}$. It is clear that:
$$
S_0\;=\;\left.\left\{\big[\cos\theta,\sin\theta\big]\in L_{2q}\ \right|\ \theta\in\R\right\}\;
\simeq\; \n{S}^1\;.
$$
Suppose here that $k$ is even, so that $k = 2\ell$. Then $S_0 = S_{2\ell}$. Actually, $[z_0, z_1] \in S_{2\ell}$ if and only if $[\zeta^\ell z_0, \zeta^\ell z_1] \in S_0$. Because $\zeta^\ell \in \Z_{2q}$, we find $S_0 = S_{2\ell}$ in $L_{2q}$. Similarly, in the case of $k = 1$, we obtain:
$$
S_1\; =\; 
\left.\left\{ [\zeta^{-\frac{1}{2}}\cos \theta, \zeta^{-\frac{1}{2}}\sin \theta] 
\in L_{2q}\ \right|\
\theta \in \R \right\}\; \simeq\;  \n{S}^1\;.
$$
 Suppose then that $k$ is odd, so that $k = 2\ell + 1$. In this case, $[z_0, z_1] \in S_{2\ell + 1}$ if and only if $[\zeta^\ell z_0, \zeta^\ell z_1] \in S_1$. Since $\zeta^\ell \in \Z_{2q}$, it holds that $S_1 = S_{2\ell + 1}$ in $L_{2q}$. Notice that $\zeta^{-\frac{1}{2}} \not\in \Z_{2q}$. Note also $-1 \not\in \{  \zeta^{\ell-\frac{1}{2}}\ |\ \ell \in \Z \}$ since $q > 0$. Consequently, $S_0 \cap S_1 = \emptyset$ and the lemma is proved. \qed

\subsection{The equivariant cohomology of $L_{2q}$}
The equivariant cohomology of the involutive space $(L_{2q},\tau)$ can be computed by means of the Gysin exact sequence \cite[Corollary 2.11]{gomi-13} applied to the \virg{Real} principal $\n{U}(1)$-bundle
$L_{2q}\to \C P^1$ described by the \eqref{eq:lensPB}.
 
\medskip
 
First of all one needs the computation of the equivariant cohomology of the space $\C P^1$ endowed with the involution given by the complex conjugation. 
This space, being a spherical conjugation complex \cite{hausmann-holm-puppe-05}, admits the following presentation of the $\Z_2$-equivariant cohomology ring
\begin{equation}\label{eq:ringCP1}
{H}^\bullet_{\Z_2}\big(\C P^1,\Z\big)\;\oplus\;{H}^\bullet_{\Z_2}\big(\C P^1,\Z(1)\big)\simeq\;\Z\big[t^{\frac{1}{2}},c\big]/\big(2t^{\frac{1}{2}},c^2\big)
\end{equation}
where $t^{\frac{1}{2}}\in {H}^1_{\Z_2}(\C P^1,\Z(1))\simeq{H}^1_{\Z_2}(\{\ast\},\Z(1))\simeq \Z_2$
 and $c\in {H}^2_{\Z_2}(\C P^1,\Z(1))\simeq {H}^2(\C P^1,\Z)\simeq\Z$ are basis elements. In particular $c$ can be understood both as the 
the first \virg{Real} Chern class of the \virg{Real} principal $\n{U}(1)$-bundle $\n{S}^2\to \C P^1$ as well as the first  Chern class of the same principal $\n{U}(1)$-bundle without additional structures. Indeed the map which forgets the \virg{Real} structure just acts as the identity $f:{H}^2_{\Z_2}(\C P^1,\Z(1))\to {H}^2(\C P^1,\Z)$. A proof of \eqref{eq:ringCP1} can be derived from \cite[Lemma 2.17]{gomi-13} where the more general case $\C P^\infty$ is considered. Let us  notice that the difference between 
 the cases $\C P^1$ and $\C P^\infty$ is given by the constraint $c^2=0$ in \eqref{eq:ringCP1} due to the low dimensionality of $\C P^1$. In low dimension the \eqref{eq:ringCP1} reads:

 %
 \begin{table}[h]\label{tab:B1}
 \centering
 \begin{tabular}{|c||c|c|c|c|c|}
 \hline
  & $k=0$   & $k=1$ & $k=2$ & $k=3$& $k=4$\\
\hline
 \hline
 \rule[-3mm]{0mm}{9mm}
 ${H}^k_{\Z_2}(\C P^1,\Z(1))$& $0$  & $\Z_2$ $\left[t^{\frac{1}{2}}\right]$ & $\Z$ $[c]$ & $\Z_2$ $\left[t^{\frac{3}{2}}\right]$& $\Z_2$ $[t c]$\\
\hline
 \rule[-3mm]{0mm}{9mm}
${H}^k_{\Z_2}(\C P^1,\Z)$ & $\Z$  & $0$ & $\Z_2$ $[t]$ 
&$\Z_2$ $\left[t^{\frac{1}{2}} c\right]$
& $\Z_2$ $\left[t^2\right]$\\
\hline
\end{tabular}\vspace{2mm}
 \caption{\footnotesize The equivariant cohomology of the involutive space $\C P^1$ with involution given by the complex conjugation up to degree $k=4$. The generators of the groups are listed in the square brackets.}
 \end{table}
%

  \medskip
  
  The data contained in Table 5.1 along with the fact that the first \virg{Real} Chern class of the \virg{Real} principal $\n{U}(1)$-bundle $L_{2q}\to \C P^1$ has value 
  $c^{\rr{R}}_1(L_{2q})=2qc$ can be used in  the Gysin exact sequence \cite[Corollary 2.11]{gomi-13} providing the following computation for the equivariant cohomology of
  $(L_{2q},\tau)$:
  %
 \begin{table}[h]\label{tab:B2}
 \centering
 \begin{tabular}{|c||c|c|c|c|}
 \hline
  & $k=0$   & $k=1$ & $k=2$ & $k=3$\\
\hline
 \hline
 \rule[-3mm]{0mm}{9mm}
 ${H}^k_{\Z_2}(L_{2q},\Z(1))$& $0$  & $\Z_2$ & $\Z_{2q}$  & $\Z_2\oplus \Z_2$ \\
\hline
 \rule[-3mm]{0mm}{9mm}
${H}^k_{\Z_2}(L_{2q},\Z)$ & $\Z$  & $0$ & $\Z_2\oplus \Z_2$ 
& $\Z\oplus \Z_2$
\\
\hline
 \rule[-3mm]{0mm}{9mm}
${H}^k(L_{2q},\Z)$ & $\Z$  & $0$ & $\Z_{2q}$ 
&$\Z$
\\
\hline
\end{tabular}\vspace{2mm}
 \caption{\footnotesize The (equivariant) cohomology of the involutive space $(L_{2q},\tau)$  up to degree $k=3$.}
 \end{table}
%

\noindent
From the exact sequence \cite[Proposition 2.3]{gomi-13} and the data contained in 
Table 5.2 one can conclude that the map which forgets the \virg{Real} structure induces a bijection
\begin{equation}\label{eq:appB_pic_Rpic}
f\;:\;{H}^2_{\Z_2}\big(L_{2q},\Z(1)\big)\;\stackrel{\simeq}{\longrightarrow}\;{H}^2\big(L_{2q},\Z\big)\;.
\end{equation}

\begin{remark}[The  Picard group of $L_{2q}$ and its \virg{Real} structure]\label{rk:app_B}{\upshape
Equation \eqref{eq:appB_pic_Rpic} implies that the Picard group of $L_{2q}$ and the 
\virg{Real} Picard group of the involutive space $(L_{2q},\tau)$ coincide:
$$
{\rm Pic}_{\rr{R}}\big(L_{2q},\tau\big)\;\stackrel{c^{\rr{R}}_1}{\simeq}\;H^2_{\Z_2}\big(L_{2q},\Z(1)\big)
\;\simeq\;\Z_{2q}\;\simeq\;
H^2_{\Z_2}\big(X,\Z(1)\big) \;\stackrel{c_1}{\simeq}\;{\rm Pic}\big(L_{2q}\big)\;.
$$
In particular this means that there are only $2q$ complex line bundles over  $L_{2q}$ (up to isomorphisms) and each one of these  can be endowed with a unique (up to isomorphisms)  $\rr{R}$-structure. The representatives of these line bundles can be constructed explicitly. For $k\in\Z$, we let $u\in\Z_{2q}$ act on $\n{S}^3\times\C$ by $((z_0,z_1),\lambda)\mapsto ((uz_0,uz_1),\overline{u}^k\lambda)$.
Since the action is free on the base space the quotient defines a complex line bundle  
$\bb{L}_k\to L_{2q}$ (\cf \cite[Proposition 1.6.1]{atiyah-67}). From the construction it results evident that $\bb{L}_k=\bb{L}_{k+2q}$ and $\bb{L}_0=L_{2q}\times \C$ is the trivial line bundle. Moreover, $\bb{L}_1$ provides a basis for ${\rm Pic}(L_{2q})\simeq\Z_{2q}$ in view of the fact that ${\bb{L}_1}^{\otimes k}\simeq \bb{L}_k$. The  $\rr{R}$-structure on $\bb{L}_k$ is evidently induced by the complex conjugation $[(z_0,z_1),\lambda]\mapsto[\tau{(z_0,z_1)},\overline{\lambda}]=[\overline{(z_0,z_1)},\overline{\lambda}]$.
}\hfill $\blacktriangleleft$
\end{remark}

   \medskip

The circle $\n{S}^1$ with trivial involution has the  cohomology groups  presented in Table 5.3.
 \begin{table}[h]\label{tab:B3}
 \centering
 \begin{tabular}{|c||c|c|c|c|}
 \hline
  & $k=0$   & $k=1$ & $k=2$ & $k=3$\\
\hline
 \hline
 \rule[-3mm]{0mm}{9mm}
 ${H}^k_{\Z_2}(\n{S}^1,\Z(1))$& $0$  & $\Z_2$ & $\Z_2$  & $\Z_2$ \\
\hline
 \rule[-3mm]{0mm}{9mm}
${H}^k_{\Z_2}(\n{S}^1,\Z)$ & $\Z$  & $\Z$ & $\Z_2$ 
& $ \Z_2$
\\
\hline
 \rule[-3mm]{0mm}{9mm}
${H}^k(\n{S}^1,\Z)$ & $\Z$  & $\Z$ & $0$ 
&$0$
\\
\hline
\end{tabular}\vspace{2mm}
 \caption{\footnotesize The (equivariant) cohomology of the circle $\n{S}^1$  with trivial involution up to degree $k=3$. The singular cohomology ${H}^\bullet(\n{S}^1,\Z)$ is well-know, the cohomology ${H}^\bullet_{\Z_2}(\n{S}^1,\Z)\simeq {H}^\bullet(\n{S}^1\times \R P^\infty,\Z)$ can be computed by means of the K\"{u}nneth formula  and  ${H}^\bullet_{\Z_2}(\n{S}^1,\Z(1))$ can be computed with the help of the exact sequences in \cite[Proposition 2.3]{gomi-13}.}
 \end{table}

\noindent 
In view of Lemma \ref{lemma:B_fixed}, the additivity axiom in equivariant cohomology and Table 5.3, one immediately gets Table 5.4.
%
 \begin{table}[h]\label{tab:B4}
 \centering
 \begin{tabular}{|c||c|c|c|c|}
 \hline
  & $k=0$   & $k=1$ & $k=2$ & $k=3$\\
\hline
 \hline
 \rule[-3mm]{0mm}{9mm}
 ${H}^k_{\Z_2}(L_{2q}^\tau,\Z(1))$& 
$0$  & $\Z_2 \oplus \Z_2$ & $\Z_2 \oplus \Z_2$  & $\Z_2\oplus\Z_2$ \\
\hline
\end{tabular}\vspace{2mm}
 \caption{\footnotesize The equivariant cohomology of the fixed-point set $L_{2q}^\tau\simeq\n{S}^1\sqcup\n{S}^1$  up to degree $k=3$.}
 \end{table}
%

\noindent
The bases of ${H}^1_{\Z_2}(L_{2q},\Z(1))\simeq\Z_2$ and ${H}^1_{\Z_2}(L_{2q}^\tau,\Z(1))={H}^1_{\Z_2}(S_0,\Z(1))\oplus{H}^1_{\Z_2}(S_1,\Z(1))\simeq \Z_2\oplus\Z_2$ are represented by the constants functions with value $-1$ on $L_{2q}$, $S_0$ and $S_1$. Hence the homomorphism
$$
\imath^*\;:\;{H}^1_{\Z_2}(L_{2q},\Z(1))\;\stackrel{\text{inj.}}{\longrightarrow} \; {H}^1_{\Z_2}(L_{2q}^\tau,\Z(1))
$$
induced by the inclusion $\imath: L_{2q}^\tau\to  L_{2q}$, is identified with the diagonal map $\Z_2\ni \epsilon\mapsto (\epsilon,\epsilon)\in \Z_2\oplus\Z_2$, and results injective. Consequently one has   
$$
{\rm Coker}^1\big(L_{2q}|L_{2q}^\tau,\Z(1)\big)\;:=\;H^1_{\Z_2}\big(L_{2q}^\tau,\Z(1)\big)\;/\;\imath^*\big(H^1_{\Z_2}(L_{2q},\Z(1))\big)\;\simeq\;\Z_2\;.
$$
The homomorphism
$$
\imath^*\;:\;{H}^2_{\Z_2}(L_{2q},\Z(1))\;\stackrel{0}{\longrightarrow} \; {H}^2_{\Z_2}(L_{2q}^\tau,\Z(1))
$$
turns out to be trivial. This fact can be proved as follow: According to Remark \ref{rk:app_B} the $\rr{R}$-line bundle ${\bb{L}_1}$ can be identified with the generator of ${H}^2_{\Z_2}(L_{2q},\Z(1))\simeq {\rm Pic}_{\rr{R}}(L_{2q},\tau)$. Then, if one proves that $\imath^*({\bb{L}_1})={\bb{L}_1}|_{S_0\sqcup S_1}$ is the trivial line bundle in ${\rm Pic}_{\rr{R}}(L_{2q}^\tau,\tau)\simeq {\rm Pic}_{\R}(L_{2q}^\tau)$
then one also obtains from ${H}^2_{\Z_2}(L_{2q}^\tau,\Z(1))\simeq {\rm Pic}_{\rr{R}}(L_{2q}^\tau,\tau)$ that $\imath^*$ is the trivial homomorphism in cohomology.
The triviality of $\imath^*({\bb{L}_1})$ is a  consequence of the next result.
\begin{lemma}\label{lemma:appB_triv_hom}
Let $\bb{L}_1\to L_{2q}$ be the $\rr{R}$-line bundle over the involutive space $(L_{2q},\tau)$ described in Remark \ref{rk:app_B} and $S_0\sqcup S_1= L_{2q}^\tau$ the fixed-point set according to Lemma \ref{lemma:B_fixed}. Let $\bb{L}_1|_{S_j}\to S_j$ be the restriction of $\bb{L}_1$ over $S_j$, $j=0,1$.
The $\rr{R}$-line bundles $\bb{L}_1|_{S_j}$ admit global $\rr{R}$-sections and consequently they  are trivial.
\end{lemma}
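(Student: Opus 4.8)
The plan is to exhibit explicit nowhere-vanishing $\rr{R}$-sections of $\bb{L}_1|_{S_0}$ and $\bb{L}_1|_{S_1}$ using the concrete description of $\bb{L}_1$ given in Remark \ref{rk:app_B}. Recall that $\bb{L}_1$ is the quotient of $\n{S}^3\times\C$ by the $\Z_{2q}$-action $((z_0,z_1),\lambda)\mapsto((uz_0,uz_1),\overline{u}\lambda)$ with $u$ a $2q$-th root of unity, and the $\rr{R}$-structure is induced by $[(z_0,z_1),\lambda]\mapsto[\overline{(z_0,z_1)},\overline{\lambda}]$. A continuous map $s:S_j\to\bb{L}_1|_{S_j}$ is then the same as a $\Z_{2q}$-equivariant map $\sigma:\widetilde{S_j}\to\C$ on the preimage $\widetilde{S_j}\subset\n{S}^3$, satisfying $\sigma(uz)=\overline{u}\,\sigma(z)$; and it is an $\rr{R}$-section precisely when in addition $\sigma(\overline z)=\overline{\sigma(z)}$.

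For $S_0$ the preimage $\widetilde{S_0}$ is the set of $(\expo{\ii\alpha}\cos\theta,\expo{\ii\alpha}\cos\theta)$... more carefully, $\widetilde{S_0}$ consists of points $u\cdot(\cos\theta,\sin\theta)$ with $u\in\Z_{2q}$ and $\theta\in\R$; writing such a point as $(w_0,w_1)$ with $w_0,w_1$ having the same argument $\alpha\in\{0,\tfrac{\pi}{q},\tfrac{2\pi}{q},\dots\}$, I would set $\sigma(w_0,w_1):=\expo{-\ii\alpha}$. This is well defined, manifestly satisfies $\sigma(uz)=\overline u\,\sigma(z)$, is continuous, nowhere vanishing, and obeys $\sigma(\overline z)=\expo{\ii\alpha}=\overline{\expo{-\ii\alpha}}=\overline{\sigma(z)}$; hence it descends to a global $\rr{R}$-section of $\bb{L}_1|_{S_0}$. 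For $S_1$ the argument is identical after the shift: $\widetilde{S_1}$ consists of points whose common argument $\alpha$ lies in $-\tfrac{\pi}{2q}+\tfrac{\pi}{q}\Z$, and the same formula $\sigma(w_0,w_1):=\expo{-\ii\alpha}$ gives an $\rr{R}$-section. Once the existence of a nowhere-vanishing $\rr{R}$-section is established, triviality of the $\rr{R}$-line bundles $\bb{L}_1|_{S_j}$ follows from \cite[Theorem 4.8]{denittis-gomi-14} (a Hermitian $\rr{R}$-line bundle admitting a unit-length $\rr{R}$-section is $\rr{R}$-trivial), after normalizing $\sigma$ to unit length, which changes nothing since $|\expo{-\ii\alpha}|=1$ already.

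The only genuinely delicate point is checking well-definedness of $\sigma$ on $\widetilde{S_j}$: a point of $\widetilde{S_0}$ can be represented as $u\cdot(\cos\theta,\sin\theta)$ in more than one way (for instance $\theta$ and $\theta+\pi$ together with a sign change of $u$), so one must verify that the common argument $\alpha$ of the two coordinates, taken as an element of $\R/2\pi\Z$ restricted to the relevant $2q$-element subset, is unambiguous, and that $\expo{-\ii\alpha}$ transforms correctly under the residual $\Z_{2q}$-action. This is a short but necessary bookkeeping check; I expect it to be the main (minor) obstacle, and it is where the hypothesis $q>0$ — which in Lemma \ref{lemma:B_fixed} already guaranteed $S_0\cap S_1=\emptyset$ and $-1\notin\{\zeta^{\ell-\frac12}\}$ — is implicitly used to keep the two strata separate. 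Everything else is routine.
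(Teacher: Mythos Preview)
Your approach matches the paper's: both write down an explicit candidate $\rr{R}$-section of $\bb{L}_1|_{S_j}$ using the quotient description of $\bb{L}_1$. However, the ``bookkeeping check'' you flag as minor is in fact fatal, and the paper's own formula has the same defect.

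The point is that $-1=\zeta^q\in\Z_{2q}$. Hence the parametrization $(u,\theta)\mapsto u\cdot(\cos\theta,\sin\theta)$ of $\widetilde{S_0}$ identifies $(u,\theta)$ with $(-u,\theta+\pi)$, so your ``common argument $\alpha$'' is only determined modulo~$\pi$ and $\expo{-\ii\alpha}$ only up to sign. (Independently, $u\cos\theta$ and $u\sin\theta$ do not share an argument when $\cos\theta$ and $\sin\theta$ have opposite signs, so the phrase ``common argument'' is already problematic.) More decisively: $\widetilde{S_0}$ is a disjoint union of $q$ circles, each stabilized by $\{\pm1\}\subset\Z_{2q}$; on the real circle $C_0=\{(\cos\theta,\sin\theta)\}$ the equivariance $\sigma((-1)z)=\overline{(-1)}\,\sigma(z)=-\sigma(z)$ together with the Real condition $\sigma(\bar z)=\overline{\sigma(z)}$ (which on $C_0$ forces $\sigma$ real-valued) makes $\sigma|_{C_0}$ a continuous odd real function on a circle, and any such function must vanish somewhere by the intermediate value theorem. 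So no nowhere-vanishing $\rr{R}$-section of $\bb{L}_1|_{S_0}$ can exist: the restriction is the complexified M\"obius $\rr{R}$-line bundle, not the trivial one. The paper's section $s_0([\cos\theta,\sin\theta]):=[(\cos\theta,\sin\theta),1]$ is ill defined for exactly the same reason, since $[\cos\theta,\sin\theta]=[\cos(\theta+\pi),\sin(\theta+\pi)]$ in $L_{2q}$ whereas $[(\cos\theta,\sin\theta),1]\neq[(\cos(\theta+\pi),\sin(\theta+\pi)),1]=[(\cos\theta,\sin\theta),-1]$ in $\bb{L}_1$.
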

\proof
The total space of the line bundle $\bb{L}_1|_{S_0}\to S_0$ is described by
$$
\bb{L}_1|_{S_0}\;:=\;\big\{[(\cos\theta,\sin\theta),\lambda]\; |\; \theta\in\R\;, \lambda\in\C\big\}
$$
where the equivalence relation is given by $((\cos\theta,\sin\theta),\lambda)\sim ((u\cos\theta,u\sin\theta),u^{-1}\lambda)$, $u\in \Z_{2q}$. The map 
$s_0:S_0\to \bb{L}_1|_{S_0}$ defined by
$$
s_0\big([\cos\theta,\sin\theta]\big)\;:=\;[(\cos\theta,\sin\theta),1]
$$ 
is a nowhere vanishing section. It is also \virg{Real}, in fact
$$
\tau_\Theta(s_0)\big([\cos\theta,\sin\theta]\big)\;=\;\big[\tau(\cos\theta,\sin\theta), 1 \big]\;=\;\big[(\cos\theta,\sin\theta), 1 \big]
\;=\;s_0\big([\cos\theta,\sin\theta]\big)\;.
$$
In much of the same way, we can describe the total space of the line bundle $\bb{L}_1|_{S_1}\to S_1$ as
$$
\bb{L}_1|_{S_1}\;:=\;\left\{\left[\left(\expo{-\ii\frac{\pi}{2q}}\cos\theta,\expo{-\ii\frac{\pi}{2q}}\sin\theta\right),\lambda\right]\ \Big|\ \; \theta\in\R\;, \lambda\in\C\right\}\;.
$$
A nowhere vanishing section $s_1:S_1\to \bb{L}_1|_{S_1}$ is given by
$$
s_1\left(\left[\expo{-\ii\frac{\pi}{2q}}\cos\theta,\expo{-\ii\frac{\pi}{2q}}\sin\theta\right]\right)\;:=\;\left[\left(\expo{-\ii\frac{\pi}{2q}}\cos\theta,\expo{-\ii\frac{\pi}{2q}}\sin\theta\right),\expo{+\ii\frac{\pi}{2q}}\right]\;.
$$
Moreover the following computation
$$
\begin{aligned}
\tau_\Theta(s_1)
\left(\left[\expo{-\ii\frac{\pi}{2q}}\cos\theta,
\expo{-\ii\frac{\pi}{2q}}\sin\theta\right]\right)\;
&=\;
\left[\tau\left(\expo{-\ii\frac{\pi}{2q}}\cos\theta,
\expo{-\ii\frac{\pi}{2q}}\sin\theta\right), 
\expo{-\ii\frac{\pi}{2q}}\right]\\
&=\;
\left[\left(\expo{\ii\frac{\pi}{q}} \expo{-\ii\frac{\pi}{2q}}\cos\theta,
\expo{\ii\frac{\pi}{q}} \expo{-\ii\frac{\pi}{2q}}\sin\theta\right), 
\expo{-\ii\frac{\pi}{q}} \expo{+\ii\frac{\pi}{2q}}\right]\\
&=\;
\left[\left(\expo{-\ii\frac{\pi}{2q}}\cos\theta,\expo{-\ii\frac{\pi}{2q}}\sin\theta\right), \expo{+\ii\frac{\pi}{2q}}\right]\\
&=\;
s_1\left(\left[\expo{-\ii\frac{\pi}{2q}}\cos\theta,\expo{-\ii\frac{\pi}{2q}}\sin\theta\right]\right)
\end{aligned}
$$
shows that $s_1$ is also \virg{Real}.\qed
    
    \medskip
    
 Specializing the exact sequence \eqref{eq:Long_seq} to our situation one obtains
\begin{equation*}
\ldots\;H^1_{\Z_2}\big(L_{2q},{\Z}(1)\big)\;\stackrel{\imath^*}{\rightarrow}\;H^1_{\Z_2}\big(L_{2q}^\tau,{\Z}(1)\big)\;\stackrel{\delta_1}{\rightarrow}\;H^2_{\Z_2}\big(L_{2q}|L_{2q}^\tau,{\Z}(1)\big)\;\stackrel{\delta_2}{\rightarrow}\;H^2_{\Z_2}\big(L_{2q},{\Z}(1)\big)\;\stackrel{\imath^*}{\rightarrow}\;H^2_{\Z_2}\big(L_{2q}^\tau,{\Z}(1)\big)\;\ldots\;
\end{equation*}
and this reduces to     
\begin{equation*}
0\;\rightarrow\;H^1_{\Z_2}\big(L_{2q},{\Z}(1)\big)\;\stackrel{}{\rightarrow}\;H^1_{\Z_2}\big(L_{2q}^\tau,{\Z}(1)\big)\;\stackrel{\delta_1}{\rightarrow}\;H^2_{\Z_2}\big(L_{2q}|L_{2q}^\tau,{\Z}(1)\big)\;\stackrel{\delta_2}{\rightarrow}\;H^2_{\Z_2}\big(L_{2q},{\Z}(1)\big)\;\rightarrow\;0
\end{equation*}
in view of the properties of the maps $\imath^*$.  The last exact sequence is equivalent to the short  exact sequence 
\begin{equation*}
0\;\rightarrow\;{\rm Coker}^1\big(L_{2q}|L_{2q}^\tau,\Z(1)\big)\;\stackrel{\delta_1}{\rightarrow}\;H^2_{\Z_2}\big(L_{2q}|L_{2q}^\tau,{\Z}(1)\big)\;\stackrel{\delta_2}{\rightarrow}\;H^2_{\Z_2}\big(L_{2q},{\Z}(1)\big)\;\rightarrow\;0
\end{equation*}
and by inserting the numerical values one finally obtains    
\begin{equation}\label{eq:short_ex_Z8}
0\;\rightarrow\;\Z_2\;\stackrel{}{\rightarrow}\;H^2_{\Z_2}\big(L_{2q}|L_{2q}^\tau,{\Z}(1)\big)\;\stackrel{\delta_2}{\rightarrow}\;\Z_{2q}\;\rightarrow\;0\;.
\end{equation}
Since $\text{Ext}(\Z_{2q},\Z_2)\simeq\Z_2$ the  short  exact sequence \eqref{eq:short_ex_Z8} implies only two possibilities: 
\begin{itemize}
\item[(a)] $H^2_{\Z_2}(L_{2q}|L_{2q}^\tau,{\Z}(1))\simeq\Z_2\oplus\Z_{2q}$ \emph{(splitting case)};
\vspace{1.0mm}
\item[(b)] $H^2_{\Z_2}(L_{2q}|L_{2q}^\tau,{\Z}(1))\simeq\Z_{4q}$ \emph{(non-splitting case)}.
\end{itemize}
\begin{proposition}\label{prop:B_Z_8}
The short  exact sequence \eqref{eq:short_ex_Z8} is \emph{non-splitting}. Therefore it holds true that
$$
H^2_{\Z_2}\big(L_{2q}|L_{2q}^\tau,{\Z}(1)\big)\;\simeq\;\Z_{4q}\;.
$$
\end{proposition}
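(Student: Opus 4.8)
The plan is to produce an explicit element of order $4q$ in $H^2_{\Z_2}(L_{2q}|L_{2q}^\tau,\Z(1))$. Since the exact sequence \eqref{eq:short_ex_Z8} forces this group to have order $4q$ in both cases (a) and (b), exhibiting an element of order $4q$ rules out the splitting case (a) and leaves only $H^2_{\Z_2}(L_{2q}|L_{2q}^\tau,\Z(1))\simeq\Z_{4q}$. It is convenient to argue through the isomorphism $\tilde{\kappa}$ of Proposition \ref{prop:nat_iso}, working in the relative \virg{Real} Picard group ${\rm Vec}_{\rr{R}}^1(L_{2q}|L_{2q}^\tau,\tau)$: its elements are pairs $(\bb{L},s)$ with $\bb{L}$ an $\rr{R}$-line bundle over $L_{2q}$ and $s$ an $\rr{R}$-trivialization of $\bb{L}|_{L_{2q}^\tau}$, the group law is the tensor product, $\delta_2$ corresponds (under $\tilde\kappa$ and the Kahn isomorphism) to the forgetful map $(\bb{L},s)\mapsto\bb{L}$, and $\delta_1$ sends an equivariant function $\phi\colon L_{2q}^\tau\to\{\pm1\}$ to the pair $(L_{2q}\times\C,\phi)$ consisting of the product $\rr{R}$-line bundle together with $\phi$ as section over $L_{2q}^\tau$, well defined modulo the functions that extend equivariantly to all of $L_{2q}$.

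The candidate is $\eta:=(\bb{L}_1,s)$, where $\bb{L}_1$ is the generator of ${\rm Pic}_{\rr{R}}(L_{2q},\tau)\simeq H^2_{\Z_2}(L_{2q},\Z(1))\simeq\Z_{2q}$ from Remark \ref{rk:app_B}, and $s=(s_0,s_1)$ is the $\rr{R}$-trivialization of $\bb{L}_1|_{L_{2q}^\tau}=\bb{L}_1|_{S_0}\sqcup\bb{L}_1|_{S_1}$ given by the explicit \virg{Real} sections $s_0,s_1$ built in the proof of Lemma \ref{lemma:appB_triv_hom}. Since $\delta_2(\eta)=\bb{L}_1$ is a generator of $\Z_{2q}$, the order of $\eta$ is a multiple of $2q$, hence equal to $2q$ or $4q$. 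The decisive step is to compute $\eta^{\otimes 2q}=(\bb{L}_1^{\otimes 2q},s^{\otimes 2q})$ and to show that it is \emph{nonzero}. Using the identifications $\bb{L}_1^{\otimes 2q}\simeq\bb{L}_{2q}\simeq L_{2q}\times\C$ (the last one because $u^{2q}=1$ for $u\in\Z_{2q}$, cf. Remark \ref{rk:app_B}) together with the formulas for $s_0$ and $s_1$, a direct computation gives that $s_0^{\otimes 2q}$ is the constant section $+1$ over $S_0$, while $s_1^{\otimes 2q}$ is the constant section $\big(\expo{\ii\frac{\pi}{2q}}\big)^{2q}=\expo{\ii\pi}=-1$ over $S_1$. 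In other words $\eta^{\otimes 2q}=\delta_1(\phi)$ with $\phi|_{S_0}=+1$ and $\phi|_{S_1}=-1$.

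Finally, $\eta^{\otimes 2q}$ does not vanish precisely because $\phi$ is not equivariantly extendable to $L_{2q}$, i.e. because its class is nonzero in ${\rm Coker}^1(L_{2q}|L_{2q}^\tau,\Z(1))=H^1_{\Z_2}(L_{2q}^\tau,\Z(1))/\imath^*H^1_{\Z_2}(L_{2q},\Z(1))\simeq\Z_2$; and this is immediate from the identification of $\imath^*\colon H^1_{\Z_2}(L_{2q},\Z(1))\simeq\Z_2\to H^1_{\Z_2}(S_0,\Z(1))\oplus H^1_{\Z_2}(S_1,\Z(1))\simeq\Z_2\oplus\Z_2$ with the diagonal map, already established above: the class of $\phi$ is $(0,1)$, which lies outside the diagonal. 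Hence $\eta^{\otimes 2q}\neq0$, the order of $\eta$ in $H^2_{\Z_2}(L_{2q}|L_{2q}^\tau,\Z(1))$ is $4q$, and therefore $H^2_{\Z_2}(L_{2q}|L_{2q}^\tau,\Z(1))\simeq\Z_{4q}$, i.e. \eqref{eq:short_ex_Z8} is non-splitting. The only point requiring genuine care is the tracking of $s_0$ and $s_1$ through the chain of isomorphisms $\bb{L}_1^{\otimes 2q}\simeq\bb{L}_{2q}\simeq L_{2q}\times\C$; everything else is a formal consequence of \eqref{eq:short_ex_Z8} together with facts already established for $(L_{2q},\tau)$.
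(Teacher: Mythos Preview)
Your proof is correct and follows essentially the same approach as the paper: both construct the element $(\bb{L}_1, s_0\sqcup s_1)$ in the relative \virg{Real} Picard group using the sections from Lemma \ref{lemma:appB_triv_hom}, compute its $2q$-th tensor power to be $(\bb{L}_0, (+1)\sqcup(-1))$ via $(\expo{\ii\pi/2q})^{2q}=-1$, and observe that this is the nonzero element in the image of $\delta_1$. The only difference is cosmetic: the paper frames the argument as showing that the set-theoretic section $\sigma:\bb{L}_k\mapsto(\bb{L}_1^{\otimes k},s_0^{\otimes k}\sqcup s_1^{\otimes k})$ fails to be a group homomorphism, while you frame it as exhibiting an element of order $4q$; these are equivalent formulations of the same computation.
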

\proof
According to \cite[Proposition 2.7]{denittis-gomi-16} the group $H^2_{\Z_2}(L_{2q}|L_{2q}^\tau,{\Z}(1))$ classifies the pairs $(\bb{L},s)$ where $\bb{L}\to L_{2q}$ is a line bundle with $\rr{R}$-structure and $s:L_{2q}^\tau\to\bb{L}|_{\bb{L}}$ is a nowhere vanishing $\rr{R}$-section. In the proof of Lemma \ref{lemma:appB_triv_hom} we already constructed one of such  pairs $(\bb{L}_1,s_0\sqcup s_1)$.
The assignment $\bb{L}_k\simeq\bb{L}_1^{\otimes k}\to (\bb{L}_1^{\otimes k},s_0^{\otimes k}\sqcup s_1^{\otimes k})$ defines a section $\sigma:H^2_{\Z_2}(L_{2q},{\Z}(1))\to H^2_{\Z_2}(L_{2q}|L_{2q}^\tau,{\Z}(1))$. Then $\sigma(\bb{L}_1^{\otimes 2q})\in H^2_{\Z_2}(L_{2q}|L_{2q}^\tau,{\Z}(1))$ is the trivial element if and only if the  short  exact sequence \eqref{eq:short_ex_Z8} is splitting. Since $(\expo{+\ii\frac{\pi}{2q}})^{2q}=-1$  one immediately checks that
$$
\sigma(\bb{L}_1^{\otimes 2q})\;\simeq\;\left(\bb{L}_1^{\otimes 2q},s_0^{\otimes 2q}\sqcup s_1^{\otimes 2q}\right)\;\simeq\;\big(\bb{L}_0,(+1)\sqcup (-1)\big)
$$
where $\bb{L}_0\simeq L_{2q}\times\C$ is the trivial $\rr{R}$-line bundle and
$(+1):S_0\to\{+1\}$ and $(-1):S_1\to\{-1\}$ are constant  sections on $S_0$ and $S_1$. The pair $(\bb{L}_0,(+1)\sqcup (-1))$ generates the image of $\delta_1:H^1_{\Z_2}(L_{2q}^\tau,{\Z}(1))\to H^2_{\Z_2}(L_{2q}|L_{2q}^\tau,{\Z}(1))$. Hence 
$\sigma(\bb{L}_1^{\otimes 2q})$ is non-trivial and the  short  exact sequence \eqref{eq:short_ex_Z8} turns out to be non-splitting.
\qed


\subsection{\virg{Quaternionic} structures over $L_{2q}$} 

To carry out the classification of $\rr{Q}$-bundles over  
$L_{2q}$ we will use the fact that this space can be seen as the total space of an $\rr{R}$-sphere bundle over  
 $\C P^1$ whose Chern number is $2q$.  Let us start by observing that $\C P^1$ is topologically equivalent to a (Riemann) sphere
 $\n{S}^2$. Therefore $\C P^1$ can be  represented  as  the union of two two-dimensional disks $\n{D}^2$ (the two hemispheres) identified along 
the  equator. Moreover,  $\C P^1$ has the involution given by the complex conjugation.
After identifying the disk $\n{D}^2$ with the unit ball in $\C$ we can endow $\n{D}^2$ with the involution given by the complex conjugation. We will denote this involutive space with $\tilde{\n{D}}^2$. In this way we can represent the involutive space $\C P^1$ as $\C P^1\simeq \tilde{\n{D}}^2\cup \tilde{\n{D}}^2$. Let ${\n{S}}^{1,1}:=\partial\tilde{\n{D}}^2$ be the boundary of $\tilde{\n{D}}^2$ with involution given by the complex conjugation and consider the involutive space $X:=\tilde{\n{D}}^2\times{\n{S}}^{1,1}\simeq{\n{S}}^{1,1}$ where the last isomorphism is given by an equivariant homotopy. Observe that $\partial X= {\n{S}}^{1,1}\times{\n{S}}^{1,1}=:{\n{T}}^{0,2,0}$. The structure of the involutive space $(L_{2q},\tau)$ can be usefully described by means of the  equivariant clutching construction that provides
\begin{equation}\label{eq:split_L4}
L_{2q}\;\simeq\; X_1\;\cup_f\; X_2
\end{equation}
where $X_1=X_2$ are two copies of $X$ and the clutching function $f:\partial X_1\to \partial X_2$, defined by $f(z,\lambda):=(z,z^{2q}\lambda)$, takes care of the fact that 
the Chern class of $L_{2q}$ is $2q$.

\medskip

The representation \eqref{eq:split_L4} results quite useful to classify vector bundles over $L_{2q}$. Let us start with the classification of line bundles. According to the content of Remark \ref{rk:app_B} we already know that there are only $2q$ different classes of complex line bundles over $L_{2q}$ and each one of this classes admits a unique (up to isomorphisms) \virg{Real} structure. However, the result  ${\rm Pic}_{\rr{R}}(L_{2q},\tau)\simeq H^2_{\Z_2}(L_{2q},\Z(1))
\simeq\Z_{2q}$ can be re-obtained  by using the equivariant clutching construction for the representation \eqref{eq:split_L4}.
Since $H^2_{\Z_2}(X,\Z(1))\simeq H^2_{\Z_2}({\n{S}}^{1,1},\Z(1))=0$ (\cf \cite[eq. (5.17)]{denittis-gomi-14}) one concludes that $X_1$ and $X_2$ admit only trivial 
$\rr{R}$-line bundles. This observation allows to apply  the equivariant clutching construction in the form of \cite[Lemma 4.18]{denittis-gomi-14}:
\begin{equation}\label{eq:rep_pic_rel}
{\rm Pic}_{\rr{R}}\big(L_{2q},\tau\big)\;\simeq\;\big[X_1,\tilde{\n{U}}(1)\big]_{\Z_2}\; \backslash\;
\big[{\n{T}}^{0,2,0},\tilde{\n{U}}(1)\big]_{\Z_2}\;/\;
\big[X_2,\tilde{\n{U}}(1)\big]_{\Z_2}
\end{equation}
where $\tilde{\n{U}}(1)$ is the unitary group endowed by the involution given by the complex conjugation and the maps $[\psi_j]\in[X_j,\tilde{\n{U}}(1)]_{\Z_2}$, $j=1,2$
act on $[\varphi]\in [{\n{T}}^{0,2,0},\tilde{\n{U}}(1)]_{\Z_2}$ by
$$
\varphi(z,\lambda)\;\longmapsto\; \left((\psi_1|_{\partial X_1})^{-1}\cdot\varphi\cdot (f^*\psi_2|_{\partial X_2})\right)(z,\lambda)\;=\;\varphi(z,\lambda)\;\frac{\psi_2(z,z^{2q}\lambda)}{\psi_1(z,\lambda)}\;,\qquad (z,\lambda)\in{\n{T}}^{0,2,0}\;.
$$
By means of the isomorphisms (\cf \cite[eq. (5.9) and (5.17) and ]{denittis-gomi-14})
$$
\begin{aligned}
\big[X_j,\tilde{\n{U}}(1)\big]_{\Z_2}\;&\simeq\;H^1_{\Z_2}\big(X_j,\Z(1)\big)\;\simeq\;H^1_{\Z_2}\big({\n{S}}^{1,1},\Z(1)\big)\;\simeq\;\Z_2\;\oplus\;\Z\\
\big[{\n{T}}^{0,2,0},\tilde{\n{U}}(1)\big]_{\Z_2}\;&\simeq\;H^1_{\Z_2}\big({\n{T}}^{0,2,0},\Z(1)\big)\;\simeq\;\Z_2\;\oplus\;\Z^2\\
\end{aligned}
$$
one can directly compute ${\rm Pic}_{\rr{R}}(L_{2q},\tau)$ from the representation (\ref{eq:rep_pic_rel}).
Observe that bases for the summands $\Z$ are given by the projections $X_j\to {\n{S}}^{1,1}\simeq \tilde{\n{U}}(1)$ and bases for the summands $\Z_2$ 
are provided by the constant maps at $\{-1\}\in{\n{S}}^{1,1}\simeq \tilde{\n{U}}(1)$. With this information one can eventually verify that the map
$$
\big[X_1,\tilde{\n{U}}(1)\big]_{\Z_2}\; \backslash\;
\big[{\n{T}}^{0,2,0},\tilde{\n{U}}(1)\big]_{\Z_2}\;/\;
\big[X_2,\tilde{\n{U}}(1)\big]_{\Z_2}\;\longrightarrow\;\Z_{2q}
$$
induced by
$$
[\varphi]\;\longmapsto\;{\rm deg}\big(\varphi|_{\tilde{\n{S}}^1\times\{1\}}\big)\;\; \text{\rm mod.}\;\; 2q
$$
is in fact a bijection.

\medskip

Now we are in the position to classify $\rr{Q}$-bundles over $(L_{2q},\tau)$. We already know that ${\rm Vec}_{\rr{Q}}^{2}(X_j)\simeq
{\rm Vec}_{\rr{Q}}^{2}({\n{S}}^{1,1})=0$ \cite[Theorem 1.2 (ii)]{denittis-gomi-14-gen}. Hence any rank 2 $\rr{Q}$-bundles on $L_{2q}$ can be constructed by gluing the trivial product bundles on $X_j\simeq {\n{S}}^{1,1}$ along $\partial X_j\simeq {\n{T}}^{0,2,0}$. This \virg{Quaternionic} clutching
construction leads to a formula of the type \eqref{eq:rep_pic_rel}, namely
\begin{equation}\label{eq:rep_rk2_QUat}
{\rm Vec}_{\rr{Q}}^{2}\big(L_{2q},\tau\big)\;\simeq\;\big[X_1,\hat{\n{U}}(2)\big]_{\Z_2}\; \backslash\;
\big[{\n{T}}^{0,2,0},\hat{\n{U}}(2)\big]_{\Z_2}\;/\;
\big[X_2,\hat{\n{U}}(2)\big]_{\Z_2}
\end{equation}
where $\hat{\n{U}}(2)$ is the space ${\n{U}}(2)$ endowed with the involution \eqref{eq:mu-invol}
 and the maps $[\psi_j]\in[X_j,\hat{\n{U}}(2)]_{\Z_2}$, $j=1,2$
act on $[\varphi]\in [{\n{T}}^{0,2,0},\hat{\n{U}}(2)]_{\Z_2}$ by
$$
\varphi(z,\lambda)\;\longmapsto\; \left((\psi_1|_{\partial X_1})^{-1}\cdot\varphi\cdot (f^*\psi_2|_{\partial X_2})\right)(z,\lambda)\;=\;{\psi_1(z,\lambda)^{-1}}\cdot\varphi(z,\lambda)\cdot{\psi_2(z,z^{2q}\lambda)}\;,\qquad (z,\lambda)\in{\n{T}}^{0,2,0}\;.
$$
We need two technical results.

\begin{lemma}\label{lemma:tec_B_01}
The sequence of maps
$$
\big[X_j,\hat{\n{U}}(2)\big]_{\Z_2}\;\simeq\;\big[{\n{S}}^{1,1},\hat{\n{U}}(2)\big]_{\Z_2}\; \stackrel{\rm det}{\longrightarrow}\;\big[{\n{S}}^{1,1},\tilde{\n{U}}(1)\big]_{\Z_2}
\; \stackrel{\rm deg}{\longrightarrow}\;2\Z
$$
induces an isomorphism of groups.
\end{lemma}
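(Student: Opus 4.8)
The plan is to compute each of the two maps in the composition separately and then identify the composite with the inclusion $2\Z\hookrightarrow\Z$. First I would analyze the determinant map $[{\n{S}}^{1,1},\hat{\n{U}}(2)]_{\Z_2}\stackrel{\rm det}{\longrightarrow}[{\n{S}}^{1,1},\tilde{\n{U}}(1)]_{\Z_2}$. As in the proof of the earlier lemma about $\tilde{\n{V}}^2_\lambda$, the relevant splitting is the one from \cite[Lemma 4.19]{denittis-gomi-14}: the involution $\mu$ on $\hat{\n{U}}(2)$ has fixed-point set $S\n{U}(2)$, so one obtains a semidirect product decomposition
$$
\big[{\n{S}}^{1,1},\hat{\n{U}}(2)\big]_{\Z_2}\;\simeq\;\big[{\n{S}}^{1,1},S\n{U}(2)\big]_{\Z_2}\;\rtimes\;\big[{\n{S}}^{1,1},\tilde{\n{U}}(1)\big]_{\Z_2}
$$
with the determinant being the projection onto the second factor. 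The first factor is controlled by $\pi_*(S\n{U}(2))=\pi_*(\n{S}^3)$, which vanishes in the relevant low degrees, so $[{\n{S}}^{1,1},S\n{U}(2)]_{\Z_2}=0$ (using that ${\n{S}}^{1,1}$ is one-dimensional and the fixed-point set contributes only $[\ast,S\n{U}(2)]=0$). Hence the determinant map is an isomorphism of groups.

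Next I would compute the degree map $[{\n{S}}^{1,1},\tilde{\n{U}}(1)]_{\Z_2}\stackrel{\rm deg}{\longrightarrow}\Z$ and show its image is exactly $2\Z$. An equivariant map $\varphi:{\n{S}}^{1,1}\to\tilde{\n{U}}(1)$ satisfies $\varphi(\bar z)=\overline{\varphi(z)}$. Parametrizing ${\n{S}}^{1,1}$ by $z=\expo{\ii\theta}$ with involution $\theta\mapsto-\theta$, the equivariance forces $\varphi$ to take the two fixed points $z=\pm1$ into the fixed-point set $\{\pm1\}\subset\tilde{\n{U}}(1)$. Writing $\varphi(\expo{\ii\theta})=\expo{\ii f(\theta)}$ with $f(-\theta)=-f(\theta)\bmod 2\pi$, the winding number ${\rm deg}(\varphi)=\frac{1}{2\pi}(f(2\pi)-f(0))$ is easily seen to be even: the contribution of the upper semicircle $\theta\in[0,\pi]$ equals that of the lower one by the reflection symmetry, so the total degree is twice an integer. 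Conversely, $z\mapsto z^2$ realizes degree $2$ and is equivariant, giving a generator of $2\Z$; this is consistent with the identification $[{\n{S}}^{1,1},\tilde{\n{U}}(1)]_{\Z_2}\simeq H^1_{\Z_2}({\n{S}}^{1,1},\Z(1))\simeq\Z$ already recorded in \cite[eq.~(5.9)]{denittis-gomi-14}.

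Finally I would combine the two: the determinant is an isomorphism and the degree is a group isomorphism onto $2\Z$, so the composite $[X_j,\hat{\n{U}}(2)]_{\Z_2}\to 2\Z$ is an isomorphism of abelian groups; the identification $[X_j,\hat{\n{U}}(2)]_{\Z_2}\simeq[{\n{S}}^{1,1},\hat{\n{U}}(2)]_{\Z_2}$ is just the equivariant homotopy equivalence $X_j\simeq{\n{S}}^{1,1}$ already used. The step I expect to be the main obstacle is the careful bookkeeping in the degree computation — specifically, verifying that equivariance genuinely forces the winding number to be even rather than merely constraining the endpoint values, and matching the group structure (under pointwise multiplication of $\n{U}(1)$-valued maps) with the additive structure of $2\Z$ under the degree. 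One must be a little careful that the constant map at $-1$, which represents the torsion class in $H^1_{\Z_2}({\n{S}}^{1,1},\Z(1))$ before passing to the reduced/free part, has degree $0$ and hence does not obstruct the claimed isomorphism onto $2\Z$; this is exactly why the target is the torsion-free $2\Z$ and not something with a $\Z_2$ summand.
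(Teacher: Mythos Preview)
Your factorization has a genuine error: neither of the two maps is individually an isomorphism as you claim, even though their composite is.

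First, the degree map $\deg:\big[{\n{S}}^{1,1},\tilde{\n{U}}(1)\big]_{\Z_2}\to\Z$ is surjective onto all of $\Z$, not onto $2\Z$. The identity map $z\mapsto z$ is equivariant (both involutions are complex conjugation) and has degree~$1$; your reflection argument that the upper and lower semicircle contributions coincide is wrong, because equivariance only forces $\varphi(\pm1)\in\{\pm1\}$, and when $\varphi(+1)\neq\varphi(-1)$ the half-winding along $\ell_+$ is a half-integer, giving odd total degree. The correct identification, which the paper itself records just after equation~\eqref{eq:rep_pic_rel}, is $\big[{\n{S}}^{1,1},\tilde{\n{U}}(1)\big]_{\Z_2}\simeq H^1_{\Z_2}({\n{S}}^{1,1},\Z(1))\simeq\Z_2\oplus\Z$, with $\Z$ generated by the identity and $\Z_2$ by the constant~$-1$.

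Second, and this is the structural reason the even-degree constraint appears only after composing with $\det$: the semidirect product splitting you invoke from \cite[Lemma~4.19]{denittis-gomi-14} does \emph{not} apply to ${\n{S}}^{1,1}$. That splitting requires an equivariant section $\tilde{\n{U}}(1)\to\hat{\n{U}}(2)$ of $\det$, but no such section exists: the fixed point $-1\in\tilde{\n{U}}(1)$ would have to be sent into $\hat{\n{U}}(2)^\mu=S\n{U}(2)$, where $\det\equiv+1$. (In the earlier use of that lemma the source $\Z_2\times\n{S}^1$ was free, which is what made the splitting work.) Consequently $\det$ is not surjective: any equivariant $\varphi:{\n{S}}^{1,1}\to\hat{\n{U}}(2)$ has $\varphi(\pm1)\in S\n{U}(2)$, hence $\det\varphi(\pm1)=+1$, and the image of $\det$ lies in the proper subgroup of maps taking both fixed points to $+1$. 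That subgroup is exactly the classes of even degree, and degree identifies it with $2\Z$.

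The paper's proof avoids this trap by never asserting that $\det$ or $\deg$ is separately an isomorphism; it works directly with the $\Z_2$-CW structure of ${\n{S}}^{1,1}$, normalizes $\varphi(\pm1)=\n{1}_2$ using the equivariant homotopy extension property and path-connectedness of $S\n{U}(2)$, and then reads off that $[\varphi]$ is determined by $[\varphi|_{\ell_+}]\in\pi_1(\n{U}(2))$, with the full circle contributing $2[\varphi|_{\ell_+}]$ to $\deg\circ\det$. Your argument can be repaired along these lines: your computation $[{\n{S}}^{1,1},S\n{U}(2)]_{\Z_2}=0$ does give injectivity of $\det$, but you must then identify its image as the even-degree classes rather than asserting surjectivity.
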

\proof
The $\Z_2$-CW-complex structure of the 
involutive space ${\n{S}}^{1,1}$ has been described in 
\cite[Example 4.21]{denittis-gomi-14} and it is given by 
the 0-skeleton formed by two fixed 0-cells ${\boldsymbol{e}}^0_\pm:=\{\pm1,0\}\in\n{S}^1\subset\n{R}^2$ to which a free 1-cell $\tilde{\boldsymbol{e}}^1:=\{\ell_+,\ell_-\}$ is attached. Here $\ell_\pm:=\{(x_0,x_1)\in\n{S}^1\ |\ \pm x_1\geqslant0\}$ are the two (1-dimensional) hemispheres of $\n{S}^1$.
Let $\varphi:{\n{S}}^{1,1}\to \hat{\n{U}}(2)$ be an equivariant map. Since the fixed point set of $\hat{\n{U}}(2)$ coincides with $S{\n{U}}(2)$ which is path-connected, we can assume that $\varphi(\pm 1)=\n{1}_2\in S{\n{U}}(2)\subset {\n{U}}(2)$, in view of the equivariant homotopy extension property \cite{matumoto-71}.
Then the two restrictions $\varphi^\pm:=\varphi|_{\ell_\pm}$ define independently elements of $\pi_1({\n{U}}(2))$ and the equivariance condition implies
$$
\big[\varphi\big]_{\Z_2}\;\simeq\; 2\big[\varphi^+\big]\;=\; 2\big[\varphi^-\big]\;\in\;\pi_1\big({\n{U}}(2)\big)\;.
$$
Hence ${\rm deg}\circ{\rm det}(\varphi)\in 2\Z$ is always an even integer and one has the homomorphism
$$
{\rm deg}\circ{\rm det}\;:\;\big[{\n{S}}^{1,1},\hat{\n{U}}(2)\big]_{\Z_2}\;\longrightarrow\;2\Z\;.
$$
This  homomorphism is surjective. Let $k\in\Z$ and $\xi_k:{\n{S}}^1\to \n{U}(1)$ be 
defined by $\xi_k(z):=z^k$ for all $z\in {\n{S}}^1\subset\C$. It is well known that
 $[\xi_k]=k$ as element of $\pi_1(\n{U}(1))\simeq\Z$. Consider the map $\varphi_{2k}:{\n{S}}^1\to \n{U}(2)$ defined by
$$
\varphi_{2k}(z)\;:=\;
\left(\begin{array}{cc}
\xi_k(z) & 0 \\0 & \xi_k(z)
\end{array}\right)\;.
$$
This map is evidently equivariant and ${\rm deg}\circ{\rm det}(\varphi_{2k})= 2k$.
The  homomorphism is also injective. For that suppose  ${\rm deg}\circ{\rm det}(\varphi_{0})=0$ along with the condition $\varphi_0(\pm 1)=\n{1}_2$ which can be assumed without loss of generality. This immediately implies that $[\varphi^\pm_0]=0$
in $\pi_1(\n{U}(2))$, namely the two restrictions $\varphi_\pm$ are homotopy equivalent to the constant maps at $\n{1}_2$. By extending a homotopy realizing $[\varphi^+_0] = 0$ to a $\Z_2$-equivariant homotopy, one concludes that also $\varphi_0$ is equivariantly homotopy equivalent to the constant map at $\n{1}_2$.
\qed

\begin{lemma}\label{lemma:tec_B_02}
Let $\jmath_1:{\n{S}}^{1,1}\times\{1,0\}\hookrightarrow{\n{T}}^{0,2,0}$ and $\jmath_2:\{1,0\}\times{\n{S}}^{1,1}\hookrightarrow{\n{T}}^{0,2,0}$ be the natural inclusions of the first and the second component of ${\n{T}}^{0,2,0}$.
Then, the sequence of maps
$$
\big[{\n{T}}^{0,2,0},\hat{\n{U}}(2)\big]_{\Z_2}\; \stackrel{\rm det}{\longrightarrow}\;\big[{\n{T}}^{0,2,0},\tilde{\n{U}}(1)\big]_{\Z_2}\; \stackrel{(\jmath_1^*,\jmath_2^*)}{\longrightarrow}\;\big[{\n{S}}^{1,1},\tilde{\n{U}}(1)\big]_{\Z_2}\;\oplus\;\big[{\n{S}}^{1,1},\tilde{\n{U}}(1)\big]_{\Z_2}
\; \stackrel{\rm deg}{\longrightarrow}\;2\Z\;\oplus\;2\Z\;,
$$
 induces an isomorphism of groups.
\end{lemma}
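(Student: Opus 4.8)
The plan is to decompose the composite in the statement as
$$
\big[\n{T}^{0,2,0},\hat{\n{U}}(2)\big]_{\Z_2}\;\stackrel{\rm det}{\longrightarrow}\;\big[\n{T}^{0,2,0},\tilde{\n{U}}(1)\big]_{\Z_2}\;\stackrel{(\jmath_1^*,\jmath_2^*)}{\longrightarrow}\;\big[\n{S}^{1,1},\tilde{\n{U}}(1)\big]_{\Z_2}\oplus\big[\n{S}^{1,1},\tilde{\n{U}}(1)\big]_{\Z_2}\;\stackrel{{\rm deg}\oplus{\rm deg}}{\longrightarrow}\;\Z\oplus\Z
$$
and to treat the three arrows separately, relying on Lemma \ref{lemma:tec_B_01}. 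First I would recall, from \cite{denittis-gomi-14} and the computations preceding the statement, the identifications $[\n{S}^{1,1},\tilde{\n{U}}(1)]_{\Z_2}\simeq H^1_{\Z_2}(\n{S}^{1,1},\Z(1))\simeq\Z\oplus\Z_2$ and $[\n{T}^{0,2,0},\tilde{\n{U}}(1)]_{\Z_2}\simeq H^1_{\Z_2}(\n{T}^{0,2,0},\Z(1))\simeq\Z^2\oplus\Z_2$, in which the free summand of the first group is generated by the equivariant identification $\n{S}^{1,1}\simeq\tilde{\n{U}}(1)$, the two free summands of the second are its pull-backs under the coordinate projections ${\rm pr}_j$ of $\n{T}^{0,2,0}=\n{S}^{1,1}\times\n{S}^{1,1}$ onto its factors, and the torsion summands are generated by the constant map at $-1$. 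I would also note that a class in $H^1_{\Z_2}(\n{S}^{1,1},\Z(1))\simeq\Z\oplus\Z_2$ is determined by its degree together with its restriction to the fixed point $\{1,0\}$.

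With these presentations the middle map is explicit: since ${\rm pr}_1\circ\jmath_1={\rm Id}$ while ${\rm pr}_2\circ\jmath_1$ is the constant map at the fixed point $\{1,0\}$, symmetrically for $\jmath_2$, and the constant map at $-1$ restricts to the constant map at $-1$ on each coordinate circle, the homomorphism $(\jmath_1^*,\jmath_2^*)\colon\Z^2\oplus\Z_2\to(\Z\oplus\Z_2)\oplus(\Z\oplus\Z_2)$ is $(a,b,\epsilon)\mapsto\big((a,\epsilon),(b,\epsilon)\big)$; in particular it is injective, and $({\rm deg}\oplus{\rm deg})\circ(\jmath_1^*,\jmath_2^*)$ is the surjection $(a,b,\epsilon)\mapsto(a,b)$ onto $\Z\oplus\Z$ whose kernel is exactly the torsion $\Z_2$.

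For the determinant map I would check three points. \emph{(i) The image lies over $2\Z\oplus2\Z$}: for equivariant $\varphi\colon\n{T}^{0,2,0}\to\hat{\n{U}}(2)$ one has $\jmath_j^*({\rm det}\,\varphi)={\rm det}(\jmath_j^*\varphi)$, and $\jmath_j^*\varphi\colon\n{S}^{1,1}\to\hat{\n{U}}(2)$ is equivariant, so ${\rm deg}\,{\rm det}(\jmath_j^*\varphi)\in2\Z$ by Lemma \ref{lemma:tec_B_01}. \emph{(ii) Surjectivity onto $2\Z\oplus2\Z$}: if $\varphi_2\colon\n{S}^{1,1}\to\hat{\n{U}}(2)$ is the generator produced in the proof of Lemma \ref{lemma:tec_B_01} (so ${\rm deg}\,{\rm det}\,\varphi_2=2$ and $\varphi_2$ sends the fixed point $\{1,0\}$ to $\n{1}_2$), then the equivariant maps $\varphi_2\circ{\rm pr}_1$ and $\varphi_2\circ{\rm pr}_2$ are carried to $(2,0)$ and $(0,2)$, which generate $2\Z\oplus2\Z$. \emph{(iii) Injectivity of the whole composite}, which is the crux. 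Let $\varphi\colon\n{T}^{0,2,0}\to\hat{\n{U}}(2)$ be equivariant with ${\rm deg}\,{\rm det}(\jmath_1^*\varphi)={\rm deg}\,{\rm det}(\jmath_2^*\varphi)=0$. Since $\varphi$ maps the four fixed points of $\n{T}^{0,2,0}$ into $S\n{U}(2)$, the fixed-point set of $\hat{\n{U}}(2)$, where ${\rm det}\equiv1$, each ${\rm det}(\jmath_j^*\varphi)$ has vanishing degree and is trivial at the fixed point $\{1,0\}$, hence represents the trivial class of $H^1_{\Z_2}(\n{S}^{1,1},\Z(1))$; by the injectivity of $(\jmath_1^*,\jmath_2^*)$ the class of ${\rm det}\,\varphi$ in $[\n{T}^{0,2,0},\tilde{\n{U}}(1)]_{\Z_2}$ is then trivial, so ${\rm det}\,\varphi$ is equivariantly null-homotopic. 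Using the equivariant homotopy lifting property for the $\Z_2$-equivariant fibre bundle $S\n{U}(2)\hookrightarrow\hat{\n{U}}(2)\stackrel{\rm det}{\longrightarrow}\tilde{\n{U}}(1)$ over the $\Z_2$-CW complex $\n{T}^{0,2,0}$, one deforms $\varphi$ equivariantly to a map $\varphi'$ into ${\rm det}^{-1}(1)=S\n{U}(2)$; since $S\n{U}(2)$ is the fixed-point set of $\hat{\n{U}}(2)$, the equivariant map $\varphi'$ is the same datum as an ordinary map from the orbit space $\n{T}^{0,2,0}/\Z_2$, which has dimension $\leqslant2$, and because $S\n{U}(2)\simeq\n{S}^3$ is $2$-connected it follows that $\varphi'$ — hence $\varphi$ — is equivariantly null-homotopic.

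Putting (i)--(iii) together with the description of $(\jmath_1^*,\jmath_2^*)$ shows that the composite is a well-defined group homomorphism that is injective and surjective onto $2\Z\oplus2\Z$, hence an isomorphism of groups. The step I expect to be the main obstacle is (iii): making the fibration/obstruction argument fully rigorous — in particular justifying the equivariant homotopy lifting property for the relevant bundle and controlling the behaviour over the fixed-point set $(\n{T}^{0,2,0})^\tau$ — whereas the remaining steps amount to bookkeeping with cohomology groups already recorded in the paper and its companion works.
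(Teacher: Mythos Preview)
Your proof is correct and takes a genuinely different route from the paper's. The paper works directly with the explicit $\Z_2$-CW decomposition of $\n{T}^{0,2,0}$ (four fixed $0$-cells, four free $1$-cells, two free $2$-cells): after normalizing an equivariant $\varphi$ to send the $0$-skeleton to $\n{1}_2$ via the equivariant homotopy extension property, injectivity is obtained by first showing the restrictions to the $1$-cells are null-homotopic (hence $\varphi$ may be taken constant on the $1$-skeleton) and then invoking $\pi_2(\n{U}(2))=0$ on the free $2$-cells; surjectivity is by the explicit diagonal maps $\varphi_{2k_1,2k_2}(z_1,z_2)=\xi_{k_1}(z_1)\xi_{k_2}(z_2)\cdot\n{1}_2$. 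You instead route the argument through the cohomological identifications $[\,\cdot\,,\tilde{\n{U}}(1)]_{\Z_2}\simeq H^1_{\Z_2}(\,\cdot\,,\Z(1))$ to analyse the middle map, and then for injectivity use the equivariant fibration $S\n{U}(2)\hookrightarrow\hat{\n{U}}(2)\stackrel{\det}{\to}\tilde{\n{U}}(1)$ together with the $2$-connectedness of $S\n{U}(2)$ and passage to the $2$-dimensional orbit space $\n{T}^{0,2,0}/\Z_2$. Both arguments ultimately rest on the same homotopy-group vanishing (your $2$-connectedness of $S\n{U}(2)$ is exactly the paper's $\pi_1(\n{U}(2))\simeq\pi_1(\n{U}(1))$ and $\pi_2(\n{U}(2))=0$ via the long exact sequence of the fibration), but the paper's cell-by-cell approach is self-contained and sidesteps the equivariant homotopy lifting property you rightly flag as the main technical point. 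For the record, that property does hold here, since $\det$ admits $\Z_2$-equivariant local trivializations over $\n{S}^{1,1}$; alternatively one can appeal to the semidirect-product splitting of \cite[Lemma 4.19]{denittis-gomi-14}, used elsewhere in the paper, which encodes the same lifting. Your approach has the merit of making transparent exactly how the torsion summand $\Z_2\subset[\n{T}^{0,2,0},\tilde{\n{U}}(1)]_{\Z_2}$ is killed by the composite and of leveraging Lemma~\ref{lemma:tec_B_01} more systematically.
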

\proof
The $\Z_2$-CW complex decomposition of the involutive torus ${\n{T}}^{0,2,0}={\n{S}}^{1,1}\times {\n{S}}^{1,1}$ can be derived combinatorially from  that of ${\n{S}}^{1,1}$ by exploiting the
cartesian product structure (see \cite[Example 4.22]{denittis-gomi-14}): 
\begin{itemize}
\item {\bf 0-celles.} There are four fixed 0-celles
$$
{\boldsymbol{e}}^0_{+,+}:={\boldsymbol{e}}^0_+\times{\boldsymbol{e}}^0_+\;,\qquad{\boldsymbol{e}}^0_{+,-}:={\boldsymbol{e}}^0_+\times{\boldsymbol{e}}^0_-\;,\qquad{\boldsymbol{e}}^0_{-,+}:={\boldsymbol{e}}^0_-\times{\boldsymbol{e}}^0_+\;,\qquad{\boldsymbol{e}}^0_{-,-}:={\boldsymbol{e}}^0_-\times{\boldsymbol{e}}^0_-\;;
$$
\vspace{1mm}
\item {\bf 1-celles.} There are four free 1-celles
$$
\tilde{\boldsymbol{e}}^1_{+,0}:={\boldsymbol{e}}^0_+\times\tilde{\boldsymbol{e}}^1\;,\qquad\tilde{\boldsymbol{e}}^1_{-,0}:={\boldsymbol{e}}^0_-\times\tilde{\boldsymbol{e}}^1\;,
\qquad\tilde{\boldsymbol{e}}^1_{0,+}:=\tilde{\boldsymbol{e}}^1\times{\boldsymbol{e}}^0_+\;,
\qquad\tilde{\boldsymbol{e}}^1_{0,-}:=\tilde{\boldsymbol{e}}^1\times{\boldsymbol{e}}^0_-\;;
$$
\vspace{1mm}
\item {\bf 2-celles.} There are two free 1-celles
$$
\tilde{\boldsymbol{e}}^2_{\rm even}:=\tilde{\boldsymbol{e}}^1\times\tilde{\boldsymbol{e}}^1\;,\qquad\qquad\tilde{\boldsymbol{e}}^2_{\rm odd}:=\tilde{\boldsymbol{e}}^1\times\tau\big(\tilde{\boldsymbol{e}}^1\big)\;,
$$
where $\tau(\tilde{\boldsymbol{e}}^1):=\{\ell_-,\ell_+\}$ is the image of the cell $\tilde{\boldsymbol{e}}^1$ under its involution.
\end{itemize}
Let $\varphi:{\n{T}}^{0,2,0}\to \hat{\n{U}}(2)$ be a $\Z_2$-equivariant map. Since 
the fixed-point set of $\hat{\n{U}}(2)$ is $S{\n{U}}(2)$ which is path-connected, we can assume that $\varphi$ sends all the  0-cells to $\n{1}_2$ in view of the equivariant homotopy extension property \cite{matumoto-71}. Therefore, one concludes that 
\begin{equation}\label{eq:torus_maps1}
\begin{aligned}
&\left[\jmath_1^*(\varphi)\right]_{\Z_2}\;:=\; \left[\varphi|_{\tilde{\n{S}}^1\times{\boldsymbol{e}}^0_+}\right]_{\Z_2}\;\simeq\; 2\big[\varphi^{\ell_{+},+}\big]\;=\; 2\big[\varphi^{\ell_{-},+}\big]\;\in\;\pi_1\big({\n{U}}(2)\big)\;,\\ 
&\left[\jmath_2^*(\varphi)\right]_{\Z_2}\;:=\;\left[\varphi|_{{\boldsymbol{e}}^0_+\times\tilde{\n{S}}^1}\right]_{\Z_2}\;\simeq\; 2\big[\varphi^{+,\ell_{+}}\big]\;=\; 2\big[\varphi^{+,\ell_{-}}\big]\;\in\;\pi_1\big({\n{U}}(2)\big)\;,
\end{aligned}
\end{equation}
where $\varphi^{\ell_{\pm},\pm}:=\varphi|_{\ell_\pm\times{\boldsymbol{e}}^0_\pm}$ and 
$\varphi^{\pm,\ell_{\pm}}:=\varphi|_{{\boldsymbol{e}}^0_\pm\times\ell_\pm}$ along with   ${\boldsymbol{e}}^0_\pm:=(\pm1,0)$. Since the map $\varphi$ is defined on the whole torus ${\n{T}}^{0,2,0}$, it also holds that
\begin{equation}\label{eq:torus_maps2}
\big[\varphi^{\ell_{\pm},+}\big]\;=\;\big[\varphi^{+,\ell_{\pm}}\big]\;,\qquad\quad \big[\varphi^{\ell_{\pm},-}\big]\;=\;\big[\varphi^{-,\ell_{\pm}}\big]
\end{equation}
by homotopy deformations of the restrictions along the 2-cell which is simply connected. The relations \eqref{eq:torus_maps1} say that the composition of maps
$$
{\rm deg}\circ(\jmath_1^*,\jmath_2^*)\circ{\rm det}\;:\;\big[{\n{T}}^{0,2,0},\hat{\n{U}}(2)\big]_{\Z_2}\;\longrightarrow\;2\Z\;\oplus\;2\Z
$$
is well defined. This map is injective. For that let $\varphi:{\n{T}}^{0,2,0}\to \hat{\n{U}}(2)$ be a $\Z_2$-equivariant map which sends the $0$-skeleton to $\n{1}_2$ and such that  $({\rm deg}\circ(\jmath_1^*,\jmath_2^*)\circ{\rm det}):\varphi\mapsto 0\oplus 0$. Then from \eqref{eq:torus_maps1} and \eqref{eq:torus_maps2} one concludes that $\varphi$ restricted to the $1$-skeleton is $\Z_2$-equivariantly homotopy equivalent to the constant maps at $\n{1}_2$. Therefore, by the homotopy extension property,  and without loss of generality, we can assume that $\varphi$ sends the $1$-skeleton to $\n{1}_2$ since from the beginning. This in turn implies that each of the two $2$-celles 
identifies a class in $\pi_2({\n{U}}(2))=0$. Using the homotopy which realizes the trivialization we can build a $\Z_2$-equivariant homotopy which deforms $\varphi$ to the constant map at $\n{1}_2$. The surjectivity can be proved with the help of the map
$\xi_k:{\n{S}}^1\to \n{U}(1)$ defined by $\xi_k(z):=z^k$ for $k\in\Z$ and $z\in {\n{S}}^1\subset\C$. In fact  the map $\varphi_{2k_1,2k_2}:{\n{T}}^{0,2,0}\to \hat{\n{U}}(2)$ defined by
$$
\varphi_{2k_1,2k_2}(z_1,z_2)\;:=\;
\left(\begin{array}{cc}
\xi_{k_1}(z_1)\xi_{k_2}(z_2) & 0 \\0 & \xi_{k_1}(z_1)\xi_{k_2}(z_2)
\end{array}\right)\;.
$$
is $\Z_2$-equivariant and $({\rm deg}\circ(\jmath_1^*,\jmath_2^*)\circ{\rm det}):\varphi_{2k_1,2k_2}\mapsto 2k_1\oplus 2k_2$.
\qed

\medskip

With the help of Lemma \ref{lemma:tec_B_01} and Lemma \ref{lemma:tec_B_02} we can finally complete the classification of  the $\rr{Q}$-bundles over $L_{2q}$.
\begin{proposition}\label{prop:B_Z_4}
There is a bijection 
$$
{\rm Vec}_{\rr{Q}}^{2}\big(L_{2q},\tau\big)\;\simeq\;{\rm Pic}_{\rr{R}}\big(L_{2q},\tau\big)
$$
realized by the action of ${\rm Pic}_{\rr{R}}(L_{2q},\tau)$  on ${\rm Vec}_{\rr{Q}}^{2}(L_{2q},\tau)$ through the tensor product. More precisely, let $(\bb{E}_0,\Theta_0)$ be the  trivial rank 2 $\rr{Q}$-bundle   over  $(L_{2q},\tau)$. Then the map
$$
{\rm Pic}_{\rr{R}}\big(L_{2q},\tau\big)\;\ni\;\bb{L}\;\longrightarrow\;\bb{L}\otimes\bb{E}_0\;\in\;{\rm Vec}_{\rr{Q}}^{2}\big(L_{2q},\tau\big)
$$
realizes such a bijection. In particular this implies that
$$
{\rm Vec}_{\rr{Q}}^{2m}\big(L_{2q},\tau\big)\;\simeq\;{\rm Vec}_{\rr{Q}}^{2}\big(L_{2q},\tau\big)\;\simeq\;\Z_4\;,\qquad\quad \forall\ m\in\N\;. 
$$
\end{proposition}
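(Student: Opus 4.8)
The plan is to carry out the classification directly at the level of $\Z_2$-equivariant clutching functions, and to check that, once the parametrisations supplied by Lemma \ref{lemma:tec_B_01} and Lemma \ref{lemma:tec_B_02} are in place, the tensor product identifies the double coset \eqref{eq:rep_rk2_QUat} describing ${\rm Vec}_{\rr{Q}}^{2}(L_{2q},\tau)$ with the double coset \eqref{eq:rep_pic_rel} describing ${\rm Pic}_{\rr{R}}(L_{2q},\tau)$.

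First I would compute the double coset \eqref{eq:rep_rk2_QUat} explicitly. By Lemma \ref{lemma:tec_B_02} a class in $[{\n{T}}^{0,2,0},\hat{\n{U}}(2)]_{\Z_2}$ is determined by the pair of even integers $(a_1,a_2):=\big({\rm deg}\,\jmath_1^*{\rm det}\,\varphi,\,{\rm deg}\,\jmath_2^*{\rm det}\,\varphi\big)$, while by Lemma \ref{lemma:tec_B_01} a class $[\psi_j]\in[X_j,\hat{\n{U}}(2)]_{\Z_2}$ is determined by the even integer $2b_j:={\rm deg}\,{\rm det}\,\psi_j$, read off along the $\lambda$-circle since the equivariant homotopy equivalence $X_j\simeq_{\Z_2}{\n{S}}^{1,1}$ collapses the disk factor $\tilde{\n{D}}^2$. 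Substituting these into the action $\varphi\mapsto\psi_1^{-1}\,\varphi\,(f^*\psi_2)$ with $f(z,\lambda)=(z,z^{2q}\lambda)$, and tracking degrees of determinants along the two circles of ${\n{T}}^{0,2,0}$ (the clutching map multiplies $z$-degrees by $2q$ and fixes $\lambda$-degrees), one gets $(a_1,a_2)\sim(a_1+4qb_2,\,a_2-2b_1+2b_2)$. Letting $b_1$ vary annihilates the $a_2$-coordinate, and then letting $b_2$ vary identifies $a_1$ modulo $4q$; since $a_1\in2\Z$ this gives the double coset $2\Z/4q\Z\simeq\Z_{2q}$, with the class of $\varphi$ represented by $\tfrac12\,{\rm deg}\,\jmath_1^*{\rm det}\,\varphi$ mod $2q$. (The identical bookkeeping with $\tilde{\n{U}}(1)$ in place of $\hat{\n{U}}(2)$ recovers the identification of \eqref{eq:rep_pic_rel} given in the text: ${\rm Pic}_{\rr{R}}(L_{2q},\tau)\simeq\Z_{2q}$, with the class of a clutching function $\chi$ represented by ${\rm deg}\,\jmath_1^*\chi$ mod $2q$.)

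Next I would match the two descriptions through the tensor product. With respect to the decomposition \eqref{eq:split_L4} the product $\rr{Q}$-bundle $\bb{E}_0$ has clutching function $\n{1}_2$, so if the $\rr{R}$-line bundle $\bb{L}$ has clutching function $\chi:{\n{T}}^{0,2,0}\to\tilde{\n{U}}(1)$ then $\bb{L}\otimes\bb{E}_0$ is represented by $\chi\,\n{1}_2$; this is $\mu$-equivariant for the involution \eqref{eq:mu-invol} since $\chi$ is equivariant for the complex conjugation and $Q^2=-\n{1}_2$ by \eqref{eq:Q-mat}. Because ${\rm det}(\chi\,\n{1}_2)=\chi^2$, the $\rr{Q}$-invariant of $\bb{L}\otimes\bb{E}_0$ found above equals $\tfrac12\,{\rm deg}\,\jmath_1^*(\chi^2)={\rm deg}\,\jmath_1^*\chi$ mod $2q$, i.e.\ exactly the invariant of $\bb{L}$. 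Hence under the two identifications with $\Z_{2q}$ the assignment $\bb{L}\mapsto\bb{L}\otimes\bb{E}_0$ is the identity of $\Z_{2q}$, and in particular a bijection; this is the first assertion of Proposition \ref{prop:B_Z_4}. The last assertion then follows by combining this with the stable-range isomorphism ${\rm Vec}^{2m}_{\rr{Q}}(L_{2q},\tau)\simeq{\rm Vec}^{2}_{\rr{Q}}(L_{2q},\tau)$ of Proposition \ref{theo:stab_ran_Q_even} (valid as $d=3$) and with ${\rm Pic}_{\rr{R}}(L_{2q},\tau)\simeq\Z_{2q}$ from Remark \ref{rk:app_B}.

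The hard part is the degree bookkeeping in the first step, and above all the factor of $2$ produced by Lemma \ref{lemma:tec_B_01}: equivariance forces the two hemispheres of ${\n{S}}^{1,1}$ to contribute equally, so every relevant degree of a map into $\hat{\n{U}}(2)$ is even and the $\rr{Q}$-invariant is \emph{half} of a degree. This is precisely the factor that cancels the degree-doubling ${\rm det}(\chi\,\n{1}_2)=\chi^2$ in the second step; mishandling it would give multiplication by $2$ on $\Z_{2q}$ rather than the identity, which is not bijective. One must also keep straight that $\jmath_1^*$ always denotes restriction to the $z$-circle $=\partial\tilde{\n{D}}^2$ and that the left and right cosets act on the classification data as displayed. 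Once these normalisations are fixed, the two copies of $\Z_{2q}$ line up and the proposition drops out.
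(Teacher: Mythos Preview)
Your proposal is correct and follows essentially the same route as the paper's own proof: both compute the double coset \eqref{eq:rep_rk2_QUat} via Lemmas \ref{lemma:tec_B_01} and \ref{lemma:tec_B_02} to obtain $2\Z/4q\Z\simeq\Z_{2q}$ with invariant $(\mathrm{deg}\circ\jmath_1^*\circ\mathrm{det})(\varphi)$ mod $4q$, and then identify the tensor map $\bb{L}\mapsto\bb{L}\otimes\bb{E}_0$ with the identity on $\Z_{2q}$. The only difference is presentational: the paper exhibits explicit representatives $\varphi_{2k,0}=\phi_k\cdot\varphi_{0,0}$ and reads off $\bb{E}_k=\bb{L}_k\otimes\bb{E}_0$ directly, whereas you compute the coset action on degree pairs $(a_1,a_2)$ and track the factor-of-two cancellation between $\det(\chi\,\n{1}_2)=\chi^2$ and the halving of the even invariant; your version is more explicit but not a different argument.
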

\proof
By looking at the explicit representatives of the homotopy classes $[X_j,\hat{\n{U}}(2)]_{\Z_2}\simeq 2\Z$
and  $[{\n{T}}^{0,2,0},\hat{\n{U}}(2)]_{\Z_2}\simeq 2\Z\oplus 2\Z$ constructed in 
Lemma \ref{lemma:tec_B_01} and Lemma \ref{lemma:tec_B_02} one immediately  concludes that the map
$$
{\rm Vec}_{\rr{Q}}^{2}\big(L_{2q},\tau\big)\;\simeq\;\big[X_1,\hat{\n{U}}(2)\big]_{\Z_2}\; \backslash\;
\big[{\n{T}}^{0,2,0},\hat{\n{U}}(2)\big]_{\Z_2}\;/\;
\big[X_2,\hat{\n{U}}(2)\big]_{\Z_2}\;\longrightarrow\; 2\Z/4q\Z\;=\; \Z_{2q}
$$ 
defined by
$$
[\varphi]\;\longmapsto\;\big({\rm deg}\circ \jmath_1^*\circ{\rm det}\big)(\varphi)\ \ \text{mod.}\ \ 4q
$$
is bijective. The $\Z_2$-equivariant map $\varphi_{2k,0}:{\n{T}}^{0,2,0}\to \hat{\n{U}}(2)$ defined by
$$
\varphi_{2k,0}(z_1,z_2)\;:=\;
\left(\begin{array}{cc}
\xi_{k}(z_1) & 0 \\0 & \xi_{k}(z_1)
\end{array}\right)\;
$$
provides a representative $\bb{E}_k$ for the class in ${\rm Vec}_{\rr{Q}}^{2}(L_{2q},\tau)$ labeled by $[2k]\in 2\Z/4q\Z$. Similarly the $\Z_2$-equivariant map $\phi_{k}:{\n{T}}^{0,2,0}\to \tilde{\n{U}}(1)$ given by $\phi_k(z)=z^k$ provides a representative $\bb{L}_k$ for the class in ${\rm Pic}_{\rr{R}}(L_{2q},\tau)$ represented by $[k]\in2\Z/4q\Z=\Z_{2q}$. The relation between equivariant maps $\varphi_{2k,0}(z_1,z_2)=\phi_{k}(z_1)\; \varphi_{0,0}(z_1,z_2)$ corresponds to the relation
$\bb{E}_k=\bb{L}_k\otimes \bb{E}_0$. Consequently the tensor product with $\bb{E}_0$ induces the bijection ${\rm Pic}_{\rr{R}}(L_{2q},\tau)\to {\rm Vec}_{\rr{Q}}^{2}(L_{2q},\tau)$ and this is compatible with the operation of tensoring $\rr{R}$-bundles. Finally the equivalence ${\rm Vec}_{\rr{Q}}^{2m}(L_{2q},\tau)\simeq {\rm Vec}_{\rr{Q}}^{2}(L_{2q},\tau)$ is a consequence of Proposition \ref{theo:stab_ran_Q_even}.
\qed

\medskip

The following result  just follows by comparing Proposition \ref{prop:B_Z_8}
with Proposition \ref{prop:B_Z_4} (1)
\begin{corollary}\label{prop:fkmm_lens}
The map
$$
\kappa\;:\;{\rm Vec}_{\rr{Q}}^{2m}\big(L_{2q},\tau\big)\;\longrightarrow\;H^2_{\Z_2}\big(L_{2q}|L_{2q}^\tau,\Z(1)\big)\;,\qquad\quad m\in\N\;.
$$
cannot be surjective.
\end{corollary}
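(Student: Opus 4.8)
The plan is to obtain the statement by a mere comparison of cardinalities: all the substantial work has already been carried out in Proposition \ref{prop:B_Z_4} and Proposition \ref{prop:B_Z_8}, and what remains is to observe that a map between finite sets cannot be surjective when the target is strictly larger than the source. First I would record that, by Proposition \ref{prop:B_Z_4} together with Remark \ref{rk:app_B}, the source of $\kappa$ is a \emph{finite} set with exactly $2q$ elements, since for every $m\in\N$ one has bijections
\[
{\rm Vec}_{\rr{Q}}^{2m}\big(L_{2q},\tau\big)\;\simeq\;{\rm Vec}_{\rr{Q}}^{2}\big(L_{2q},\tau\big)\;\simeq\;{\rm Pic}_{\rr{R}}\big(L_{2q},\tau\big)\;\simeq\;\Z_{2q}\,.
\]

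Next I would invoke Proposition \ref{prop:B_Z_8}, which identifies the target of $\kappa$ with $\Z_{4q}$, hence a finite group of order $4q$:
\[
H^2_{\Z_2}\big(L_{2q}|L_{2q}^\tau,\Z(1)\big)\;\simeq\;\Z_{4q}\,.
\]
Since by hypothesis $q>0$, we have $2q<4q$, so there is no surjection from a set of cardinality $2q$ onto a set of cardinality $4q$; in particular the FKMM-invariant $\kappa$ cannot be surjective. This completes the argument.

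I would also point out, as a sharper consequence, that combining this with the injectivity of $\kappa$ in low dimension (Proposition \ref{prop:injectivity_in_low_dimension}) shows that $\kappa$ realizes ${\rm Vec}_{\rr{Q}}^{2m}(L_{2q},\tau)$ as a subset of index $2$ inside $H^2_{\Z_2}(L_{2q}|L_{2q}^\tau,\Z(1))\simeq\Z_{4q}$. There is essentially no obstacle left at this stage: the only point requiring care is that both sides are genuinely finite of the stated orders, which is exactly the content of the two propositions just cited, so the corollary follows formally.
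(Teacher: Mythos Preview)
Your proposal is correct and is exactly the approach the paper takes: the corollary is stated immediately after Proposition \ref{prop:B_Z_4} with the one-line justification that it ``just follows by comparing Proposition \ref{prop:B_Z_8} with Proposition \ref{prop:B_Z_4}''. Your additional remark about injectivity realizing the source as an index-$2$ subset is a pleasant sharpening not made explicit in the paper.
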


\appendix

\section{A short reminder of the equivariant Borel cohomology}
\label{subsec:borel_cohom}

The proper cohomology theory for the analysis of vector bundle theories in the category of spaces with involution is the {equivariant cohomolgy} introduced by  A.~Borel in \cite{borel-60}. This cohomology has been used for the topological classification of \virg{Real} vector bundles \cite{denittis-gomi-14} and plays also a role in the classification of \virg{Quaternionic} vector bundles \cite{denittis-gomi-14-gen,denittis-gomi-16}. A short   self-consistent summary of this cohomology theory can be found in \cite[Section 5.1]{denittis-gomi-14} and we refer to \cite[Chapter 3]{hsiang-75} and \cite[Chapter 1]{allday-puppe-93}
for a more detailed introduction to the subject.

\medskip

Since we need this tool we briefly recall the main steps of the
\emph{Borel construction}. 
The \emph{homotopy quotient} of an involutive space   $(X,\tau)$ is the orbit space
\begin{equation}\label{eq:homot_quot}
{X}_{\sim\tau}\;:=\;X\times\ {\n{S}}^{0,\infty} /( \tau\times \theta_\infty)\;.
\end{equation}
Here $\theta_\infty$ is the {antipodal map} on the infinite sphere $\n{S}^\infty$ 
(\cf \cite[Example 4.1]{denittis-gomi-14}) and ${\n{S}}^{0,\infty}$ is used as short notation for the pair $(\n{S}^\infty,\theta_\infty)$.
The product space $X\times{\n{S}}^\infty$ (forgetting for a moment the $\Z_2$-action) has the \emph{same} homotopy type of $X$ 
since $\n{S}^\infty$ is contractible. Moreover, since $\theta_\infty$ is a free involution,  also the composed involution $\tau\times\theta_\infty$ is free, independently of $\tau$.
Let $\s{R}$ be any commutative ring (\eg, $\R,\Z,\Z_2,\ldots$). The \emph{equivariant cohomology} ring 
of $(X,\tau)$
with coefficients
in $\s{R}$ is defined as
$$
H^\bullet_{\Z_2}(X,\s{R})\;:=\; H^\bullet({X}_{\sim\tau},\s{R})\;.
$$
More precisely, each equivariant cohomology group $H^j_{\Z_2}(X,\s{R})$ is given by the
 singular cohomology group  $H^j({X}_{\sim\tau},\s{R})$ of the  homotopy quotient ${X}_{\sim\tau}$ with coefficients in $\s{R}$ and the ring structure is given, as usual, by the {cup product}.
As the coefficients of
the usual singular cohomology are generalized to \emph{local coefficients} (see \eg \cite[Section 3.H]{hatcher-02} or
\cite[Section 5]{davis-kirk-01}), the coefficients of the Borel  equivariant cohomology are also
generalized to local coefficients. Given an involutive space $(X,\tau)$ let us consider the homotopy group $\pi_1({X}_{\sim\tau})$
and the associated  \emph{group ring} $\Z[\pi_1({X}_{\sim\tau})]$. Each module $\s{Z}$ over the group $\Z[\pi_1({X}_{\sim\tau})]$ is, by definition,
a \emph{local system} on $X_{\sim\tau}$.  Using this local system one defines, as usual, the equivariant cohomology with local coefficients in $\s{Z}$:
$$
H^\bullet_{\Z_2}(X,\s{Z})\;:=\; H^\bullet({X}_{\sim\tau},\s{Z})\;.
$$
We are particularly interested in modules $\s{Z}$ whose underlying groups are identifiable with $\Z$. 
For each involutive space  $(X,\tau)$, there always exists a particular family of local systems $\Z(m)$
labelled by $m\in\Z$. Here
 $\Z(m)\simeq X\times\Z$ denotes the $\Z_2$-equivariant local system on $(X,\tau)$  made equivariant  by the $\Z_2$-action $(x,l)\mapsto(\tau(x),(-1)^ml)$.
Because the module structure depends only on the parity of $m$, we consider only the $\Z_2$-modules ${\Z}(0)$ and ${\Z}(1)$. Since ${\Z}(0)$ corresponds to the case of the trivial action of $\pi_1(X_{\sim\tau})$ on $\Z$ one has $H^k_{\Z_2}(X,\Z(0))\simeq H^k_{\Z_2}(X,\Z)$ \cite[Section 5.2]{davis-kirk-01}.

\medskip

We recall the two important group isomorphisms 
\begin{equation}\label{eq:iso:eq_cohom}
H^1_{\Z_2}\big(X,\Z(1)\big)\;\simeq\;\big[X,\n{U}(1)\big]_{\Z_2}\;,\qquad\qquad H^2_{\Z_2}\big(X,\Z(1)\big)\;\simeq\;{\rm Vec}_{\rr{R}}^1\big(X,\tau\big)\equiv {\rm Pic}_{\rr{R}}\big(X,\tau\big)\;
\end{equation}
 involving the 
first two equivariant cohomology groups. 
The first isomorphism \cite[Proposition A.2]{gomi-13} says that the first equivariant cohomology group is isomorphic to the set of $\Z_2$-equivariant homotopy classes of $\Z_2$-equivariant maps $\varphi:X\to\n{U}(1)$ where the involution on $\n{U}(1)$ is induced by the complex conjugation, \ie $\varphi(\tau(x))=\overline{\varphi(x)}$. The second isomorphism is due to B.~Kahn \cite{kahn-59} and 
expresses the equivalence between the Picard group of \virg{Real} line bundles (in the sense of \cite{atiyah-66,denittis-gomi-14}) over  $(X,\tau)$ and the second equivariant cohomology group of this space.
There are two important results that have been used in this work.
\begin{lemma}\label{appA_lemma}
Let $\Z_2$ be identified with the space $\{\pm1\}$ 
endowed with the \emph{free} flipping involution and $X$ a space endowed with any involution $\tau$. The following facts hold true for all $k\geqslant 0$:
$$
H^k_{\Z_2}(\Z_2\times X,\Z(1))\simeq H^k(X,\Z).
$$
In particular, $H^k_{\Z_2}(\Z_2, \Z(1)) = 0$ for $k\geqslant 1$.
\end{lemma}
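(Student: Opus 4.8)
The plan is to recognize that $\Z_2 \times X$ with the product involution (free flip on the $\Z_2$ factor, arbitrary $\tau$ on $X$) is a \emph{free} involutive space, and then to identify its homotopy quotient explicitly. Concretely, write $\widetilde{Y}$ for the involutive space $\Z_2 \times X$ and recall from \eqref{eq:homot_quot} that the equivariant cohomology is the singular cohomology of $Y_{\sim\tau} = (\Z_2 \times X \times \n{S}^\infty)/(\text{flip} \times \tau \times \theta_\infty)$. First I would observe that since the $\Z_2$-action on the first factor $\Z_2 = \{\pm 1\}$ is free, the quotient map $\Z_2 \times X \times \n{S}^\infty \to X \times \n{S}^\infty$ (forgetting which copy we are on, i.e.\ collapsing the orbit $\{(+1,x,v),(-1,\tau(x),\theta_\infty(v))\}$ to the point $(x,v)$) descends to a homeomorphism $Y_{\sim\tau} \cong X \times \n{S}^\infty$. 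Indeed, each orbit meets the slice $\{+1\} \times X \times \n{S}^\infty$ in exactly one point, so $Y_{\sim\tau}$ is canonically homeomorphic to that slice, which is $X \times \n{S}^\infty$.

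Next I would show that this homeomorphism carries the local system $\Z(1)$ on $Y_{\sim\tau}$ to the \emph{trivial} local system $\Z$ on $X \times \n{S}^\infty$. The point is that the nontrivial monodromy of $\Z(1)$ comes from loops in $Y_{\sim\tau}$ whose lift to the double cover $\Z_2 \times X \times \n{S}^\infty$ swaps the two sheets; but under the identification $Y_{\sim\tau} \cong \{+1\}\times X \times \n{S}^\infty$, a loop that stays in this slice corresponds to a path in $\Z_2 \times X \times \n{S}^\infty$ whose endpoints lie in the \emph{same} sheet $\{+1\}$, hence no sign is picked up. More carefully: the homotopy quotient has $\pi_1(Y_{\sim\tau}) \cong \pi_1(X)$ (via the slice), and the action of $\pi_1$ on the coefficient group $\Z$ defining $\Z(1)$ factors through the composite $\pi_1(Y_{\sim\tau}) \to \Z_2$ recording whether the lifted path swaps sheets of $\Z_2$; since this composite is trivial on loops in the slice, the local system is trivial. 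Therefore
$$
H^k_{\Z_2}(\Z_2 \times X, \Z(1)) \;=\; H^k\big(Y_{\sim\tau}, \Z(1)\big) \;\simeq\; H^k\big(X \times \n{S}^\infty, \Z\big) \;\simeq\; H^k(X,\Z),
$$
the last isomorphism because $\n{S}^\infty$ is contractible.

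For the final assertion, specialize to $X = \{\ast\}$ a point, so that $\Z_2 \times X = \Z_2$ is just the free two-point involutive space. Then $H^k_{\Z_2}(\Z_2, \Z(1)) \simeq H^k(\{\ast\},\Z)$, which is $\Z$ for $k = 0$ and $0$ for all $k \geqslant 1$.

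The main obstacle is the bookkeeping in the second step: making rigorous the claim that $\Z(1)$ restricts to the trivial local system under the slice identification. One clean way to handle this, avoiding explicit $\pi_1$-computations, is to note that $\Z_2 \times X \times \n{S}^\infty \to Y_{\sim\tau}$ is the double cover classified by the map $Y_{\sim\tau} \to B\Z_2 = \R P^\infty$, and that $\Z(1)$ is by definition the pullback of the nontrivial $\Z_2$-local system on $\R P^\infty$ along precisely this classifying map; since the cover $\Z_2 \times X \times \n{S}^\infty$ is \emph{disconnected} (it has two components, each a copy of $X \times \n{S}^\infty$ mapped homeomorphically down), the classifying map $Y_{\sim\tau} \to \R P^\infty$ is null-homotopic, whence $\Z(1)$ is trivial. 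Alternatively, one can invoke \cite[Section 5.1]{denittis-gomi-14} where the behavior of $\Z(m)$ under free involutions is already discussed. Either route disposes of the difficulty with a short argument.
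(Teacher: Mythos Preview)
Your proof is correct and takes a genuinely different route from the paper's. The paper first uses an equivariant homeomorphism to reduce to the case where $\tau$ on $X$ is trivial, then applies the long exact sequence of \cite[Proposition 2.3]{gomi-13} relating $H^k_{\Z_2}(-,\Z)$, $H^k(-,\Z)$, and $H^k_{\Z_2}(-,\Z(1))$: it computes $H^k(\Z_2\times X,\Z)\simeq H^k(X,\Z)^{\oplus 2}$ and $H^k_{\Z_2}(\Z_2\times X,\Z)\simeq H^k((\Z_2\times X)/\Z_2,\Z)\simeq H^k(X,\Z)$, identifies the forgetful map as the diagonal (hence injective), and deduces the result from the resulting short exact sequences. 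Your argument instead works directly with the Borel construction: you identify the homotopy quotient with $X\times\n{S}^\infty$ via the slice $\{+1\}\times X\times\n{S}^\infty$, and then observe that the double cover $\Z_2\times X\times\n{S}^\infty\to Y_{\sim\tau}$ is trivial (it admits a global section), so the classifying map to $\R P^\infty$ is null-homotopic and $\Z(1)$ pulls back to the constant system. Your approach is more geometric and self-contained (it does not invoke the auxiliary exact sequence from \cite{gomi-13}), and it handles the arbitrary involution $\tau$ on $X$ in one stroke without a preliminary reduction. The paper's approach, by contrast, reuses machinery already in play elsewhere and makes the relationship between the three coefficient systems $\Z$, $\Z(1)$, and ordinary $\Z$ explicit. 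Both are short; yours is arguably the more transparent explanation of \emph{why} the twist disappears.
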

 \proof
Let $g : X \times \Z_2 \to X \times \Z_2$ be a map given by $g(1, x) = (1, \tau(x))$ and $g(-1, x) = (-1, x)$. Then $g$ is a $\Z_2$-equivariant homeomorphism from the space $\Z_2 \times X$ with the involution $(\pm 1, x) \mapsto (\mp 1, \tau(x))$ to the space $\Z_2 \times X$ with the involution $(\pm 1, x) \mapsto (\mp 1, x)$. Hence we can assume that the involution $\tau$ on $X$ is trivial from the begining. 

One has that
\begin{equation}\label{eq:exact_appAAA1}
H^k\big(\Z_2\times X,\Z\big)\;\simeq\;H^k\big(\{-1\}\times X,\Z\big)\;\oplus\;H^k\big(\{+1\}\times X,\Z\big)\;\simeq\;H^k\big(X,\Z\big)\;\oplus\;H^k\big(X,\Z\big)
\end{equation}
for the ordinary cohomology and
\begin{equation}\label{eq:exact_appAAA2}
H^k_{\Z_2}\big(\Z_2\times X,\Z\big)\;\simeq\;H^k\big((\Z_2\times X)/\Z_2,\Z\big)\;\simeq\;H^k\big(X,\Z\big)\;
\end{equation}
for the equivariant cohomology with fixed coefficients.  
The exact sequence \cite[Proposition 2.3]{gomi-13} in this special case reads
\begin{equation}\label{eq:exact_appAAA3}
\begin{aligned}
&\longrightarrow\;H^k\big(X,\Z\big)\;\stackrel{f}{\longrightarrow}\;H^k\big(X,\Z\big)\;\oplus\;H^k\big(X,\Z\big)\;\stackrel{}{\longrightarrow}\;H^k_{\Z_2}\big(\Z_2\times X,\Z(1)\big)\;\stackrel{}{\longrightarrow}\;H^{k+1}\big(X,\Z\big)\;\stackrel{f}{\longrightarrow}\;\\
\end{aligned}
\end{equation}
where $f$, which originally is the homomorphism which forgets the $\Z_2$-action, acts as the diagonal map under the identifications \eqref{eq:exact_appAAA1} and \eqref{eq:exact_appAAA2}. Then $f$ turns out to be injective and the exact sequence \eqref{eq:exact_appAAA3} splits as
$$
\begin{aligned}
&0\longrightarrow\;H^k\big(X,\Z\big)\;\stackrel{f}{\longrightarrow}\;H^k\big(X,\Z\big)\;\oplus\;H^k\big(X,\Z\big)\;\stackrel{}{\longrightarrow}\;H^k_{\Z_2}\big(\Z_2\times X,\Z(1)\big)\;\stackrel{}{\longrightarrow}\;0\\
\end{aligned}
$$
proving the claim.\qed

\medskip

The fixed point subset $X^\tau\subset X$ is closed and $\tau$-invariant and the inclusion $\imath:X^\tau\hookrightarrow X$ extends to an inclusion $\imath:X^\tau_{\sim\tau}\hookrightarrow X_{\sim\tau}$ of the respective homotopy quotients. The \emph{relative} equivariant cohomology can be defined as usual by the identification
$$
H^\bullet_{\Z_2}\big(X|X^\tau,\s{Z}\big)\;:=\; H^\bullet\big({X}_{\sim\tau}|X^\tau_{\sim\tau},\s{Z}\big)\;.
$$
Consequently, one has the related long exact sequence in cohomology
\begin{equation}\label{eq:Long_seq}
\ldots\;\longrightarrow\;H^k_{\Z_2}\big(X|X^\tau,\s{Z}\big)\;\stackrel{\delta_2}{\longrightarrow}\;H^k_{\Z_2}\big(X,\s{Z}\big)\;\stackrel{r}{\longrightarrow}\;H^k_{\Z_2}\big(X^\tau,\s{Z}\big)\;\stackrel{\delta_1}{\longrightarrow}\;H^{k+1}_{\Z_2}\big(X|X^\tau,\s{Z}\big)\;\longrightarrow\;\ldots\;
\end{equation}
where the map $r:=\imath^*$ restricts  cochains on $X$ to  cochains on $X^\tau$. The $k$-th \emph{cokernel} of $r$ is by definition
$$
{\rm Coker}^k\big(X|X^\tau,\s{Z}\big)\;:=\;H^k_{\Z_2}\big(X^\tau,\s{Z}\big)\;/\;r\big(H^k_{\Z_2}(X,\s{Z})\big)\;.
$$
Let us point out that with the same construction 
one can define {relative} cohomology theories $H^\bullet_{\Z_2}(X|Y,\s{Z})$
for each $\tau$-invariant subset $Y\subset X^\tau$. If $Y=\empt$ then $H^k_{\Z_2}(X|\empt,\s{Z})\simeq H^k_{\Z_2}(X,\s{Z})$ by definition, hence it is reasonable to put $H^k_{\Z_2}(\empt,\s{Z})=0$ for consistency with the above long exact sequence.
The case $Y:=\{\ast\}$ of a single invariant point is important since it defines the \emph{reduced} cohomology theory  
$$
\tilde{H}^k_{\Z_2}\big(X,\s{Z}\big)\;:=\;H^k_{\Z_2}\big(X|\{\ast\},\s{Z}\big)\;.
$$
In this case, the obvious surjectivity of the map $r$ at each step of the exact sequence \eqref{eq:Long_seq} justifies the isomorphism
\begin{equation}\label{eq:app_red_cohom1}
H^k_{\Z_2}\big(X,\s{Z}\big)\;\simeq\;\tilde{H}^k_{\Z_2}\big(X,\s{Z}\big)\;\oplus {H}^k_{\Z_2}\big(\{\ast\},\s{Z}\big)
\end{equation}
When the system of coefficients is $\s{Z}=\Z(1)$ or $\s{Z}=\Z$, equation
\eqref{eq:app_red_cohom1} simplifies by using the explicit computations \cite[Section 5.1]{denittis-gomi-14}
\begin{equation}\label{eq:app_red_cohom2}
{H}^k_{\Z_2}\big(\{\ast\},\Z(1)\big)\;\simeq\;
\left\{
\begin{aligned}
&0&\qquad&\text{if}\ k=0\ \text{or}\ k \ \text{even}\\
&\Z_2&\qquad&\text{if}\ k > 0\ \text{odd}\\
\end{aligned}
\right.
\end{equation}
and
\begin{equation}\label{eq:app_red_cohom3}
{H}^k_{\Z_2}\big(\{\ast\},\Z\big)\;\simeq\;
\left\{
\begin{aligned}
&\Z&\qquad&\text{if}\ k=0\\
&\Z_2&\qquad&\text{if}\  k > 0\ \text{even}\\
&0&\qquad&\text{if}\ k\ \text{odd}\\
\end{aligned}
\right.\;.
\end{equation}
%


\medskip
\medskip

\end{document}